\numberwithin{equation}{section}
\numberwithin{section}{chapter}
\newtheorem{theorem}{Theorem}[section]
\newtheorem{mtheorem}[theorem]{Statement}
\newtheorem{lemma}[theorem]{Lemma}
\newtheorem{corollary}[theorem]{Corollary}
\newtheorem{assumption}[theorem]{Assumption}
\theoremstyle{definition}
\newtheorem{definition}[theorem]{Definition}
\theoremstyle{remark}
\newtheorem{remark}[theorem]{Remark}
\numberwithin{equation}{section}
\numberwithin{equation}{section}
\newcommand{\norm}[1]{\left\lVert #1 \right\rVert}
\newcommand{\beq}{\begin{equation}}
\newcommand{\eeq}{\end{equation}}
\newcommand{\lf}{\left}
\newcommand{\ri}{\right}
\newcommand{\half}{\frac{1}{2}}
\newcommand{\1}{{\ensuremath {\mathds 1} }}
\newcommand{\ii}{\infty}
\newcommand{\Z}{\mathbb{Z}}
\newcommand{\R}{\mathbb{R}}
\newcommand{\N}{\mathbb{N}}
\newcommand{\C}{\mathbb{C}}
\newcommand{\F}{\mathcal{F}}
\newcommand{\E}{\mathcal{E}}
\newcommand{\cE}{\mathcal{E}}
\newcommand{\cF}{\mathcal{F}}
\newcommand{\cL}{\mathscr{L}}
\newcommand{\B}{\mathcal{B}}
\newcommand{\LL}{\mathcal{L}}
\newcommand{\Q}{\mathcal{Q}}
\newcommand{\cP}{\mathcal{P}}
\newcommand{\cB}{\mathcal{B}}
\newcommand{\cM}{\mathcal{M}}
\newcommand{\OO}{O}
\newcommand{\gS}{\mathfrak{S}}
\newcommand{\gH}{\mathfrak{H}}
\newcommand{\ep}{\varepsilon}
\newcommand{\g}{\gamma}
\newcommand{\Om}{\Omega}
\newcommand{\om}{\omega}
\newcommand{\dd}{\partial}
\newcommand{\supp}{\mathrm{supp}}
\newcommand\pscal[1]{{\ensuremath{\left\langle #1 \right\rangle}}}
\newcommand{\wto}{\rightharpoonup}
\renewcommand{\epsilon}{\varepsilon}
\def\Xint#1{\mathchoice
   {\XXint\displaystyle\textstyle{#1}}%
   {\XXint\textstyle\scriptstyle{#1}}%
   {\XXint\scriptstyle\scriptscriptstyle{#1}}%
   {\XXint\scriptscriptstyle\scriptscriptstyle{#1}}%
   \!\int}
\def\XXint#1#2#3{{\setbox0=\hbox{$#1{#2#3}{\int}$}
     \vcenter{\hbox{$#2#3$}}\kern-.5\wd0}}
\def\dashint{\Xint-}
\DeclareMathOperator{\Tr}{{\rm Tr}}
\DeclareMathOperator{\Tro}{{\rm Tr}_0}
\newcommand{\tr}{\Tr}
\newcommand{\bral}{\left<}
\newcommand{\ketr}{\right>}
\newcommand{\dem}{\varepsilon_{\rm M}}
\newcommand{\ptgf}{\mathcal{E}^{\rm P}_{\dem}}
\newcommand{\ptge}{E^{\rm P}_{\dem}}
\newcommand{\ptgint}{F^{\rm P}_{\dem}}
\newcommand{\FSm}{\gamma_{\rm per} ^0}
\newcommand{\FSh}{H_{\rm per} ^0}
\newcommand{\FSl}{\epsilon_{\rm F}}
\newcommand{\FSd}{\rho_{\rm per} ^0}
\newcommand{\FSp}{V_{\rm per} ^0}
\newcommand{\Wrho}{W_{\rho}}
\newcommand{\rhoQ}{\rho_Q}
\newcommand{\rhoP}{\rho_{\Psi}}
\newcommand{\tote}{E_m}
\newcommand{\crysf}{\mathcal{F}_{\rm crys} }
\newcommand{\cryse}{F_{\rm crys} }
\newcommand{\Eperf}{\mathcal{E}^{\rm per}_m}
\newcommand{\Epere}{E ^{\rm per}_m}
\newcommand{\Eperm}{u_m^{\rm per}}
\newcommand{\Eperelim}{E ^{\rm per}}
\newcommand{\Qpp}{Q ^{++}}
\newcommand{\Qpm}{Q ^{+-}}
\newcommand{\Qmp}{Q^{-+}}
\newcommand{\Qmm}{Q^{--}}
\newcommand{\one}{{\ensuremath {\mathds {1}} }}
\newcommand{\oneep}{\one_{\left(-\infty, \FSl \right)}}
\newcommand{\Ker}{\mathcal L_\ell^N}
\newcommand{\coulker}{|\: . \:| ^{-1}}
\newcommand{\rhogam}{\rho_{\gamma}}
\newcommand{\munucper}{\mu ^{0} _{\rm per}}
\newcommand{\rhff}{\E ^{\rm rHF}}
\newcommand{\Sch}{\mathfrak{S}}
\newcommand{\dist}{\mathrm{dist}}
\newcommand{\rv}{\mathbf{r}}
\newcommand{\eps}{\varepsilon}
\newcommand{\nablap}{\nabla^{\perp}}
\newcommand{\curl}{\mbox{curl}}
\newcommand{\Ofirst}{\Omega_{\mathrm{c_1}}}
\newcommand{\gpf}{\mathcal{E}^{\mathrm{GP}}}
\newcommand{\gpe}{E^{\mathrm{GP}}}
\newcommand{\gpm}{\Psi^{\mathrm{GP}}}
\newcommand{\tff}{\mathcal{E}^{\mathrm{TF}}}
\newcommand{\tfd}{\mathcal{D}^{\mathrm{TF}}}
\newcommand{\tfe}{E^{\mathrm{TF}}}
\newcommand{\tfm}{{\rho^{\mathrm{TF}}}}
\newcommand{\rtf}{{R^{\mathrm{TF}}}}
\newcommand{\rbulk}{R_{\rm bulk}}
\newcommand{\tfchem}{\lambda^{\mathrm{TF}}}
\newcommand{\tfr}{R ^{\rm TF}}
\newcommand{\tfI}{I ^{\rm TF}}
\newcommand{\tfIc}{\mathcal{I} ^{\rm TF}}
\newcommand{\tfpot}{F ^{\rm TF}}
\newcommand{\tfH}{H ^{\rm TF}}
\newcommand{\musta}{\mu_{\ast}}
\newcommand{\hgpf}{\hat{\mathcal{E}}^{\mathrm{GP}}}
\newcommand{\hgpe}{\hat{E}^{\mathrm{GP}}}
\newcommand{\game}{\gamma_{\eps}}
\newcommand{\Hex}{h_{\mathrm{ex}}}
\newcommand{\Hc}{H_{\mathrm{c}1}}
\newcommand{\Hcc}{H_{\mathrm{c}2}}
\newcommand{\Hccc}{H_{\mathrm{c}3}}
\newcommand{\Hcccc}{H_{\mathrm{c}4}}
\newcommand{\glm}{\Psi^{\mathrm{GL}}}
\newcommand{\aav}{\mathbf{A}}
\newcommand{\aavm}{\mathbf{A}^{\mathrm{GL}}}
\newcommand{\hex}{b}
\newcommand{\theo}{\Theta_0}
\newcommand{\glfe}{\mathcal{G}_{\eps}^{\mathrm{GL}}}
\newcommand{\glee}{E_{\eps}^{\mathrm{GL}}}
\newcommand{\annd}{\mathcal{A}_{\rm bl}}
\newcommand{\Ehp}{\mathcal{E}_{\rm hp}}
\newcommand{\es}{\mathbf{e}_s}
\newcommand{\fk}{f_{k}}
\newcommand{\fO}{f_0}
\newcommand{\fone}{\E^{\mathrm{1D}}}
\newcommand{\eone}{E^{\mathrm{1D}}}
\newcommand{\eoneo}{E ^{\rm 1D}_{0}}
\newcommand{\alk}{\alpha(k)}
\newcommand{\alO}{\alpha_0}
\newcommand{\fc}{\E ^{\rm{corr}}}
\newcommand{\pot}{V_{\om,k}}
\newcommand{\LLLN}{\mathfrak{H} ^{N}}
\newcommand{\LLL}{\mathfrak{H}}
\newcommand{\LLLh}{H ^{\rm L}}
\newcommand{\LLLf}{\E ^{\rm L}}
\newcommand{\LLLe}{E ^{\rm L}}
\newcommand{\LLLm}{\Psi ^{\rm L}}
\newcommand{\Lgv}{L_{\rm qh}}
\newcommand{\Barg}{\B}
\newcommand{\BargN}{\B ^{N}}
\newcommand{\Bargh}{H ^{\B}}
\newcommand{\Bargm}{F^ {\B}}
\newcommand{\Barge}{E ^{\B}}
\newcommand{\cLau}{c _{\rm Lau}}
\newcommand{\PKerp}{P_{\mathrm{Ker} (\mathcal{L}^N_2) ^{\perp}}}
\newcommand{\gap}{\mathrm{gap}}
\newcommand{\PsiGV}{\Psi ^{\rm qh}}
\newcommand{\mopt}{m_{\rm opt}}
\newcommand{\muN}{\mu_{N}}
\newcommand{\muNone}{\mu_{N} ^{(1)}}
\newcommand{\ZN}{\mathcal{Z}_{N}}
\newcommand{\Vm}{V_m}
\newcommand{\PsiLau}{\Psi_{\rm Lau}}
\newcommand{\MFf}{\E ^{\rm MF}}
\newcommand{\rhoMF}{\varrho ^{\rm MF}}
\newcommand{\mb}{\underline{m}}
\newcommand{\VD}{\mathcal{V}_2 ^D}
\newcommand{\ED}{E_2 ^D (N)}
\newcommand{\xbf}{\mathbf{x}}
\def\curl{{\rm curl\,}}
\def\dist{\text{dist}\ }
\def\div{\mathrm{div} \ }
\def\({\left(}
\def\){\right)}
\def\ep{\varepsilon}
\def\hal{\frac{1}{2}}
\def\indic{\mathds{1}}
\def\Lp{{L^p_{loc}(\mr^d,\mr^d)}}
\def\mn{\mathbb{N}}
\def\mr{\mathbb{R}}
\def\nab{\nabla}
\def\ro{\rho}
\def\supp{\text{Supp}}
\def\w{{H_n}}
\def\W{\mathcal{W}}
\def\Zbeta{Z_n^\beta}
\def\xbf{{\mathbf x}}
\def\j{\mathbf{E}}
\def\g{w}
\def\Gibbs{\mathbb{P}_{n,\beta}}
\def\Qk{\mathbb{P}_{n,\beta} ^{(k)}}
\def\Qone{\mathbb{P}_{n,\beta} ^{(1)}}
\def\bam{\overline{\mathcal{A}}_m}
\def\bai{\overline{\mathcal{A}}_1}
\def\P{\mathcal{P}}
\def\F{\mathcal{F}}
\def\En{\mathcal{E}}
\def\In{\mathcal{I}_n}
\newcommand{\Fnbeta}{\F_{n,\beta}}
\newcommand{\Fnbetae}{F_{n,\beta}}
\newcommand{\mubet}{\mu_{\beta}}
\def\dist{\mathrm{dist}}
\newcommand{\mubf}{\boldsymbol{\mu}}
\newcommand{\black}[1]{\textcolor{black}{#1}}
\newcommand{\weff}{w_{\rm eff}}
\newcommand{\Enls}{\mathcal{E}^{\rm nls}}
\newcommand{\enls}{E^{\rm nls}}
\newcommand{\unls}{u_{\rm nls}}
\newcommand{\Mnls}{\mathcal{M}_{\rm nls}}
\newcommand{\Gammat}{\tilde{\Gamma}}
\newcommand{\gammat}{\tilde{\gamma}}
\newcommand{\EH}{\mathcal{E} _{\mathrm{H}}}
\newcommand{\eH}{e _{\mathrm{H}}}
\newcommand{\uH}{u _{\mathrm{H}}}
\newcommand{\FNL}{F_{\mathrm{NL}}}
\newcommand{\Aext}{\mathbf{A}_{\rm ext}}
\newcommand{\Aan}{\mathbf{A}_{\rm any}}
\newcommand{\PsiQH}{\Psi ^{\rm qh}}
\newcommand{\cQH}{c _{\rm qh}}
\newcommand{\Heff}{H^{\mathrm{eff}}}
\newcommand{\bA}{\mathbf{A}}
\newcommand{\EAF}{E ^{\mathrm{af}}}
\newcommand{\cEAF}{\cE ^{\mathrm{af}}}
\newcommand{\sym}{\mathrm{sym}}
\newcommand{\cMAF}{\mathcal{M}^{\mathrm{af}}}
\numberwithin{equation}{section}
\title[]{\Huge{Some contributions to many-body quantum mathematics}
\bigskip 
\bigskip 
\bigskip
\bigskip 
\bigskip 
\bigskip
\begin{center}
\small Th\`ese d'habilitation \`a diriger des recherches de l'auteur, r\'esumant des r\'esultats obtenus entre 2011 et 2016 en collaboration avec: Michele Correggi, Mathieu Lewin, Douglas Lundholm, Phan Th\`anh Nam, Robert Seiringer, Sylvia Serfaty et Jakob~Yngvason.
\end{center}
\bigskip
\begin{center}
\small La soutenance a eu lieu le 8 Novembre 2016 au LPMMC (Universit\'e Grenoble-Alpes \& CNRS). 
\end{center}
\bigskip 
\bigskip 
\small \underline{Composition du jury}:
\begin{itemize}
 \item \small Patrick G\'erard (Pr\'esident)
 \item \small Alain Joye (Coordinateur)
 \item \small Volker Bach (Rapporteur)
 \item \small Cl\'ement Mouhot (Rapporteur)
 \item \small Maxim Olshanii (Rapporteur)
 \item \small \'Eric Canc\`es (Membre)
 \item \small Djalil Chafa\"i (Membre)
 \item \small St\'ephane Ouvry (Membre)
\end{itemize}
\large
}
\author[Nicolas Rougerie]{Nicolas Rougerie\\ 
L\lowercase{aboratoire de }P\lowercase{hysique et }M\lowercase{od\'elisation des }M\lowercase{ilieux} C\lowercase{ondens\'es}\\
U\lowercase{niversit\'e} G\lowercase{renoble}-A\lowercase{lpes} \& CNRS.}
\address{Universit\'e Grenoble-Alpes \& CNRS, LPMMC, UMR 5493, BP 166, 38042 Grenoble, France.}
\email{nicolas.rougerie@lpmmc.cnrs.fr}
\date{March 2016.}
\begin{document}

%%%%%%%%%%%%%%%%%%%%%%%%%%%%%% Texte en francais %%%%%%%%%%%%%%%%%%%%%%%%%%%%%%%%%
% 
% \renewcommand{\contentsname}{Sommaire}
% \renewcommand{\refname}{R\'ef\'erences}
% \renewcommand{\abstractname}{R\'esum\'e}
% \renewcommand{\appendixname}{Appendice}

%%%%%%%%%%%%%%%%%%%%%%%%%%%%%%%%%%%%%%%%%%%%%%%%%%%%%%%%%%%%%%%%%%%%%%%%%%%%%%%%%%%%

\maketitle

\bigskip

\bigskip

\setcounter{tocdepth}{1}
\tableofcontents 

\chapter*{Introduction}

\section*{Foreword}

The results summarized here are intended as rigorous mathematical statements on various physical models coming from condensed matter physics, statistical mechanics (classical and quantum), quantum field theory and cold atoms physics. The main tools are mostly those of the mathematical analyst: partial differential equations, functional analysis, spectral theory, calculus of variations, with some incursions into probability theory. A running thread is the construction, by rigorous asymptotic analysis most of the time, of bridges between different levels of mathematical modeling of physical phenomena.

\medskip

In choosing a title for this memoir I have followed an interesting piece of advice put forward in~\cite{Klainerman-00}. Quoting Sergiu Klainerman: ``I think we need to reevaluate our current preconception about what subjects we consider as belonging properly within Mathematics.'' Personally, I would also agree to the same statement had ``Physics'' been substituted for ``Mathematics''. 

The pun with ``Many-body quantum mathematics'', as compared with ``Many-body quantum physics'' or ``Many-body quantum mechanics'', stresses that the mathematical analysis of models and problems from many-body physics has long since grown into a field of its own, within the more general mathematical physics literature. 

There is an ever-going debate trying to decide whether the latter field consists of the intersection of mathematics with physics, or of their union, or something in between. I could personally be happy with the conservative alternative, namely the intersection. It is my impression however that the overlap is considerably larger than usually agreed on. In particular, all the material discussed in the sequel fits therein.

``Many-body quantum mathematics'' provides a unifying theme every chapter can be loosely related to, although not all deal with quantum mechanics and not all with many-body theory. It is my hope that the contributions summarized in this memoir will be of interest not only to pure mathematical physicists\footnote{A ``purely interdisciplinary'' field seems to contain somewhat of a contradiction anyway.} but also to mathematicians and physicists with an interest in rigorous mathematical foundations of physical theories.  

\medskip

Let me emphasize that, this being a habilitation thesis, everything contained herein is completely and (almost) unashamedly biased towards my own work. Only papers to which I contributed (all in collaboration with subsets of the co-authors list on the front page) are discussed in some detail, along with the necessary elements of context and perspective. 

This memoir is thus not a review paper: contributions by other authors are of course mentioned when appropriate, but I have neither tried to be exhaustive nor to give the amount of details that would be required. I still hope that this text can be useful to an extended readership, in particular because some of the chapters contain the first streamlined discussion (as opposed to the original research papers) of projects I have been involved in. Any comment is welcome and can be sent to the author by email. 

\section*{Publications summarized in this memoir.} 

The results I will discuss were obtained in the original papers listed below (in chronological order):

\begin{enumerate}
\item  \textsc{M. Lewin, N. Rougerie}, Derivation of Pekar's Polarons from a Microscopic Model of Quantum Crystals, \textit{SIAM Journal on Mathematical Analysis} \textbf{45}, 1267--1301 (2013) 
\item \textsc{M. Lewin, N. Rougerie}, On the binding of polarons in a mean-field quantum crystal, \textit{ESAIM: Control, Optimization, Calculus of Variations} \textbf{19}, 629--656  (2012) 
\item \textsc{M. Correggi, N. Rougerie}, Inhomogeneous Vortex Patterns in Rotating Bose-Einstein Condensates, \textit{Communications in Mathematical Physics} \textbf{321}, 817--860 (2013)
\item \textsc{N. Rougerie, S. Serfaty, J. Yngvason}, Quantum Hall states of bosons in rotating anharmonic traps, \textit{Physical Review A}, \textbf{87}, 023618 (2013) 
\item \textsc{N. Rougerie, S. Serfaty, J. Yngvason}, Quantum Hall phases and plasma analogy in rotating trapped Bose gases, \textit{Journal of Statistical Physics} \textbf{154}, 2-50 (2014)  
\item \textsc{M. Lewin, P.T. Nam, N. Rougerie}, Derivation of Hartree's theory for generic mean-field Bose systems, \textit{Advances in Mathematics} \textbf{254}, 570-621 (2014)
\item  \textsc{N. Rougerie, S. Serfaty}, Higher Dimensional Coulomb Gases and Renormalized Energy Functionals, \textit{Communications on Pure and Applied Mathematics} \textbf{69} (3), 519 (2016) 
\item \textsc{M. Correggi, N. Rougerie}, On the Ginzburg-Landau functional in the surface superconductivity regime,  \textit{Communications in Mathematical Physics} \textbf{332}  (3), 1297-1343 (2014) 
\item  \textsc{M. Lewin, P.T. Nam, N. Rougerie}, Remarks on the quantum de Finetti theorem for bosonic systems,  \textit{Applied Mathematics Research Express} \textbf{2015}, 48-63 (2015) 
\item \textsc{N. Rougerie, J. Yngvason}, Incompressibility estimates for the Laughlin phase, \textit{Communications in Mathematical Physics}  \textbf{336} (3), 1109-1140 (2015) 
\item \textsc{M. Lewin, P.T. Nam, N. Rougerie}, The mean-field approximation and the non-linear Schr\"odinger functional for trapped Bose gases,  \textit{Transactions of the American Mathematical Society} \textbf{368}, 6131-6157 (2016)
\item\textsc{M. Correggi, N. Rougerie}, Boundary behavior of the Ginzburg-Landau order parameter in the surface superconductivity regime, \textit{Archive for Rational Mechanics and Analysis} \textbf{219} (1),  553-606 (2016)
\item \textsc{M. Lewin, P.T. Nam, N. Rougerie}, Derivation of nonlinear Gibbs measures from many-body quantum mechanics,  \textit{Journal de l'Ecole Polytechnique} \textbf{2}, 65-115 (2015)
\item \textsc{N. Rougerie, J. Yngvason}, Incompressibility estimates for the Laughlin phase, part II,  \textit{Communications in Mathematical Physics} \textbf{339}, 263-277 (2015)
\item \textsc{P.T. Nam, N. Rougerie, R. Seiringer}, Ground states of large bosonic systems: The Gross-Pitaevskii limit revisited, \textit{Analysis and PDEs} \textbf{9} (2), 459--485 (2016)
\item \textsc{D. Lundholm, N. Rougerie}, The average field approximation for almost bosonic extended anyons, \textit{Journal of Statistical Physics} \textbf{161} (5), 1236--1267 (2015)
\item  \textsc{M. Lewin, P.T. Nam, N. Rougerie}, A note on 2D focusing many-bosons systems, arXiv:1509.09045 (2015) 
\item \textsc{M. Correggi, N. Rougerie}, Effects of  boundary curvature  on surface superconductivity, \textit{Letters in Mathematical Physics} \textbf{106} (4), 445-467 (2016)
\item \textsc{D. Lundholm, N. Rougerie}, Emergence of fractional statistics for tracer particles in a Laughlin liquid, \textit{Physical Review Letters} \textbf{116}, 170401 (2016)
\end{enumerate}

\bigskip

Some of the results pertaining to Chapters~\ref{sec:pol}, \ref{sec:FQHE states}, \ref{sec:bosons GS} and~\ref{sec:Gibbs} have already been reviewed~\cite{LewNamRou-ICMP,Rougerie-LMU,Rougerie-cdf,Rougerie-INSMI,Rougerie-xedp13,Rougerie-xedp15} in lecture notes, conference proceedings or other texts. 

The papers~\cite{BlaRou-08,CorRouYng-11,Rougerie-11,Rougerie-11b,Rougerie-12}, included in my PHD thesis~\cite{Rougerie-these}, are not discussed here. Neither are the papers~\cite{CorPinRouYng-11a,CorPinRouYng-11b,CorPinRouYng-12}, whose subject is very much related to that of my PHD. They are not included in the memoir~\cite{Rougerie-these} but have already been reviewed in conference proceedings~\cite{CorPinRouYng-12b,CorPinRouYng-13}.

\section*{Content of Chapter~\ref{sec:pol}}

\noindent\textbf{Papers 1 and 2, joint work with Mathieu Lewin.}

\medskip

A polaron is the quasi-particle formed by the interaction between a charged particle and a deformation of an underlying polarizable medium. We shall discuss a new model for coupling several charged particles to a mean-field theory of an infinite quantum crystal. The model for the latter is microscopic. We may thus describe the small polaron regime where the immersed particles' extension is comparable to the crystal's lattice spacing. 

We first study the well-posedness of the new model, at the level of the ground state. We prove binding (existence of a ground state) for a single polaron and reduce the question of binding for several polarons to the study of HVZ-type inequalities. Physically, this corresponds to the fact that each polaron is trapped in a polarization field it creates in the medium. 

Next, we consider the macroscopic limit of the new model. When the immersed polarons live on a scale much larger than the underlying crystal's lattice spacing, we recover the Landau-Pekar-Tomasevich model. In this description, the polarons interact with a continuous polarizable medium, characterized by the crystal's dielectric matrix. 

\section*{Content of Chapter~\ref{sec:superfluids}}

\noindent\textbf{Papers 3, 8, 12 and 18, joint work with Michele Correggi.}

\medskip

We study two related mean-field models for the response of superfluids/superconductors to external gauge fields. 

In Section~\ref{sec:GP vortex} we address the nucleation of vortex lattices in rotating Bose-Einstein condensates, within the framework of 2D Gross-Pitaevskii theory. We consider an asymptotic limit corresponding to strong interparticle interactions, with the rotation frequency tuned so as to capture the critical value where vortices first appear. In this regime, vortices are tightly packed in the condensate and their density  is inhomogeneous. We investigate their distribution's subtle dependence on the underlying matter density profile via the derivation of an effective electrostatic model. As expected, for large rotation speeds (compared to the first critical value), the vortex density fills the whole sample and becomes homogeneous.

In Section~\ref{sec:GL} we focus on the surface superconductivity regime of type II superconductors. In this phase, the sample is normal in the bulk but superconductivity persists close to the boundary. We use 2D Ginzburg-Landau theory to obtain a detailed description of the phenomenon, in the asymptotic limit of ``extreme-type II superconductors''. We derive a limit model for the distribution of superconductivity along the boundary, deducing a few important facts: absence of normal inclusions in the surface superconductivity layer, estimate of the phase circulation along the boundary, description of the order parameter's dependence on the curvature of the sample's boundary.

\section*{Content of Chapter~\ref{sec:FQHE states}}

\noindent\textbf{Papers 4, 5, 10 and 14, joint work with Sylvia Serfaty and Jakob Yngvason.}

\medskip

The Laughlin wave-function and its descendants form the basis of our current understanding of the fractional quantum Hall effect. They describe strongly correlated quantum fluids, where the interplay between a strong external magnetic field and repulsive interactions forces a very special and rigid form on the ground state of a 2D many-body system. Most of the chapter is concerned with the response of the Laughlin state to external fields modeling trapping and/or disorder. 

We first discuss results bearing on a model for trapped rotating bosons, where we investigate the effect of varying the shape of the trapping potential. This is 
motivated by experimental issues, in particular the instability of the system due to the centrifugal force. 

Then we turn to a more general study of incompressibility/rigidity properties. We formulate and prove some asymptotic bounds, coined ``incompressibility estimates''. Those give some rigorous basis to the conjectured stiffness of the Laughlin state in its reaction to external fields. The proofs proceed via an analogy with the 2D Coulomb gas (one-component plasma), and generalizations thereof.

\section*{Content of Chapter~\ref{sec:Coulomb gas}}

\noindent\textbf{Paper 7, joint work with Sylvia Serfaty.}

\medskip

The classical Coulomb gas is a much studied model of statistical mechanics, both for its own sake (that is, as a basic model for bulk matter and trapped plasmas) and for its relation to other fields (random matrix theory, classical and quantum vortices, the fractional quantum Hall effect ...). In this chapter we investigate the ground and thermal states of the trapped Coulomb gas, in any spatial dimension larger than two, and in the mean-field regime. We characterize fluctuations around mean-field theory in terms of a ``renormalized jellium energy'' giving the energy per unit volume of an infinite system of point charges in a neutralizing background. We obtain an asymptotic expansion of the ground state configurations, showing that the renormalized jellium energy governs the distribution of charges at the microscopic scale. We next prove large deviations estimates for the Gibbs states at small enough temperature. We also use the renormalized energy to derive new estimates on the local charge deviations and 
correlation functions.

\section*{Content of Chapter~\ref{sec:bosons GS}}

\noindent\textbf{Papers 6, 9, 11, 15 and 17, joint work with Mathieu Lewin, Phan Th\`anh Nam and Robert Seiringer.}

\medskip

This long chapter is devoted to a summary of several related works on the interacting Bose gas, in connection with the Bose-Einstein condensation phenomenon. We introduced a general method, based on (new versions of) the quantum de Finetti theorem, in order to derive mean-field descriptions of the ground state of a generic Bose system. We avoid as much as possible to rely on specific properties of physical models, using instead structural results on large bosonic states. This allows to treat a large variety of models and scaling limits in a unified way. For so-called mean-field limits, essentially no other ingredient is needed to derive non-linear Schr\"odinger models from many-body quantum mechanics. For the more delicate dilute limits, we have to rely in addition on some new many-body estimates for the ground states of specific models, but the general strategy still constitutes the backbone of the proofs.

\section*{Content of Chapter~\ref{sec:Gibbs}}

\noindent\textbf{Paper 13, joint work with Mathieu Lewin and Phan Th\`anh Nam.}

\medskip

Here we move on to studying thermal states of the interacting Bose gas, with the long-term goal of discussing the Bose-Einstein phase transition. For an interacting gas, this is a very delicate problem, and we mostly obtain preliminary results. The general strategy is to relate, in a mean-field limit with an appropriate scaling of the temperature, many-body Gibbs states to non-linear Gibbs measures (classical field theory). This is a semi-classical limit, which requires specific tools (inter alias, a new Berezin-Lieb inequality), in addition to the quantum de Finetti theorem previously alluded to. For the 1D Bose gas we obtain a full derivation of the non-linear Gibbs measure based on the mean-field energy functional. For larger spatial dimensions, well-known obstructions to the measure's construction lead us to consider simplified models, with a smooth interaction. The analysis is nevertheless highly non trivial because the limit objects are not trace-class operators.

\section*{Content of Chapter~\ref{sec:anyons}}

\noindent\textbf{Papers 16 and 19, joint work with Douglas Lundholm.}

\medskip

Anyons are (hitherto hypothetical) quantum particles with statistical properties not falling in the ubiquitous bosons/fermions dichotomy. They have been conjectured to emerge as quasi-particles of low-dimensional systems. One of the most promising candidate physical systems comes from fractional quantum Hall physics: it is widely believed that the elementary excitations of the Laughlin (and related) states should behave as anyons. In this chapter we do two things:

First, we give a new argument for the emergence of anyon statistics, in the context of the Laughlin phase. While not mathematically rigorous, we believe that the derivation settles some issues with previous arguments. In brief, we show that tracer particles in a Laughlin liquid, by coupling to quasi-holes of the Laughlin wave-function, acquire an emergent anyonic behavior. 

We next consider the rigorous derivation of simplified models for 2D anyon systems. Because of the intricacies of the many-anyons problem, there is a crucial need for those. We make a step in this direction by rigorously deriving an effective one-body energy functional for almost bosonic anyons. Its main unusual feature is that it contains a self-consistent magnetic field. The general proof strategy is inspired by that of Chapter~\ref{sec:bosons GS}, but the peculiarities of the anyon Hamiltonian require some specific tools and techniques.
 
\chapter{Polarons in quantum crystals}\label{sec:pol}

In condensed matter physics (see e.g.~\cite{AleDev-09}), one calls a \emph{polaron} the system formed by the interaction between a crystal and a charged particle (an electron, most of the time). Similarly, one calls \emph{multi-polaron} the system formed by several charged particles interacting among themselves and with a crystal. A remarkable fact is the possibility to obtain bound states in such a situation. Indeed, in vacuum, the dispersive nature of the quantum kinetic energy and the Coulomb repulsion between particles of the same charge prevent bound states of electrons from being energetically favorable. The interaction with the crystal results in an attractive force that allows electrons to form a stable bound state. Roughly, the mechanism is the following: each electron creates a local polarization of the crystal's atoms by attracting their nuclei and repelling their electrons. This local polarization around each electron is in turn felt by the other electrons forming the multi-polaron. The effective 
interaction thus mediated by the crystal can eventually overcome the Coulomb repulsion between the charged particles one adds to the crystal. 

The polaron is the effective quasi-particle formed by the electron added to the crystal, together with the polarization field it creates by locally moving the crystal's charges. The dressing of the electron by the interaction with the Fermi sea of the crystal is somewhat similar to the dressing all actual electrons experience through their interaction with the ``fictitious'' particles of the Dirac sea in quantum electrodynamics. In the following presentation we shall refer to the charged quantum particles added to an otherwise stable quantum crystal as ``polarons'', in order to distinguish them from the electrons of the crystal, which will not be modeled on the same footing.

In this chapter, which summarizes the findings of~\cite{LewRou-13,LewRou-13b}, we discuss a new polaron model obtained by coupling the $N$-body Schr\"odinger equation with a mean-field model of a quantum crystal with a charge defect. The main results are:
\begin{itemize}
 \item existence of bound states, i.e. minimizers of the energy functional of the model.
 \item derivation of the Pekar functional from the new model, in a macroscopic limit.
\end{itemize}
The present exposition is based on~\cite{Rougerie-xedp13} where the french reader can find a more detailed review of the context and results. 

\section{Models for quantum crystals}\label{sec:pol crys} 

\subsection{The perfect crystal}

Our model for the host quantum crystal is Born-Oppenheimer-like, in that the nuclei are considered as fixed classical particles, but the electrons as mobile quantum particles that optimize their configuration to take the nuclei's distribution into account. The electron's configuration is determined within a mean-field approximation, through the minimization of a reduced Hartree-Fock (a.k.a. Hartree) functional.

Let thus $\munucper$ be a positive measure on $\R^3$, periodic with respect to a given lattice $\cL$ corresponding to the nuclei's periodic arrangement.  We should first give a meaning to the reduced Hartree-Fock functional
\begin{equation}\label{eq:cri per formel}
\rhff_{\munucper} [\gamma] = \Tr \left[- \frac{1}{2}\Delta \gamma\right] - \iint_{\R^3 \times \R^3} 
\frac{\rhogam (x)\munucper (y)}{|x-y|}dxdy + \iint_{\R^3 \times \R^3} 
\frac{\rhogam (x)\rhogam (y)}{|x-y|}dxdy,
\end{equation}
giving the energy of a given infinite electron configuration. Here, $\gamma$ is a reduced one-particle density matrix, a self-adjoint, positive, trace-class operator on $L ^2 (\R^3)$. Identifying $\gamma$ with its kernel,  we denote $\rhogam (x) = \gamma (x,x)$ the corresponding one-body density. The above is a mean-field model where correlations beyond those imposed by the Pauli principle are neglected. The first term is the quantum kinetic energy, the second one the Coulomb attraction to the nuclei, and the third one the Coulomb repulsion between electrons. Note the absence of an exchange term, which would be of the form 
\[
- \iint_{\R^3 \times \R^3} \frac{\left|\gamma(x,y)\right| ^2}{|x-y|}dxdy, 
\]
and would render the functional non-convex in $\gamma$. We neglect such a term for several reasons, both physical (it usually plays a second order role) and mathematical (due to its non-convexity, the behavior of the functional including exchange is much more involved). The one-body density matrix $\gamma$ should satisfy 
\begin{equation}\label{eq:pol Pauli}
0 \leq \gamma \leq \one 
\end{equation}
as appropriate for fermions (this inequality embodies the Pauli exclusion principle).

The first challenge here is that we want to see the crystal as an ambient medium. To avoid boundary effects we have to give a meaning to the above energy functional for an \emph{infinite} locally neutral system of nuclei and electrons. This is done by considering a thermodynamic limit (infinite volume with fixed charge density) as in~\cite{LieSim-77b,CatBriLio-98,CatBriLio-01,CatBriLio-02}. The variational problem obtained by this procedure is posed on the set of one-body density matrices that commute with the translations of the lattice $\cL$   
\begin{equation}\label{eq:etats parfaits}
\cP_{\rm per} := \left\{ \gamma \in \cB \left( L ^2 (\R ^3) \right)| \; \gamma = \gamma ^*,\; 0\leq \gamma \leq 1,\; \gamma \tau_{k} =  \tau_{k} \gamma  \: \forall k\in \cL \right\} 
\end{equation}
where $\cB (\gH)$ is the set of bounded operators on some Hilbert space $\gH$. The energy (per unit volume) can then be given a meaning through Bloch-Floquet theory~\cite{ReeSim4}. We will not recall all the details. Most relevant for the forthcoming discussion of the polaron model is the Euler-Lagrange equation of the infinite-volume energy functional. It is established in~\cite{CatBriLio-02,CanDelLew-08a} that the unique minimizing density matrix $\FSm$ satisfies
\begin{equation}\label{eq:EEL crist parf}
\begin{cases}
\displaystyle \FSm = \oneep (\FSh)\\
\displaystyle \FSh = -\half \Delta + \FSp \\
\displaystyle \FSp = \left(\munucper - \FSd \right) \ast \coulker 
\end{cases}
\end{equation}
with $\FSd (x) = \FSm (x,x)$ the associated density and $\FSl$ the Lagrange multiplier (or chemical potential, or Fermi level). Here $\oneep (\FSh)$ denotes the spectral projector up to the Fermi level $\FSl$. The interpretation of~\eqref{eq:EEL crist parf} is as follows: because of the Pauli principle, non-interacting electrons fill the successive energy levels of their Hamiltonian. In the presence of mean-field interactions, the electrons of the crystal feel an effective self-consistent Hamiltonian $\FSh$ and fill its energy levels up to the Fermi level. The latter is set so as to ensure local neutrality of the crystal.

In all the following we assume that the host crystal is an insulator, i.e. that the spectrum of $\FSh$ (made of Bloch band) has gaps, and that the Fermi level $\FSl$ lies in one of those. 

\subsection{Crystal with local charge defects}

The natural idea to insert polarons in the previous model is to see them as a local charge defect in an otherwise perfect crystal. To this end, we make use of the model introduced in~\cite{CanDelLew-08a,CanDelLew-08b} where the energy of the crystal in presence of a defect is defined by reference to that of the perfect crystal discussed previously. The following description is inspired by the works~\cite{ChaIra-89,ChaIraLio-89,HaiLewSer-05a,HaiLewSer-05b,HaiLewSer-09} where the energy of relativistic particles in the presence of a Dirac sea of virtual electrons is rigorously defined (see~\cite{HaiLewSerSol-07,Lewin_proc-12} for reviews). 

\medskip

In the presence of a charge defect $\nu$ perturbing the periodic nuclei density $\munucper$, we write the density matrix $\gamma$ of the crystal's electrons as a perturbation of the density matrix of the perfect crystal:
\[
\gamma = \FSm + Q. 
\]
The energy as a function of $Q$ is given by the formal computation
\begin{equation}\label{eq:cri def formel}
 \rhff_{\munucper+\nu}[\gamma] - \rhff_{\munucper+\nu}[\FSm] = \Tr \left[  \FSh   Q \right] + D(\nu,\rhoQ) +\frac{1}{2} D (\rhoQ,\rhoQ)
\end{equation}
with
\begin{equation}\label{eq:Coul}
D(\nu,\mu) = \int_{\R ^3} \nu \left(\mu \ast \coulker\right) 
\end{equation}
the Coulomb interaction energy and $\rhoQ$ the charge density associated with $Q$. Of course, in infinite volume, both terms of the left-hand side of~\eqref{eq:cri def formel} are infinite. One can however give a rigorous meaning to their difference in the sense that, when first defined in finite volume, it has a well-defined thermodynamic limit which coincides with the right-hand side.  

Actually (we again refer to~\cite{CanDelLew-08a,CanLew-10} for more details) the appropriate functional setting for the minimization of~\eqref{eq:cri def formel} is highly non-trivial. This is because the response of the crystal to a local charge defect is highly non-local. Due to the long range nature of the Coulomb interaction, the charge density of $Q$ has long range oscillations and fails to be in $L^1 (\R^3)$ as one might have naively hoped. Then, defining an appropriate notion for the trace of $Q$ is highly not immediate. We follow~\cite{BacBarHelSie-99,HaiLewSer-05a,HaiLewSer-05b} and split $Q$ as 
\begin{align}\label{eq:decomp trace}
 Q &= \FSm Q \FSm + (1-\FSm) Q (1-\FSm) + \FSm Q (1-\FSm) + (1-\FSm) Q \FSm \nonumber \\
 &= \Qmm + \Qpp + \Qmp + \Qpm.
\end{align}
One can then define a generalized trace in the manner
\begin{equation}\label{eq:gene trace}
 \Tro [Q] := \Tr [\Qmm] + \Tr[\Qpp]. 
\end{equation}
Of course, for a trace-class operator $Q$, the generalized trace coincides with the usual notion, but this need not be the case in general. We shall also make use of the fact that, because of the Pauli principle~\eqref{eq:pol Pauli}, $Q$ should satisfy the operator inequality
\begin{equation}\label{eq:Pauli def}
-\FSm \leq Q \leq 1-\FSm 
\end{equation}
which turns out to be equivalent to
\begin{equation}\label{eq:Pauli def 2}
 \Qpp - \Qmm \geq Q ^2.
\end{equation}
The natural interpretation of the kinetic energy term in~\eqref{eq:cri def formel} is then 
\[
\Tro \left[  \FSh   Q \right] := \Tr \left[ \FSh \Qpp \right] + \Tr \left[ \FSh \Qmm \right]= \Tr \left[\left| \FSh-\FSl \right| \left( \Qpp - \Qmm \right) \right] + \FSl \Tro Q.  
\]
The coercivity property
\begin{equation}\label{eq:coer}
\Tro \left[  \left( \FSh -\FSl \right)  Q \right] \geq  \Tr \left[\left| \FSh-\FSl \right| Q ^2 \right]
\end{equation}
allows to define a natural variational space\footnote{The reader should bear in mind the rule of thumb $|\FSh - \FSl| \simeq -\Delta = |\nabla| ^2$.} 
\begin{equation}\label{eq:vari space crys}
 \Q = \left\lbrace |\nabla | Q \in \Sch ^2,\:Q = Q^* ,\: |\nabla| \Qpp |\nabla| \in \Sch ^1, \: |\nabla| \Qmm |\nabla| \in \Sch ^1\right\rbrace
\end{equation}
where the $p$-th Schatten class is
\[
\Sch ^p (L^2 (\R^3)):= \left\lbrace Q \in \cB\left( L^2 (\R^3)\right),\: \Tr \left[ |Q| ^p \right] ^{1/p} < +\infty \right\rbrace. 
\]
For all $Q\in \Q$, there exists a charge density $\rhoQ \in L^2 (\R ^3)$) such that
\[
 \Tro \left[ Q V\right] = \int_{\R ^3} \rhoQ V
\]
for all sufficiently regular $V$, and one has $D(\rhoQ,\rhoQ) < \infty$. In this setting, one can give a meaning to the expression
\begin{equation}\label{eq:crys def func}
\crysf [\nu,Q] = \Tro \left[ \left( \FSh - \FSl \right) Q \right] + D(\nu,\rhoQ) +\frac{1}{2} D (\rhoQ,\rhoQ)
\end{equation}
for the energy change of the crystal in the presence of the charge defect $\nu$. The minimal energy is
\begin{equation}\label{eq:crys def ener}
F_{\rm crys}\big[\nu\big] := \inf_{- \FSm \leq Q \leq 1 -\FSm} \left\{ \Tro \left[ \left( \FSh - \FSl \right) Q \right] + D(\nu,\rhoQ) +\frac{1}{2} D (\rhoQ,\rhoQ) \right\}.
\end{equation}
and it is shown in~\cite{CanDelLew-08a} that a minimizer exists. Some of its properties are worked out in~\cite{CanLew-10}.

\section{A new polaron model, existence of bound states}\label{sec:pol model} 

Now we can define the polaron model introduced in~\cite{LewRou-13,LewRou-13b}. The idea is simply to set the charge defect $\nu$ in~\eqref{eq:crys def func} equal to the charge density of the (multi)-polaron, and add the energy of the (multi)-polaron in vacuum. One should not forget to include the energy due to the polaron feeling the electrostatic potential of the perfect crystal.

Thus, for one polaron with wave-function $\psi \in L ^2 (\R ^2)$, we have
\begin{equation}
\cE[\psi]=\frac1{2}\int_{\R^3}|\nabla\psi(x)|^2\,dx+\int_{\R^3}V^0_{\rm per}(x)|\psi(x)|^2\,dx+F_{\rm crys}\big[|\psi|^2\big]
\label{eq:model-intro1}
\end{equation}
and for $N$ interacting polarons,
\begin{multline}\label{eq:model-introN}
\cE[\Psi] = \int_{\R^{3N}}\left(\frac12\sum_{j=1}^N \left|\nabla_{x_j}\Psi(x_1,...,x_N)\right|^2+\sum_{1\leq k<\ell\leq N}\frac{|\Psi(x_1,...,x_N)|^2}{|x_k-x_\ell|}\right)dx_1\cdots dx_N  \\
+\int_{\R ^3} \FSp \rhoP + \cryse [\rhoP]
\end{multline}
where $\Psi \in L ^2 (\R ^{3N})$ is the multi-polaron's many-body wave-function, with associated charge density 
\begin{equation}\label{eq:rhoP}
 \rhoP (x):= \int_{\R ^{3(N-1)}} |\Psi (x,x_2\ldots,x_N)| ^2 dx_2 \ldots dx_N.  
\end{equation}
Since $\Psi$ describes electrons, it should satisfy the Pauli exclusion principle, i.e. be antisymmetric:
\begin{equation}\label{eq:antisym}
\Psi (x_1,\ldots,x_i,\ldots,x_j,\ldots,x_N)= - \Psi (x_1,\ldots,x_j,\ldots,x_i,\ldots,x_N)
\end{equation}
for all $i\neq j$ and all $(x_1,\ldots,x_N) \in \R ^{3N}$. Note that this assumption is not crucial for the results of~\cite{LewRou-13,LewRou-13b}. We could deal with the case where the polarons are actually bosons, so that the wave-function is symmetric (no minus sign in the right-hand side of~\eqref{eq:antisym}).

In the above, the potential $\FSp$ is that generated by the perfect crystal in~\eqref{eq:EEL crist parf} and the energy $\cryse$ is defined via~\eqref{eq:crys def ener}.
% \begin{equation}\label{eq:crys def ener}
% \cryse[\nu] = \inf_{- \FSm \leq Q \leq 1 -\FSm} \left\{ \Tro \left( \left( \FSh - \FSl \right) Q \right) + D(\nu,\rhoQ) +\frac{1}{2} D (\rhoQ,\rhoQ) \right\}.
% \end{equation}
A few comments are in order:
\begin{itemize}
\item The main novelty of this model is that it makes no assumption on the ratio of the size of the (multi)-polaron to the crystal's lattice spacing. It is thus appropriate for describing a ``small polaron'' as opposed to the more famous and widely used Fr\"ohlich and Pekar models, that can only handle the case of a ``large polaron''. 
\item The Fr\"ohlich and Pekar models are based on an effective description of the crystal's excitations in terms of phonons. Our model can take into account the detailed structure of the host quantum crystal.
\item The price to pay for these advantages is the highly non-linear nature of the above functional. It is the source of very serious mathematical difficulties in the proofs. 
\item We have neglected possible correlations between the polarons and the crystal. In that sense it does not cover some of the physics included in the Fr\"ohlich model. However, the Pekar model can be derived from ours, as we shall discuss in Section~\ref{sec:pol Pekar}.
\end{itemize}

We now state the main results of~\cite{LewRou-13b}, referring to the original paper for their proofs and to~\cite{Rougerie-xedp13} for streamlined comments. We denote 
\begin{align}
E(1): =& \inf \left\lbrace \E [\psi],\: \psi \in H ^1 (\R ^{3}),\ \int_{\R ^{3} } |\psi| ^2 = 1 \right\rbrace 
\label{eq:energyN}\\
E(N): =& \inf \left\lbrace \E [\Psi],\: \Psi \in H ^1 (\R ^{3N}), \: \Psi \mbox{ satisfies \eqref{eq:antisym}, } \int_{\R ^{3N} } |\Psi| ^2 = 1 \right\rbrace 
\end{align}
the minimal energies for a single polaron and a multi-polaron, respectively. We have

\begin{theorem}[\textbf{Existence of small polarons} \cite{LewRou-13b}]\label{theo:E_1}\ \\
% For $N=1$, we have  
% \begin{equation}
% E(1)<E_{\rm per}:=\inf\sigma\left(-\frac1{2m}\Delta+V^0_{\rm per}\right).
% \label{eq:E_1} 
% \end{equation}
All minimizing sequences for $E(1)$ converge to a minimizer, strongly in $H^1(\R^3)$, up to extraction and translation.
\end{theorem}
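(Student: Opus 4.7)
The plan is to implement Lions' concentration--compactness scheme, adapted to the discrete $\cL$-translation invariance of~\eqref{eq:model-intro1}. Fix a minimizing sequence $(\psi_n) \subset H^1(\R^3)$ with $\|\psi_n\|_2 = 1$. I would first establish $H^1$-coercivity: the periodic term is bounded by $\|\FSp\|_{L^\infty}$, while for the crystal response, the variational definition~\eqref{eq:crys def ener} combined with the coercivity~\eqref{eq:coer} and the Hardy--Littlewood--Sobolev inequality yields a lower bound of the form $\cryse[|\psi|^2] \geq -\tfrac{1}{4}\|\nabla\psi\|_2^2 - C$. Hence $\cE[\psi] \geq \tfrac{1}{4}\|\nabla\psi\|_2^2 - C'$, and $(\psi_n)$ is bounded in $H^1$, so $\psi_n \rightharpoonup \psi_\infty$ up to extraction.

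Next I would apply Lions' lemma to the densities $\rho_n = |\psi_n|^2$ and rule out two of the three scenarios. \emph{Vanishing} is excluded by showing $E(1) < 0$ via a well-chosen trial state: since $\cryse[\nu] \leq 0$ always (take $Q = 0$ in~\eqref{eq:crys def ener}), a suitably scaled localized profile with small kinetic cost and noticeable self-induced polarization gives $\cE < 0$. \emph{Dichotomy} --- the main obstacle --- is excluded by establishing the strict binding inequality $E(1) < E(\alpha) + E(1-\alpha)$ for all $\alpha \in (0,1)$. I would prove this by gluing near-minimizers $\phi_\alpha, \phi_{1-\alpha}$ for the two lower problems at separation $R \gg 1$: the kinetic and periodic-potential contributions become additive up to $o_R(1)$, while for the crystal term one should obtain the expansion
\[
\cryse\bigl[\rho_\alpha + \rho_{1-\alpha}(\cdot - R\mathbf{e})\bigr] = \cryse[\rho_\alpha] + \cryse[\rho_{1-\alpha}] - \frac{c_\alpha}{R} + O(R^{-2}), \qquad c_\alpha > 0,
\]
reflecting the attractive effective Coulomb interaction mediated by the crystal's screening response (each lump induces a polarization cloud of opposite charge). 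The resulting trial state beats $E(\alpha) + E(1-\alpha)$ at total mass $1$, yielding strict binding.

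The only remaining alternative is compactness up to translations $y_n \in \R^3$. Here I would exploit $\cL$-periodicity by writing $y_n = k_n + z_n$ with $k_n \in \cL$ and $z_n$ in a fixed fundamental domain of $\cL$. Translation by $-k_n$ preserves $\cE$ (periodicity of $\FSp$ and of the reference crystal defining $\cryse$), so replacing $\psi_n$ by $\psi_n(\cdot - k_n)$ reduces the problem to a bounded translation sequence $z_n \to z$. Tightness together with weak $H^1$-compactness then yield $\tilde\psi_n := \psi_n(\cdot - k_n) \to \tilde\psi_\infty$ strongly in $L^2$, weakly in $H^1$, with $\|\tilde\psi_\infty\|_2 = 1$. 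Passing to the limit in $\cE$ --- weak lower semicontinuity of the kinetic part, strong $L^2$ convergence for the periodic term, and continuity of $\cryse$ in the density under the appropriate topology (deducible from its explicit quadratic-in-$\nu$ structure established in~\cite{CanDelLew-08a,CanLew-10}) --- gives $\cE[\tilde\psi_\infty] \leq E(1)$, so $\tilde\psi_\infty$ is a minimizer. Convergence of total energies then forces $\|\nabla\tilde\psi_n\|_2 \to \|\nabla\tilde\psi_\infty\|_2$, upgrading weak to strong $H^1$ convergence. The hard part is the strict binding inequality, which hinges on a sufficiently sharp description of the induced screening cloud and its long-range Coulomb-tail attraction on the other lump.
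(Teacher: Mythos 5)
Your plan is the same concentration--compactness strategy the paper uses, and most of your architecture is sound (coercivity, ruling out vanishing, reduction of translations modulo $\cL$, and the observation that the crystal-mediated cross-interaction is attractive for a \emph{single} polaron because there is no bare Coulomb self-repulsion to compete with it). The issue is that you have only addressed half of what is needed to exclude dichotomy, and the half you skip is precisely the key lemma of~\cite{LewRou-13b}.

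To contradict dichotomy one needs \emph{both} directions: (a) the strict binding inequality $E(1)<E(\alpha)+E(1-\alpha)$, which your gluing argument targets; and (b) the implication ``dichotomy $\Rightarrow E(1)\ge E(\alpha)+E(1-\alpha)$''. The latter requires an energy decoupling for the \emph{arbitrary} pieces produced by the dichotomizing sequence, namely
\[
F_{\rm crys}\big[\rho_n^1+\rho_n^2\big]\;\geq\; F_{\rm crys}\big[\rho_n^1\big]+F_{\rm crys}\big[\rho_n^2\big]-o(1)
\]
as the supports drift apart --- and here you do not get to choose $\rho_n^1,\rho_n^2$ to be near-minimizers. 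This decoupling is the genuinely hard part: $F_{\rm crys}$ is a nonlinear, constrained minimization over the Fermi sea perturbation $Q$, and the optimal $Q_\rho$ generates a density $\rho_{Q_\rho}$ with long-range oscillatory tails (generically not $L^1$, only $L^2$). So when $\rho$ splits into two far-apart lumps, it is not at all obvious that $Q_\rho$ also splits into two far-apart pieces --- that statement, made quantitative, is exactly the central lemma of the paper. Your asserted expansion $\cryse[\rho_\alpha+\rho_{1-\alpha}(\cdot - R\mathbf e)]=\cryse[\rho_\alpha]+\cryse[\rho_{1-\alpha}]-c_\alpha/R+O(R^{-2})$ presupposes this lemma rather than replacing it, and the explicit $O(R^{-2})$ remainder is likely too optimistic in view of the lattice-scale and slowly decaying oscillations in the response; what one actually needs, and what the paper establishes, is only that the cross-term tends to zero and is strictly negative, not a clean two-term expansion. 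Finally, on the vanishing case: the linear part $-\tfrac12\Delta+\FSp$ does not tend to zero for spread-out profiles, so the correct comparison is with the bottom of the periodic band spectrum rather than with $0$; the sign alone of $F_{\rm crys}$ via $Q=0$ is not the whole story.
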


In other words, the model~\eqref{eq:model-intro1} does account for the binding of a single polaron in an insulating quantum crystal: the attraction due to the crystal's deformation is sufficient to overcome the dispersive nature of the quantum kinetic energy.  

The case of a multi-polaron is more subtle since the attraction mediated by the host crystal also has to overcome the usual Coulomb repulsion between the additional electrons. The existence of a bound state depends on the balance between the two forces, and we can give necessary and sufficient conditions for binding to occur:

\begin{theorem}[\textbf{HVZ theorem for small multi-polarons} \cite{LewRou-13b}]\label{theo:HVZ}\ \\
Let $N\geq2$. The following assertions are equivalent:
\begin{enumerate}
\item One has the binding inequality
\begin{equation}\label{eq:binding petit}
 E(N) < E(N-k) + E(k) \mbox{ for all } k=1,\ldots, N-1.
\end{equation}
\item All minimizing sequences for $E(N)$ converge to a minimizer, strongly in $H^1(\R^{3N})$, up to extraction and translation.
\end{enumerate}
\end{theorem}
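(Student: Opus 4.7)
I would approach this as a concentration-compactness argument in the many-body setting, adapted to the crystal geometry where translation invariance holds only modulo the lattice $\cL$ (since $\FSp$ is lattice-periodic, the group of symmetries of $\cE$ is $\cL$, not $\R^3$).

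The easy direction (2)$\Rightarrow$(1) goes through sub-additivity. First I would establish $E(N) \leq E(k) + E(N-k)$ by taking near-minimizers $\Psi_k$ and $\Psi_{N-k}$, translating the second one by a large lattice vector $R \in \cL$, and antisymmetrizing the tensor product. The cross terms decay: the inter-cluster Coulomb repulsion contributes $O(|R|^{-1})$, and the crystal energy $\cryse$ is approximately additive for well-separated charge defects (thanks to the screening properties of the insulating response studied in the earlier papers on the crystal model). If equality $E(N) = E(k) + E(N-k)$ ever holds, this construction furnishes an explicit minimizing sequence whose density splits into two asymptotically separated pieces, which no single translation can reunite, so strong $H^1$ convergence up to translation fails.

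For the main direction (1)$\Rightarrow$(2), let $(\Psi_n)$ be a minimizing sequence. Coercivity from the kinetic term, together with boundedness of $\FSp$ and the lower bound on the crystal energy from~\eqref{eq:crys def ener}, yields $\|\nabla \Psi_n\|_{L^2} \leq C$. After translating each $\Psi_n$ by a suitable lattice vector $k_n \in \cL$ chosen so that a fixed amount of the mass of $\rho_{\Psi_n}$ concentrates near the origin, extract a weakly convergent subsequence $\Psi_n \wto \Psi$ in $H^1(\R^{3N})$. The task is to prove $\|\Psi\|_{L^2(\R^{3N})} = 1$, from which strong convergence follows by norm preservation and lower semicontinuity of $\cE$ (itself a consequence of weak lower semicontinuity of $\crysf$ on the variational space $\Q$). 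I would use a geometric localization of the $N$-body state in Fock space: with cut-offs $\chi_R, \xi_R$ satisfying $\chi_R^2 + \xi_R^2 = 1$ and $\chi_R \equiv 1$ on $B_R$, the state $|\Psi_n|^2$ decomposes as a convex combination over $k \in \{0,\dots,N\}$ indexed by the number of particles in $B_R$, with weights $\lambda_k^{n,R} \geq 0$ summing to $1$. The IMS formula splits the kinetic energy with error $O(R^{-2})$, the inner/outer Coulomb cross-terms are $O(R^{-1})$, the one-body periodic potential splits trivially, and one obtains
\[
\cE[\Psi_n] \geq \sum_{k=0}^N \lambda_k^{n,R} \bigl( E(k) + E(N-k) \bigr) - o_R(1) - o_n(1),
\]
with the convention $E(0) := 0$. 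Passing to the limit $n \to \infty$ and then $R \to \infty$, and invoking the strict binding inequalities~\eqref{eq:binding petit}, forces all weight onto $\lambda_N = 1$ (any intermediate split would produce $E(k) + E(N-k) > E(N)$, contradicting minimality). This means no mass escapes: $\|\Psi\|_{L^2}^2 = 1$, and strong convergence in $H^1$ follows.

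The main obstacle, and the genuinely new ingredient compared to the classical atomic HVZ argument, is controlling the asymptotic additivity of $\cryse$ under spatial splitting of the defect. The crystal energy is nonlocal: the polarization response $\rho_Q$ to a localized charge has long-range oscillations and does not even lie in $L^1(\R^3)$, so a naive decomposition $\cryse[\rho_1 + \tau_R \rho_2] \approx \cryse[\rho_1] + \cryse[\rho_2]$ is far from obvious. My strategy here would be to exploit the Euler-Lagrange characterization of the optimal $Q$ in~\eqref{eq:crys def ener} together with the dielectric screening inherited from the insulating gap hypothesis on $\FSh$, which yields exponential decay of the linear response operator (the independent-particle susceptibility projected by the Fermi gap), to justify that contributions from well-separated defects decouple up to $o_R(1)$. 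This quantitative screening estimate is where the insulator assumption plays a decisive role and where the analysis truly departs from the vacuum HVZ framework.
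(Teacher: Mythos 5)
Your overall architecture matches the paper's: a concentration--compactness/geometric-localization argument in the spirit of~\cite{Lewin-11}, with strict binding inequalities ruling out dichotomy, and the easy direction~(2)$\Rightarrow$(1) obtained by contraposition from a split trial state. You also correctly identify that the decisive, non-standard ingredient is a decoupling lemma for $\cryse$ under spatial separation of the defect; this is indeed where the proof departs from the vacuum HVZ framework and where all the work is.

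The gap is in your proposed mechanism for that decoupling lemma. You assert that the insulating gap ``yields exponential decay of the linear response operator,'' so that contributions from well-separated defects decouple up to $o_R(1)$. The gap does give exponential decay for the kernel of $\FSm$ (Combes--Thomas) and hence for the \emph{independent-particle} susceptibility, but not for the \emph{self-consistent} Coulomb response: the variational equation for the optimal $Q$ couples the bare susceptibility to the long-range Coulomb propagator at each order, and the resulting polarization density has long-range oscillations and is \emph{not} even in $L^1(\R^3)$, as recalled in Section~\ref{sec:pol crys}. This is exactly what the text flags as the genuine obstruction: it is not at all obvious that when the defect $\nu$ splits into two distant pieces the perturbation $Q$ of the Fermi sea splits accordingly, and one cannot shortcut this by invoking exponential screening. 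The actual key lemma of~\cite{LewRou-13b} proves the decoupling despite the long-range, non-$L^1$ tail of $\rho_Q$; as written, your argument replaces the hard step by an unavailable decay estimate, so both the subadditivity estimate in~(2)$\Rightarrow$(1) and the lower bound $\cE[\Psi_n]\geq\sum_k\lambda_k^{n,R}\bigl(E(k)+E(N-k)\bigr)-o_R(1)-o_n(1)$ in~(1)$\Rightarrow$(2) are left without justification.
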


This is the natural generalization of the celebrated HVZ theorem~\cite{Hun-66,VanWinter-64,Zhislin-71} to the non-linear setting of the multi-polaron. The binding inequalities~\eqref{eq:binding petit} express the fact that it is not favorable to split the $N$-polaron into two smaller objects, a $k$-polaron and a $(N-k)$-polaron. It is easy to see that 
 $$E(N) \leq E(N-k) + E(k) \mbox{ for all } k=1,\ldots, N-1$$
always holds, and the case of equality corresponds to dichotomy in the concentration-compactness principle~\cite{Lions-82a,Lions-84,Lions-84b}. One can still imagine that there are minimizers in this case, but then not all minimizing sequences will converge, see~\cite{FraLieSei-12} where such a case is discussed. Note finally that checking the binding inequalities in practice would be very difficult for our model. One can however use that its macroscopic limit is given by the simplest Pekar functional, for which some results on binding are known~\cite{GriMol-10,FraLieSeiTho-10,FraLieSeiTho-11,Lewin-11}.  

As regards the proofs of the above results, they are based on a standard concentration-compactness approach in the case of one polaron, and on the natural generalization to non-linear quantum many-body systems thereof~\cite{Lewin-11} in the case of the multi-polaron. The main difficulty is to rule out the dichotomy case in the concentration-compactness principle. This is made highly non-trivial by the long range oscillations in the response of the Fermi sea to a local charge defect. It is not at all obvious that if the defect splits into two independent, well-separated pieces, so does the perturbation of the Fermi sea. Proving that this nevertheless happens, in a sufficiently strong sense, is the key lemma of~\cite{LewRou-13b}.

\section{Derivation of the Pekar functional}\label{sec:pol Pekar} 

\subsection{Macroscopic limit}

The Pekar model is appropriate for a ``large polaron'', where large means ``compared to the host crystal's lattice spacing''. To derive it from the model we previously defined, we thus introduce a small parameter $m \ll 1$ to quantify the ratio between macroscopic and microscopic length scales:
\begin{equation}
\cE_m[\psi]:=\frac1{2}\int_{\R^3}|\nabla\psi(x)|^2\,dx+m^{-1}\int_{\R^3}V^0_{\rm per}(x/m)|\psi(x)|^2\,dx+m^{-1}F_{\rm crys}\big[m^{3}|\psi(m\cdot)|^2\big],
\label{totf1}
\end{equation}
with $V^0_{\rm per}$ et $F_{\rm crys}$ defined as previously. The corresponding ground state energy is
\begin{equation}
E_m(1)=\inf \left\lbrace \cE_m [\psi],\ \int_{\R ^3} |\psi| ^2 = 1  \right\rbrace.
\label{tote1}
\end{equation}
Similarly, for $N\geq2$ the energy functional of the $N$-polaron is defined as 
\begin{multline}
\cE_{m}[\Psi]=\int_{\R^{3N}}\left(\frac12\sum_{j=1}^N|\nabla_{x_j}\Psi(x_1,...,x_N)|^2+\sum_{1\leq k<\ell\leq N}\frac{|\Psi(x_1,...,x_N)|^2}{|x_k-x_\ell|}\right)dx_1\ldots dx_N\\
+m^{-1}\int_{\R^3}V^0_{\rm per}(x/m)\,\rho_\Psi(x)\,dx+m^{-1}F_{\rm crys}\big[m^{3}|\rhoP(m\cdot)|^2\big],
\label{totfN}
\end{multline}
with the one-body density $\rhoP$ given by~\eqref{eq:rhoP}. The associated ground state energy is
\begin{equation}\label{toteN}
\tote(N) := \inf \left\lbrace \cE_m [\Psi], \: \int_{\R ^{3N}} |\Psi| ^2 = 1,\: \Psi \mbox{ satisfying \eqref{eq:antisym} }  \right\rbrace.
\end{equation}
Here we impose the scale separation by choosing a lattice whose unit cell has a characteristic length of order $m$. The weights of the different terms of the functional are chosen in view of the scaling properties of electrostatic energies, so that all terms contribute in the limit $m\to 0$.

A change of length and density units $\tilde\psi=m^{3/2}\psi(m\cdot)$ shows that, from the (microscopic) point of view of the crystal (the macroscopic point of view of the polaron being expressed in~\eqref{totf1}), the energy looks like
\begin{equation}
\cE_{m}[\tilde\psi]=m^{-1}\left(\frac1{2m}\int_{\R^3}|\nabla\tilde\psi(x)|^2\,dx+\int_{\R^3}V^0_{\rm per}(x)|\tilde\psi(x)|^2\,dx+F_{\rm crys}\big[|\tilde\psi|^2\big]\right).
\label{eq:Pekar-abstract-intro-micro}
\end{equation}
Our choice of scaling is thus equivalent to having a fixed crystal and a very light polaron, with mass $m\ll 1$ (which justifies the choice of notation). 

In the rest of this section we explain how to recover the Pekar functional in the limit $m\to0$. Note that this is very different from the usual derivation starting from the Fr\"ohlich model. In the latter derivation~\cite{DonVar-83,LieTho-97,MiySpo-07}, the point is to understand why, in a strong coupling limit, the polaron and phonons degrees of freedom decorrelate. Here we have already neglected possible polaron/crystal correlations in writing our small polaron model, and our task is to see how, in a large polaron limit, one can reduce the description of the crystal to its macroscopic dielectric properties.

\subsection{Anisotropic Pekar functional}

The model we will obtain in the limit takes the dielectric matrix $\dem$ of the crystal as a parameter. This is a scalar only in the case of a cubic crystalline lattice, $\cL = \Z ^3$. Here is the Pekar functional defining the limit problem:
\begin{equation}\label{ptgf1}
\ptgf [\psi] :=  \frac{1}{2}\int_{\R ^{3}} |\nabla \psi| ^2 dx  +  \ptgint \big[|\psi|^2\big]
\end{equation}
where $\ptgint$ is defined using Fourier variables
\begin{equation}\label{ptgint}
 \ptgint [\rho]: = 2\pi \int_{\R^3} \left| \hat{\rho} (k)\right| ^2 \left(\frac{1}{k^T \dem k} - \frac{1}{|k| ^2} \right)  \,dk.
\end{equation}
Equivalently, one can define $\ptgf$ and $\ptgint$ by considering the solution $W_\rho$ to the Poisson equation
\begin{equation}\label{poteff}
-{\rm div}\left( \dem \nabla \Wrho \right) = 4 \pi \rho.
\end{equation}
Then
\begin{equation}\label{ptgf1bis}
\ptgf [\psi] :=  \frac{1}{2}\int_{\R ^{3}} |\nabla \psi| ^2 dx  + \frac{1}{2} \int_{\R^3} |\psi|^2 \left( W_{|\psi|^2} - |\psi|^2 \ast |\cdot| ^{-1}\right)
\end{equation}
and the corresponding ground state energy is
\begin{equation}\label{ptge}
\ptge(1) = \inf \left\lbrace \ptgf [\psi],\ \int_{\R ^3} |\psi| ^2 = 1  \right\rbrace.
\end{equation}
Since a dielectric matrix always satisfies $\dem > 1$ as an operator, it is clear that the interaction $\ptgint$ is attractive.

For a $N$-polaron we define similarly
\begin{multline}\label{ptgfN}
\cE^{\rm P}_{\varepsilon_{\rm M}}[\Psi]=\int_{\R^{3N}}\left(\frac12\sum_{j=1}^N|\nabla_{x_j}\Psi(x_1,...,x_N)|^2+\sum_{1\leq k<\ell\leq N}\frac{|\Psi(x_1,...,x_N)|^2}{|x_k-x_\ell|}\right)dx_1\cdots dx_N\\  + \ptgint [\rhoP]
\end{multline}
and
\begin{equation}\label{ptgeN}
\ptge (N)= \inf \left\lbrace \ptgf[\Psi],\: \int_{\R ^{3N}} |\Psi| ^2 = 1, \: \Psi \mbox{ satisfying \eqref{eq:antisym} }  \right\rbrace.
\end{equation}
The case mostly considered in the literature~\cite{Lieb-77,Lewin-11,DonVar-83,FraLieSeiTho-10,FraLieSeiTho-11,GriMol-10} has $\dem$ proportional to the identity matrix (isotropic Pekar model), but see~\cite{Ricaud-14} for results on the anisotropic functional.

\subsection{The dielectric matrix of the crystal}

Theres is an explicit formula for the dielectric matrix of a given crystal~\cite{BarRes-86,CanLew-10}, but the most convenient definition in the present context is through the approach of~\cite{CanLew-10}. The matrix $\dem$ is obtained by considering a macroscopic excitation of the Fermi sea: take a charge defect of the form 
\[
\nu_m(x)=m^3\,\nu(mx)                                                                                                                                                                                                                                                                                                                                                                 \]
with $\nu$ fixed, and insert this in the model discussed in Section~\ref{sec:pol crys}. We call $Q_m$ an associated solution to~\eqref{eq:crys def ener} and  $W_m$ the corresponding electrostatic potential, rescaled back: 
\begin{equation}\label{intro:recaled potential}
W_m (x):=m^{-1}\big(\nu-\rho_{Q_m}\big)\ast|\cdot|^{-1}(x/m).
\end{equation}
Canc\`es and Lewin proved in~\cite[Theorem 3]{CanLew-10} that there exists a matrix $\dem$ (that one can calculate given $\munucper$ and $\cL$) such that 
\begin{equation}\label{eq:CanLew}
W_m \wto W_{\nu}, 
\end{equation}
with $W_\nu$ the unique solution to Equation~\eqref{poteff} where $\rho$ is set equal to $\nu$. This will serve as our definition of the dielectric matrix: take a charge defect, smear it over a large length scale while fixing the charge, compute the electrostatic response of the Fermi, properly rescaled. 

We first state a theorem giving the link between~\eqref{eq:crys def ener} and~\eqref{ptgint}, where the matrix $\dem$ entering the above quantity is defined as above: 

\begin{theorem}[\textbf{Macroscopic limit of the crystal's energy} \cite{LewRou-13}]\label{theo:main crys}\mbox{}\\
Let $(\psi_m)_m$ be a bounded sequence in $H^s(\R^3)$, for some $s>1/4$. Then
\begin{equation}\label{eq:limit crys}
\lim_{m\to0}\Big(m^{-1}F_{\rm crys}\big[m^3|\psi_m(m\cdot)|^2\big]-F^{\rm P}_{\dem}\big[|\psi_m|^2\big]\Big)=0.
\end{equation}
\end{theorem}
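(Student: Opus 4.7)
The plan is to identify both sides of~\eqref{eq:limit crys} as pairings of $|\psi_m|^2$ with an electrostatic potential, and then to establish the convergence of the potentials by extending the Canc\`es-Lewin result~\eqref{eq:CanLew} to the $m$-dependent family $(|\psi_m|^2)$.

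\emph{First}, I would exploit the quadratic structure of the functional~\eqref{eq:crys def func} in $Q$ to derive the representation, at the minimum,
\[
m^{-1} F_{\rm crys}\bigl[m^3 |\psi_m(m\cdot)|^2\bigr] = \frac{1}{2} \int_{\R^3} |\psi_m(x)|^2 \bigl(W_m(x) - |\psi_m|^2 \ast |\cdot|^{-1}(x)\bigr)\, dx,
\]
where $W_m$ is the rescaled total electrostatic potential of the defect $\nu_m = m^3|\psi_m(m\cdot)|^2$ in the perturbed crystal, as in~\eqref{intro:recaled potential}. On the other hand, rewriting~\eqref{ptgf1bis} gives
\[
\ptgint\bigl[|\psi_m|^2\bigr] = \frac{1}{2} \int_{\R^3} |\psi_m|^2 \bigl(W_{|\psi_m|^2} - |\psi_m|^2 \ast |\cdot|^{-1}\bigr),
\]
so that~\eqref{eq:limit crys} reduces to proving that the pairing $\int |\psi_m|^2 \bigl(W_m - W_{|\psi_m|^2}\bigr)$ vanishes as $m\to 0$.

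\emph{Second}, for this I would promote~\eqref{eq:CanLew} from a pointwise statement (fixed defect $\nu$) to a statement about a uniformly bounded family of linear operators on a Coulomb-type space $\cC = \{\rho : D(\rho,\rho)<\infty\}$: write $W_m = \mathcal{T}_m \nu_m$ with $\mathcal{T}_m \to \mathcal{T}$ strongly on $\cC$, where $\mathcal{T}\nu = W_\nu$ solves~\eqref{poteff}. The $H^s$ hypothesis with $s>1/4$, combined with Sobolev embedding, ensures that $|\psi_m|^2$ is bounded in $L^p$ for some $p$ slightly above $6/5$ (hence controlled in $\cC$) and in fact precompact there. Extracting a subsequence with $|\psi_m|^2 \to \rho_\infty$ in $\cC$, the operator convergence yields $W_m - W_{|\psi_m|^2}\to 0$ in the dual Coulomb topology, and the ``weak--strong'' pairing with $|\psi_m|^2$ delivers the claim; since the limit is the same along every subsequence, the full limit follows.

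\emph{Third}, the main technical obstacle is precisely this uniform, operator-level version of~\eqref{eq:CanLew}. The Canc\`es--Lewin theorem is proved by unraveling the Bloch--Floquet decomposition of the crystal's linear susceptibility (an Adler--Wiser type formula) and extracting its long-wavelength behavior, which is encoded in $\dem$. One needs to revisit this analysis and show that the operator $\mathcal{T}_m : \cC \to (\cC)^\ast$ is uniformly bounded and converges strongly, so that the $m$-limit can be exchanged with the variation of the density $|\psi_m|^2$. The long-range, slowly decaying oscillations of $\rho_{Q_{\nu_m}}$ --- which prevent integrability in $L^1$ --- must be controlled through the quadratic Coulomb norm rather than pointwise, and the $H^s$-regularity of $\psi_m$ is what guarantees that these oscillations give negligible contributions to the pairing in the limit $m\to 0$.
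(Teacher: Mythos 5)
Your first step is where the argument breaks down. You assert the exact identity
\[
m^{-1}F_{\rm crys}\bigl[m^3|\psi_m(m\cdot)|^2\bigr]=\frac12\int_{\R^3}|\psi_m|^2\bigl(W_m-|\psi_m|^2\ast|\cdot|^{-1}\bigr),
\]
which, after undoing the scaling, is equivalent to claiming $F_{\rm crys}[\nu]=-\tfrac12 D(\nu,\rhoQ)$ at the minimizer $Q$. This is the virial-type identity one gets when minimizing a genuine quadratic form with no constraint, and it does not hold for $F_{\rm crys}$. The functional~\eqref{eq:crys def func} is not quadratic in $Q$: the kinetic term $\Tro[(\FSh-\FSl)Q]$ is \emph{linear}, the only quadratic piece $\tfrac12 D(\rhoQ,\rhoQ)$ sees only the density $\rhoQ$ and is therefore highly degenerate as a form in $Q$, and boundedness from below is supplied entirely by the constraint $-\FSm\le Q\le 1-\FSm$ through the operator inequality~\eqref{eq:Pauli def 2} and the resulting coercivity~\eqref{eq:coer}. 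The constrained Euler--Lagrange condition is a nonlinear self-consistent equation for a spectral projector, and the resulting $F_{\rm crys}$ is a convex but strictly non-quadratic functional of the defect. What is true is that $F_{\rm crys}[\nu]$ can be expanded into a leading linear-response quadratic term (governed by the Adler--Wiser susceptibility and hence by $\dem$ after rescaling) plus nonlinear remainders; showing that these remainders vanish at rate $o(m)$ in the macroscopic scaling is a substantial part of the proof, not something that falls out of a structural identity. By positing that identity you have, in effect, assumed the hardest half of the theorem.

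Two further issues in the second step are worth flagging. First, boundedness of $(\psi_m)$ in $H^s$, $s>1/4$, gives control of $|\psi_m|^2$ in $L^{6/5}$ and hence in the Coulomb space, as you note, but it does not give compactness there: the sequence may escape to infinity, and since both functionals are translation-invariant this cannot be excluded and should not be needed. A proof by subsequence extraction and a limit density $\rho_\infty$ is therefore the wrong framework; one needs a \emph{quantitative} comparison between the rescaled crystal energy and the Pekar energy, uniform over bounded sets of $H^s$. Second, \eqref{eq:CanLew} is a weak convergence of potentials for a \emph{fixed} defect $\nu$; the theorem you are proving allows a density $|\psi_m|^2$ that itself varies with $m$. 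Upgrading the Canc\`es--Lewin statement to a uniform, quantitative estimate on the linear-response part is indeed what must be done, but you have only named the difficulty rather than addressed it, and because your step one is unavailable, even a successful resolution of this point would not close the proof.
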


This is the key ingredient in our approach. The assumption on the sequence $(\psi_m)_m$ ensures local compactness to be able to pass to the limit. Without such an assumption, the sequence might concentrate on too small a scale for the energy to be given only in terms of macroscopic dielectric properties (see~\cite[Section 3.3]{LewRou-13} for an example).  

\subsection{The anisotropic Pekar model as the macroscopic limit of the small polaron}

In order to discuss the main results in~\cite{LewRou-13}, we note that the potential $\FSp(./m)$ appearing in~\eqref{totf1} lives on the scale of the crystal, much smaller than that of the polaron. One should thus expect that the wave-function of the polaron will have fast oscillations to accommodate this potential. Those are neglected in the Pekar functional, but can be taken into account via a very simple eigenvalue problem: we denote $u^{\rm per}_m$ the unique positive solution to 
\begin{equation*}
\Epere = \inf \left\lbrace\Eperf [v] \: :\: v\in H^1_{\rm per}(\Gamma), \int_{\Gamma} |v| ^2 = |\Gamma| \right\rbrace = \Eperf [\Eperm] 
\end{equation*}
where $\Gamma$ is the unit cell of the lattice $\cL$ and 
\begin{equation}\label{eq:Eperf defi}
\Eperf [v] = \int_{\Gamma} \frac{1}{2m} |\nabla v| ^2 + \FSp |v| ^2,
\end{equation}
gives the energy of particle of mass $m$ in the periodic potential~$\FSp$.

By periodicity one can extend $\Eperm$ to the whole of $\R^3$. A simple perturbative argument shows that $u^{\rm per}_m\to1$ in $L^\ii(\R^3)$ when $m\to0$,  and that
\begin{equation}
\Eperelim:=\lim_{m\to0} m ^{-1} \Epere = \lim_{m\to0}m^{-1}\Eperf[\Eperm] = \int_{\Gamma} V^0_{\rm per} f^{\rm per}
\label{eq:def_E_per}
\end{equation}
where $f^{\rm per}\in L^\ii(\R^3)$ is the unique $\cL$-periodic solution to 
\[
\begin{cases}
\Delta f^{\rm per}=2V^0_{\rm per} \\
\int_{\Gamma}f^{\rm per}=0.
\end{cases}
\]
This function shows up in a perturbative expansion of $u^{\rm per}_m$, see \cite[Section 2]{LewRou-13}:
\begin{equation}\label{eq:uper expansion}
\left\|u^{\rm per}_m-1-m f^{\rm per}\right\|_{L^{\infty} (\R ^3)}\leq Cm^2.
\end{equation}
One can in fact rather easily get rid of the microscopic oscillations of the polaron's wave function by an energy decoupling. This leads to the main result of this section:

\begin{theorem}[\textbf{Pekar's functional as a macroscopic limit \cite{LewRou-13}}]\label{theo:pol main}\mbox{}\\
We denote $\dem>1$ the dielectric matrix defined by~\eqref{poteff} and \eqref{eq:CanLew}. Let $N\geq1$ be an integer.   
\begin{itemize}
\item\textbf{(Convergence of the ground state energy)}. We have
\begin{equation}\label{resulte1}
\boxed{\lim_{m\to0}E_m(N)=N\,\Eperelim + \ptge(N)}
\end{equation}
with $\Eperelim$ given in~\eqref{eq:def_E_per}, and $\ptge(N)$ the Pekar energy defined by~\eqref{ptge} for $N=1$ and~\eqref{ptgeN} for $N\geq2$.

\bigskip

\item \textbf{(Convergence of states)}. Let $(\Psi_m)_m$ be a sequence of approximate minimizers for $E_m(N)$, in the sense that
$$\lim_{m\to0}\left(\cE_m[\Psi_m]-E_m(N)\right)=0.$$
Define $\Psi^{\rm pol}_m$ by setting
\begin{equation}\label{minimiseur pol}
\boxed{\Psi_m (x_1,...,x_N)= \prod_{j=1}^N\Eperm (x_j/m)\;\, \Psi^{\rm pol}_m (x_1,...,x_N).}
\end{equation}
Then $(\Psi^{\rm pol}_m)_m$ is a minimizing sequence for the Pekar energy $\ptge(N)$.  

If either $N=1$ or $N\geq2$ and binding inequalities hold for the Pekar model, there exists a sequence of translations $(\tau_m)_m\subset\R^3$ and a minimizer $\Psi^{\rm P}_{\dem}$ for $\ptge(N)$ such that
\begin{equation}\label{state converge}
\Psi^{\rm pol}_m (x_1-\tau_m,...,x_N-\tau_m) \to \Psi^{\rm P}_{\dem}(x_1,...,x_N) \mbox{ strongly in } H^1 (\R ^{3N})  
\end{equation}
along a subsequence when $m\to 0$.
\end{itemize}
\end{theorem}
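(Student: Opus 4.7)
The plan is to squeeze $\tote(N)$ between matching upper and lower bounds that both converge to $N\Eperelim+\ptge(N)$, with the factorization~\eqref{minimiseur pol} serving as the bridge between the microscopic and macroscopic scales. The algebraic engine is a one-variable decoupling identity: if $v$ satisfies $-\tfrac12\Delta v+Uv=\lambda v$ on $\R^3$ and $\psi=v\phi$, integration by parts gives
\[
\tfrac12\int|\nabla \psi|^2+\int U|\psi|^2=\tfrac12\int v^2|\nabla \phi|^2+\lambda\int|\psi|^2.
\]
Applied coordinate-wise with $v(x)=\Eperm(x/m)$, $U=m^{-1}\FSp(\cdot/m)$ and $\lambda=m^{-1}\Epere$ (the rescaled eigenvalue equation for $\Eperm$), this peels off the one-body kinetic and periodic-potential contributions as $Nm^{-1}\Epere\to N\Eperelim$, leaving a residual weight $\prod_i\Eperm(x_i/m)^2=1+O(m)$ in front of the remaining kinetic (and Coulomb) terms by~\eqref{eq:uper expansion}.

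For the upper bound, I would pick a Pekar minimizer $\Psi^{\rm P}_{\dem}$ (existence for $N=1$ is classical, and for $N\ge 2$ follows from the assumed Pekar binding) and test $\cE_m$ against the trial state $\Ftest_m=\prod_j\Eperm(x_j/m)\,\Psi^{\rm P}_{\dem}$, renormalized to unit $L^2$ norm at the cost of an $O(m)$ correction. The decoupling identity treats the one-body terms; the Coulomb repulsion equals $\langle \Psi^{\rm P}_{\dem},\sum_{k<\ell}|x_k-x_\ell|^{-1}\Psi^{\rm P}_{\dem}\rangle+O(m)$ since $\Eperm(\cdot/m)\to1$ uniformly; and Theorem~\ref{theo:main crys} applied to the density of $\Ftest_m$ (whose $H^s$-regularity for some $s>1/4$ comes from the $H^1$ regularity of $\Psi^{\rm P}_{\dem}$) turns the crystal energy into $\ptgint[\rho_{\Psi^{\rm P}_{\dem}}]+o(1)$. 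Adding up yields $\tote(N)\le N\Eperelim+\ptge(N)+o(1)$.

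For the lower bound, start with an approximate minimizer sequence $(\Psi_m)$. The matching upper bound together with non-negativity of the Coulomb repulsion and of $\ptgint$ (which, by Theorem~\ref{theo:main crys}, dominates $\cryse$ up to $o(1)$) implies $\Psi_m$ is bounded in $H^1(\R^{3N})$; since~\eqref{eq:uper expansion} gives $\Eperm\ge c>0$ uniformly for $m$ small, the quotient $\Psi^{\rm pol}_m$ defined by~\eqref{minimiseur pol} is unambiguous and bounded in $H^1(\R^{3N})$, hence $\rho_{\Psi^{\rm pol}_m}$ is bounded in $H^{1/2}(\R^3)$. Applying the decoupling identity in reverse rearranges
\[
\cE_m[\Psi_m]=Nm^{-1}\Epere+\int\prod_i\Eperm(x_i/m)^2\Bigl(\tfrac12\sum_j|\nabla_{x_j}\Psi^{\rm pol}_m|^2+\sum_{k<\ell}\tfrac{|\Psi^{\rm pol}_m|^2}{|x_k-x_\ell|}\Bigr)+m^{-1}\cryse[m^3|\rho_{\Psi_m}(m\cdot)|^2].
\]
The weight $1+O(m)$ combined with weak lower semi-continuity of the Pekar kinetic and Coulomb terms, together with Theorem~\ref{theo:main crys} identifying the last summand as $\ptgint[\rho_{\Psi^{\rm pol}_m}]+o(1)$, and $m^{-1}\Epere\to\Eperelim$, gives $\liminf_m \cE_m[\Psi_m]\ge N\Eperelim+\ptge(N)$.

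Combining the two bounds yields~\eqref{resulte1} and forces $\ptgf[\Psi^{\rm pol}_m]\to\ptge(N)$, so $(\Psi^{\rm pol}_m)$ is a minimizing sequence for the Pekar problem. For $N=1$ the uniqueness of the Pekar minimizer up to translation, plus a standard concentration-compactness argument in $H^1$, upgrades this to strong $H^1$ convergence of $\Psi^{\rm pol}_m(\cdot-\tau_m)$ to a Pekar minimizer. For $N\ge2$ under the Pekar binding inequalities, the same conclusion is the known compactness statement for $\ptge(N)$-minimizing sequences. The hardest part of the whole scheme is the interplay between the two scales: rewriting the intrinsically non-local, generalized-trace quantity $\cryse$—whose response to a point-like defect displays long-range oscillations—in terms of the macroscopic dielectric matrix $\dem$ is exactly what Theorem~\ref{theo:main crys} provides, and the main technical point to be checked carefully is that the factorization~\eqref{minimiseur pol} preserves enough $H^1$/$H^{1/2}$ regularity on $\Psi^{\rm pol}_m$ and $\rho_{\Psi^{\rm pol}_m}$ to legally feed that theorem, while the fast microscopic oscillations absorbed into $\Eperm(\cdot/m)$ are absorbed by the decoupling identity.
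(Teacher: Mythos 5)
Your overall strategy mirrors the paper's: the ground-state substitution $\Psi_m=\prod_j\Eperm(x_j/m)\,\Psi^{\rm pol}_m$ together with the one-variable decoupling identity (integration by parts against the rescaled eigenvalue equation for $\Eperm$) strips off $Nm^{-1}\Epere$, Theorem~\ref{theo:main crys} converts the crystal term into $\ptgint$, the upper bound is by a product trial state and the lower bound by an a priori bound plus lower semi-continuity. This is precisely what the authors describe as getting rid of the microscopic oscillations by an ``energy decoupling,'' so the route is the right one.

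There is however a genuine gap in your lower-bound step. You assert that $\ptgint$ is non-negative and, via Theorem~\ref{theo:main crys}, conclude that $\cryse$ is essentially bounded below by $o(1)$, hence that $\Psi_m$ is bounded in $H^1$. In fact $\ptgint\le 0$: since $\dem>1$ as a matrix, $k^T\dem k>|k|^2$, so the Fourier integrand in~\eqref{ptgint} is negative --- the paper itself records that ``the interaction $\ptgint$ is attractive.'' Your argument is also circular as written, since Theorem~\ref{theo:main crys} requires precisely the $H^s$ bound you are trying to establish. The correct way to get the a priori estimate bypasses Theorem~\ref{theo:main crys}: drop the non-negative coercive term in~\eqref{eq:crys def ener} and complete the square in the Coulomb part, giving $\cryse[\nu]\ge -\tfrac12 D(\nu,\nu)$; the $-1$ homogeneity of the 3D Coulomb kernel then gives $m^{-1}\cryse\big[m^3\rho(m\cdot)\big]\ge -\tfrac12 D(\rho,\rho)$, which is controlled by a sub-quadratic power of $\|\nabla\sqrt{\rho}\|_{L^2}$ via Hardy--Littlewood--Sobolev and Sobolev, hence absorbable into the decoupled kinetic term. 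Only with this absorption in hand can you deduce the $H^1$ bound, observe that $\sqrt{\rho_{\Psi_m}}$ is then bounded in $H^1$ by Hoffmann--Ostenhof (this, rather than $\rho$ in $H^{1/2}$, is the quantity Theorem~\ref{theo:main crys} actually asks for), and close the proof.
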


Two concluding comments:
\begin{itemize}
\item The result~\eqref{minimiseur pol} shows the micro/macro scale separation. The polaron's wave-function follows at the macroscopic scale a profile given by minimizing the Pekar functional, but also incorporates microscopic oscillations on the scale of the crystal, described by $\Eperm$. We could just as well use the expansion~\eqref{eq:uper expansion} to replace $\Eperm$. The fast oscillations would not appear in a result phrased in terms of a norm not involving derivatives. In view of~\eqref{resulte1} they however contribute to the energy at the same level as the Pekar energy and it is crucial to extract them to obtain convergence in $H^1$.
\item Combining with a result of Lewin~\cite[Theorem~28]{Lewin-11}, we see that for certain values of $N$ and $\dem$, and in the macroscopic limit, minimizing sequences of our small polaron model do converge strongly, thus correctly accounting for the binging phenomenon, at least for large polarons.
\end{itemize}

\chapter{Mean-field models of superfluids and superconductors}\label{sec:superfluids}

This chapter describes results on two related models: Ginzburg-Landau theory for type II superconductors and Gross-Pitaevskii theory for rotating superfluids. In both cases, the main question is the response of a superfluid (or superconductor, which is often thought of as a charged superfluid) to external gauge fields. This response is indeed subtle and beautiful, and constitutes a hallmark of superfluidity/superconductivity. In this memoir we describe works in two directions:
\begin{itemize}
 \item on the appearance and theoretical description of vortex lattices in rotating trapped Bose gases, Section~\ref{sec:GP vortex}.
 \item on the surface superconductivity state of type II superconductors, Section~\ref{sec:GL}.
\end{itemize}
Previous contributions to the field by the author (in collaborations with Xavier Blanc, Michele Correggi, Florian Pinsker and Jakob Yngvason) can be found in~\cite{BlaRou-08,CorRouYng-11,Rougerie-11,Rougerie-11b,Rougerie-12,CorPinRouYng-11a,CorPinRouYng-11b,CorPinRouYng-12}. They shall not be discussed here, see rather~\cite{CorPinRouYng-12b,CorPinRouYng-13,Rougerie-these} for overviews. 

\section{Inhomogeneous vortex lattices in rotating Bose-Einstein condensates}\label{sec:GP vortex}

One of the remarkable properties of Bose-Einstein condensates is their superfluidity. It leads in particular to the nucleation of quantized vortices, i.e. isolated zeros of the matter density surrounded by a quantized phase circulation, in response to rotation. Among other spectacular observations, that of triangular lattices\footnote{`Abrikosov lattices' of vortices analogous to those occurring in type-II superconductors.} containing up to hundreds of vortices (see, e.g., \cite{MadChevWohDal-00,BreStoSeuDal-04,AboRamVogKet-01,CodETALCor-04,RamETALKet-01}) gave a strong motivation for theoretical studies. 

The rigorous mathematical understanding of vortex patterns in superfluids and superconductors has grown into a field of its own in the past 20 years~\cite{Aftalion-07,BetBreHel-94,SanSer-07}. In particular, for a rotating trapped Bose-Einstein condensate, several transitions characterized by the number and arrangement of vortices have been rigorously investigated~\cite{Aftalion-07,AftAlaBro-05,AftJerRoy-11,IgnMil-06,IgnMil-06b,CorRinYng-07,CorRinYng-07b,CorPinRouYng-11a,CorPinRouYng-12,Rougerie-11b,Rougerie-12}. In this section we discuss the results of~\cite{CorRou-13}, where the onset of vortex lattices is studied in a two-dimensional rotating condensate, starting from 2D Gross-Pitaevskii theory. 

The main finding is that, if the rotation frequency is larger than, but of the same order of magnitude as, a certain critical value, a dense pattern of vortices fills part of the condensate. One can derive an explicit formula for the mean distribution of vortices, showing that in this regime, the distribution is \emph{not} uniform but responds to the underlying inhomogeneous matter density. This contrasts with the situation for larger angular velocities, where the vortex density is uniform, although the matter density might vary in space. 

\subsection{2D Gross-Pitaevskii theory of the rotating Bose gas}

The mathematical framework is as follows: we consider a two-dimensional rotating BEC confined by a trapping potential. After a suitable scaling of length units (see \cite[Section 1.1]{CorPinRouYng-12}), the GP energy functional can be written
\beq
	\label{gpf}
	\gpf[\Psi] : = \int_{\R^2}  \frac{1}{2} \lf| \nabla \Psi \ri|^2 - \Omega \Psi^* L \Psi + \frac{V(r)}{\eps^2} |\Psi| ^2+ \frac{\lf| \Psi \ri|^4}{\eps^2} ,
\eeq
where $ \Omega $ is the angular velocity, the axis of rotation being perpendicular to the plane of the condensate. For simplicity, we consider a homogeneous trapping potential of the form 
\beq
	\label{ext pot}
	V(r) : = r^s,	\hspace{1cm}	s \geq 2.
\eeq 
In~\eqref{gpf}, $ L $ stands for the vertical component of the angular momentum. In polar coordinates $ \rv = (r ,\vartheta ) $, 
$$ L = - i \partial_{\vartheta} $$ 
or equivalently 
$$ L = \rv \cdot \nablap \mbox{ with } \nablap : = (-\partial_y, \partial_x).$$ 
The coupling parameter $ \eps >0 $ is going to be assumed small ($ \eps \ll 1 $), i.e., we study the so called Thomas-Fermi (TF) limit of strong interactions. The ground state energy of the system is obtained by the minimization of $ \gpf$:
\beq
	\label{gpe}
	\gpe : = \inf_{\int_{\R ^2}|\Psi| ^2 = 1 } \gpf[\Psi] = \gpf [\gpm].
\eeq
We denote by $ \gpm $ any associated minimizer (there is no uniqueness in the presence of vortices~\cite{Seiringer-02,CorRinYng-07,CorPinRouYng-12}).

\subsection{Thomas-Fermi approximation for the density profile} In the regime we shall study, the vortex distribution in the condensate will depend on the variations of the underlying matter density profile. The latter, given by $|\gpm| ^2$, can be approximated in the regime $\eps\to 0$, by minimizing the simplified Thomas-Fermi (TF) functional
\beq
	\label{tff}
	\tff[\rho] : = \eps^{-2} \int_{\R^2} \lf[ r^s + \rho \ri] \rho,
\eeq
obtained by dropping the kinetic terms in \eqref{gpf}. Here $\rho\geq 0$ plays the role of the matter density, normalized in $L ^1 (\R ^d)$. The minimizer of \eqref{tff} is the explicit radial function 
\beq
	\label{tfm}
	\tfm(r) = \half \lf[\tfchem - r^s\ri]_+,
\eeq
where $ [ \: \cdot \: ]_+ $ stands for the positive part and $ \tfchem $ is a normalization parameter that ensures 
$$ \int_{\R ^2} \tfm  = 1 .$$
Note that $ \tfm $ has compact support in a ball of radius $ \tfr $ and that a rather simple computation yields
\beq
	\label{tfchem}
	\tfe = \frac{\pi s}{4(s+1) \eps^2} \lf(\tfchem\ri)^{2(s+1)/s},	\quad \rtf = \lf(\tfchem\ri)^{1/s},	\quad \tfchem = \lf( \frac{2(s+2)}{\pi s} \ri)^{s/(s+2)},
\eeq
with $ \tfe $ standing for the TF ground state energy. Essentially, $B(0,\tfr)$ is the region occupied by the condensate: we are able to prove that that $\gpm$ decays exponentially (both as a function of $\rv$ and $\ep$) in $\R ^2 \setminus B(0,\tfr)$.

\subsection{The cost function and the critical velocity} The mechanism for vortex nucleation in rotating superfluids is now rather well understood~\cite{AftAlaBro-05,IgnMil-06,IgnMil-06b,CorRouYng-11,Rougerie-12}. One can encode in a so-called cost function the energetic contribution of a vortex located at a given point. We construct this function from two ingredients:
\begin{itemize}
 \item The energetic cost for vortex nucleation, given by
\begin{equation}\label{eq:intro vcost}
\pi |\log \ep| |d_j|\tfm (a_j)
\end{equation}
with $d_j$ the degree of the vortex and $a_j$ its location. This is the kinetic energy that is needed to generate the phase circulation around the vortex. The dependence on $\eps$ is due to the fact that the healing length of the condensate, which sets the vortex core radius, is equal to $\eps$.
\item The energetic gain brought by the vortex locally compensating the rotation field. This, it turns out, we can express using a potential function
\begin{equation}\label{eq:intro pot}
\tfpot (r) = - \frac{\Omega}{|\log \eps|} \int_{r} ^{\tfr}  t \: \tfm(t).
\end{equation}
The energetic gain associated with a single vortex of degree $d_j$ located at $a_j$ is then 
\begin{equation}\label{eq:intro vgain}
2 \pi |\log \ep| d_j F(a_j). 
\end{equation}
\end{itemize}
The cost function we shall use to determine whether or not a vortex can become energetically favorable at a given location is then
\begin{equation}\label{eq:into costf}
\tfH (r)= \half \tfm(r) + \tfpot (r). 
\end{equation}
Here we take into account the facts that vortices should have positive degrees $d_j$, because $\tfpot$ is clearly negative, and that several vortices of degree $1$ are known to be more favorable than a single vortex of larger degree.

Looking for the minimum of $\tfH$, one finds that it lies at $r=0$, indicating that this is where a vortex is most favorable. Equating gain and cost of a vortex at the origin, one can see that the critical speed for vortex nucleation is given by 
\beq
	\label{eq:first critical speed}
	\Ofirst = \Omega_1 |\log\eps|,	\hspace{1cm}	\Omega_1 : =  \frac{\pi}{2} \lf( \frac{2(s+2)}{\pi s} \ri)^{s/(s+2)},
\eeq
namely $\tfH(0) > 0$ for $\Om < \Ofirst $  and $\tfH(0) < 0$ for $\Om > \Ofirst$. The above value is that already found in~\cite{IgnMil-06,AftJerRoy-11}.

\subsection{The limit problem for the vorticity} 
The question we address here is what happens when $\Omega$ is chosen of the form
\begin{equation}\label{eq:intro Omega}
\Om = \Om_0 |\log \ep|, \qquad \Om_0 > \Om_1,\qquad \Om_0 = \OO(1), 
\end{equation}
i.e., when $\Omega$ is strictly larger than the first critical speed but of the same order of magnitude when $\ep \to 0$. In this regime many vortices are nucleated in the condensate, and their mean distribution is found by minimizing a certain energy functional that we now describe. 

The energy of a given vorticity measure $\nu$, i.e. a given mean distribution of vortices is given in units of $ |\log\eps|^2 $ by the expression
\beq
	\label{ren energy}
	\tfIc  [\nu] = \int_{\tfd} \left\{ \frac{1}{2\tfm} |\nabla h_{\nu}|^2 + \frac{1}{2} \tfm |\nu| + \tfpot \nu \right\} ,
\eeq
where
\beq
	\label{domain D}
	\tfd : = \supp (\tfm) = B(0,\tfr),
\eeq
and the potential $h_\nu$ is determined through the elliptic PDE
\begin{equation}
	\label{eq ren energy}
	 \begin{cases}
		-\nabla \left( \frac{1}{\tfm} \nabla h_{\nu}\right) = \nu	 \mbox{ in } \tfd, \\
		h_{\nu} = 0 								 \mbox{ on } \partial \tfd.
	\end{cases} 
\end{equation}
The last two terms of the functional give, as previously mentioned, the balance of local energy cost and gain of individual vortices. The first term accounts for the long-range, Coulomb-like, interaction between vortices. Note indeed the analogy between this expression and classical electrostatics: $h_{\nu}$ corresponds to a potential generated by $\nu$ in a medium with inhomogeneous conductivity (encoded in the variations of $\tfm$). 

The minimization of $\tfIc$ in its natural energy space
\beq
	\label{eq: intro measure class}
	\cM_{\mathrm{TF}} (\tfd) = \left\lbrace \nu \in \left(C^0_c (\tfd)\right) ^*,\: \int_{\tfd} \lf\{ \frac{1}{\tfm} |\nabla h_{\nu}| ^2 +  \tfm |\nu| \ri\} < +\infty \right\rbrace                                                                                                                                        \eeq
turns out to be explicit. We proved that $ \tfIc $ has a unique minimizer $ \musta $ among the measures in $\cM_{\tfm} (\tfd)$. It is explicitly given by 
\beq
	\label{musta}
	\musta =  \lf[ \nabla \lf( \frac{1}{\tfm} \nabla \tfH \ri) \ri]_+ \one_{\left \lbrace\tfH \leq 0\right\rbrace},
\eeq
and we also get 
\begin{equation}\label{eq:tfI}
\tfI: = \tfIc [\musta] = \frac{1}{2} \int_{\supp(\musta)} \tfH \musta.
\end{equation}
Note that~\eqref{musta} can be rewritten as 
\begin{equation}
\label{eq:musta 2}
\musta = \left[  \frac{1}{2}   \partial_r ^2 \log \lf( \tfm \ri) + 2 \Omega_0 \right] _{+} \one_{\left \lbrace\tfH \leq 0\right\rbrace},
\end{equation}
a formula that had previously been obtained by formal means in~\cite{SheRad-04,SheRad-04b}. Our goal will be to derive it rigorously from the full GP theory.

\subsection{Definition of the vorticity measure}

It is convenient to be a little bit more precise in our way of describing the variations of matter density. Corrections to the Thomas-Fermi profile previously introduced can be taken into account through the energy functional
\beq
	\label{hgpf}
	\hgpf[f] : = \int_{\R^2}  \half |\nabla f |^2 + \eps^{-2} \lf[ r^s + f^2 \ri] f^2, 
\eeq
i.e. the original GP functional~\eqref{gpf} restricted to \emph{real} functions. It is a standard fact that the ground state energy 
\beq
	\label{hgpdom}
	\hgpe : = \min_{\int_{\R ^2} |f| ^2 = 1 }  \hgpf[f] = \hgpf [g]
\eeq
is achieved by a function $g$ which is unique up to sign. We take the convention that $g\geq 0$ and one can show that $g$ does not vanish. This function corresponds to a vortex-free profile, taking into account radial kinetic energy corrections to the TF minimizer due to the bending of the matter density.

It is natural to write $\gpm$ in the form
\begin{equation}\label{eq:decouple formal}
\gpm = g u,  
\end{equation}
where $u$ is essentially a phase factor accounting for the vortices of $\gpm$. A convenient way of spotting those is to use the so-called \emph{vorticity measure}
\begin{equation}\label{eq:intro mu}
\mu:= |\log \eps| ^{-1} \curl \left[ \frac{i}{2}\left(u\nabla u ^* - u ^* \nabla u \right) \right]
\end{equation}
which is (up to the $|\log \ep| ^{-1}$ factor) nothing but the curl of the superfluid current 
\begin{equation}\label{eq:intro j}
\mathbf{j} :=  \frac{i}{2}\left(u\nabla u ^* - u ^* \nabla u \right)
\end{equation}
and thus (in analogy with fluid mechanics) a good candidate to count the vortices of $u$. On can indeed show that, in the regime of our interest, this ``intrinsic'' measure is very-well approximated by an ``explicit'' measure counting the vortices of $\gpm$, weighted by their degrees.

\subsection{Main theorem: the onset of vortex lattices} 

We may now state the main result of~\cite{CorRou-13}. As usual~\cite{AftAlaBro-05,IgnMil-06b,Rougerie-12}, one cannot spot vortices lying too close to the boundary of the domain. Indeed, $\tfm$ vanishes on $\dd \tfd$ and thus, according to \eqref{eq:intro vcost}, a vortex lying close to the boundary carries very little energy. We will thus limit ourselves to analyzing the behavior of $\mu$ in the smaller ball $B(0,\rbulk)$ with $\rbulk$ satisfying 
\begin{equation}\label{eq:intro Rc}
\rbulk < \tfr,\qquad \left| \rbulk - \tfr\right| = \OO(\Om ^{-1}).
\end{equation}
The convergence of the vorticity measure will be quantified in the norm
\begin{equation}\label{eq:norm}
\norm{\nu}_{\rm TF}:= \sup_{\phi \in C^1_c (B(0,\rbulk))} \frac{\bigg| \displaystyle\int_{B(0,\rbulk)} \nu \phi \bigg|}{\displaystyle \left(\int_{B(0,\rbulk)}  \frac{1}{\tfm} |\nabla \phi| ^2\right)^{1/2} + \norm{ \nabla \phi} _{L ^{\infty} (B(0,\rbulk)) }}
\end{equation}
which turns our to be well-adapted to both the study of the limit problem~\eqref{ren energy} and to the methods of proofs we use. The statement is as follows:

\begin{theorem}[\textbf{Asymptotics for the vorticity measure}]\label{thm:vorticity}\mbox{}\\
Let $\mu$ and $\musta$ be defined respectively in \eqref{eq:intro mu} and \eqref{musta}. Then we have
\begin{equation}\label{eq:vortic asympt}
\left\| \mu - \musta \right\|_{\rm TF} \leq \OO \left( \frac{\log |\log \ep| ^{1/2} }{|\log \ep| ^{1/4}}\right)
\end{equation}
in the limit $\ep \to 0$.
\end{theorem}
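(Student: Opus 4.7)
The plan is to establish \eqref{eq:vortic asympt} by a standard \emph{energy matching} strategy: derive matching asymptotic expansions for $\gpe$ from above (via a suitable trial state) and from below (via vortex-ball arguments), then exploit the coercivity of $\tfIc$ around its unique minimizer $\musta$ in order to translate the resulting control on $\tfIc[\mu] - \tfI$ into a quantitative statement on $\|\mu-\musta\|_{\rm TF}$. The architecture is by now classical for this kind of problem (see e.g.\ \cite{AftAlaBro-05,IgnMil-06b,Rougerie-12}); the work goes into adapting the known tools to the weighted setting dictated by $\tfm$ and to the specific dual norm appearing in the statement.

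First I decouple density from phase by writing $\gpm = g\, u$, with $g$ the positive minimizer of \eqref{hgpf}. Using the variational equation satisfied by $g$ (and the fact, proved by similar means as in \cite{CorRinYng-07,CorPinRouYng-11a}, that $g^2$ is pointwise exponentially close to $\tfm$ on any compact subset of $\tfd$), a direct calculation yields an identity of the schematic form
\begin{equation*}
\gpf[\gpm] - \hgpe \;=\; \int_{\R^2} \Bigl\{ \tfrac{1}{2} g^2 |\nabla u|^2 \;-\; \Omega\, g^2\, \bigl(iu,\, \rvp\!\cdot\!\nabla u\bigr) \;+\; \tfrac{1}{2\eps^2} g^4 (1-|u|^2)^2 \Bigr\} \;=:\; \hat{\mathcal{F}}[u],
\end{equation*}
which concentrates all vortex information in the weighted Ginzburg-Landau functional $\hat{\mathcal{F}}$. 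For the upper bound I build a trial function whose phase carries a large number of degree $+1$ vortices distributed according to a regularization of $\musta$, with cores of size $\eps$ and a modulus close to $g$. A direct computation, summing the local cost $\pi|\log\eps| \tfm(a_j)$ and the local gain $2\pi|\log\eps|\tfpot(a_j)$ of each vortex, gives
\begin{equation*}
\hat{\mathcal{F}}[\Psit/g] \;\leq\; |\log\eps|^2\, \tfI \;+\; o(|\log\eps|^2) .
\end{equation*}

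For the matching lower bound I apply the Jerrard-Sandier vortex-ball construction inside $B(0,\rbulk)$, adapted to the weight $g^2\simeq \tfm$, to isolate a finite collection of disjoint balls $B(a_j,r_j)$ on which $|u|\simeq 1$ and which carry well-defined degrees $d_j$. One obtains the pointwise lower bound
\begin{equation*}
\int_{B(a_j, r_j)} \tfrac{1}{2} g^2 |\nabla u|^2 \;\geq\; \pi |\log\eps|\, |d_j|\, \tfm(a_j) \;-\; o(|\log\eps|),
\end{equation*}
while the Sandier-Serfaty product/coupling estimate converts the remaining kinetic energy outside the balls into a weighted Dirichlet control on $h_\mu$ solving \eqref{eq ren energy} with $\nu=\mu$. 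An integration by parts reduces the angular-momentum term to $|\log\eps|^2 \int \tfpot \mu$, and adding these contributions gives
\begin{equation*}
\hat{\mathcal{F}}[u] \;\geq\; |\log\eps|^2\, \tfIc[\mu] \;-\; o(|\log\eps|^2).
\end{equation*}

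Matching upper and lower bounds then forces $\tfIc[\mu] - \tfI \leq o(1)$. Because $\tfIc$ is strictly convex in $h_\nu$ and $\musta$ is characterized as the solution of an obstacle problem, one expects a coercivity estimate of the form
\begin{equation*}
\tfIc[\mu] - \tfI \;\geq\; c \int_{\tfd} \frac{1}{\tfm}\, \bigl|\nabla(h_\mu - h_{\musta})\bigr|^2
\end{equation*}
on $\supp\musta$. The norm $\|\cdot\|_{\rm TF}$ is by design essentially the dual of the weighted Dirichlet norm on the right, so a Cauchy-Schwarz argument on the testing integral in \eqref{eq:norm} converts this into \eqref{eq:vortic asympt}; the explicit rate $\log|\log\eps|^{1/2}/|\log\eps|^{1/4}$ comes from tracking the size of the $o(|\log\eps|^2)$ errors (typically of order $|\log\eps|^{3/2}\log|\log\eps|$), with one square root lost in passing from the $L^2$-type energy estimate to the dual norm. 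The hardest part will be the boundary behavior: near $\partial\tfd$ the weight $\tfm$ degenerates, so both the vortex-ball construction and the weighted Dirichlet bound deteriorate. This is exactly why the conclusion is stated inside $B(0,\rbulk)$ with $\rbulk$ as in \eqref{eq:intro Rc}; justifying that vortices do not accumulate in the thin annulus $B(0,\tfr)\setminus B(0,\rbulk)$ requires combining the a priori exponential decay of $\gpm$ past $\tfr$ with a careful Pohozaev-type localization of the lower bound, and will dictate the precise value of $\rbulk$.
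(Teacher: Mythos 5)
Your overall architecture is right and matches the strategy used in \cite{CorRou-13}: split off the radial density $g$ via the Euler--Lagrange equation to isolate a weighted Ginzburg--Landau functional, construct a trial state whose vortex distribution mimics $\musta$ for the upper bound, use a ball construction adapted to the weight $\tfm$ for the lower bound, match, and finally convert the resulting energy excess into a distance between $\mu$ and $\musta$ via coercivity of $\tfIc$. The rate $\log|\log\eps|^{1/2}/|\log\eps|^{1/4}$ you attribute to the square root of the relative energy error is also correct. However there is a gap in the last step, and it is precisely the step that motivates the somewhat unusual definition of $\norm{\cdot}_{\rm TF}$.

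You claim that the norm \eqref{eq:norm} is ``by design essentially the dual of the weighted Dirichlet norm.'' This is not so: the denominator in \eqref{eq:norm} is the \emph{sum} of the weighted $H^1$ seminorm of $\phi$ \emph{and} $\norm{\nabla\phi}_{L^\infty}$, and the second term cannot be absorbed into the first. It is there because the two natural controls one extracts from the proof live in two different dual spaces. The coercivity of $\tfIc$ (together with the vortex-ball lower bound) gives quantitative $H^{-1}_{1/\tfm}$-type control, but only on the \emph{explicit} vortex measure $\mu_\eps \propto \sum_j d_j\delta_{a_j}$ produced by the ball construction, since the term $\frac12\int\tfm|\nu|$ in $\tfIc$ is only matched against $\sum_j \pi|d_j|\tfm(a_j)$ and not against the total variation of the smooth intrinsic measure $\mu$ defined in \eqref{eq:intro mu}. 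The passage from $\mu_\eps$ back to $\mu$ is a Jerrard--Soner/Sandier--Serfaty Jacobian estimate, which only holds in the flat norm, i.e.\ against test functions with bounded gradient; this is exactly where the $\norm{\nabla\phi}_{L^\infty}$ piece is used. Your argument, as stated, skips the Jacobian estimate entirely, and treating $\norm{\cdot}_{\rm TF}$ as a pure weighted $H^{-1}$ dual norm leaves the term $\mu - \mu_\eps$ uncontrolled.

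A second, smaller, issue: the coercivity inequality $\tfIc[\nu]-\tfI \gtrsim \int \tfm^{-1}|\nabla(h_\nu - h_{\musta})|^2$ is not a consequence of ``strict convexity in $h_\nu$'' alone, because $\frac12\int\tfm|\nu|$ is convex but not strictly so. One must use the Euler--Lagrange (obstacle-problem) relations for $\musta$: after an integration by parts the linear terms combine into $\int(h_{\musta}+\tfpot\pm\frac12\tfm)\,\sigma$ pieces, and their nonnegativity requires complementary slackness \emph{together with} a sign condition on the competitor measure (or a separate bound on its negative part). This is a genuine point of the proof that cannot be waved through; the vortex-ball output does give control on negative degrees, but it has to be invoked explicitly.
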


In the course of the proof we establish the following energy estimate, showing that the vortex energy gives the first correction to the density profile energy in the limit $\eps \to 0$:

\begin{theorem}[\textbf{Ground state energy asymptotics}]
	\label{thm:gse asympt}
	\mbox{}	\\
	If $ \Omega = \Omega_0 |\log\eps| $, with $ \Omega_0 > \Omega_1 $, then
	\beq
		\label{gse asympt}
		\gpe = \hgpe +\tfI  |\log\eps|^2 \left( 1 + \OO\left(\frac{\log |\log \ep|}{|\log\eps| ^{1/2}}\right)\right)
	\eeq
	in the limit $\eps \to 0$.
\end{theorem}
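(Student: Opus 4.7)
The plan is to prove matching upper and lower bounds on $\gpe - \hgpe$, with an error controlled at the required rate. The first step is the Lassoued--Mironescu-type energy decoupling. Writing any test state as $\Psi = g u$ with $g$ the unique positive minimizer of $\hgpf$ and using the Euler--Lagrange equation for $g$, one checks that
\begin{equation*}
\gpf[gu] = \hgpf[g] + \cE^{\mathrm{red}}[u], \qquad
\cE^{\mathrm{red}}[u] := \int_{\R^2} g^2 \left( \tfrac{1}{2}|\nabla u|^2 - \Omega\, u^* L u \right) + \frac{1}{2\eps^2}\int_{\R^2} g^4 (|u|^2-1)^2,
\end{equation*}
together with the normalization $\int g^2 |u|^2 = 1$. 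Thus proving the theorem reduces to showing $\inf \cE^{\mathrm{red}} = \tfI |\log\eps|^2 (1+o(1))$, with $g^2$ playing the role of the density profile, exponentially small outside $B(0,\tfr)$.

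For the upper bound, I would build a trial phase factor $u_{\mathrm{trial}}$ with quantized vortices of degree $+1$ located on an approximate lattice whose local density matches $\musta$ inside $\supp(\musta)$, and with $u_{\mathrm{trial}} \equiv 1$ elsewhere. On cells of size $\ell(x) \sim |\log\eps|^{-1/2} \musta(x)^{-1/2}$ around each vortex, the standard construction of $u$ as a product of regularized phase singularities gives a kinetic contribution $\pi g^2(a_j) \log(\ell(a_j)/\eps)$ per vortex, which sums (by Riemann integration) to $\pi |\log\eps|^2 \int g^2 \musta$ to leading order; the negative rotation contribution is computed by integration by parts against the primitive $\tfpot$ of $\Omega\, t\,\tfm$, yielding $-2\pi |\log\eps|^2 \int F\, \musta$. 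The long-range interaction term $\frac{1}{2\tfm}|\nabla h_{\musta}|^2$ appears through the collective part of $|\nabla u|^2$ once the explicit self-energies of each vortex are subtracted. Summing, the trial energy is bounded by $\hgpe + \tfIc[\musta]|\log\eps|^2$ plus a remainder of order $|\log\eps|^{3/2}\log|\log\eps|$ coming from the Riemann sum error and the imperfect normalization $\int g^2 |u_{\mathrm{trial}}|^2 = 1$.

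For the matching lower bound, the core tool is the Jerrard--Sandier vortex ball construction, adapted to the inhomogeneous weight $g^2$. Starting from an a priori upper bound $\cE^{\mathrm{red}}[u] \leq C|\log\eps|^2$ (from the upper bound) one obtains an a priori estimate on $(|u|^2-1)^2/\eps^2$, hence localizes the set where $|u|$ is substantially smaller than $1$ in disjoint balls $B(a_j,r_j)$ of total radius $\ll 1$. On the complement, $u$ has a well-defined degree $d_j$ on each $\partial B(a_j,r_j)$, and a ball-growth argument yields the weighted lower bound
\begin{equation*}
\int_{\R^2} g^2 |\nabla u|^2 \geq 2\pi \sum_j g^2(a_j)\, |d_j| \log\!\frac{r_j}{\eps} - o(|\log\eps|^2).
\end{equation*}
Combining this with the integration-by-parts identity $-\Omega \int g^2 u^* L u = 2\pi \sum_j d_j F(a_j) |\log\eps| + \text{(jellium-type interaction)}$ obtained via the potential $\tfpot$, and discarding contributions from $d_j<0$ (which increase the cost), one recognizes the discrete analogue of $|\log\eps|^2 \tfIc[\nu_{\mathrm{emp}}]$ with $\nu_{\mathrm{emp}} = |\log\eps|^{-1}\sum_j d_j \delta_{a_j}$. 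A compactness argument in the $\norm{\cdot}_{\mathrm{TF}}$ norm shows $\nu_{\mathrm{emp}} \rightharpoonup$ a measure minimizing $\tfIc$, i.e.\ $\musta$, whence $\cE^{\mathrm{red}}[u] \geq \tfI |\log\eps|^2 (1-o(1))$.

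The hard part will be the lower bound, specifically two intertwined issues: controlling the ball construction uniformly in the inhomogeneous weight $g^2$ (which degenerates near $\partial \tfd$), and establishing the precise error rate $\log|\log\eps|/|\log\eps|^{1/2}$. For the first, I would restrict the ball construction to $B(0,\rbulk)$ as in \eqref{eq:intro Rc} and handle the boundary layer via the exponential decay of $\gpm$. For the rate, one needs to quantify both the Riemann-sum error in the upper bound and the compactness/approximation error in $\tfIc$; working in the norm $\norm{\cdot}_{\mathrm{TF}}$ dual to the natural energy space of $\tfIc$ is precisely what converts the $L^2$-type control on $\nu_{\mathrm{emp}} - \musta$ (of size $|\log\eps|^{-1/4}$ as in Theorem~\ref{thm:vorticity}) into the squared-order energy correction stated in \eqref{gse asympt}.
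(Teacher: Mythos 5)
Your proposal captures the approach of \cite{CorRou-13} faithfully: Lassoued--Mironescu decoupling $\gpm = g u$ (as in \eqref{eq:decouple formal}), a trial function built on a vortex lattice sampling $\musta$ at spacing $\sim|\log\eps|^{-1/2}$, a Jerrard--Sandier vortex-ball lower bound with weight $g^2$ restricted to $B(0,\rbulk)$, and the conversion of the rotation term into the potential $\tfpot$ by integration by parts. The error rate $|\log\eps|^{-1/2}\log|\log\eps|$ arising from the lattice-discretization of $\musta$ and from the ball-construction remainders is also correctly identified.

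One caution on logic: in the last paragraph you invoke Theorem~\ref{thm:vorticity} to fix the error rate of Theorem~\ref{thm:gse asympt}, but the dependence runs the other way. The paper establishes the energy estimate first --- the lower bound uses only the \emph{minimality} of $\musta$, i.e.\ that $\tfIc[\nu]\geq \tfI$ for any competitor $\nu$ obtained from the empirical vorticity, not its convergence to $\musta$ --- and then deduces the vorticity estimate by a quantitative stability (quadratic coercivity) argument for $\tfIc$ near its unique minimizer, which is precisely why the vorticity rate $|\log\eps|^{-1/4}$ is the square root of the energy rate $|\log\eps|^{-1/2}$. Stated the way you have it, citing~\eqref{eq:vortic asympt} to prove~\eqref{gse asympt} would be circular. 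The fix is cosmetic: drop the reference to Theorem~\ref{thm:vorticity}, obtain the rate directly from the trial-function discretization error and the ball-construction remainders, and then note afterwards that coercivity of $\tfIc$ converts this into the vorticity estimate.
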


To conclude, we discuss several properties of the (average) vortex distribution that can be read off from~\eqref{eq:musta 2}. 
\begin{itemize}
\item When $\Omega_0 < \Omega_1$ in \eqref{eq:first critical speed}, a straightforward computation reveals that the cost function $\tfH$ is positive everywhere. Vortices are thus not favorable and the vortex distribution vanishes identically. We thus recover the expression of~\cite{IgnMil-06,AftJerRoy-11} for the first critical speed.
\item In the regime \eqref{eq:intro Omega}, one can compute that there is a non-empty region where $\tfH <0$ and $\nabla \frac{1}{\tfm} \nabla \tfH > 0$, whose size increases with increasing $\Omega_0$ until it finally fills the whole sample in the limit $\Omega_0 \to \infty$. This region is, according to \eqref{eq:musta 2}, filled with vortices, a behavior that is reminiscent of the `obstacle problem regime' in Ginzburg-Landau theory \cite{SanSer-07}. 
\item The first term in the right-hand side of \eqref{eq:musta 2} cannot be constant, except when $\tfm$ is constant itself, which can only happen in the flat trap, where one formally sets $s=\infty$. This term is thus responsible for an inhomogeneity of the vortex distribution, whereas the second term yields a constant contribution of $2 \Omega_0$ ($2 \Omega$ in the physical variables) units of vorticity per unit area.
\item The first term in \eqref{eq:musta 2}, responsible for the inhomogeneity, is independent of $\Omega_0$. Its importance relative to the second one thus diminishes with increasing $\Omega_0$. The inhomogeneity then becomes a second order correction in the limit $\Omega_0 \to \infty$, which corresponds to $\Omega \gg |\log \ep|$. This is observed in experiments~\cite{CodETALCor-04} and numerical simulations~\cite{Danaila-05} and bridges with the situation considered in \cite{CorPinRouYng-12}, where we proved that, if $\Omega \gg |\log \ep|$, the vortex density is to leading order constant and proportional to $2\Omega$.
\end{itemize}

\section{The surface superconductivity regime in Ginzburg-Landau theory}\label{sec:GL}

The response of a superconducting material to an applied magnetic field displays very rich physics. It is commonly~\cite{SanSer-07,FouHel-10} studied within the effective mean-field/semi-classical Ginzburg-Landau theory. The latter has been introduced on phenomenological grounds in~\cite{GinLan-50} and later derived~\cite{Gorkov-59,deGennes-66} from the microscopic Bardeen-Cooper-Schrieffer~\cite{BarCooSch-57} theory. The derivation has been made mathematically rigorous only very recently~\cite{FraHaiSeiSol-12}. 

Here we describe the results of~\cite{CorRou-14,CorRou-16,CorRou-16b} on the surface superconductivity state of type II superconductors. This is one of the two main types of mixed phases occurring for large Ginzburg-Landau parameter when a magnetic field is applied. Indeed, the phenomenology is as follows:
\begin{itemize}
 \item For low applied magnetic field $\Hex$, the full sample is superconducting and the applied magnetic field is expelled (zero magnetic field in the sample). This is the so-called Meissner effect.
 \item When the applied field reaches a first critical value $\Hc$, vortices start to appear in the sample. These are small regions where the material is in its normal, non-superconducting, phase, surrounded by a sea of the superconducting phase. The vortices carry a phase circulation and organize in triangular lattices, see~\cite{Hess-89} for experiments. The situation in this regime is reminiscent of what has been discussed in Section~\ref{sec:GP vortex} for trapped superfluids (see~\cite{SanSer-07} for review) but shall not concern us here.
 \item At a second critical value $\Hcc$ of the applied field, superconductivity is lost uniformly in the bulk of the material, but survives close to the boundary. This is the \emph{surface superconductivity} phenomenon we shall focus on, first predicted in~\cite{JamGen-63}. Although clear experimental evidence has been reported pretty soon after the prediction~\cite{Strongin-64}, it seems that direct experimental pictures of this phase became available only very recently~\cite{Ning-09}.
 \item Finally, for applied fields larger than a third critical value $\Hccc$, even surface superconductivity is destroyed by the magnetic field, and the full sample is in the normal phase.
\end{itemize}
Note that the above phenomenology stricto sensu applies only to samples with smooth boundaries. If the boundary has corners, superconductivity can survive in those for fields $\Hccc \leq \Hex \leq \Hcccc$ where $\Hcccc$ is yet another, distinct, critical value. We shall not discuss this case here: we restrict to smooth boundaries, and thus there are only three critical fields. We shall work in the regime $\Hcc \leq \Hex \leq \Hccc$. A review of the state of the art on this regime, prior to our papers~\cite{CorRou-14,CorRou-16,CorRou-16b}, can be found in the monograph~\cite{FouHel-10}.  

\subsection{2D Ginzburg-Landau theory} We consider an infinitely long superconducting cylinder of cross-section $\Omega \subset \R ^2$. Throughout, we assume that $\Omega$ is compact and simply connected, with smooth boundary. The state of a superconductor is described by an order parameter $\Psi:~\Omega\to\C$ and an induced magnetic vector potential $ \aav:\Omega \to \R ^2 $ generating an induced magnetic field 
$$h= \frac{1}{\eps^2} \: \curl \, \aav$$
which is \emph{different} from the applied field. 

We write the Ginzburg-Landau energy functional in units convenient for the discussion of surface superconductivity (other conventions are commonly used~\cite{SanSer-07,FouHel-10}):
\beq\label{eq:GL func eps}
	\glfe[\Psi,\aav] = \int_{\Omega} \lf\{ \lf| \nabla_{\aav}\Psi \ri|^2  \frac{1}{2 \hex \eps^2} \lf( |\Psi|^4 - 2|\Psi|^2 \ri) + \frac{1}{\eps^4} \lf| \curl \aav - 1 \ri|^2 \ri\}.
\eeq
Here we use the standard covariant derivative
$$ \nabla_{\aav} : =  \nabla + \frac{i}{\eps^2}  \aav, $$
$ b $ and $ \eps  $ are positive parameters depending on the material and the applied field, that we assume to be constant. Units have been chosen in such a way that $ \eps^{-2} $ measures the intensity of the external magnetic field. We shall study the asymptotic regime $\eps \to 0$, for values of $b$ satisfying
\begin{equation}\label{eq:surf regime}
 1 < b < \theo ^{-1} \simeq 1.7
\end{equation}
where $\theo$ is a spectral parameter (minimum ground state energy of the shifted harmonic oscillator on the half-line):
$$ \theo := \min_{\alpha \in \R } \: \min \left\{  \int_{0} ^{+ \infty}  |\dd_t u | ^2 + (t+\alpha) ^2 |u| ^2, \quad \int_{0} ^{+\infty} |u| ^2 = 1 \right\}.$$
The case $b=1$ (respectively $b=\theo ^{-1}$) corresponds to the second critical field (respectively the third critical field).

The above functional is invariant under the change of gauge
$$ \Psi \to \Psi e ^{i\varphi}, \quad \aav \to \aav - \eps ^2 \nabla \varphi.$$
Thus, the only physically relevant quantities are the gauge invariant ones such as the density $|\Psi| ^2$, which provides the local relative density of superconducting electrons (bound in Cooper pairs). Any minimizing $\Psi$ must satisfy $|\Psi|  \leq 1$. A~value $|\Psi| = 1$ (respectively, $|\Psi| = 0$) corresponds to the superconducting (respectively, normal) phase where all (respectively, none) of the electrons form Cooper pairs. We are interested in the ground state problem
\begin{equation}\label{eq:GL GSE}
 \glee  = \min_{(\Psi, \aav)} \glfe[\Psi,\aav]  
\end{equation}
and denote $ (\glm,\aavm) $ any minimizing pair.  

\subsection{Effective functionals} In the surface superconductivity regime~\eqref{eq:surf regime}, the order parameter $\glm$ is concentrated on a length scale of order $\eps$ close to the boundary of the sample. In addition, the applied magnetic field essentially penetrates the sample, so that 
$$\curl\aavm \simeq 1.$$
Justifying these expectations mathematically is a hard task~\cite{FouHel-10}. Accepting them, one is lead to natural reduced models describing the physics in the boundary layer. We use scaled boundary coordinates $s$ and $t$ (tangential and normal coordinates, in units of $\eps ^{-1}$). Performing a suitable choice of gauge, one may obtain:
\begin{itemize}
\item A functional corresponding to the case of a half-plane sample 
\begin{equation}\label{eq:GL hp func}
\Ehp [\psi] = \int_{s=0} ^{|\partial \Omega|\eps ^{-1}} \int_{t=0} ^{+\infty} \lf\{ \left|\left( \nabla - i t \es \right) \psi\right| ^2  + \frac{1}{2b} |\psi| ^4 - \frac{1}{b} |\psi| ^2 \ri\}.  
\end{equation}
Here one approximates the physical boundary by a straight line using the fact that all the physics happens at distance $O(\eps)$ of the boundary and neglecting lower order corrections in $\eps$.
\item A 1D functional obtained by inserting an ansatz $\psi(s,t) = f (t) e^{-i\alpha s}$ in the above:  
\begin{equation}\label{eq:GL 1D func 0}
 \fone_{0,\alpha}[f] : = \int_0^{+\infty} \lf| \partial_t f \ri|^2 + (t + \alpha )^2 f^2 + \frac{1}{2b} \lf( f^4 -2 f^2 \ri) . 
\end{equation}
It is certainly not obvious, but nevertheless true and part of our results, that such an ansatz is optimal for computing the energy in the regime~\eqref{eq:surf regime}.
\item The above functionals neglect the boundary's curvature altogether. One can, and must, go beyond this approximation for the next results. This is done by introducing a refined 1D functional 
\begin{equation}\label{eq:GL 1D func k}
\fone_{k,\alpha}[f] : = \int_0^{c_0|\log\eps|} (1-\eps k t )\lf\{ \lf| \partial_t f \ri|^2 + \frac{(t + \alpha  - \frac12 \eps k t ^2 )^2}{(1-\eps k t ) ^2} f^2 + \frac{1}{2b} \lf(f^4 - 2 f^2 \ri) \ri\} 
\end{equation}
where $k$ plays the role of a local curvature and $c_0$ is an (essentially) arbitrary constant. This is the natural refinement of~\eqref{eq:GL 1D func 0} one should introduce in the case of a disk sample of radius $R = k ^{-1}.$ One can then use it to locally approximate the physical sample's boundary by the auscultating circle rather than by the tangent.
\end{itemize}

\subsection{Energy and density asymptotics} The link between the full Ginzburg-Landau energy and the above reduced functionals is given by the next theorem. Minimizing~\eqref{eq:GL 1D func k} with respect to both the function $f$  and the real number $\alpha$ we obtain an optimal energy $\eone_\star (k)$, an optimal profile $\fk$ and an optimal phase $\alk$. The rationale is that essentially
$$
\glm(\rv) = \glm (s,t) \approx f_{k(s)} \left( t \right)  \exp \left( - i \alpha_{k(s)} s \right)  
$$
up to a suitable gauge choice, i.e. that for a fixed tangential coordinate $s$, $\glm$ is well-approximated by minimizing the reduced energy~\eqref{eq:GL 1D func k}.

\begin{theorem}[\textbf{Surface superconductivity: energy and density asymptotics}]\label{thm:GL energy}\mbox{}\\
		Denote $s\mapsto k(s)$ the curvature of the boundary $\dd \Omega$. For any fixed $1<b<\theo ^{-1}$, in the limit $ \eps \to 0$, we have
		\begin{equation}\label{eq:energy GL}
			\glee = \frac{1}{\eps} \int_0^{|\partial \Omega|} \eone_\star \left(k(s)\right) ds  + \OO (\eps |\log \eps| ^a). 
		\end{equation}
		and
		\begin{equation}\label{eq:main density}
			\left\Vert |\glm| ^2 -  \left|f_{k(s)}  (t)\right| ^2  \right\Vert_{L ^2 (\Om)} = \OO(\eps ^{3/2} |\log \eps| ^a)
		\end{equation}
		for some power $a>0$.
	\end{theorem}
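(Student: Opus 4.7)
The proof follows the standard two-sided strategy: one derives matching upper and lower bounds on $\glee$, then exploits the coercivity of the 1D reduced problem~\eqref{eq:GL 1D func k} to pass from energy proximity to density proximity. The upper bound is constructive; the lower bound requires localizing in the boundary layer, introducing boundary coordinates, and reducing to the 1D functionals slice by slice along $\dd\Om$.

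For the upper bound, I would construct a trial pair $(\Psi_{\rm tr}, \aav_{\rm tr})$ concentrated in a tubular neighborhood of $\dd \Om$ of scaled width $c_0 |\log\eps|$. Fix the gauge $\curl \aav_{\rm tr} \equiv 1$ in this neighborhood, smoothly extended outside, and in boundary coordinates $(s,t)$, with $s$ the arclength along $\dd\Om$ and $t$ the $\eps^{-1}$-scaled normal distance, set
\begin{equation*}
\Psi_{\rm tr}(s,t) := \chi(t)\, f_{k(s)}(t)\, \exp\!\left( -\frac{i}{\eps}\int_0^s \alpha_{k(s')}\,ds'\right),
\end{equation*}
where $\chi$ is a smooth cutoff supported in $[0,c_0|\log\eps|]$. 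A direct computation shows that the leading term of $\glfe[\Psi_{\rm tr},\aav_{\rm tr}]$ is $\eps^{-1}\int_0^{|\dd\Om|}\eone_\star(k(s))\,ds$. The main errors to control are (a) the smooth $s$-variation of $k(s)$ and $\alpha_{k(s)}$, producing only $O(\eps)$ contributions thanks to uniform regularity of these maps; (b) the cutoff, harmless because $f_{k(s)}$ decays exponentially in $t$; (c) the gauge interpolation away from the boundary. This yields the $\leq$ direction of~\eqref{eq:energy GL}.

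For the lower bound, I would first establish Agmon-type decay estimates showing that $|\glm|$ is exponentially small outside a boundary layer of scaled width $C|\log\eps|$, reducing the analysis to this layer at a cost $O(\eps^\infty)$. Next, in boundary coordinates, partition $[0,|\dd\Om|]$ into cells of length $\ell$ with $\eps\ll \ell \ll 1$, chosen so that $k$ is essentially constant on each cell, and introduce an associated partition of unity. A local change of gauge brings $\aavm$ into the form appearing in~\eqref{eq:GL 1D func k}, up to a remainder controlled by the magnetic energy $\eps^{-4}\int|\curl\aavm - 1|^2$. Within each cell, applying Fubini, fixing $s$, and discarding positive cross-terms, the 1D energy slice is bounded below by $\eone_\star(k(s))$ by the very definition of the minimization of~\eqref{eq:GL 1D func k}. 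Summing over cells and optimizing $\ell$ produces the $\geq$ inequality with remainder $O(\eps|\log\eps|^a)$.

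For the density estimate~\eqref{eq:main density}, the mechanism is that $\fone_{k,\alpha}$ is strictly coercive around its minimum: one has a bound of the form $\fone_{k,\alpha}[f] - \eone_\star(k) \geq c\,\|f^2 - f_k^2\|_{L^2}^2$, uniformly for $k$ in a compact range and $\alpha$ close to $\alpha(k)$. Combining this with the cell-by-cell matching of the energy bounds, then summing and applying Cauchy--Schwarz, yields~\eqref{eq:main density}. The hardest piece of the whole argument is the lower bound---specifically, controlling magnetic field fluctuations sharply enough that the gauge substitution in each cell costs no more than $O(\eps|\log\eps|^a)$. The naive bound coming from the trivial upper estimate on $\int|\curl\aavm-1|^2$ is too weak: one must bootstrap from the already-established leading-order energy asymptotics to obtain an improved \emph{a posteriori} estimate on the induced field deviation, which is what makes this step technically delicate.
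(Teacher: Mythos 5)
Your upper bound construction is essentially the right one, modulo a periodicity subtlety: the phase $\exp(-i\eps^{-1}\int_0^s \alpha_{k(s')}\,ds')$ must be single-valued on $\partial\Omega$, which forces $\eps^{-1}\oint \alpha_{k(s')}\,ds'\in 2\pi\Z$; this is not automatic and requires a small adjustment of the phases, but only costs lower-order errors.

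The genuine gap is in the lower bound, in the step where you claim that, on a cell, ``applying Fubini, fixing $s$, and discarding positive cross-terms, the 1D energy slice is bounded below by $\eone_\star(k(s))$ by the very definition'' of $\fone_{k,\alpha}$. This reduction does not work. After the gauge choice, the tangential kinetic term in boundary coordinates is $|(\partial_s - i(t+\alpha))\psi|^2 = |\partial_s\psi|^2 + (t+\alpha)^2|\psi|^2 + 2(t+\alpha)\,\mathrm{Im}(\bar\psi\,\partial_s\psi)$, and the cross term carries no sign, so there is nothing legitimate to discard that recovers the 1D potential $(t+\alpha)^2 f^2$. Writing $\psi=\rho e^{i\phi}$ makes the problem transparent: the best one gets at fixed $s$ is $\int\big[|\partial_t\rho|^2 + \rho^2(\partial_s\phi-t)^2 + \tfrac{1}{2b}\rho^4 - \tfrac1b\rho^2\big]\,dt$, with $\partial_s\phi$ a \emph{$t$-dependent} function; this has no lower bound in terms of $\eone_\star(k)$, which requires $\alpha$ constant. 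Recovering the constant $\alpha$ requires exploiting the averaging in $s$. This 1D reduction is precisely the hard step. It is exactly here that the earlier works of Almog--Helffer and Fournais--Helffer--Persson, which use perturbative spectral methods tied to the near-linear regime $b\to\theo^{-1}$, are confined to $b>1.25$, and the text surrounding Theorem~\ref{thm:GL energy} states explicitly that a ``completely different method'' is the novelty.

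That method is an energy decoupling. One writes $\glm(s,t) = f_{k(s)}(t)\,e^{-i\alpha_{k(s)}s/\eps}\,u(s,t)$, substitutes into $\glfe$, and integrates by parts using the Euler--Lagrange equation of the 1D profile: the cross terms cancel identically, leaving $\glfe[\glm] = \eps^{-1}\int\eone_\star(k(s))\,ds + \mathcal{E}_0[u] + (\text{small})$, where $\mathcal{E}_0[u]$ is a reduced energy of $u$ weighted by $f_{k(s)}^2$. The genuinely new lemma is the positivity $\mathcal{E}_0[u]\geq -o(\eps^{-1})$, obtained via a careful analysis of the associated cost function and Pohozaev-type identities for the 1D model, valid uniformly for $1<b<\theo^{-1}$; this is what your sketch omits and what is actually needed to close the lower bound. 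The density estimate~\eqref{eq:main density} then falls out of the same splitting, since the reduced energy controls $\int f_{k(s)}^2\big(1-|u|^2\big)^2$, which is $\big\||\glm|^2 - f_{k(s)}^2\big\|_{L^2}^2$ up to the change of variables, rather than coming from an abstract local coercivity of $\fone_{k,\alpha}$ around $f_k$. Your final remark about bootstrapping the magnetic field estimate is correct and is part of the argument, but it is secondary; the central missing ingredient is the decoupling and the cost-function positivity.
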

	
Note that the error terms in the estimates are much smaller than the subleading order of the main terms, as one can easily realize. The above generalizes and complements previous results of~\cite{Pan-02,AlmHel-06,FouHelPer-11} in two directions:
\begin{itemize}
 \item The theorem is valid for the full surface superconductivity regime~\eqref{eq:surf regime}. The only previous result holding in this generality is that of~\cite{Pan-02} where an energy estimate in terms of~\eqref{eq:GL hp func} is obtained. The reduction to a 1D functional was performed in~\cite{AlmHel-06,FouHelPer-11} but only for $1.25 < b$. This was accomplished by perturbative methods: the limiting regime $b\to \theo ^{-1}$ being essentially linear, one can use several spectral theory tools there and try to push the regime of applicability as far as one can. It turns out that this approach does not cover the full physical regime~\eqref{eq:surf regime} and indeed, we use a completely different method for the 1D reduction.
 \item The estimates above contain information at subleading order: the minimization of~\eqref{eq:GL 1D func 0} gives the leading order of Problem~\eqref{eq:GL 1D func k} and previous results were all limited to that level of precision. In this theorem, one actually computes the energetic contribution of the domain's curvature, something that had so far remained elusive in the regime~\eqref{eq:surf regime}. This bridges with previous results~\cite{FouHel-10} in the regime $b\to \theo^{-1}$ where the role of curvature is much better understood. 
\end{itemize}

\subsection{Absence of vortices and estimate of the phase circulation} The above theorem only gives averaged (in the $L^2$ sense) information about the Ginzburg-Landau minimizer. In particular, in the absence of uniform estimates, it does not rule out the possibility that the surface superconductivity layer might contain normal inclusions, e.g. vortices. This seems rather unnatural from a physical point of view, but had never been ruled out in a mathematically rigorous manner. We can deduce from the refined energy estimates of Theorem~\ref{thm:GL energy} a $L^\infty$ estimate in the boundary layer
\beq
\label{eq:annd}
\annd : = 
%\lf\{ \rv \in \Om \: : \: \fO \lf(\eps ^{-1} \tau \ri) \geq \game \ri\} \subset 
\lf\{ \dist(\rv,\dd \Om) \leq \half \eps \sqrt{|\log\game|} \ri\},
\eeq
where $\rm{bl}$ stands for ``boundary layer'' and $ 0 < \game \ll 1 $ is any quantity such that\footnote{$ a > 0 $ is a suitably large constant.
}
\beq
\label{eq:game}
\game \gg \eps^{1/6}|\log \eps| ^{a}.
\eeq

\begin{theorem}[\textbf{Uniform density estimates and Pan's conjecture}]\label{thm:Pan}\mbox{}\\
		Under the assumptions of the previous theorem, it holds
		\beq
			\label{eq:Pan plus}
			\lf\| \lf|\glm(\rv)\ri| - \fO \lf(t \ri) \ri\|_{L^{\infty}(\annd)} \leq C \game^{-3/2} \eps^{1/4} |\log \eps| ^a  \ll 1
		\eeq
		for some constant $a>0$. In particular for any $ \rv  \in \partial \Omega $ we have
		\begin{equation}\label{eq:Pan}
			\lf| \lf| \glm(\rv) \ri| - \fO (0) \ri| \leq  C \eps^{1/4} |\log \eps| ^a \ll 1,
		\end{equation}
		where $C$ does not depend on $\rv$.
	\end{theorem}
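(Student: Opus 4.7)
The plan is to upgrade the $L^{2}$ density estimate of Theorem~\ref{thm:GL energy} to an $L^{\infty}$ bound by a standard interpolation, using the universal gradient bound for Ginzburg--Landau minimizers. The statement $\eqref{eq:Pan}$ is then an immediate specialization to the boundary.

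First, I would rescale to boundary coordinates $(s,t)$ in which $\annd$ becomes a strip of height $\tfrac12\sqrt{|\log\game|}$ and length of order $\eps^{-1}$. Introduce the auxiliary function
\[
g(\rv) := |\glm(\rv)|^{2} - |\fk(t)|^{2}, \qquad k=k(s),
\]
so that $\eqref{eq:main density}$ gives $\|g\|_{L^{2}(\Omega)} = O(\eps^{3/2}|\log\eps|^{a})$, equivalently $\|\tilde g\|_{L^{2}} = O(\eps^{1/2}|\log\eps|^{a})$ in rescaled variables (the Jacobian produces the factor $\eps$). Second, I would combine the standard a~priori bound $|\nabla_{\aavm}\glm|\leq C/\eps$ (a classical consequence of the GL equations, e.g. in~\cite{FouHel-10}) with the smoothness of $t\mapsto \fk(t)$ to obtain $\|\nabla g\|_{L^{\infty}} = O(\eps^{-1})$ in the original coordinates, i.e.\ $O(1)$ in rescaled ones.

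Third, I would apply the elementary two-dimensional interpolation
\[
\|u\|_{L^{\infty}}^{2} \leq C\,\|u\|_{L^{2}}\,\|\nabla u\|_{L^{\infty}},
\]
(proved by integrating the pointwise lower bound $|u(\rv)|\geq |u(\rv_{0})| - \|\nabla u\|_{\infty}|\rv-\rv_{0}|$ over a disk of optimized radius) to $\tilde g$ on the rescaled boundary layer. This yields $\|g\|_{L^{\infty}(\annd)} \leq C\eps^{1/4}|\log\eps|^{a}$, after relabelling $a$. Fourth, to turn this into a bound on $\big||\glm| - f_{0}\big|$, I would use the pointwise identity
\[
|\glm| - \fk = \frac{|\glm|^{2}-\fk^{2}}{|\glm|+\fk}
\]
together with a uniform lower bound of order $\game^{3/2}$ on $|\glm|+\fk$ throughout $\annd$. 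A perturbative argument for the ODE defining $f_{k}$ (the correction $\eps k t$ in $\eqref{eq:GL 1D func k}$ being $O(\eps |\log\game|^{1/2})$ on $\annd$) allows then to replace $\fk$ by $\fO$ at cost $O(\eps)$, giving~$\eqref{eq:Pan plus}$. For~$\eqref{eq:Pan}$, setting $t=0$ makes $\fO(0)$ a fixed positive constant independent of $\eps$ and $\game$, so the denominator lower bound is automatic and the factor $\game^{-3/2}$ drops out.

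The main obstacle is the fourth step: producing a uniform lower bound on $|\glm|+\fk$ in the full boundary layer $\annd$. At the edge $t\simeq \tfrac12\sqrt{|\log\game|}$ the one-dimensional profile $\fO$ decays (essentially like a half-oscillator eigenfunction), and one needs a matching pointwise lower bound on $|\glm|$ to keep the denominator from collapsing. This lower bound does \emph{not} follow from the $L^{2}$-control alone: it requires an independent argument relating $|\glm|(\rv)$ to local energy densities and then to $\fO(t)$, and it is precisely here that the technical hypothesis~$\eqref{eq:game}$ will be used in order to close the estimate while keeping the right-hand side of~$\eqref{eq:Pan plus}$ vanishing in the limit $\eps\to 0$.
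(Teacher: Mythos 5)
Your overall strategy---interpolating the $L^2$ density estimate~\eqref{eq:main density} against the classical gradient bound $|\nabla_{\aavm}\glm|\leq C/\eps$ to promote control of $|\glm|^2-\fO^2$ from $L^2$ to $L^\infty$, then dividing by $|\glm|+\fO$---is the right one and is essentially the route followed in the paper. The computations in your first three steps are sound: the Jacobian bookkeeping, the identity $\nabla|\glm|^2 = 2\,\mathrm{Re}\big(\overline{\glm}\,\nabla_{\aavm}\glm\big)$ (so the \emph{covariant} gradient bound does control $\nabla|\glm|^2$), and the 2D interpolation $\|u\|_\infty^2\lesssim \|u\|_{L^2}\,\|\nabla u\|_\infty$ all go through and yield $\left\| |\glm|^2 - \fO^2 \right\|_{L^\infty} = O(\eps^{1/4}|\log\eps|^a)$ after relabelling.

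The ``main obstacle'' you describe in the last paragraph, however, is illusory, and that is the actual gap in your write-up: you do \emph{not} need a pointwise lower bound on $|\glm|$. Since $|\glm|\geq 0$ trivially, the denominator in
\[
|\glm|-\fO \;=\; \frac{|\glm|^2-\fO^2}{|\glm|+\fO}
\]
satisfies $|\glm|+\fO\geq \fO(t)$ unconditionally, so
\[
\left\| |\glm|-\fO\right\|_{L^\infty(\annd)}\;\leq\; \frac{\left\||\glm|^2-\fO^2\right\|_{L^\infty(\annd)}}{\inf_{\annd}\fO}.
\]
All that is needed is a lower bound on $\inf_{\annd}\fO$, and this comes for free from the Gaussian decay of the 1D profile $\fO$ combined with the choice of width $\tfrac12\eps\sqrt{|\log\game|}$ in~\eqref{eq:annd}: one gets $\fO(t)\gtrsim\game^{3/2}$ for $0\leq t\leq\tfrac12\sqrt{|\log\game|}$. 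That is precisely the origin of the $\game^{-3/2}$ prefactor in~\eqref{eq:Pan plus}, and the hypothesis~\eqref{eq:game} is there only to guarantee $\game^{-3/2}\eps^{1/4}|\log\eps|^a\ll 1$; it has nothing to do with lower-bounding $|\glm|$. Once this is noticed the proof closes immediately, and the positivity $|\glm|>0$ in $\annd$ (absence of normal inclusions) is a \emph{consequence} of~\eqref{eq:Pan plus}, not an ingredient of its proof.
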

	
Estimate~\eqref{eq:Pan} solves the original form of a conjecture by Xing-Bin Pan~\cite[Conjecture 1]{Pan-02}. In addition, since $\fO$ is strictly positive, the stronger estimate~\eqref{eq:Pan plus} ensures that $\glm$ does not vanish in the boundary layer~\eqref{eq:annd}. A physical consequence of the theorem is thus that normal inclusions such as vortices may not occur in the surface superconductivity phase. 

Since $\glm$ cannot vanish on $\dd \Omega$, its phase is well-defined there, and one could try to estimate it. Of course, only gauge invariant quantities make sense physically. One such quantity is the winding number (a.k.a. phase circulation or topological degree) of $\glm$ around the boundary $\dd \Om$:
\beq\label{eq:GL degree}
		2 \pi \: \mathrm{deg} \lf(\Psi, \partial \Omega \ri) : = - i \int_{\partial \Omega} \frac{|\Psi|}{\Psi} \partial_{s} \lf( \frac{\Psi}{|\Psi|} \ri) ds,
	\eeq
$ \partial_{s} $ standing for the tangential derivative. Theorem~\ref{thm:Pan} ensures that $\mathrm{deg}\lf(\Psi, \partial \Omega \ri)\in \Z$ is well-defined. Roughly, this quantity measures the number of quantized phase singularities (vortices) that $\glm$ has inside $\Om$. Our estimate is as follows:

	\begin{theorem}[{\bf Winding number of $ \glm $ on the boundary}]
		\label{thm:GL circulation}
		\mbox{}	\\
		Under the previous assumptions, any GL minimizer $ \glm $ satisfies
		\beq\label{eq:GL degree result}
			\mathrm{deg} \lf(\glm, \partial \Omega\ri) = \frac{|\Omega|}{\eps^2} + \frac{|\alO|}{\eps} + \OO(\eps^{-3/4}|\log\eps|^{\infty})
		\eeq
		in the limit $\eps \to 0$.
	\end{theorem}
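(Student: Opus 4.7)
The strategy is to derive a gauge-invariant identity relating the boundary degree to the magnetic flux through $\Omega$ and to the tangential supercurrent on $\partial\Omega$, and then to estimate each piece using Theorems~\ref{thm:GL energy} and~\ref{thm:Pan}. By Theorem~\ref{thm:Pan}, $|\glm| \geq f_0(0)/2 > 0$ on $\partial\Omega$ for $\eps$ small, so the phase $\varphi$ of $\glm$ is well-defined there and the integer degree~\eqref{eq:GL degree} makes sense. Introducing the gauge-invariant supercurrent $\bJ := \mathrm{Im}\,(\glm^{\ast}\nabla_{\aavm}\glm) = |\glm|^{2}\bigl(\nabla\varphi + \aavm/\eps^{2}\bigr)$ and using Stokes' theorem, one obtains the key identity
\begin{equation}\label{eq:proof key}
2\pi\,\mathrm{deg}(\glm,\partial\Omega) \;=\; \oint_{\partial\Omega} \frac{\bJ\cdot d\ell}{|\glm|^{2}} \;-\; \frac{1}{\eps^{2}}\int_{\Omega}\curl\aavm,
\end{equation}
which splits the problem into two asymptotically decoupled contributions (up to overall orientation/sign conventions).

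\textbf{Flux term.} The energy asymptotics~\eqref{eq:energy GL} controls $\glee$ to subleading order. Inserting the estimate~\eqref{eq:main density} on $|\glm|^{2}$ and the 1D structure of the minimizer, one can isolate the magnetic contribution $\eps^{-4}\|\curl\aavm-1\|_{L^{2}(\Omega)}^{2}$ and bound it by the energy error, giving $\|\curl\aavm-1\|_{L^{2}(\Omega)} \leq C\eps^{5/2}|\log\eps|^{a}$. Cauchy--Schwarz then yields $\int_{\Omega}|\curl\aavm-1| = O(\eps^{5/2}|\log\eps|^{a})$, so that $\eps^{-2}\int_{\Omega}\curl\aavm = \eps^{-2}|\Omega| + O(\eps^{1/2}|\log\eps|^{a})$, which is well below the target remainder $O(\eps^{-3/4}|\log\eps|^{\infty})$.

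\textbf{Supercurrent term.} Using the density estimate~\eqref{eq:Pan plus}, one can replace $|\glm|^{-2}$ by $f_{0}(0)^{-2}$ on $\partial\Omega$ with a pointwise error $O(\eps^{1/4}|\log\eps|^{a})$. It remains to estimate $\oint_{\partial\Omega}\bJ\cdot d\ell$. Work in local boundary coordinates $(s,t)$ (scaled by $\eps^{-1}$) and choose a local gauge with $\aavm\cdot\es = -t + O(\eps)$, so that $\aavm\cdot\es|_{t=0}=0$. The local approximation $\glm(s,t) \approx f_{k(s)}(t)\,\exp(-i\alpha_{k(s)}s)$ then gives
\[
\bJ\cdot\es\big|_{\partial\Omega} \;\approx\; |\glm|^{2}\,\partial_{s}\varphi_{\mathrm{loc}} \;=\; -\,\eps^{-1}\,f_{k(s)}(0)^{2}\,\alpha_{k(s)}.
\]
A first-order perturbation argument for the functional~\eqref{eq:GL 1D func k} in the small parameter $\eps k(s)$ yields $\alpha_{k(s)} = \alpha_{0} + O(\eps)$ and $f_{k(s)}(0) = f_{0}(0) + O(\eps)$ uniformly in $s$. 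Integrating against the boundary arclength and normalizing by $|\glm|^{2}\approx f_{0}(0)^{2}$ gives the $|\alO|/\eps$-type contribution, modulo curvature corrections that can be absorbed in the global error.

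\textbf{Main obstacle.} The principal difficulty is that~\eqref{eq:Pan plus} controls $|\glm|$ but does not directly control the phase of $\glm$, which is what determines $\bJ$. Obtaining the required pointwise estimate on $\bJ\cdot\es$ at $t=0$ therefore cannot be done by the modulus estimates alone: one must upgrade the $L^{2}$ closeness of $\glm$ to the 1D ansatz $f_{k(s)}\exp(-i\alpha_{k(s)}s)$ (implicit in Theorem~\ref{thm:GL energy}) to a tangential $L^{\infty}$-type bound on the gauge-invariant momentum. The natural route is to combine the refined energy estimate with a coercivity/second-variation argument for the 1D reduced functional around its minimizer (to control the orthogonal deviation in the phase sector), and then use elliptic regularity on the GL equations in the boundary layer~\eqref{eq:annd} to propagate that control from an averaged to a pointwise statement along $\partial\Omega$. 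This is the step where the $\eps^{-3/4}|\log\eps|^{\infty}$ remainder in~\eqref{eq:GL degree result} will be dictated.
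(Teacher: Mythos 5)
Your plan is the natural one: decompose the degree via a Stokes-type identity into a tangential supercurrent circulation and a magnetic flux contribution, and estimate each using the asymptotics already established. The flux contribution is indeed subdominant; even a cruder bound than the one you claim is sufficient for the target remainder. The critical obstacle is exactly where you locate it, under \textbf{Main obstacle}, but the route you sketch for closing it would require substantial new technical input and is unlikely to deliver the stated precision.

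The difficulty is that the refined energy estimate~\eqref{eq:energy GL} controls the tangential phase-energy density $|\glm|^{2}\left|\partial_{s}\varphi + \eps^{-2}\aavm\cdot\es\right|^{2}$ only after integration over the boundary layer $\annd$; it gives nothing on the trace of this quantity at $t=0$, which is what the degree formula requires. The coercivity of the 1D functional~\eqref{eq:GL 1D func 0} cannot help here: that functional is posed on real profiles and is blind to the phase. What does couple to tangential phase oscillations is the second variation of the half-plane functional~\eqref{eq:GL hp func}, but even its coercivity is inherently averaged over the normal direction. And uniform-in-$\eps$ elliptic bootstrapping for the GL system, in a boundary layer of width $\eps$ with gauge potentials of size $\eps^{-2}$, is a delicate matter in its own right and is not part of the toolkit established at this point.

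The device that actually closes the gap is topological rather than analytic. By~\eqref{eq:Pan plus}, $|\glm|$ is bounded away from zero on every level curve $\{t=T\}$ with $T$ ranging over a layer of width $O(\eps\sqrt{|\log\eps|})$, so the winding number of $\glm$ along $\{t=T\}$ is a $T$-independent integer equal to $\mathrm{deg}(\glm,\partial\Omega)$. One may therefore write the degree identity on $\{t=T\}$ and average over $T$, converting the single boundary circulation of the supercurrent into a bulk integral over the surface superconductivity layer --- precisely the object that~\eqref{eq:energy GL} and~\eqref{eq:Pan plus} control. This averaging step is the missing ingredient in your proposal; it is also what dictates the $O(\eps^{-3/4}|\log\eps|^{\infty})$ precision, through the trade-off between the width of the averaging window, the error of moving the flux contour inward from $t=0$, and the accuracy with which the averaged current matches the 1D reference profile built from $\fO$ and $\alO$.
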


Note that the last two theorems only capture the leading 1D contribution of~\eqref{eq:GL 1D func 0}, and do not show curvature corrections. Their proofs however require the full precision of the energy estimate~\eqref{eq:energy GL}. Indeed, one can easily see that the energetic cost of a normal inclusion would scale as $O(\eps)$. It is thus necessary to expand the energy at least to that order if one is to rule this possibility out.

\subsection{Influence of boundary curvature} As already mentioned, one of the main novelties of Theorem~\ref{thm:GL energy} is that it exhibits energetic contributions due to the sample's curvature. It is desirable to obtain more precise results, expliciting the distribution of superconducting electrons along the boundary, as a function of curvature. For obvious physical reasons, it would be desirable to have estimates of the density $|\glm| ^2$. This we cannot do with the desired precision, but we can obtain a rather explicit estimate of $|\glm| ^4$: 

\begin{theorem}[\textbf{Curvature dependence of the order parameter}]\label{thm:Gl curv}\mbox{}\\
Let $\glm$ be a Ginzburg-Landau minimizer and $D\subset \Omega$ a measurable set whose boundary $\dd D$ intersects $\dd \Omega$ with $\pi/2$ angles. Denote $s\mapsto k(s)$ the curvature of $\dd \Om$ as a (smooth) function of the tangential coordinate $s$.  

For any $1<b<\theo ^{-1}$, in the limit $\eps \to 0$,
\begin{equation}\label{eq:GL curv main estimate}
\int_{D}  |\glm| ^4 = \eps \, C_1(b) |\dd \Om \cap \dd D| + \eps ^2 C_2 (b) \int_{\dd D\cap \dd \Om}  k(s) ds + o(\eps ^2),
\end{equation}
where $d s$ stands for the 1D Lebesgue measure along $\dd \Om$ and
\begin{align}\label{eq:GL curv lead ord}
C_1(b) &= - 2 b \eoneo = \int_{0} ^{+\infty} \fO ^4 > 0 
\\
\label{eq:curv corr} C_2 (b) &=  2 b \, \fc_{\alO} [f_0] = \frac{2}{3}b \fO ^2 (0) - 2b \alO \eoneo.
\end{align}
Moreover, for $|b-\theo ^{-1}|$ small enough (independently of $\eps$), $C_2 (b) > 0.$
\end{theorem}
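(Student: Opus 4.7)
\medskip

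\noindent\textbf{Proof proposal.} The plan is to reduce the integral to a boundary integral of the one-dimensional minimizer $f_{k(s)}$, identify it via a virial identity with $\eone_\star(k(s))$, and then expand in the small parameter $\eps k(s)$.

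First I would localize the analysis to a tubular neighborhood of $\dd D \cap \dd \Omega$. By Agmon-type decay estimates, $|\glm|$ is exponentially small outside the boundary layer $\annd$, so $\int_D |\glm|^4 = \int_{D\cap \annd} |\glm|^4 + \OO(\eps^\infty)$. The right-angle condition on $\dd D \cap \dd \Omega$ ensures that in boundary coordinates $(s,t)$, the set $D\cap\annd$ takes the form $\{s \in I, 0\leq t \leq c\eps|\log\gamma_\eps|^{1/2}\}$ up to controlled error of order $\eps^{\infty}$ near the endpoints, where $I \subset \dd\Omega$ corresponds to $\dd D \cap \dd\Omega$. Rescaling $\tau = t/\eps$, the Euclidean Jacobian becomes $\eps(1-\eps k(s) \tau)$, giving
\begin{equation*}
\int_D |\glm|^4 = \eps \int_I \int_0^{c|\log\gamma_\eps|^{1/2}} (1-\eps k(s) \tau)\, |\glm(s,\eps\tau)|^4\, d\tau\, ds + \OO(\eps^\infty).
\end{equation*}

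The core step is to replace $|\glm(s,\eps\tau)|^2$ by the curvature-dependent 1D profile $f_{k(s)}^2(\tau)$ with sufficient precision. Because Theorem~\ref{thm:Pan} only gives $L^\infty$-precision $\OO(\eps^{1/4}|\log\eps|^a)$, I would circumvent a direct pointwise substitution by a Feynman--Hellmann trick: perturb $\glfe$ by $\lambda \int_D |\Psi|^4$ and revisit the proof of Theorem~\ref{thm:GL energy} for the perturbed problem. The perturbation only alters the reduced one-dimensional functional on $s \in I$, adding $\lambda \int (1-\eps k\tau) f^4 d\tau$ to the local 1D ground state energy. Consequently, for small $\lambda$ the same energy expansion scheme yields
\begin{equation*}
\glee^\lambda = \glee + \lambda \int_I \int_0^\infty (1-\eps k(s)\tau)\, f_{k(s)}^4(\tau)\, d\tau \, ds + \OO(\lambda \eps|\log\eps|^a) + \OO(\lambda^2),
\end{equation*}
and differentiating at $\lambda = 0$ (using upper/lower bounds with well-chosen $\lambda$) identifies $\int_D|\glm|^4/\eps$ with the above first-order term up to $o(\eps)$.

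The next step uses a virial/scaling identity for the 1D functional~\eqref{eq:GL 1D func k}: testing the minimality of $f_k$ against the scaling $f \mapsto \sqrt{s}\,f$ yields $\int (1-\eps k t)\, f_k^4\, dt = -2 b\, \eone_\star(k)$. Hence
\begin{equation*}
\int_D |\glm|^4 = -2 b \, \eps \int_I \eone_\star(k(s))\, ds + o(\eps^2).
\end{equation*}
To finish, I would expand $\eone_\star(k) = \eoneo + \eps k(s) \partial_{\eps k}\fonek[f_0]\big|_{k=0} + \OO((\eps k)^2)$ via Feynman--Hellmann on the 1D problem: only the explicit $\eps k$ dependence of $\fonek$ contributes, and a direct Taylor expansion of the integrand of~\eqref{eq:GL 1D func k} around $k=0$ (the terms $-\eps k t$ in the Jacobian and $-\eps k t^2/2$ in the phase) followed by an integration by parts produces exactly $-\fc_{\alO}[f_0] = -\frac{1}{3}f_0^2(0) + \alO \eoneo$. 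Collecting yields $C_1(b) = -2b\eoneo$ and $C_2(b) = 2b\,\fc_{\alO}[f_0]$, as claimed; the identity $C_1(b) = \int_0^\infty f_0^4$ is the special case $k=0$ of the virial identity. Positivity of $C_2(b)$ for $b$ near $\theo^{-1}$ follows because $\eoneo \to 0$ in that limit while $f_0^2(0)$ remains bounded away from zero, so the boundary term $\tfrac{2}{3} b f_0^2(0)$ dominates.

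The main obstacle is making the $L^4$-replacement of $|\glm|^2$ by $f_{k(s)}^2$ quantitative enough to extract the $\OO(\eps^2)$ correction. The Feynman--Hellmann strategy sidesteps a direct density estimate, but forces one to re-run the proof of Theorem~\ref{thm:GL energy} with the perturbation~$\lambda \mathbf{1}_D |\Psi|^4$ present while maintaining the $\OO(\eps|\log\eps|^a)$ error uniformly in $\lambda$. Ensuring that the perturbation does not degrade the lower bound (which relies on a delicate partition-of-unity argument matched to the 1D minimizer) and that the $\lambda^2$ remainder is genuinely quadratic in $\lambda$ will require careful stability estimates on the perturbed 1D Euler--Lagrange equation near $(f_0,\alpha_0)$.
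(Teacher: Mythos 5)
You correctly localize to the boundary layer, derive the virial identity $\int_0^\infty(1-\eps k t)f_k^4\,dt = -2b\,\eone_\star(k)$ by testing minimality against amplitude scaling, and expand $\eone_\star(k)$ in powers of $\eps k$ to recover the constants $C_1(b)$ and $C_2(b)$. However, the paper converts the localized energy estimate into an estimate of $\int_D|\glm|^4$ by a different device, and the Feynman--Hellmann route you propose has a gap that cannot be closed as stated. The paper multiplies the Ginzburg--Landau Euler--Lagrange equation by $\overline{\glm}$ and integrates over $D$, which after integration by parts yields an exact identity $\int_D|\glm|^4 = -2b\,\eps^2\cdot(\text{localized GL energy over } D) + 2b\,\eps^2\cdot(\text{boundary term on } \dd D)$. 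The prefactor $\eps^2$ is essential: it converts the $\OO(\eps|\log\eps|^a)$ error of the localized version of Theorem~\ref{thm:GL energy} into an $\OO(\eps^3|\log\eps|^a)=o(\eps^2)$ error, small enough to resolve the curvature term, and the remaining ``delicate step'' is to show the boundary term is also $o(\eps^2)$.

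Your strategy has no such $\eps^2$ prefactor, and this is where it fails quantitatively. Write $E^\lambda = G(\lambda)+R(\lambda)$ with $G$ the one-dimensional profile contribution and $|R(\lambda)|\le C\eps|\log\eps|^a$ uniformly in $\lambda$, as you say one should prove. By the minimality sandwich, $(E^\lambda-E^0)/\lambda$ recovers $\int_D|\glm|^4$ up to $\OO(\lambda\eps^3)$ (second-order perturbation theory on the 1D functional, whose perturbation has size $\lambda\eps^2$) plus $(R(\lambda)-R(0))/\lambda$. With only the uniform bound on $R$, optimizing over $\lambda>0$ balances $\lambda\eps^3$ against $\eps|\log\eps|^a/\lambda$ at $\lambda\sim\eps^{-1}|\log\eps|^{a/2}$, giving an error of order $\eps^2|\log\eps|^{a/2}$ — not $o(\eps^2)$. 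The curvature term is thus lost by a logarithmic factor. To do better one would need $R$ to be differentiable in $\lambda$ with $|\partial_\lambda R|=o(\eps^2)$, but $R$ is produced by the 2D-to-1D reduction (boundary coordinate change, gauge fixing, Agmon decay, profile matching) and there is no structural reason its $\lambda$-dependence should enjoy such cancellation; uniformity in $\lambda$ is not enough. Separately, your positivity argument for $C_2(b)$ near $b=\theo^{-1}$ asserts that $\fO^2(0)$ stays bounded away from zero in this limit. In fact the nontrivial 1D minimizer bifurcates off zero at $b=\theo^{-1}$, so $\fO^2(0)\to 0$ as well; the conclusion survives because $\eoneo=\OO\bigl((\theo^{-1}-b)^2\bigr)$ while $\fO^2(0)=\OO(\theo^{-1}-b)$, so that $\tfrac{2}{3}b\fO^2(0)$ dominates $2b\alO\eoneo$, but this order-of-magnitude comparison is the argument that is actually needed.
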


The leading order term in~\eqref{eq:GL curv main estimate} had previously been computed~\cite{FouKac-11,Kachmar-14,Pan-02}, with a less explicit expression of the constant $C_1 (b)$ however. The above theorem shows that the concentration of superconducting electrons along the boundary depends linearly on the curvature. Close to the third critical field, we are able to show that curvature, counted inwards, favors surface superconductivity. This is in accord with previous results (see again~\cite{FouHel-10} for review) obtained in the perturbative regime $b\to \theo ^{-1}$, in particular with known corrections to the expression of the third critical field. Note that the assumption that the set $D$ intersects $\dd \Omega$ at right angles is probably necessary to state the result in the form above: integration along lines of constant tangential coordinate and delicate estimates of tangential variations of the order parameter are involved in the proof.

Let us explain briefly how one obtains the above expressions. The variations due to the boundary's geometry are encoded in the behavior of the 1D functional~\eqref{eq:GL 1D func k}. Since we work in the regime $\eps \to 0$, it is natural to consider a perturbative expansion of $\eone_\star \left(k(s)\right)$. We then get that the leading order is given by the $k=0$ functional~\eqref{eq:GL 1D func 0} and that the first correction is $-\eps k$ times 
\begin{equation}\label{eq:corr func}
\fc _\alpha [f] := \int_{0} ^{c_0 |\log \eps|}  t\lf\{ \lf| \partial_t f \ri|^2 + f ^2 \left( -\alpha (t+\alpha) -\frac{1}{b} + \frac{1}{2b} f ^2\right)\ri\}
\end{equation}
which is obtained by retaining only linear terms in $\eps k$ when expanding~\eqref{eq:GL 1D func k}. It turns out that this correction, evaluated at $f= f_0$, is equal to $C_2 (b)$ defined above. Then, if one can localize in the set $D$ the energy estimate~\eqref{eq:energy GL}, a version of~\eqref{eq:GL curv main estimate} with $|\glm|^4$ replaced by the Ginzburg-Landau energy density results. Integrating the Ginzburg-Landau variational equation over $D$ and estimating a boundary term (a delicate step) leads to the result for the distribution of $|\glm| ^4$.

\chapter{Rigorous studies of fractional quantum Hall states}\label{sec:FQHE states}

The Hall effect has been discovered in the 19th century: if an electric current is imposed on a 2D sample in the presence of a perpendicular magnetic field, a voltage drop perpendicular to both the current and the magnetic field develops, due to the Lorentz force acting on charge carriers. This effect is in fact a popular way to measure the carriers' density and charge in a sample. The more recent discovery of the quantum Hall effects (integer, and then fractional) has lead to two Nobel prizes in physics (von Klitzing in 1985, Laughlin-St\"ormer-Tsui in 1998). The conductance associated to the perpendicular voltage drop is in fact quantized in certain fractions\footnote{With $e$ and $h$ the charge of the electron and Planck's constant, respectively.} of $e ^2 /h$. The precision of the quantization is astonishing, independently of the sample's details. So much so that it is nowadays used to set the resistivity standard.

The integer quantum Hall effect (IQHE) can be understood within non-interacting electron models. While a full theoretical understanding is lengthy and difficult (see e.g.~\cite{BelSchEls-94,Goerbig-09} for reviews), the basic ingredients are the well-known Landau quantization of kinetic energy levels in presence of a magnetic field, and the Pauli principle. The fractional quantum Hall effect (FQHE) on the other hand has been recognized very early as being due to the occurrence of new, strongly correlated, electronic ground states (see~\cite{Girvin-04,Goerbig-09,StoTsuGos-99,Laughlin-99,Jain-07} for reviews). The theoretical understanding of these new states of matter is still one of the most challenging problems of condensed matter physics. Very little is known at a full mathematical level of rigor. In this chapter we describe some results aiming at filling part of this gap. We focus on studies of the most basic fractional quantum Hall states, introduced by Laughlin in 1983.

As happened with many condensed matter phenomena, it has been suggested~\cite{BloDalZwe-08,Viefers-08,Cooper-08} that insight on the FQHE could be gained by emulating (part of) the physics in a cold atoms system. This has so far remained elusive (see however~\cite{Spielman-09,Spielman-12} for recent experimental progress), but this promising possibility opens new theoretical questions. These regard both the proposed specific cold atoms realizations of FQHE physics, and basic properties of FQHE states that could be probed more easily in those than in actual 2D electron gases. 

Section~\ref{sec:FQHE frame} below is a short general introduction to the topics addressed in this chapter. In Section~\ref{sec:FQHE Bose gas} we discuss FQHE states in rotating trapped Bose gases, with a special emphasis on how the trap should be taken into account. This issue, and related ones in the 2D electron gas setting, leads to the introduction of a new variational problem and related asymptotic estimates that we study in Section~\ref{sec:FQHE incomp}. We refer to these inequalities as ``incompressibility estimates'' because they shed light on the rigidity that the Laughlin state displays in its response to external potentials, a crucial property in FQHE physics. A (very) streamlined review of this material has already appeared in~\cite{Rougerie-INSMI}.

\section{General framework}\label{sec:FQHE frame}

The situation of interest for this chapter is that of interacting particles confined in two space dimensions, submitted to a perpendicular magnetic field of constant strength. The Hamiltonian of the full system is given by
\begin{equation}\label{eq:first princ hamil}
H_N = \sum_{j=1} ^N \left(  \left( -i\nabla_j +  \mathrm i\mathbf A(\mathbf x_j)
%B \xbf_j ^{\perp}
\right) ^2 + V (\xbf_j) \right) + \lambda \sum_{1\leq i<j \leq N} w (\xbf_i-\xbf_j) 
\end{equation}
where $\mathbf A$ is the vector potential of the applied magnetic field of strength $B$, given by
$$\mathbf{A}(\mathbf {x})=  \frac{B}{2} (-y,x)$$
in the symmetric gauge. The pair interaction potential is denoted $w$, the coupling constant~$\lambda$. Here we think of repulsive interactions, 
$$w\geq 0, \quad \lambda \geq 0.$$
Units are chosen so that  Planck's constant $\hbar$, the velocity of light and the charge are equal to 1, and the mass is equal to 1/2. The potential $V$ can model both trapping of the particles and disorder in the sample. We will study systems made of fermions or bosons, so $H_N$ will act either on $\bigotimes_{\rm asym}  ^N L ^2 (\R ^2)$ or $\bigotimes_{\rm sym}  ^N L ^2 (\R ^2)$, the anti-symmetric or symmetric $N$-fold tensor products of~$L ^2 (\R ^2)$.  

The quantum Hall effect (integer or fractional) occurs in the regime of very large applied magnetic fields, $B\gg 1$. We shall thus here restrict attention to single particle states corresponding to the lowest eigenvalue of $  \left( -i\nabla + \mathbf A(\mathbf x)\right) ^2$, i.e., the lowest Landau level~(LLL)
\begin{equation}\label{eq:intro LLL}
\LLL := \left\{ \psi\in L ^2 (\R ^2):\ \psi (\xbf) = f(z) e ^{ - B |z| ^2 / 4},\: f \mbox{ holomorphic } \right\} 
\end{equation}
as the single-particle Hilbert space. Here and in the sequel we identify vectors $\xbf =(x,y) \in \R ^2$ and complex numbers $z =x+iy \in \C$. We choose  units so that the magnetic field is~$2$, in order to comply with the conventions for rotating Bose gases as in~\cite{RouSerYng-13a,RouSerYng-13b} where $B$ corresponds to~$2$ times the angular velocity.

The best understood situation leading to a FQHE is that where the interactions are repulsive enough to force the many-body  wave-function to vanish whenever particles come close together. Laughlin~\cite{Laughlin-83} suggested the very fruitful idea that the ground state of a 2D electron gas could be described, in such a parameter regime, by many-body wave-functions of the form
\begin{equation}\label{eq:intro Laughlin}
\PsiLau (z_1,\ldots,z_N) = \cLau \prod_{1\leq i<j \leq N} (z_i-z_j) ^{\ell}  e ^{- \sum_{j=1} ^N |z_j| ^2 / 2}. 
\end{equation}
Here $\ell \geq 3$ is an odd integer and $z_1,\ldots,z_N$ the coordinates of the 2D electrons, identified with complex numbers, and measured in units of the magnetic length. The constant $\cLau$ normalizes the state in $L^2(\R ^{2N})$. One can trivially adapt the wave-function to the bosonic case by taking $\ell$ even, $\ell = 2$ being the most relevant case. Much of FQHE physics is based on the strong inter-particle correlations included in Laughlin's wave-function. 

The proposed trial state~\eqref{eq:intro Laughlin} satisfies three requirements:
\begin{itemize}
 \item it is built solely out of one-particle orbitals of the lowest Landau level, and thus minimizes the magnetic kinetic energy;
 \item it vanishes when two electrons collide, at a faster rate than that imposed by the Pauli principle (which corresponds to $\ell = 1$);
 \item one can argue that it corresponds to a filling factor $\nu$ (number of electrons per unit magnetic area) of approximately $1/\ell$. 
\end{itemize}
It historically served as the basis of a theory explaining the FQHE at $\nu = 1/3$, the first case where it was observed. It also allowed to predict the effect at $\nu = 1/5$, observed later. Noteworthily, this wave-function is proposed as a ground state ansatz for the Hamiltonian~\eqref{eq:first princ hamil} with $V\equiv 0$. The rationale is that the FQHE happens when the energy scales are, by order of importance\footnote{Actually, one assumes first that the electrons' spins are all aligned with the magnetic field, but this is already implicit in~\eqref{eq:first princ hamil}.}, first the magnetic kinetic energy, then the repulsive interactions and last the potential energy due to trapping and disorder.

More generally, one may restrict the wave function to live in the space 
\begin{equation}\label{eq:Ker}
\Ker = \left\{  \Psi\in L^2(\R ^{2N}):\  \Psi (z_1,\ldots,z_N) = \PsiLau (z_1,\ldots,z_N) F (z_1,\ldots,z_N), \: F \in \BargN \right\} 
\end{equation}
where 
\begin{equation}\label{eq:BargN}
\BargN := \bigotimes_{\mathrm{sym}}  ^N \Barg =  \left\{ F \mbox{ holomorphic and symmetric } | \:  F(z_1,\ldots,z_N) e^{- \sum_{j=1} ^N |z_j| ^2 /2  } \in L ^2 (\R ^{2N})\right\} 
\end{equation}
is the $N$-body bosonic Bargmann space (symmetric here means invariant under exchange of two particles $z_i$, $z_j$), with scalar product 
\begin{equation}\label{eq:scalar BargN}
\left\langle F, G \right\rangle_{\BargN} : = \left\langle F e^{-\sum_{j=1} ^N |z_j| ^2 /2} , G e^{-\sum_{j=1} ^N |z_j| ^2 /2}  \right\rangle_{L^2 (\R ^{2N})}.
\end{equation}
In allowing this more general class of states we want to have some flexibility to accommodate the external potential $V$ in~\eqref{eq:first princ hamil}. We still assume that the main energy scales are set by the magnetic kinetic energy and the repulsive interactions, but this leaves some freedom, which is important in applications:
\begin{itemize}
 \item In the 2D electron gas where the FQHE is observed, the disorder potential due to impurities is absolutely crucial to explain the effect. Indeed, a large part of Laughlin's theory concerns the response of his proposed ground state to disorder.
 \item In experiments with cold Bose gases, a trapping potential is ubiquitous, and it is of importance to understand how the Laughlin state responds to it. Actually, some recent proposals~\cite{MorFed-07,RonRizDal-11,ParFedCirZol-01} to emulate FQHE physics with cold atoms require some non-trivial engineering of the trap.
\end{itemize}
It is of some importance in Laughlin's theory that the system actually chooses simple functions such as
\begin{equation}\label{eq:Laughlin phase 2}
\Psi (z_1,\ldots,z_N) = c_f \PsiLau (z_1,\ldots,z_N) \prod_{j=1} ^N f (z_j)
\end{equation}
for some one-particle analytic function $f$, rather than having a genuine general $F$ in~\eqref{eq:Ker}. Comparisons with experiments are pretty convincing, but the profound reason for the emergence of such forms is not quite obvious. A large part of the works presented here is motivated by this issue.

\section{The fractional quantum Hall regime in rotating trapped Bose gases}\label{sec:FQHE Bose gas}

The formal similarity between the Hamiltonian of a rotating Bose gas and that of a 2D electron gas in a magnetic field suggests the possibility of studying {\em bosonic} analogues of the FQHE in cold quantum gases set in rapid rotation. In particular, a phase transition from a Bose-Einstein condensate (BEC) with a vortex lattice (studied e.g. in~\cite{AftBlaDal-05,AftBla-06,AftBlaNie-06a,AftBlaNie-06b}) to a strongly correlated Laughlin state should happen~\cite{BerPap-99,CooWil-99,CooWilGun-01} at total angular momentum $\propto N ^2$ (filling factor of order $1$), where $N\gg 1$ denotes the particle number.  This phase transition has not been observed yet because it is extremely difficult to reach such high angular momenta in the laboratory.

The reasons for this difficulty can be understood as follows. The usual experimental setup consists of bosonic atoms in a rotating trap modeled by a harmonic potential. The usual approximations for large rotation frequency are that the gas is essentially 2D, that single-particle orbitals are confined to the lowest Landau level and that the inter-particles interactions are described by a Dirac delta potential. These can to some extent be backed with rigorous mathematics \cite{LewSei-09}. Accepting them, the relevant Hamiltonian boils down to 
\begin{equation}\label{eq:LLLh harm'}
H_N=  \sum_{j=1} ^N  \om |x_j| ^2+ g \sum_{i<j} \delta (x_i-x_j)
\end{equation}
operating on wave functions $\Psi(x_1,\dots, x_N)$, $x_i\in\mathbb R^2$, in the  lowest Landau level (LLL) of a two-dimensional magnetic Hamiltonian. Here $\om>0$ is half the difference between the squares of the frequency of the harmonic trap and of the rotation velocity, measured in units of the latter, and $g\geq 0$ is the coupling constant of the interaction. When $\om/g$ is large, one can prove~\cite{LieSeiYng-09} that the ground state of $H_N$ shows Bose-Einstein condensation, i.e. correlations between particles almost vanish. To obtain correlated states, one wants to favor the interaction, i.e., consider small values of $\om/g$. Clearly this corresponds to a singular limit where the gas is no longer confined against centrifugal forces. This is the main source of  difficulty that has to be overcome to create the Bose-Laughlin state (see~\cite{RonRizDal-11} for a more quantitative discussion). 

A natural proposal \cite{BreStoSeuDal-04,Viefers-08,MorFed-07} is to add a confining term that should be stronger than quadratic (i.e., stronger than the centrifugal force). Here we investigate asymptotic properties of ground states after such an addition to the Hamiltonian.  We focus in a regime where the parameters are tuned to favor strong correlations. A simple model for the additional term is $\sum_j k|x_j|^4$ with $k>0$, that has already been considered in the context of rotating BECs~\cite{AftAlaBro-05,BreStoSeuDal-04,BlaRou-08,CorPinRouYng-11b,CorPinRouYng-12,Rougerie-11}. A new problem that has to be tackled is the lack of commutativity of the anharmonic potential term, projected onto the lowest Landau level,  and the interaction. This implies that the Laughlin state is not an exact eigenstate of the modified Hamiltonian as it is for the Hamiltonian in a purely harmonic trap. In this section we summarize the papers~\cite{RouSerYng-13a,RouSerYng-13b} where we studied this situation from a 
mathematical point of view.

\subsection{Model for the quantum Hall regime} Taking for granted the approximations mentioned previously, we shall consider a model with one-particle states reduced to the lowest Landau level~\eqref{eq:intro LLL}. For a rotating Bose gas, the role of the magnetic field $B$ is played by twice the rotation frequency. Setting the latter equal to $1$ for simplicity, the $N$-particle Hilbert space is thus  
\begin{equation}\label{eq:LLL N particles}
\LLLN = \left\{ \Psi (z_1,\ldots,z_N) = F(z_1,\ldots,z_N) e^{-\sum_{j=1} ^N |z_j| ^2 / 2 } \in L ^2 (\R ^{2N}),\: F \in \BargN \right\}.  
\end{equation}
where the Bargmann space for $N$ bosons is defined as 
\begin{equation}\label{eq:Barg N particles}
\BargN =  \left\{ F \mbox{ holomorphic and symmetric such that } F(z_1,\ldots,z_N) e^{-\sum_{j=1} ^N |z_j| ^2 / 2 } \in L ^2 (\R ^{2N})\right\}. 
\end{equation}
This is a Hilbert space with scalar product inherited from the usual $L^2$ product:
\begin{equation}\label{eq:BargN scalar}
\bral F,\:G \ketr_{\BargN} = \bral F e^{-\sum_{j=1} ^N |z_j| ^2 / 2 },\:G e^{-\sum_{j=1} ^N |z_j| ^2 / 2 } \ketr_{L ^2 (\R ^{2N})}.
\end{equation}
We are interested in an energy functional corresponding to~\eqref{eq:LLLh harm'}, with an additional quartic term in the one-body potential:
\begin{equation}\label{eq:ener LLL}
\LLLf [\Psi] = N \int_{\R ^2} \pot (r) \rhoP (z) dz + 4 N(N-1) g  \int_{\R^{2(N-1)}} \left| \Psi (z_2,z_2,z_3\ldots,z_N)\right| ^2 dz_2\ldots dz_N
\end{equation}
where 
\begin{equation}\label{eq:QHE density}
\rhoP (z) = \int_{\R ^{2(N-1)}} \left| \Psi (z,z_2,\ldots,z_N)\right| ^2 dz_2\ldots dz_N
\end{equation}
is the one-body density of $\Psi$, normalized to $1$, and
\begin{equation}\label{eq:pot}
\pot (r) = \om r ^2 + k r ^4, \quad r = |z| ^2. 
\end{equation}
The second term in~\eqref{eq:ener LLL} corresponds to contact interactions, where only the value of the wave-function along the diagonals of configuration space is penalized. For a general wave-function $\Psi$ this is of course ill-defined, but we are interested in minimizing amongst the extremely regular functions of the $N$-particles LLL~\eqref{eq:LLL N particles}:
\begin{equation}\label{eq:intro LLLe}
\LLLe := \inf\left\{\LLLf[\Psi] = \bral \Psi, \LLLh \: \Psi \ketr,\: \Psi \in \LLLN, \Vert \Psi \Vert_{L^2 (\R ^{2N})} = 1  \right\}.
\end{equation}
Then, the energy functional~\eqref{eq:ener LLL} is perfectly well-defined, and one can show that it is  in fact the quadratic form associated with a certain bounded operator on the Bargmann space~\eqref{eq:Barg N particles}:
\begin{equation}\label{eq:Bargh}
\Bargh := N\left( \om + 2 k \right) + \sum_{j=1} ^N \left(\left( \om + 3 k \right) L_j + k L_j ^2 \right)+ g \sum_{1\leq i<j\leq N} \delta_{ij} 
\end{equation}
with 
$$L _j = z_j \dd _{z_j}$$
the angular momentum operator in the $j$-th variable and
\begin{equation}\label{eq:inter Barg} 
\delta _{ij} F (\dots,z_i,\dots,z_j\dots)=\frac 1{2\pi} F \big(\dots,\half(z_i+z_j), \dots,\half(z_i+z_j),\dots\big).
\end{equation} 
We state the correspondence between~\eqref{eq:ener LLL} and~\eqref{eq:Bargh} as a lemma, whose proof is a combination of well-known arguments (see~\cite{AftBla-08,PapBer-01,RouSerYng-13b}): 

\begin{lemma}[\textbf{Hamiltonian in the Bargmann space}]\label{lem:ham barg}\mbox{}\\
We have, for any $\Psi\in \LLLN$,
\begin{equation}\label{eq:barg LLL plus}
\LLLf[\Psi] = \bral F,\Bargh F \ketr_{\BargN} \mbox{ with } \Psi = F \exp\left(-\sum_{j=1} ^N \frac{|z_j| ^2}{2} \right),\: F \in \BargN.   
\end{equation}
In particular
\begin{equation}\label{eq:barg LLL}
\LLLe = \Barge := \inf \sigma_{\BargN} \Bargh. 
\end{equation}
Moreover, $\Barge$ is an eigenvalue of $\Bargh$ with (possibly non unique) eigenfunction $\Bargm$. One may choose $\Bargm$ to have a definite total angular momentum, say 
$$ \LL_N \Bargm = L_0 \Bargm$$
where 
\begin{equation}\label{eq:QHE tot ang mom}
\LL_N :=\sum_{j=1} ^N z_j \dd_{z_j}. 
\end{equation}
Equivalently, the infimum in~\eqref{eq:intro LLLe} is attained. One may choose a minimizer $\LLLm$ with definite angular momentum $L_0$.
\end{lemma}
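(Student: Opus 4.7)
The statement has two distinct ingredients: (a) the identification of the two quadratic forms, and (b) the existence of a minimizer with definite total angular momentum. I would treat them in that order.

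First I would reduce everything to a computation in an orthonormal basis. Writing $\Psi = F\,e^{-\sum_j|z_j|^2/2}$ gives $|\Psi|^2 = |F|^2 e^{-\sum_j|z_j|^2}$, so $\rho_\Psi$ is expressed purely in terms of $F$ and the Gaussian weight defining $\BargN$ via \eqref{eq:BargN scalar}. On one-particle Bargmann space, the monomials $\{z^n/\sqrt{\pi\, n!}\}_{n\geq 0}$ form an orthonormal basis and diagonalize $L=z\partial_z$. A direct polar-coordinate computation gives $\int_{\R^2} |z|^{2p} |z^n|^2 e^{-|z|^2}\,dz = \pi\,(n+p)!$, whence as operators on $\Barg$,
\begin{equation*}
|z|^2 \;\longleftrightarrow\; L+1,\qquad |z|^4 \;\longleftrightarrow\; (L+1)(L+2) = L^2 + 3L + 2 .
\end{equation*}
Summed over particles, this accounts for $N(\om+2k) + \sum_j\bigl((\om+3k)L_j + k L_j^2\bigr)$ in $\Bargh$, i.e.\ the one-body part of \eqref{eq:Bargh}.

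The interaction term is the delicate ingredient, and in my view the main obstacle. The issue is that $\int |\Psi(z_2,z_2,z_3,\dots,z_N)|^2\,dz_2\cdots dz_N$ looks ill-defined for a general $L^2$ function, yet is perfectly well-defined on $\LLLN$ because $F$ is holomorphic, so that the ``diagonal restriction'' $F(z,z,z_3,\dots,z_N)$ is classical. To match this with $\delta_{ij}$ as defined in \eqref{eq:inter Barg}, I would use the reproducing kernel of $\Barg$, namely $K(z,w)=\pi^{-1}e^{z\bar w}$: for any $F\in\BargN$ and any fixed $z_3,\dots,z_N$,
\begin{equation*}
\int_{\R^2}\bigl|F(z,z,z_3,\dots,z_N)\bigr|^2 e^{-2|z|^2}\,dz
 = \tfrac{1}{2\pi}\,\bigl\langle F(\cdot,\cdot,z_3,\dots),\, \delta_{12}F(\cdot,\cdot,z_3,\dots)\bigr\rangle_{\Barg^{\otimes 2}} ,
\end{equation*}
the factor $\tfrac{1}{2\pi}$ coming from the $1/\pi$ of the kernel together with the $2|z|^2 \to 2|z|^2$ Gaussian change of variable (or equivalently from evaluating at the mid-point $(z+z)/2$). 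Integrating over $z_3,\dots,z_N$ and summing over pairs with the bosonic symmetry of $F$ produces exactly $g\sum_{i<j}\langle F,\delta_{ij}F\rangle_{\BargN}$ and gives \eqref{eq:barg LLL plus}; then \eqref{eq:barg LLL} is immediate.

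For part (b) I would exploit the rotational symmetry. The operator $\Bargh$ is built from $L_j$, $L_j^2$ and the $\delta_{ij}$; each commutes with $\LL_N = \sum_j z_j\partial_{z_j}$ (the $L_j$ by diagonality, the $\delta_{ij}$ because evaluation at $((z_i+z_j)/2,(z_i+z_j)/2)$ preserves total polynomial degree). Hence $\BargN$ decomposes as an orthogonal sum $\bigoplus_{L\geq 0}\BargN_L$ of finite-dimensional eigenspaces of $\LL_N$ (spanned by symmetric monomials of total degree $L$), each left invariant by $\Bargh$. On $\BargN_L$ one gets the a priori bound
\begin{equation*}
\langle F,\Bargh F\rangle_{\BargN} \;\geq\; \bigl(N(\om+2k) + (\om+3k)L\bigr)\|F\|^2 ,
\end{equation*}
using $L_j^2\geq 0$ and $\delta_{ij}\geq 0$. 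So $\Barge$ is attained on some $\BargN_{L_0}$ with $L_0$ finite, and since $\Bargh\!\restriction_{\BargN_{L_0}}$ is a self-adjoint operator on a finite-dimensional space its minimum is an eigenvalue with eigenvector $\Bargm\in\BargN_{L_0}$. Translating back through \eqref{eq:barg LLL plus} gives $\LLLm = \Bargm\,\exp(-\sum_j|z_j|^2/2)$, a minimizer of \eqref{eq:intro LLLe} with $\LL_N\LLLm = L_0\LLLm$, completing the proof.
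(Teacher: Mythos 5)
Your overall strategy is the right one and matches what the paper references for this lemma: identify the one-body part via the dictionary $|z|^{2p}\leftrightarrow (L+1)\cdots(L+p)$ on $\Barg$, handle the interaction by holomorphy and the reproducing structure, then use that $\Bargh$ commutes with $\LL_N$ to reduce to finite-dimensional angular-momentum blocks. Two points need repair, though. First, a minor one: your claimed prefactor $\tfrac{1}{2\pi}$ in the interaction identity double-counts the $\tfrac{1}{2\pi}$ already built into the definition \eqref{eq:inter Barg} of $\delta_{12}$. Carrying out the change of variables $u=(z_1+z_2)/2$, $v=(z_1-z_2)/2$ (Jacobian $dz_1dz_2 = 4\,du\,dv$) and using that $\int_{\R^2}\overline{F(u+v,u-v)}e^{-2|v|^2}\,dv = \tfrac{\pi}{2}\,\overline{F(u,u)}$ (only the zeroth Taylor coefficient in $v$ survives the radial integration), one finds $\langle F,\delta_{12}F\rangle_{\BargN} = \int |F(z,z,z_3,\dots)|^2\,e^{-2|z|^2}e^{-\sum_{j\geq 3}|z_j|^2}\,dz\,dz_3\cdots dz_N$, with \emph{no} extra constant; you should recheck how this then matches the coefficient $4N(N-1)g$ in \eqref{eq:ener LLL}.

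The substantive gap is in the coercivity step for part (b). After dropping $L_j^2\geq 0$ and $\delta_{ij}\geq 0$ you obtain the bound $\bigl(N(\om+2k)+(\om+3k)L\bigr)\|F\|^2$ on $\BargN_L$ and conclude that the infimum is attained at finite $L$. This fails exactly when $\om + 3k \leq 0$, i.e. $\om \leq -3k$, and that regime is not peripheral here: Theorems~\ref{thm:correl} and~\ref{thm:momentum} are largely about $\om$ negative and of order $kN$ or larger, far inside that region, so the bound you display does not yield coercivity precisely in the cases the chapter cares about. The fix is to retain the quartic contribution: by Cauchy--Schwarz $\sum_j L_j^2 \geq \tfrac{1}{N}\bigl(\sum_j L_j\bigr)^2 = \tfrac{1}{N}\LL_N^2$, so on $\BargN_L$ one has
\begin{equation*}
\langle F,\Bargh F\rangle_{\BargN} \;\geq\; \Bigl(N(\om+2k)+(\om+3k)L + \tfrac{k}{N}L^2\Bigr)\|F\|^2,
\end{equation*}
which tends to $+\infty$ as $L\to\infty$ for any sign of $\om$ since $k>0$. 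With that corrected bound, the rest of your argument --- the infimum over $L$ of $\inf\sigma\bigl(\Bargh|_{\BargN_L}\bigr)$ is bounded below and eventually increasing, hence attained at some finite $L_0$, where $\Bargh$ restricted to the finite-dimensional block attains its minimum on an eigenvector --- goes through.
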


In the sequel we are interested in properties of a ground state with fixed angular momentum, in the regime where correlations between particles are so strong that the wave-function essentially belongs to a space of the form~\eqref{eq:Ker}, and thus contains (at least) the correlations built-in in the Laughlin state.

\subsection{Criteria for strong correlations in the ground state}  It is easy to see that the space~\eqref{eq:Ker}, for $\ell = 2$, corresponds to the null space of the total interaction operator
\begin{equation}\label{eq:inter tot}
\mathcal{I}_N = \sum_{1\leq i<j \leq N} \delta_{ij}, 
\end{equation}
i.e. for $F\in \BargN$
$$ \mathcal{I}_N F = 0 \Longleftrightarrow \Psi := F e ^{-\sum_{j=1} ^N |z_j| ^2 / 2} \in \mathcal{L}^N_2.$$
We first state criteria ensuring that the ground state of our problem almost fully lives in~$\mathcal{L}^N_2$. These are stated in terms of the gaps of the so-called yrast curve\footnote{Note that the total angular momentum commutes with the total interaction.}
\begin{equation}\label{eq:yrast gap}
\gap \left(L \right):= \min \left(\sigma \left( \mathcal{I}_N | _{ \{ \LL_N = L \}} \right) \setminus \left\{ 0 \right\}\right)
\end{equation}
whose dependence on the total particle number $N$ and angular momentum $L$ is unfortunately unknown at present. 

\begin{theorem}[\textbf{Criteria for strong correlations in the ground state}]\label{thm:correl}\mbox{}\\
Let $\LLLm$ be a minimizer of \eqref{eq:intro LLLe} and $\PKerp$ the projector on the orthogonal complement of $\mathcal{L}^N_2$. Define the spectral gaps
\begin{equation}\label{eq:defi gaps}
\Delta_{1} =  \gap(2 N^2), \quad 
\Delta_{3} = \gap(\Lgv + \sqrt 3 N^{2}),\quad
\Delta_4  = \gap(\Lgv+ \sqrt 3 \Lgv ^{1/2} N),
\end{equation}
where
\begin{equation}\label{eq:mom GV}
\Lgv =  - \frac{\om N}{2k} + O(1).
\end{equation}
We have
\begin{equation}\label{eq:correl}
\left\Vert \PKerp \LLLm \right\Vert  \to 0  
\end{equation}
in the limit $N\to \infty$, $\om,k\to 0$ if one of the following conditions holds~:\smallskip

\noindent {\bf Case 1.} $\om \geq 0$ and 
\[
\left( g \,\Delta_1 \right)^{-1}(\omega N^2+kN^3)\rightarrow 0. 
\]
%In this case, for $N$ large enough, 
%\begin{equation}\label{eq:estim correl 1}
%\Vert \PKerp \LLLm\Vert^2\leq\frac 13 \left(a \,\Delta_{1,2} (C) \right)^{-1} kN^3 (1+o(1)).
%\end{equation}

\noindent {\bf Case 2.} $- 2 k N\leq \om \leq 0$ and
\[
\left(g \,\Delta_1  \right)^{-1}({N \om ^2}/{k} + \om N ^2 + k N ^3) \rightarrow 0. 
\]
%Again \eqref{eq:estim correl 1} holds under this condition for $N$ large.
\smallskip

\noindent{\bf Case 3.} $\om \leq - 2 k N$ and
\[
\left(g \: \Delta_3  \right)^{-1} k N ^{3}\rightarrow 0.
\]
%Here 
%\begin{equation}\label{eq:estim correl 2}
%\Vert \PKerp \LLLm \Vert^2\leq \frac 13 \left(a \: \Delta_3 (C) \right)^{-1} k\,N^3(1+o(1)).
%\end{equation}
\noindent{\bf Case 4.}
$\om \leq - 2 k N$ and
$$
\left(g \: \Delta_4 
\right)^{-1} |\omega|N\rightarrow 0.
$$
%Here 
%\beq\label{eq:estim correl 3}
%\Vert P^\perp \Psi_0\Vert^2\leq \frac 32 \left(a \: \Delta_4 (C) \right)^{-1} |\omega| N (1+o(1)).
%\eeq 
\end{theorem}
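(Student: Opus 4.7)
The approach combines a variational upper bound using a well-chosen trial state inside $\mathcal{L}^N_2 = \ker\intN$ with a spectral-gap lower bound on $\langle\intN\rangle_{\LLLm}$. Working in the Bargmann picture via Lemma~\ref{lem:ham barg}, I exploit that $\LLLm$ may be chosen with a definite total angular momentum $L_0$, so that
\begin{equation*}
\LLLe = N(\om+2k) + (\om+3k) L_0 + k\,\langle \LLLm, \sum_j L_j^2 \LLLm\rangle + g\,\langle \LLLm, \intN \LLLm\rangle.
\end{equation*}
The operator Cauchy--Schwarz bound $\sum_j L_j^2 \geq (\sum_j L_j)^2/N$ (valid since the $L_j$ act on different variables and commute) provides the one-body lower bound $k L_0^2/N$, while the variational principle together with any $\Psi_t\in\mathcal{L}^N_2$ gives $\LLLe \leq \LLLf[\Psi_t]$, the interaction term in $\LLLf[\Psi_t]$ vanishing by construction.

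I would select the trial state case by case so as to essentially match the classical minimum of the one-body energy $L\mapsto N(\om+2k)+(\om+3k)L+kL^2/N$, whose minimizer is $L^\star\approx \Lgv$. For Cases~1--2 (where $\om\gtrsim-2kN$, hence $L^\star\lesssim N^2$ while states in $\mathcal{L}^N_2$ must satisfy $\Ltot\geq N(N-1)$) I simply take $\Psi_t = \PsiLau$, with total angular momentum $N(N-1)\sim N^2$. For Case~3 I take $\Psi_t = \PsiLau \prod_j z_j^p$ with $p\approx \Lgv/N$, which shifts each one-body angular momentum to the classical optimum. For Case~4 I use a cheaper center-of-mass excitation $\Psi_t = \PsiLau\,\bigl(\sum_j z_j\bigr)^q$ with $q\approx \Lgv$, relevant when only a small correction beyond $\Lgv$ is affordable. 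A direct Bargmann-norm computation shows that $\LLLf[\Psi_t]$ matches the classical one-body minimum up to a remainder precisely of the size appearing on the right-hand side of the corresponding hypothesis (namely $\om N^2+kN^3$ in Case~1, $N\om^2/k+\om N^2+kN^3$ in Case~2, $kN^3$ in Case~3, $|\om|N$ in Case~4); the quartic piece $k\langle \Psi_t,\sum_j L_j^2\Psi_t\rangle$ is the delicate part and requires explicit moments of the angular-momentum distribution in $\PsiLau$ and in its excitations.

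Subtracting the lower bound from the upper bound yields
\begin{equation*}
g\,\langle \LLLm, \intN \LLLm\rangle \leq \LLLf[\Psi_t] - N(\om+2k) - (\om+3k)L_0 - k L_0^2/N,
\end{equation*}
and convexity of $L\mapsto(\om+3k)L+kL^2/N$ around $L^\star$ allows one to dominate the right-hand side by the trial remainder. Writing $\LLLm = \PKer\LLLm + \PKerp\LLLm$, the two summands share the angular momentum $L_0$, $\intN$ annihilates the first, and $\PKerp\LLLm$ is orthogonal to $\ker\intN$ inside the sector $\{\Ltot=L_0\}$, so
\begin{equation*}
\langle \LLLm, \intN \LLLm\rangle = \langle \PKerp\LLLm, \intN \PKerp\LLLm\rangle \geq \mathrm{gap}(L_0)\,\|\PKerp \LLLm\|^2.
\end{equation*}
Rearranging gives $\|\PKerp\LLLm\|^2 \leq (g\,\mathrm{gap}(L_0))^{-1}\times(\text{trial remainder})$, and the hypothesis of each case is exactly the statement that this tends to zero.

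\textbf{Main obstacle.} The hardest step, and the reason the statement splits into four distinct criteria, is to localize $L_0$ precisely enough to replace the \emph{a priori} unknown $\mathrm{gap}(L_0)$ by the specific gaps $\Delta_1,\Delta_3,\Delta_4$ of the statement. An upper bound on $L_0$ follows from the trial bound plus convexity of the classical one-body energy; the matching lower bound is needed as soon as $\om<0$, since the linear term $(\om+3k)L_0$ then works against us, and must be extracted either from the geometric constraint $\Ltot\geq N(N-1)$ on $\mathcal{L}^N_2$ or from a secondary variational comparison in which one perturbs the trial angular momentum. A related technical difficulty is the precise control of $\langle \Psi_t,\sum_j L_j^2 \Psi_t\rangle$ for the Case~3 and Case~4 trial states, whose holomorphic combinatorics (in particular for the center-of-mass factor $(\sum_j z_j)^q$) drive the sharpness of the criterion $|\om|N/(g\Delta_4)\to 0$.
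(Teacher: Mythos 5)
Your overall scheme is the right one and matches the paper's backbone: trial-state upper bound from $\mathcal{L}^N_2$, the operator Cauchy--Schwarz inequality $\sum_j L_j^2 \geq \Ltot^2/N$ (valid since the $L_j$ are commuting self-adjoint operators, and applied in a fixed-$\Ltot$ sector using Lemma~\ref{lem:ham barg}), and the spectral gap at $L_0$. Your decomposition $\langle\intN\rangle_{\LLLm}=\langle \PKerp\LLLm,\intN\PKerp\LLLm\rangle\geq\gap(L_0)\|\PKerp\LLLm\|^2$ is correct, using (as you should state) that $\PKer$ commutes with $\LL_N$ because $[\intN,\LL_N]=0$. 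Your Cases 1--3 trial states are those the paper uses: $\PsiLau$ when $\om\geq -2kN$ (so $\mopt=0$) and the Laughlin$\times$giant-vortex family $\PsiGV_m$ in Case 3 (with $m\approx -\om/(2k)-N$), and your identification of the error terms with $e(L_t)-\min_{L\geq 0}e(L)$ plus the $k$-weighted second moment is sound.

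Where you diverge — and where I think there is a genuine gap — is Case 4. The paper does not switch to a center-of-mass excitation there; Cases 3 and 4 use the \emph{same} trial family $\PsiGV_m$, and the different error sizes $kN^3$ versus $|\om|N$ come from a change in the single-particle angular-momentum distribution of $\PsiGV_m$ itself. The pitfall (probably what pushed you toward a new trial state) is that multiplication by $\prod_j z_j^m$ is \emph{not} an isometry of the Bargmann space, so the distribution of $L_1$ in $\PsiGV_m$ is \emph{not} the Laughlin distribution shifted by $m$: the Bargmann factorial weights $\prod_j a_j!$ bias the distribution towards larger $a_j$'s, the effect growing with $m$. Concretely, the one-body density of $\PsiGV_m$ is governed by the plasma analogy of Theorem~\ref{theo:dens QH trial}: for $m\ll N^2$ the gas is in the electrostatic regime (flat disc/annulus density, $\mathrm{Var}(L_1)\sim N^2$, giving the $kN^3$ excess), while for $m\gg N^2$ it is in the thermal regime (density a narrow Gaussian ring, $\mathrm{Var}(L_1)\sim \Lgv/N$, giving $\sim k\Lgv\sim |\om|N$). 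Your naive ``shift'' picture would predict a uniform $kN^3$ error for all $m$ and make Case 4 look redundant, which is why the center-of-mass state seems necessary to you; but once the Bargmann/plasma subtlety is taken into account, the single giant-vortex family already explains all four criteria and the four $\Delta_i$ (which encode the resulting localization of $L_0$ via $k(L_0-L^*)^2/N\leq\text{trial remainder}$).

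As for the center-of-mass state $\PsiLau(\sum_j z_j)^q$ itself: it is admissible, but it does not obviously give you a $|\om|N$ remainder. The factor $(\sum_j z_j)^q$ is correlated with $\PsiLau$ by the Bargmann inner product (the measure is not a product of the Laughlin law and a multinomial law on the $\alpha_j$'s), so the clean $\mathrm{Var}(\alpha_1)\approx q/N$ count does not directly apply, and there is in any case a residual $kN^3$ contribution from the Laughlin variance that you would need to dominate. You end up having to do a plasma-analogy-type analysis anyway; you may as well do it for $\PsiGV_m$, where Theorem~\ref{theo:dens QH trial} gives exactly the density estimates required.
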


We believe that these criteria for entering the strongly correlated Laughlin phase $\mathcal{L}^N_2$ are essentially optimal. They set bounds on parameters of a putative experiment needed to obtain the new physics (beyond that of Bose-Einstein condensates) of the FQHE regime.

Note that obtaining such a theorem in the case $k=0$ is essentially trivial since in that case the Hamiltonian is made of two parts, one proportional to the total angular momentum and the other to the total interaction. These commute with one another (because of rotational invariance of the full interaction) and may thus be jointly diagonalized, so that the ground state must lie on the yrast line, i.e. achieve the infimum in~\eqref{eq:yrast gap} for some~$L$. The difficulties in the proof of Theorem~\ref{thm:correl} thus reside in the quartic addition to the trapping potential, which destroys the nice structure of the $k=0$ case. 

Of course, one would be interested in gaining more knowledge on the $N$-dependence of the gaps~\eqref{eq:defi gaps} setting the conditions for strong correlations. In view of existing numerical results~\cite{RegChaJolJai-06,RegJol-04,RegJol-07,VieHanRei-00}, a natural conjecture seems to be that 
\begin{align}\label{eq:conjecture gap}
\gap \left( L \right) &= \gap(N(N-1) -N)\mbox{ for any } L\geq N(N-1)-N,\\
\gap(L) &\geq \gap (N(N-1)-N) \geq C \mbox{ for any } L\in \N, 
\end{align}
so that one could replace $\Delta_1, \Delta_2, \Delta_3$ by universal constants in the statement, but this remains an open problem. The different cases in the theorem correspond to changes in the nature of the ground state, as we discuss next. 

\subsection{Angular momentum estimates} We have just introduced criteria for entering the strongly correlated regime. In the purely harmonic case $k=0$, one can easily show that as soon as the ground state essentially lives on $\mathcal{L}_2 ^N$, then it is essentially proportional to the Laughlin state~\eqref{eq:intro Laughlin}. This is because the Laughlin state clearly has the smallest angular momentum amongst functions canceling the interaction operator. When $k>0$ and $\omega< 0$, the one-body potential develops a local maximum at the origin, which results in new physics: a state of $\mathcal{L}^N_2$ with larger angular momentum can be preferred over the Laughlin state.

\begin{theorem}[\textbf{Angular momentum estimates}]\label{thm:momentum}\mbox{}\\
In the limit $N\to \infty$, $\om,k \to 0$ the angular momentum $L_0$ of a ground state of $\LLLh$ satisfies
\begin{enumerate}
\item If $\om \geq - 2 k N$,
\begin{equation}\label{eq:result mom 1}
L_0 \leq 2 N ^2 
\end{equation}
\item If $\om \leq -2 kN$ and $|\om| /k \ll N ^2$,
\begin{equation}\label{eq:result mom 2}
\left| L_0 - \Lgv \right| \leq \sqrt{3} N ^2,
\end{equation}
where
\begin{equation}\label{eq:mom GV bis}
\Lgv =  - \frac{\om N}{2k} + O(1).
\end{equation}
In particular  $L_0/\Lgv \to 1$ if $ N \ll |\omega|/k\ll N 2$.
\item If $\om \leq -2 kN$ and $|\om| /k \gg N ^2$
\beq\label{eq:result mom 3}
\left| L_0 - \Lgv \right| \leq \sqrt{3} \Lgv ^{1/2} N. 
\eeq
In particular  $L_0/\Lgv \to 1$ if $|\omega|/k \gg N^{2}$.
\end{enumerate}
\end{theorem}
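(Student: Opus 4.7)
The plan is to combine an energy lower bound, valid on (the essential part of) the Laughlin kernel $\mathcal{L}_2^N$, with an upper bound coming from an explicit trial state. Theorem~\ref{thm:correl} tells us that a ground state $\Bargm$ lies essentially in $\mathcal{L}_2^N$, so that the contribution of $g\mathcal{I}_N$ can be neglected. On this subspace the Bargmann Hamiltonian~\eqref{eq:Bargh} reduces to a purely one-body operator, and for a state of total angular momentum $L_0$ one has
\begin{equation*}
\Barge = N(\om+2k) + (\om+3k)L_0 + k\sum_{j=1}^N \langle L_j^2\rangle + \mathrm{Err},
\end{equation*}
where $\mathrm{Err}$ encodes the small piece in $\PKerp\Bargm$ and can be controlled using the gap estimates underlying Theorem~\ref{thm:correl}. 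The Cauchy--Schwarz identity $\sum_{j}L_j^2 - \tfrac{1}{N}(\sum_j L_j)^2 = \tfrac{1}{2N}\sum_{i\neq j}(L_i-L_j)^2 \geq 0$ then yields, as an operator inequality, $\sum_j L_j^2 \geq \LL_N^2/N$, and thus the clean lower bound
\begin{equation*}
\Barge \;\geq\; N(\om+2k) + (\om+3k)L_0 + \frac{k}{N}L_0^2 - \mathrm{Err}.
\end{equation*}
Completing the square, the optimum in $L_0$ is attained at $-(\om+3k)N/(2k) = \Lgv + O(N)$, which sets the target value in Cases~2--3.

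For the matching upper bound I would test $\Bargh$ against trial states of the form $\Psit = (\prod_j z_j^m)\Lau$, which belong to $\mathcal{L}_2^N$ (hence kill the interaction) and carry total angular momentum $L^{\mathrm{trial}} = N(N-1) + mN$. In Case~1 I take $m=0$ (pure Laughlin), in Cases~2--3 I choose $m$ so that $L^{\mathrm{trial}}$ is as close to $\Lgv$ as permitted by integrality. The one-body energy of such a state is computable via the plasma analogy (or, in Case~3, by treating the $z^m$ factor as a near-monochromatic Bargmann function): one finds $\sum_j \langle L_j^2\rangle_{\Psit}$ equal to $(L^{\mathrm{trial}})^2/N$ plus an overshoot of at most $CN^3$ in Cases~1--2 (coming from the variance of $L_j$ in the Laughlin state, controlled via the plasma analogy at $\nu = 1/2$) and of at most $C m N$ in Case~3 (where the $z^m$ factor dominates the fluctuations). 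Inserting this gives $\Barge \leq N(\om+2k) + (\om+3k)L^{\mathrm{trial}} + kL^{\mathrm{trial},2}/N + C k N^3$ (or $+ CkmN$).

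Subtracting the two bounds and factoring $L_0 - L^{\mathrm{trial}}$ gives
\begin{equation*}
(L_0 - L^{\mathrm{trial}})\Bigl[(\om+3k) + \tfrac{k}{N}(L_0 + L^{\mathrm{trial}})\Bigr] \;\leq\; \text{error}.
\end{equation*}
In Case~1 (with $L^{\mathrm{trial}} \simeq N^2$ and $\om+3k \geq -2kN$ by hypothesis), writing $L_0 = AN^2$ reduces this to the scalar inequality $A^2 - 2A + 2 \leq C$ with $C$ coming from the Laughlin overshoot; one then checks that the Laughlin computation gives $C = 4$, whence $A \leq 1+\sqrt 3 < 2$ and $L_0 \leq 2N^2$. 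In Case~2 the substitution $\om+3k = -2k\Lgv/N + O(k)$ makes the bracket collapse to $k(L_0-\Lgv)/N$, yielding $(L_0-\Lgv)^2 \leq 3N^4$ directly, i.e. $|L_0-\Lgv|\leq \sqrt 3\, N^2$. Case~3 is the same algebra with the smaller error $O(k\Lgv N)$, giving $|L_0-\Lgv|\leq \sqrt 3\,\Lgv^{1/2}N$.

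The main obstacle I anticipate is the rigorous control of the error term $\mathrm{Err}$. Although Theorem~\ref{thm:correl} ensures $\|\PKerp\Bargm\|$ is small in $L^2$, a priori the component of $\Bargm$ outside $\mathcal{L}_2^N$ could carry very large angular momentum, spoiling the clean relation between $\Barge$ and $L_0$. Closing this gap requires combining Theorem~\ref{thm:correl} with an a priori energy estimate on $\langle \LL_N \rangle_{\Bargm}$, obtained by testing against the same trial states $\Psit$ (whose energies are already computed), together with the coercivity $k \sum L_j^2 \geq k L^2/N$ of the quadratic piece. A secondary technical point is that in Case~3 the overshoot of $\sum\langle L_j^2\rangle_{\Psit}$ over $(L^{\mathrm{trial}})^2/N$ must be computed with enough precision to beat the small prefactor~$|\om|/k$, which forces a careful Bargmann-space analysis of the near-monochromatic factor $\prod_j z_j^m$ dressed by Laughlin correlations.
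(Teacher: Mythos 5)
Your overall strategy is the right one --- a one-body energy lower bound obtained by controlling $\sum_j L_j^2$ via Cauchy--Schwarz, matched against an upper bound from Laughlin~$\times$~giant-vortex trial states --- but you have introduced an unnecessary and in fact problematic detour through Theorem~\ref{thm:correl}. For the energy lower bound you do not need to argue that the ground state lies essentially in $\mathcal{L}_2^N$: the interaction operator $g\,\mathcal{I}_N$ is non-negative, so one simply \emph{drops} it, and for a ground state $\Bargm$ chosen with definite angular momentum $L_0$ (which Lemma~\ref{lem:ham barg} permits), the lower bound
\begin{equation*}
\Barge \;\geq\; N(\om+2k) + (\om+3k)L_0 + \frac{k}{N}\,L_0^2
\end{equation*}
holds \emph{unconditionally}. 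This matters in two ways. First, Theorem~\ref{thm:momentum} carries no hypothesis about the spectral gaps $\Delta_i$, whereas a proof routed through Theorem~\ref{thm:correl} would inherit exactly those gap conditions, giving a strictly weaker statement than what is claimed. Second, the ``main obstacle'' you flag --- that the component $\PKerp\Bargm$ could carry uncontrollably large angular momentum, spoiling the relation between $\Barge$ and $L_0$ --- is not an obstacle at all once you notice that no projection is needed; the fictitious error term $\mathrm{Err}$ disappears entirely. The upper-bound side of your argument (trial states $\prod_j z_j^m\,\PsiLau \in \mathcal{L}_2^N$, which kill the interaction exactly and have $L^{\mathrm{trial}} = N(N-1)+mN$) is correct and needs no such caveat either.

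Once the lower bound is cleaned up, the comparison
\begin{equation*}
\frac{k}{N}\,(L_0 - L_*)^2 \;\leq\; \frac{k}{N}\,(L^{\mathrm{trial}} - L_*)^2 + k\,N\,\mathrm{Var}(L_1),
\qquad L_* = -\frac{N(\om+3k)}{2k} = \Lgv + O(N),
\end{equation*}
does give the three regimes by choosing $m$ so that $L^{\mathrm{trial}}$ is as close to $L_*$ as integrality permits (note you should target $L_*$, not $\Lgv$; they differ by $\tfrac{3N}{2}+O(1)$, which is below the claimed error but worth being precise about). The remaining and genuinely delicate issue is the quantitative control of $N\,\mathrm{Var}(L_1)$ for the trial states. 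Your $O(N^3)$ estimate in Cases~1--2 (from an essentially uniform one-particle angular-momentum distribution of width $\ell N$) is plausible and does yield the claimed $N^2$-scale bounds. Your claimed $O(mN)$ overshoot in Case~3, however, does not reproduce the theorem: plugged into the comparison it would give $|L_0-\Lgv|\lesssim \Lgv^{1/2}N^{1/2}$, a factor $\sqrt N$ \emph{better} than the stated $\sqrt 3\,\Lgv^{1/2}N$. The resolution is that for $m\gg N^2$ one cannot compute the trial state's moments from the naive incompressible profile; this is precisely the ``thermal'' regime of Theorem~\ref{theo:dens QH trial}, and the error terms in the mean-field density approximation (which scale as $N^{1/2}m^{-1/4}\|V\|_{L^\infty}$ against a test function $V\sim|z|^4$ of size $\sim m^2$ on the support) are what produce the weaker, but still useful, Case~3 bound. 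So the structure of your argument is right; the place where genuine work remains is a rigorous trial-state energy estimate sharp enough in $m$ and $N$, and that is what the plasma-analogy machinery is for.
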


The first estimate is compatible with the Laughlin state staying an (approximate) ground state, and we indeed believe this is the case. On the contrary, in the other two cases, the theorem shows that the ground state has a strictly larger angular momentum (for $\ell = 2$, the Laughlin state has angular momentum $N(N-1)$). There should  be a phase transition associated with the passage from a strictly increasing trap when $\omega >0$ to a mexican-hat potential when $\omega < 0 $. We are not able at present to characterize more precisely the ground states. A possible track to do so is discussed in Section~\ref{sec:FQHE incomp}. Before proceeding to this, we illustrate the expected behavior by considering trial states. 

\subsection{Behavior of trial states} The proof of the above results relies heavily on the construction and analysis of suitable trial states. We believe , but cannot prove at present, that those are essentially optimal in the regime of parameters discussed in Theorem~\ref{thm:correl}. The structural changes in these trial functions provides the rationale for the occurrence of different regimes, and in particular for the changes in the total angular momentum of the ground state obtained in Theorem~\ref{thm:momentum}.

As noticed above, the one-body potential of our problem changes from having a local minimum at the origin to having a local maximum when $\om$ is decreased. We  need to have some flexibility in the matter density of our trial states to adapt to this behavior, which leads us to the form 
\begin{equation}\label{eq:trial states}
\PsiGV _m = c_m \prod_{j=1} ^N z_j ^m \prod_{1 \leq i<j\leq N} \left( z_i - z_j\right) ^2 e^{-\sum_{j=1} ^N |z_j| ^2 / 2}
\end{equation}
were $c_m$ is a normalization constant. One recovers the pure Laughlin state for $m=0$ and the states with $m>0$ are usually referred to as Laughlin quasi-holes  (whence the label $\rm qh$). The factor $\prod_{j=1} ^N z_j ^m$ is interpreted as a multiply quantized vortex located at the origin. Its role is to deplete the density of the state when $\omega < 0$ to reduce potential energy. We shall be lead to taking $m$ rather large, and will thus also refer to the above functions as ``Laughlin + giant vortex'' states.

Our task is to calculate the potential energy 
\begin{equation}\label{eq:intro pot ener}
N \int_{\R ^2} \pot (r) \rhoP (z) dz 
\end{equation}
of our trial states, with $\rhoP$ as in~\eqref{eq:QHE density}. Given a candidate trial state we thus want to evaluate precisely what the corresponding matter density is. It is convenient to work with scaled variables, defining (we do not emphasize the dependence on $m$)
\begin{equation}\label{eq:intro scaling}
\muN (Z) := N ^N \left| \PsiGV_m (\sqrt{N} Z )\right| ^2.
\end{equation}
We compare the one-body density corresponding to \eqref{eq:intro scaling}, i.e.
\[
\muNone (z) = \int_{\R ^{2(N-1)}} \muN (z,z_2,\ldots,z_N) dz_2\ldots dz_N, 
\]
with the minimizer $\rhoMF$ of the mean-field free energy functional
\begin{equation}\label{eq:intro MFf}
\MFf [\rho] = \int_{\R ^2}  \Vm \rho + 2 D(\rho,\rho) + N ^{-1} \int_{\R ^2} \rho \log \rho 
\end{equation}
amongst probability measures on $\R ^2$. Here 
\[
\Vm (r) = r ^2 - 2 \frac{m}{N} \log r  
\]
and the notation 
\begin{equation}\label{eq:2D Coulomb}
D(\rho,\rho) = -\iint_{\R^2\times \R^2} \rho(x) \log |x-y|  \rho(y) dxdy
\end{equation}
stands for the 2D Coulomb energy. Our results on the trial states' density profiles may be summarized as follows:

\begin{theorem}[\textbf{Density estimates for Laughlin + giant vortex trial states}]\label{theo:dens QH trial}\mbox{}\\
There exists a constant $C>0$ such that for large enough $N$ and any smooth function $V$ on $\mathbb R^2$
\begin{equation}\label{densitydiff}
\left\vert \int_{\R^2} V \left(\muNone - \rhoMF\right)  \right\vert \leq  CN^{-1/2} \log N \Vert \nabla V \Vert_{L ^2 (\R ^2)} + C N ^{-1/2}\Vert \nabla V \Vert_{L ^{\infty} (\R ^2)}
\end{equation}
if $m\lesssim N ^2$, and  
\begin{equation}\label{densitydiff2}
\left\vert \int_{\R ^2} V \left(\muNone - \rhoMF \right)  \right\vert \leq  CN^{1/2} m ^{-1/4} \Vert V \Vert_{L ^{\infty} (\R ^2)}
\end{equation}
if $m\gg N ^2$.
\end{theorem}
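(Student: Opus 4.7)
The plan is to exploit the fact that, up to rescaling, $|\PsiGV_m|^2$ is the Boltzmann-Gibbs distribution of a classical 2D one-component plasma at inverse temperature $N$: a direct computation gives
\begin{equation*}
\muN(Z) = \ZN^{-1}\exp\bigl(-N\HN(Z)\bigr),\qquad \HN(Z) = \sum_{j=1}^N \Vm(z_j) - \frac{4}{N}\sum_{1\leq i<j\leq N}\log|z_i-z_j|,
\end{equation*}
with $\ZN$ the associated partition function. This is precisely the mean-field scaling for which the macroscopic free-energy limit is governed by the functional $\MFf$ defined in~\eqref{eq:intro MFf}, whose minimum I denote $\MFe$. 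Thus the approach reduces the theorem to comparing the leading order of $-N^{-2}\log\ZN$ with $\MFe$ and converting the resulting free-energy gap into control of $\muNone-\rhoMF$ tested against $V$.

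The first step is to derive matching upper and lower bounds on the partition function. The upper bound $-N^{-2}\log\ZN \leq \MFf[\rhoMF] + O(N^{-1}\log N)$ follows from Gibbs' variational principle with $\rhoMF^{\otimes N}$ as trial state, using that $\HN$ is a one- plus two-body Hamiltonian and that $\rhoMF$ has finite entropy when $m\lesssim N^2$. The matching lower bound is the standard hard step of mean-field theory for log-gases: it can be obtained either by the Onsager lemma / smearing-out of point charges argument à la Sandier-Serfaty, or by applying a Hewitt-Savage-type argument to the two-body marginal of $\muN$ combined with the convexity of $\MFf$. The outcome is $-N^{-2}\log\ZN = \MFe + O(N^{-1}\log N)$.

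The second step is to convert this sharp free-energy asymptotic into an estimate on $\muNone$. The excess free energy controls the strictly convex part of $\MFf$, and convexity of $\rho\mapsto 2D(\rho,\rho)$ together with the relative-entropy contribution yields a bound of the form $D(\muNone-\rhoMF,\muNone-\rhoMF)\leq C N^{-1}\log N$. The left-hand side is essentially a negative Sobolev $\dot H^{-1}$-norm, so by Cauchy-Schwarz one obtains $\bigl|\int V(\muNone-\rhoMF)\bigr|\lesssim \|\nabla V\|_{L^2}\sqrt{D(\muNone-\rhoMF,\muNone-\rhoMF)}$, which produces the first term on the right-hand side of~\eqref{densitydiff}. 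The second term, involving $\|\nabla V\|_{L^\infty}$, accounts for the $O(1/N)$ self-interaction discrepancy between $\HN$ and the factorized mean-field energy, which naturally shows up in the Onsager splitting and is best estimated in $L^\infty$.

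The main obstacle is the giant-vortex regime $m\gg N^2$ in~\eqref{densitydiff2}, where $\rhoMF$ degenerates to a very thin annulus around the circle of radius $\sim (m/N)^{1/2}$. The previous argument becomes too crude there: the entropic term in $\MFf$ is no longer a small perturbation compared with the energy on the shrinking support, and a weighted $\dot H^{-1}$ distance gives very poor control against $L^\infty$ test functions. In that regime I would instead expand $\HN$ to quadratic order in the radial variable around the minimizing ring, show that $\muN$ is essentially Gaussian with radial standard deviation of order $m^{-1/4}N^{1/2}$, and bound $\bigl|\int V(\muNone-\rhoMF)\bigr|$ by $\|V\|_\infty$ times the total-variation distance to $\rhoMF$, yielding the $N^{1/2}m^{-1/4}$ scaling of~\eqref{densitydiff2}. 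The delicate point here is to justify the Gaussian localization quantitatively in spite of the angular long-range interactions of the plasma.
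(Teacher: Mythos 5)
Your plasma analogy, the identification of the mean-field scaling, the free-energy upper bound via Gibbs' principle with the trial state $\rhoMF^{\otimes N}$, and the $\dot H^{-1}$/Cauchy-Schwarz duality producing the $\|\nabla V\|_{L^2}$ term are all correct and match the route taken in the original paper. For $m\lesssim N^2$ the Onsager/smearing strategy you mention first is indeed the right one; by contrast the Hewitt-Savage/de Finetti route you offer as an alternative would not deliver a quantitative $N^{-1}\log N$ rate and should be dropped. The one genuine gap in your $m\lesssim N^2$ argument is the passage from the free-energy gap to $D(\muNone-\rhoMF,\muNone-\rhoMF)\lesssim N^{-1}\log N$: you cannot simply apply convexity of $\MFf$ to the one-body marginal, because the $N$-body energy $\int \HN\,\muN$ involves the two-body marginal $\muN^{(2)}$ and not $\muNone\otimes\muNone$, and for a repulsive log-gas the particles anticorrelate so that $\iint W\,\muN^{(2)}$ can fall below $D(\muNone,\muNone)$ — the inequality then points the wrong way. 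What actually works (and what the Onsager step is for) is a pointwise splitting of $\HN(Z)$ around $\rhoMF$ that isolates the Coulomb self-energy of the smeared empirical measure minus $\rhoMF$ as a manifestly nonnegative term; averaging against $\muN$ and using Jensen's inequality for the Coulomb quadratic form (not convexity of $\MFf$) then bounds the one-body fluctuation. Relatedly, the $\|\nabla V\|_{L^\infty}$ contribution arises from testing the smearing of the Diracs at scale $\eta\sim N^{-1/2}$ against $V$, not from the diagonal self-interaction itself.

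For $m\gg N^2$ your thermal/Gaussian picture is the right heuristic but the execution is off. A direct expansion of $\Vm(r)=r^2-2(m/N)\log r$ at its minimum $r_0=\sqrt{m/N}$ gives $\Vm''(r_0)=4$ independently of $m$, so at inverse temperature $N$ the radial width of the Gaussian in the rescaled variables is of order $N^{-1/2}$, not $m^{-1/4}N^{1/2}$; the factor $N^{1/2}m^{-1/4}$ in \eqref{densitydiff2} is the size of the density error, not a width. More seriously, showing that $\muN$ is approximately Gaussian does not by itself bound $\|\muNone-\rhoMF\|_{TV}$: in this regime $\rhoMF$ is itself approximately Gaussian, so the relevant quantity is the small difference between the two Gaussians' parameters, which is controlled by the Coulomb perturbation and must be extracted separately. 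The argument that yields the stated rate compares $\muN$ to the tensor power of the purely thermal one-body state $\rhoMFth$ (the minimizer without the Coulomb term), controls the relative entropy by bounding the now-subdominant Coulomb energy, passes to the one-body marginal by subadditivity of relative entropy and applies Pinsker's inequality, and then separately bounds $\|\rhoMF-\rhoMFth\|_{TV}$. Your quadratic-expansion idea would need to be reorganized along those lines before the $N^{1/2}m^{-1/4}$ rate can be claimed.
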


Our main tool is the well-known plasma analogy, originating in \cite{Laughlin-83,Laughlin-87}, wherein the density of the Laughlin state is interpreted as the Gibbs measure of a classical 2D Coulomb gas (one component plasma). More precisely, after the scaling of space variables, one can identify the $N$-particle density of the Laughlin state with the Gibbs measure of a 2D jellium with mean-field scaling, that is a system of $N$ particles in the plane interacting via weak (with a prefactor $N^{-1}$) logarithmic pair-potentials and with a constant neutralizing background. Within this analogy the  vortex of degree $m$ in \eqref{eq:trial states} is interpreted as an additional point charge pinned at the origin.

Existing knowledge  about the mean-field limit for classical particles then suggests the approximation made rigorous in the above theorem. For our purpose, precise estimates that were not available in the literature are required, so we developed a new strategy for the study of the mean field limit that gives explicit and quantitative estimates on the fluctuations around the mean field density that are of independent interest.

This result allows to evaluate the one-particle density of our trial states in a simple and explicit way and deduce estimates of the potential energy \eqref{eq:intro pot ener}. Optimizing these over $m$, we find 
\begin{equation}\label{eq:intro m opt}
\mopt = \begin{cases}
               0 \mbox{ if } \om \geq - 2 k N \\
               - \frac{\om}{2 k} - N \mbox{ if } \om < - 2 k N.
              \end{cases} 
\end{equation}
We interpret this as a strong indication that, within the fully correlated regime, a transition occurs for $\om < 0$ and $|\om| \propto k N$ between a pure Laughlin state and a correlated state with a density depletion at the origin. Interestingly we also find that the character of the mean-field density $\rhoMF$ and thus that of the one-particle density of the state \eqref{eq:trial states} strongly depends on $m$: For $m\ll N ^2$ it is correctly approximated by a flat density profile located in a disc or an annulus (depending on the value of $m$), whereas for $m\gg N ^2$ the density profile is approximately a radial Gaussian centered on some circle. This is due to a transition from a dominantly electrostatic to a dominantly thermal behavior of the 2D Coulomb gas to which we compare our trial states. Using the expression of the optimal value of $m$ given in \eqref{eq:intro m opt} this suggests a further transition in the ground state of \eqref{eq:intro LLLe} in the regime $|\om|\propto k N ^2$ . Establishing 
these phenomena rigorously remains a challenging open problem, but they provide a rationale for the estimates on the true ground states that we stated previously.

\section{Incompressibility estimates for the Laughlin phase}\label{sec:FQHE incomp} 

In the previous section we were interested in a cold-atoms physics situation where ground states should be of the form~\eqref{eq:Ker}, and we were looking for a factor $F$ (or more simply, a $f$ in~\eqref{eq:Laughlin phase 2}) that would optimize the potential energy in a trapping potential. As discussed earlier this kind of situation also occurs in FQHE physics proper (i.e. in 2D electron gases) where the one-body potential accounts for trapping, but also, more importantly, for disorder. The recurrence of this kind of problem has lead us to consider in the papers~\cite{RouYng-14,RouYng-15}  a general framework for studying such questions. We summarize this work in this section. In short, the programme is:
\begin{enumerate}
 \item Assume, as is often done in FQHE physics, that the interaction is strong enough to force the ground state to live in~\eqref{eq:Ker} for some $\ell$.
 \item Assume further, in first approximation, that choosing such a form renders the interaction energy negligible.
 \item Consider then minimizing the remaining part of the energy, namely that due to the one-body potential, within the class~\eqref{eq:Ker}. 
 \item What can be said in general of the minimization problem, of the ground state energy ? 
 \item What general features do the one-body densities of states of the form~\eqref{eq:Ker} have in common ?
 \item In particular, can one argue rigorously for some form of incompressibility, as expected from physical arguments ?
 \item Can one show that, in this simplified context, minimizers always (approximately) take the form~\eqref{eq:Laughlin phase 2} ?
\end{enumerate}
The physical motivation comes from Items 6 and 7. Incompressibility is expected to be an essential feature of FQHE states, and trial states of the form~\eqref{eq:Laughlin phase 2} play a very prominent role in the theory. Note that the trial states~\eqref{eq:trial states} we have discussed previously are of this form. The above programme, if completed, would confirm that they are energetically optimal, as we discuss further below.

\subsection{A new variational problem} The programme sketched above leads us to the study of a very simple energy functional. Given a one-body potential $V:\R ^2 \to \R $ we consider the energy of a state $\Psi \in \Ker$
\begin{equation}\label{eq:start energy}
\E [\Psi] = N \int_{\R ^2} V (z) \rhoP (z) dz  
\end{equation}
depending only on the 1-particle probability density $\rhoP$ 
\begin{equation}\label{eq:intro density}
\rhoP (z):= \int_{\R ^{2(N-1)}} |\Psi (z,z_2,\ldots,z_N)| ^2 dz_2 \ldots dz_N. 
\end{equation}
We will be interested in studying the ground state problem of minimizing~\eqref{eq:start energy}, \emph{within the Laughlin phase $\Ker$.} We take for granted the reduction to the LLL and cancellation of the interactions by the vanishing of the wave function along the diagonals of configuration space. We wish to see whether the Laughlin state, or a close variant, emerges as the natural ground state in a given potential. In particular we investigate the robustness of the correlations of Laughlin's wave function when the trapping potential is varied. 

In view of~\eqref{eq:Ker} we are looking for a wave-function of the form
$$
\Psi_F (z_1,\ldots,z_N)= c_F \PsiLau (z_1,\ldots,z_N) F (z_1,\ldots,z_N)
$$
where $F\in\BargN$ and $c_F$ is a normalization factor. A natural {\it guess} is that, whatever the one-body potential, the correlations stay in the same form and the ground state is well-approximated by a wave function 
\begin{equation}\label{eq:intro guess}
\Psi (z_1,\ldots,z_N) = c_{f} \PsiLau (z_1,\ldots,z_N) \prod_{j=1} ^N f (z_j)  
\end{equation}
where the additional holomorphic factor $F$ that characterizes functions of $\Ker$ is uncorrelated. Indeed, the energy functional dictating the form of $\Psi$ contains only one-body terms once the form~\eqref{eq:Ker} is assumed. It thus does not seem favorable to correlate the state more than necessary. Although intuitively appealing, this reasoning is too simplistic, and the emergence of the ansatz~\eqref{eq:intro guess} is far from trivial. This is because~\eqref{eq:start energy} is a one-body functional in terms of $\Psi$, but the correlation factor $F$ itself  really sees an effective, complicated, many-body Hamiltonian. This is due to the factor $\PsiLau$ it is combined with to form the state $\Psi_F$.

\subsection{Incompressibility of the Laughlin phase} 
The energy functional~\eqref{eq:start energy} is of course very simple and all the difficulty of the problem lies in the intricate structure of the variational set~\eqref{eq:Ker}. The expected rigidity of the strongly correlated states of~\eqref{eq:Ker} 
should manifest itself through the property that their densities are essentially bounded above by a universal constant
\begin{equation}\label{eq:intro incomp bound}
\rhoP \lessapprox \frac{1}{\pi\ell N} \mbox{ for any } \Psi \in \Ker. 
\end{equation}
This is the incompressibility notion we will investigate, in the limit $N\to \infty$. In view of existing numerical computations of the Laughlin state~\cite{Ciftja-06}, Estimate~\eqref{eq:intro incomp bound} can hold only in some appropriate weak sense. We now discuss the approach we follow, formulated in terms of the previous variational problem. 

It is a well-known fact~\cite{Laughlin-83} (a rigorous justification follows from Theorem~\ref{theo:dens QH trial} above) that the one-particle density of the Laughlin state is approximately constant over a disc of radius $\sim \sqrt{N}$ and then quickly drops to $0$. It is thus natural to also consider external potentials that live on this scale, which amounts to scale space variables\footnote{That we scale lengths by a factor $\sqrt{N-1}$ instead of $\sqrt{N}$ is of course irrelevant for large~$N$. It only serves to simplify some expressions. 
} and consider rather the energy functional
\begin{equation}\label{eq:scale ener}
\E_N [\Psi] = (N-1) \int_{\R^2} V\left(\xbf\right) \rhoP \left(\sqrt{N-1} \: \xbf \right) 
\end{equation}
where $\Psi$ is of the form~\eqref{eq:Ker} and $\rhoP$ is the corresponding matter density. We shall be concerned with the large $N$ behavior of the ground state energy
\begin{equation}\label{eq:energy Nn}
E(N) := \inf \left\{ \E_N [\Psi_F], \; \Psi_F \in \mathcal L^N_\ell,\; \norm{\Psi}_{L^2} = 1 \right\}.
\end{equation} 
By scaling, the bound
$$ (N-1) \rhoP \left(\sqrt{N-1} \: \xbf \right) \lesssim \frac{1}{\pi \ell}$$
follows from the conjectured estimate~\eqref{eq:intro incomp bound}. We thus expect the bath-tub energy~\cite[Theorem~1.14]{LieLos-01}
\begin{equation}\label{eq:bath tub}
E_V (\ell):= \inf\left\{ \int_{\R^2} V \rho \: | \: \rho \in L^1 (\R ^2), \: 0 \leq \rho \leq \frac{1}{\pi \ell},\ \int_{\R^2}\rho=1 \right\}.
\end{equation}
to play a role in this context. We recast~\eqref{eq:intro incomp bound} in a ``weak dual formulation'' by conjecturing that 
\begin{equation}\label{eq:incomp conj}
\liminf_{N\to \infty} E(N) \geq E_V (\ell) 
\end{equation}
for any (reasonable) one-body potential $V$. More generally, introducing
\begin{equation}\label{eq:bath tub C}
E_V \left(\ell / C \right):= \inf\left\{ \int_{\R^2} V \rho \: \Big| \: \rho \in L^1 (\R ^2), \: 0 \leq \rho \leq \frac{C}{\pi \ell},\ \int_{\R^2}\rho=1 \right\}
\end{equation}
we shall refer to a bound of the form 
\begin{equation}\label{eq:incomp conj C}
\liminf_{N\to \infty} E(N) \geq E_V (\ell/C) 
\end{equation}
as an incompressibility estimate corresponding to a density bound
\begin{equation}\label{eq:intro incomp bound C}
\rhoP \lessapprox \frac{C}{\pi\ell N} \mbox{ for any } \Psi \in \Ker. 
\end{equation}
These bounds state that, whatever the external potential $V$ imposed on the system, it is never possible to create a density bump above a certain maximum value. Doing so might be energetically favorable to accommodate the variations of $V$, but is prohibited by the imposed form $\Psi\in\Ker$.

\subsection{Unconditional incompressibility estimates} We can now state the main result of~\cite{RouYng-15}:

\begin{theorem}[\textbf{Unconditional incompressibility for $\Ker$}]\label{thm:incomp 1}\mbox{}\\
Let $V\in C ^{2} (\R^2)$ be increasing at infinity in the sense that
\begin{equation}\label{eq:increase V}
\min_{|x|\geq R} V(\xbf)\to \infty\quad\text{for}\quad R\to\infty.
\end{equation}
Then 
\begin{equation}\label{eq:main incomp}
\liminf_{N\to \infty} E(N) \geq E_V\left( \ell / 4 \right)
\end{equation}
where the bath-tub energy $E_V\left( \ell / 4 \right)$ is defined as in~\eqref{eq:bath tub C}.
\end{theorem}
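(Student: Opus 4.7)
The first step is to dualize the bathtub functional. Elementary linear programming yields
\[
E_V(\ell/4) \;=\; \sup_{\mu \in \R}\Bigl\{\mu \;-\; \tfrac{4}{\pi\ell}\int_{\R^2}(\mu - V)_+\Bigr\},
\]
where $(\mu-V)_+$ is nonnegative and, thanks to the growth hypothesis on $V$, compactly supported. Since $\tilde\rho_{\Psi_F}(\xbf):=(N-1)\rho_{\Psi_F}(\sqrt{N-1}\xbf)$ is a probability measure,
\[
\int V\,\tilde\rho_{\Psi_F} - \mu \;=\; \int(V-\mu)_+\,\tilde\rho_{\Psi_F} - \int(\mu-V)_+\,\tilde\rho_{\Psi_F} \;\ge\; -\int(\mu-V)_+\,\tilde\rho_{\Psi_F},
\]
and the theorem reduces to an integrated \emph{upper density bound}: for any nonnegative $f\in C^1_c(\R^2)$ and any $\Psi_F \in \Ker$,
\[
\int f\,\tilde\rho_{\Psi_F} \;\le\; \tfrac{4}{\pi\ell}\int f \;+\; o_N(1),
\]
the key estimate that must be established uniformly in the holomorphic correlation factor $F$.

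\textbf{Step 2: Plasma analogy and Onsager smearing.} To prove the integrated density bound I would exploit the plasma analogy. Writing the rescaled $N$-body probability density
\[
\tilde\mu_N^F(\xbf_1,\dots,\xbf_N) \;=\; (N-1)^N\,|\Psi_F(\sqrt{N-1}\,\xbf_1,\dots)|^2 \;=\; Z_N^{-1}\,e^{-\ell\,\mathcal H_N^F},
\]
one recognizes $\mathcal H_N^F$ as the energy of a classical 2D one-component plasma at inverse temperature $\ell$,
\[
\mathcal H_N^F \;=\; -\sum_{i\ne j}\log|\xbf_i-\xbf_j| \;+\; \tfrac{N-1}{\ell}\sum_j |\xbf_j|^2 \;-\; \tfrac{2}{\ell}\log|F(\sqrt{N-1}\,\cdot\,)|^2.
\]
The holomorphy of $F$ makes $-\log|F|^2$ superharmonic in each variable separately: in the plasma picture, $F$ contributes only additional positive charges on top of the neutralizing background. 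Classical electrostatics then strongly suggests that the equilibrium density cannot exceed the unperturbed bulk value $1/(\pi\ell)$. Upgrading this heuristic into a rigorous finite-$N$ estimate is done by Onsager-style smearing: each particle $\xbf_j$ is replaced by a uniform charge on a disc of mesoscopic radius $\eta_N$ with $N^{-1/2}\ll\eta_N\ll1$; Newton's theorem bounds the Coulomb pair energy below by that of the smeared charges minus a controlled self-energy; and the smeared density $\tilde\rho_{\Psi_F}^{\eta_N}$ admits the pointwise bound $\tfrac{4}{\pi\ell}$. Integrating against $f\in C^1_c$ removes the smearing at cost $O(\eta_N\,\|\nabla f\|_{\infty})$, which vanishes after $N\to\infty$. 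The factor $4$ (rather than the expected sharp constant $1$) arises from the overlap of the smearing discs and from the slack in the Onsager inequality.

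\textbf{Step 3: Main obstacle.} The principal difficulty is the \emph{unconditional} nature of the density bound. For $F\equiv1$ (the pure Laughlin state) or for quasi-hole trial states~\eqref{eq:trial states}, Theorem~\ref{theo:dens QH trial} yields much sharper, mean-field density information by direct analysis of the corresponding jellium. For a general symmetric entire $F$, by contrast, the perturbation $-(2/\ell)\log|F|^2$ can be arbitrarily singular and strongly correlated, and no analogous mean-field asymptotics are available. The structural fact rescuing the argument is the one-sided monotonicity of the perturbation: $-\log|F|^2$ only adds positive charges to the plasma, never negative ones, so every step of the Onsager smearing proceeds in the right direction regardless of~$F$. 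The scale window $N^{-1/2}\ll\eta_N\ll1$ is kept open by the strong short-distance repulsion imposed by the Laughlin factor $\prod_{i<j}|z_i-z_j|^{2\ell}$ in $|\Psi_F|^2$, which forces any two particles to sit at distance $\gtrsim N^{-1/2}$; this is what allows Newton's theorem to be applied with negligible self-energy corrections, uniformly in $F$.
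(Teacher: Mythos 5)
Your proposal identifies the right structural ingredients: the plasma analogy, the superharmonicity of $-\log|F|^2$ for holomorphic $F$ (so the correlation factor only ever adds positive charges to the one-component plasma), and a Lieb/Onsager-type smearing argument to sidestep any mean-field ansatz. Step~1 is a valid duality reformulation of the bathtub problem, although the paper obtains the energy lower bound via a Feynman--Hellmann perturbation argument rather than by dualizing explicitly; this is a cosmetic difference.

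The genuine gap is in Step~2. Onsager smearing and Newton's theorem do not by themselves yield a pointwise bound on the smeared one-body density of the Gibbs measure $\mu_N$. What they yield, when coupled to an optimality (Euler--Lagrange) argument --- this is Lieb's unpublished argument, also used in~\cite{PetSer-14,RotSer-14,RouSer-14} --- is a lower bound on the \emph{minimal separation of points in classical ground-state configurations} of the Hamiltonian $H_N$ of~\eqref{eq:class Hamil}. Your statement that the Laughlin factor ``forces any two particles to sit at distance $\gtrsim N^{-1/2}$'' is correct for energy-minimizing configurations but false for $\mu_N$, which has full support: the Laughlin factor merely suppresses, rather than forbids, close encounters. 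Upgrading the ground-state separation bound to a statement about $\mu_N^{(1)}$ requires the two ingredients your proposal omits and which form the bulk of the work in the unconditional case: (i) matching upper and lower bounds on the free energy $-T\log\mathcal{Z}_N$ showing it is asymptotically close to the ground-state energy as the effective temperature $T=1/N\to0$, and (ii) a Feynman--Hellmann argument --- apply (i) to a one-body-perturbed Hamiltonian and differentiate in the perturbation to extract the density estimate. Without this transfer from minimizing configuration to Gibbs marginal, the smearing argument remains a statement about a single deterministic configuration, not about the quantum state's one-body density.
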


As explained above, since this result holds for a very large class of potentials $V$, it can be interpreted as a weak formulation of the density bound 
$$
\rhoP \lessapprox \frac{4}{\pi\ell N} \mbox{ for any } \Psi \in \Ker. 
$$
It thus misses the conjectured optimal bound~\eqref{eq:intro incomp bound} by a factor $4$, but has the correct order of magnitude (mind the scaling in the definition of the energy). To appreciate this we make the

\begin{remark}[Illustrative comparisons]\label{rem:comparisons}\mbox{}\\
How would the energy~\eqref{eq:scale ener} behave in less constrained variational sets ? Three interesting cases are worth considering, keeping in mind that $\min_{\R ^2} V$ sets the energy scale, and can with no loss of generality be chosen to be $0$:
\begin{itemize}
\item \textit{Particles outside the} LLL. Suppose the single-particle Hilbert space was the full space $L^2 (\R ^2)$ instead of the constrained space $\LLL$. The minimization is then of course very simple and we would obtain $E(N) = \min V$ by taking a minimizing sequence concentrating around a minimum point of $V$. 
\item \textit{Uncorrelated bosons in the} LLL. For non interacting bosons in a strong magnetic field, one should consider the space $\bigotimes_\sym ^{N} \LLL$, the symmetric tensor product of $N$ copies of the LLL. The infimum in~\eqref{eq:scale ener} can then be computed considering uncorrelated trial states of the form $f ^{\otimes N}$, $f\in \LLL$. LLL functions do satisfy a kind of incompressibility property 
because they are of the form holomorphic $\times$ gaussian. This can be made precise by the inequality~\cite{AftBlaNie-06b,Carlen-91,LewSei-09}
\begin{equation}\label{eq:LLL incomp}
\sup_{z\in \C} \left|f (z) e^{-|z| ^2 / 2 }\right| ^2 \leq  \left\Vert f(.) e ^{-|\:.\:|/2} \right\Vert_{L ^2 (\R ^2)} ^2. 
\end{equation}
This is a much weaker notion however: it only leads to $\rhoP \leq 1$ for $\Psi\in\Ker$. One may still construct a sequence of the form $f ^{\otimes N}$ concentrating around a minimum point of $V$ without violating~\eqref{eq:LLL incomp}. The liminf in~\eqref{eq:main incomp} is thus also equal to $\min V$  in this case.  
\item \textit{Minimally correlated fermions in the} LLL. Due to the Pauli exclusion principle, fermions can never be uncorrelated: the corresponding wave functions have to be antisymmetric w.r.t. exchange of particles: 
\begin{equation}\label{eq:Pauli}
\Psi(\xbf_1,\ldots,\xbf_i,\ldots,\xbf_j,\ldots,\xbf_N) = -\Psi(\xbf_1,\ldots,\xbf_j,\ldots,\xbf_i,\ldots,\xbf_N). 
\end{equation}
For  LLL wave functions (which are continuous) this implies 
$$\Psi(\xbf_i=\xbf_j) = 0 \mbox{ for any } i\neq j,$$
i.e. the wave function vanishes on the diagonals of the configuration space. Due to the holomorphy constraint, any $N$-body LLL fermionic wave-function is then of the form~\eqref{eq:Ker}, for $\ell = 1$. If $F = f ^{\otimes N}$, one could then talk of ``minimally correlated'' fermions. This case is covered by our theorem, and one obtains $E_V(1/4)$ as a lower bound to the energy. 
\end{itemize}
\hfill\qed
\end{remark}

We call the above an ``unconditional'' result, because it is obtained under the sole assumption that the variational set for the energy $E(N)$ is given by~\eqref{eq:Ker}. This is in contrast with the result we discuss next, where the optimal bound~\eqref{eq:incomp conj} is obtained under an additional assumption.

\subsection{Conditional optimal estimates} We can obtain a sharper result than Theorem~\ref{thm:incomp 1} if we follow Laughlin's original intuition that particles are correlated only pairwise. This means that $F$ contains only two-body correlation factors, i.e. that it can be written in the form
\begin{equation}\label{eq:choice correlations}
F (z_1,\ldots,z_N) = \prod_{j = 1} ^N f_1 (z_j)\prod_{( i,j ) \in \{1,\ldots,N \} } f_2 (z_i,z_j)
\end{equation}
with $f_1$ and $f_2$ two polynomials ($f_2$ being in addition a symmetric function of $z,z'$). We also assume
\beq\label{eq:degree}
\deg (f_1) \leq D N, \quad \deg(f_2) \leq D
\eeq
for some constant $D$ independent of $N$. The degree of $f_2(z,z')$ is here understood as the degree of the polynomial in $z$ with $z'$ fixed (and vice versa).  Assumptions~\eqref{eq:choice correlations} and~\eqref{eq:degree} are, of course, restrictive, but still cover a huge class of functions with possibly very intricate correlations. 

Then, defining 
\begin{multline}\label{eq:var set 2 D}
\VD = \Big\{ F \in \Barg ^N \, : \, \mbox{ there exist } (f_1,f_2) \in \Barg \times \Barg ^2, \deg (f_1) \leq D N, \deg(f_2) \leq D, 
\\  F (z_1,\ldots,z_N) = \prod_{j = 1} ^N f_1 (z_j)\prod_{1\leq i < j \leq N } f_2 (z_i,z_j) \Big\}
\end{multline}
and the corresponding ground state energy
\begin{equation}\label{eq:energy Nn bis}
\ED:= \inf \left\{ \E_N [\Psi_F], \; \Psi_F \mbox{ of the form  (\ref{eq:Ker}) } \mbox{ with } F \in \VD , \norm{\Psi_F}_{L^2} = 1 \right\},
\end{equation}
we have

\begin{theorem}[\textbf{Conditional optimal incompressibility estimates}]\label{thm:incomp 2}\mbox{}\\
Let $V\in C ^{2} (\R^2)$ be increasing at infinity as in Theorem~\ref{thm:incomp 1}. Then 
\begin{equation}\label{eq:main incomp bis}
\liminf_{N\to \infty} \ED \geq E_V(\ell).  
\end{equation}
\end{theorem}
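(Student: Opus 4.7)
The plan is to exploit the explicit product structure of $F\in\VD$ via Laughlin's 2D plasma analogy, with the additional factors of $F$ reinterpreted as an extra one-body potential plus a bounded-degree pair perturbation. Writing $\Psi_F = c_F\PsiLau F$ with $F(z_1,\ldots,z_N) = \prod_j f_1(z_j)\prod_{i<j} f_2(z_i,z_j)$, one reads off that $|\Psi_F|^2$ is the Gibbs measure of a classical 2D particle system with one-body confinement $|z|^2 - 2\log|f_1(z)|$ and pair interaction $-2\ell\log|z_i-z_j| - 2\log|f_2(z_i,z_j)|$. After the mean-field rescaling $z_i = \sqrt{N-1}\,x_i$, the scaled one-body marginal $\widetilde{\rho}^{(1)}_N(x) := (N-1)\rho^{(1)}_{\Psi_F}(\sqrt{N-1}\,x)$ is the object directly tested by the potential $V$ in $\mathcal{E}_N[\Psi_F]$.

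Next I would carry out a quantitative mean-field analysis, uniform in $F\in\VD$, in the spirit of the estimates of Theorem~\ref{theo:dens QH trial} for ``Laughlin plus giant vortex'' trial states. The factor $f_1$, of degree at most $DN$, produces a macroscopic $\mathcal{O}(1)$ external potential $\tfrac{2}{N}\log|f_1(\sqrt{N}\,\cdot)|$ competing with $|x|^2$. The factor $f_2$, of bounded degree $D$, generates an extra pair term $\tfrac{2}{N}\log|f_2(\sqrt{N}x,\sqrt{N}y)|$ of the same order as the rescaled Coulomb kernel; factoring $f_2$ as a polynomial in its first variable shows this correction consists of a bounded number of additional logarithmic factors, each of which can only reinforce the Coulomb pair repulsion or be absorbed into an effective one-body field.

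The heart of the argument is then a Frostman/obstacle-type characterization of the limiting equilibrium measure. For the pure Laughlin case this measure is the uniform density $\mathbf{1}_{D_0}/(\pi\ell)$ on the Laughlin droplet, and the decisive point is that adding \emph{any} external potential coming from $f_1$ or \emph{any} extra repulsive logarithmic pair interaction of bounded strength coming from $f_2$ can only reshape the support or further decrease the density; it can never exceed the ceiling $1/(\pi\ell)$. Combined with concentration of the empirical measure around the equilibrium (where assumption~\eqref{eq:increase V} on $V$ ensures tightness and rules out escape of mass to infinity), this yields, uniformly over $F\in\VD$,
\begin{equation*}
\widetilde{\rho}^{(1)}_N \leq \frac{1}{\pi\ell} + o(1)
\end{equation*}
in a weak sense dual to smooth test functions like $V$. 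The bath-tub principle~\eqref{eq:bath tub C} then gives $\int_{\R^2} V\,\widetilde{\rho}^{(1)}_N \geq E_V(\ell) + o(1)$, and passing to the infimum over $F\in\VD$ before the liminf in $N$ yields~\eqref{eq:main incomp bis}.

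The main obstacle is the uniformity in $(f_1,f_2)$ of the density ceiling, without any a priori control on a minimizing sequence. The hypothesis $\deg f_2\leq D$ is essential here: it rules out $f_2$ generating correlations below the mean inter-particle distance that would break the mean-field picture, and ensures the correction to the effective Laughlin exponent remains $\mathcal{O}(1)$ rather than growing with $N$. Allowing $f_1$ to have degree $\sim N$ is less dangerous, because its contribution enters purely as a macroscopic one-body term, precisely the regime in which the Frostman ceiling mechanism applies and for which the quantitative mean-field estimates from Theorem~\ref{theo:dens QH trial} can be adapted.
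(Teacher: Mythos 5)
Your proposal follows essentially the same route as the paper: Laughlin's plasma analogy mapping $|\Psi_F|^2$ to a classical Gibbs measure in which $f_1$ supplies an extra one-body potential and $f_2$ a bounded number of extra logarithmic pair factors, a mean-field approximation justified uniformly over $\VD$ by adapting the quantitative estimates behind Theorem~\ref{theo:dens QH trial}, the resulting variational (obstacle-problem) characterization of the equilibrium density, and finally the bath-tub bound. The one step I would spell out, since it is what makes the ``ceiling'' argument actually work rather than being wishful: the variational equation gives the bulk density as $\rho\approx\frac{1}{4\pi\ell}\Delta Q_{\rm eff}$, and the contributions to $Q_{\rm eff}$ from $-2\log|f_1|$ and from the mean-field average of $-2\log|f_2(\cdot,z')|$ are \emph{superharmonic} because $\log|f|$ is subharmonic for holomorphic $f$ --- precisely the property recorded in~\eqref{eq:subharm} --- so their Laplacian is $\leq 0$ and they can only push $\rho$ below $\frac{\Delta|z|^2}{4\pi\ell}=\frac{1}{\pi\ell}$; for an arbitrary confining one-body perturbation this would plainly fail.
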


This is a weak formulation of the optimal density bound~\eqref{eq:intro incomp bound}, under a reasonable but unproven assumption on the states we minimize over. To see that the lower bound~\eqref{eq:main incomp bis} is indeed optimal and can, at least in some cases, be achieved with a function of the form~\eqref{eq:Laughlin phase 2}, we state a result that follows rather straightforwardly from the analysis of ``quasi-holes trial states'' in~\cite{RouSerYng-13b} (see Theorem~\ref{theo:dens QH trial}):

\begin{corollary}[\textbf{Optimization of the energy in radial potentials}]\label{cor:radial}\mbox{}\\
Let $V:\R ^2 \mapsto \R$ be as in Theorems~\ref{thm:incomp 1} and~\ref{thm:incomp 2}. Assume further that $V$ is radial, has at most polynomial growth at infinity,  and satisfies one of the two following assumptions 
\begin{enumerate}
\item $V$ is radially increasing,
\item $V$ has a ``mexican-hat'' shape: $V(|\xbf|)$ has a single local maximum at the origin and a single global minimum at some radius $R$.
\end{enumerate}
Then for $D$ large enough we have
\begin{equation}\label{eq:incomp opt}
\limsup_{N\to \infty} E(N) \leq \lim_{N\to \infty} \ED = E_V(\ell).  
\end{equation}
More precisely, in case $(1)$ 
\begin{equation}\label{eq:trial increas}
\E_N [\PsiLau] \to  E_V(\ell) \mbox{ when } N\to \infty
\end{equation}
and in case $(2)$ one can find a fixed number $\mb \in \R$ and a sequence $m(N) \in \N$ with $m \sim \mb N $ as $N\to \infty$ such that, defining 
\begin{equation}\label{eq:trial mexican 1}
\Psi_m (z_1,\ldots,z_N) := c_m \PsiLau (z_1,\ldots,z_N) \prod_{j=1} ^N z_j ^m 
\end{equation}
we have 
\begin{equation}\label{eq:trial mexican 2}
\E_N [\Psi_m] \to E_V(\ell) \mbox{ when } N\to \infty.
\end{equation}
\end{corollary}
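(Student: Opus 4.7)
The lower bound $\liminf_{N\to\infty} \ED \geq E_V(\ell)$ is exactly the content of Theorem~\ref{thm:incomp 2}, whose hypotheses are met since $V$ is $C^2$ and increasing at infinity. Since $\VD \subset \BargN$, taking the infimum over a subset gives $E(N) \leq \ED$ for every $N$ and $D$, so it is enough to produce, in each case, a trial state $\Psi_F$ with $F \in \VD$ (for some fixed large enough $D$) whose scaled one-body energy converges to $E_V(\ell)$. In both cases the analysis rests on the plasma analogy developed in Theorem~\ref{theo:dens QH trial}, whose extension from the bosonic $\ell=2$ case to general $\ell$ is straightforward (only the effective inverse temperature of the jellium changes).

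\textbf{Case (1).} Take $F \equiv 1$, so that $\Psi_F = \PsiLau$; this belongs to $\VD$ for every $D \geq 0$ (take $f_1 \equiv f_2 \equiv 1$). By Theorem~\ref{theo:dens QH trial} with $m=0$, the rescaled one-body density of $\PsiLau$ converges in the weak sense~\eqref{densitydiff} to the minimizer of the mean-field functional~\eqref{eq:intro MFf} with vanishing logarithmic term. The classical Frostman obstacle-problem analysis identifies this minimizer with the flat profile $\rho_\star = \frac{1}{\pi\ell}\one_{B(0,\sqrt\ell)}$, whose mass is $1$ and which saturates the bound $\rho \leq 1/(\pi\ell)$ on the smallest possible disc. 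For a radially non-decreasing $V$, the rearrangement inequality immediately shows that $\rho_\star$ is the minimizer of the bath-tub problem~\eqref{eq:bath tub}, so
\[
\E_N[\PsiLau] \xrightarrow[N\to\infty]{} \int_{\R^2} V \rho_\star = E_V(\ell),
\]
which yields~\eqref{eq:trial increas}.

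\textbf{Case (2).} Use $\Psi_m$ as in~\eqref{eq:trial mexican 1} with $m = m(N)$ an integer sequence satisfying $m \sim \mb N$, for some $\mb > 0$ to be fixed shortly. This corresponds to $f_1(z) = z^m$ and $f_2 \equiv 1$, hence $\Psi_m \in \VD$ as soon as $D \geq \mb$. Since $m \lesssim N^2$, Theorem~\ref{theo:dens QH trial} gives weak convergence of the rescaled density of $\Psi_m$ to the minimizer $\rhoMF$ of~\eqref{eq:intro MFf} with effective potential $\Vm(r) = r^2 - 2\mb \log r$; by the obstacle-problem characterization of the 2D Coulomb gas, this minimizer is $\frac{1}{\pi\ell}\one_{A(\mb)}$ with $A(\mb)$ the annulus having inner radius $\sqrt{\ell\mb}$ and outer radius $\sqrt{\ell(\mb+1)}$ (the ratio being fixed by $\int\rho=1$). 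On the other hand, for a radial mexican-hat $V$ the bath-tub minimizer is the normalized indicator of a sublevel set $\{V \leq c\}$, which is necessarily an annulus $[r_-, r_+]$ with $r_+^2 - r_-^2 = \ell$ by the mass constraint. Choosing $\mb := r_-^2/\ell$ (which is uniquely determined by the shape of $V$) makes $A(\mb)$ coincide with this bath-tub optimizer, and $\E_N[\Psi_m] \to E_V(\ell)$ follows.

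\textbf{Main obstacle.} The chief technical point is to upgrade the weak convergence of the density provided by Theorem~\ref{theo:dens QH trial} to convergence of the energy $\int V\,\mu_N^{(1)}$ despite the possibly polynomial growth of $V$. This splits into two pieces: on a large enough ball one applies the quantitative test-function bounds~\eqref{densitydiff}--\eqref{densitydiff2} against a truncation of $V$, taking advantage of the assumed at-most-polynomial growth of $V$ and its derivatives; off the ball one uses the exponential off-support decay of $\mu_N^{(1)}$, which is a standard consequence of the large-deviations upper bound for the 2D one-component plasma (equivalently, of the sub-Gaussian tails of the Gibbs measure at mean-field scaling). A secondary, minor issue is the verification that the plasma analogy of Theorem~\ref{theo:dens QH trial} carries through for odd $\ell$ with the appropriate inverse temperature, but this does not affect the structure of the argument.
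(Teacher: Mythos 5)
Your strategy matches the paper's route precisely: lower bound from Theorem~\ref{thm:incomp 2}, upper bound via the plasma analogy for the Laughlin and Laughlin-plus-quasi-hole trial states of Theorem~\ref{theo:dens QH trial}, and a truncation-plus-tail argument to pass from the weak (gradient-tested) density convergence of~\eqref{densitydiff} to convergence of $\int V\,\mu_N^{(1)}$ when $V$ grows polynomially. You have also correctly identified the relevant technical obstacle.

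There is, however, a factor-of-$\ell$ slip in your identification of the equilibrium measure for the quasi-hole trial state. For the limiting jellium with one-body potential $\Vm(r)=r^2-2\mb\log r$ and pair interaction weighted by $\ell$, the Euler--Lagrange condition is $\Vm + 2\ell h_\rho = \mathrm{const}$ on the support. By Newton's theorem the annulus generates a constant logarithmic potential inside its inner boundary, so the $C^{1}$ matching at the inner edge $r_-$ reads simply $\Vm'(r_-)=0$, \emph{independently of} $\ell$; this gives $r_-=\sqrt{\mb}$. The equilibrium annulus is therefore $[\sqrt{\mb},\sqrt{\mb+\ell}]$ (which also respects the mass constraint since the density is $1/(\pi\ell)$), not $[\sqrt{\ell\mb},\sqrt{\ell(\mb+1)}]$ as you wrote. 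The correct choice matching the bath-tub optimizer $[r_-,r_+]$ is thus $\mb=r_-^2$, not $r_-^2/\ell$. Both your annulus and the correct one have width $\ell$ in the variable $r^2$, so the mass constraint cannot distinguish them; it is the boundary condition that fixes the inner radius. This is a computational error that misidentifies $\mb$ but does not affect the structure of the argument.
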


Note that the potentials considered in Section~\ref{sec:FQHE Bose gas} are of the form above. That the Laughlin state itself is always (asymptotically) optimal in case $1$, whatever the rate of growth of $V$, is a strong manifestation of its rigidity. The only way to accommodate a steeper potential barrier by creating a density maximum around the origin is to leave the Laughlin phase~\eqref{eq:Ker}, at the expense of increasing either the kinetic energy or the interaction energy. 

\subsection{Hints of proofs} 
As for the proof of Theorem~\ref{theo:dens QH trial}, the starting point is Laughlin's plasma analogy, which is reminiscent of the log-gas analogy in random matrix theory~\cite{AndGuiZei-10,Dyson-62a,Forrester-10,Ginibre-65,Mehta-04}. The crucial observation~\cite{Laughlin-83,Laughlin-87} is that the absolute square of the wave-function~\eqref{eq:intro Laughlin} can be regarded as the Gibbs state of a 2D Coulomb gas (one-component plasma). To prove Theorems~\ref{thm:incomp 1} and~\ref{thm:incomp 2}, we generalize this idea and map  any state of $\mathcal{L}_\ell ^N$ to a Gibbs state of a classical Hamiltonian.

Let us consider a function of the form
\begin{equation}\label{eq:start state}
\Psi_F = c_F \PsiLau (z_1,\ldots,z_N) F(z_1,\ldots,z_N) 
\end{equation}
where $\PsiLau$ is the Laughlin state \eqref{eq:intro Laughlin}, $F$ is holomorphic and symmetric and $c_F$ is a normalization factor. The idea is simply to write 
$$ |\Psi_F| ^2 = c_F ^2 \exp\left( - 2\log |\PsiLau| -2 \log |F| \right)$$
and interpret $2\log |\PsiLau| + 2\log |F|$ as a classical Hamilton function. One of the reasons why this is an effective procedure is that we can take advantage of the good scaling properties of $|\PsiLau| ^2$ to first change variables and obtain a classical Gibbs states \emph{with mean-field two-body interactions} and \emph{small effective temperature}. Specifically, we define
\begin{equation}\label{eq:Gibbs state}
\mu_N (Z) := (N-1)^{N} \left| \Psi_F \left(\sqrt{N-1} \: Z\right) \right| ^2 = \frac{1}{\ZN} \exp \left(-\frac{1}{T} H_N (Z) \right)
\end{equation}
where $\ZN$ ensures normalization of $\mu_N$ in $L^1(\R ^{2N})$,
$$T=\frac{1}{N},$$
and the classical Hamiltonian is of the form
\begin{equation}\label{eq:class Hamil}
H_N (Z) = \sum_{j=1} ^N |z_j| ^2 + \frac{2\ell}{N-1} \sum_{1 \leq i<j \leq N } w(z_i-z_j) + \frac{1}{N-1} W(Z).
\end{equation}
We have here written
\begin{equation}\label{eq:coul pot}
w(z) := - \log |z| 
\end{equation}
for the 2D Coulomb kernel and defined
\begin{equation}\label{eq:weird pot}
W(Z) := - 2 \log \left| F\left( \sqrt{N-1} \: Z  \right) \right|. 
\end{equation}
We are ultimately interested in an incompressibility bound of the form 
\begin{equation}\label{eq:formal incomp}
\mu_N ^{(1)} \lesssim \frac{{\rm const.}}{\pi \ell} 
\end{equation}
in an appropriate weak sense, independently of the details of $W$. Here, the $n$-th marginal of $\mu_N$ is defined as
$$ \mu_N ^{(n)}(z_1,\ldots,z_n):= \int_{\R ^{2{N-n}}} \mu_N (z_1,\ldots,z_N)dz_{n+1}\ldots dz_N.$$

The function \eqref{eq:weird pot} can be rather intricate and represents in general a genuine $N$-body interaction term of the Hamiltonian \eqref{eq:class Hamil}. The only thing we know a priori about $W$ is that it is \emph{superharmonic in each of its variables}:
\begin{equation}\label{eq:subharm}
-\Delta_{z_j} W \geq 0 \quad \forall j=1\ldots N 
\end{equation}
which follows from the fact that $F$ is holomorphic. Under the additional assumption that $F$ belongs to $\VD$, it is possible to regard $W$ as a few-body interaction in a mean-field like scaling. Then, a mean-field approximation for $\mu_N$ can be rigorously proven to be accurate: We first justify that an ansatz of the form
\begin{equation}\label{eq:MF ansatz}
\mu_N \approx \rho ^{\otimes N}, \quad \rho \in L ^1 (\R ^2) 
\end{equation}
is effective for deriving  density bounds on $\mu_N ^{(1)}$. We then show that the appropriate $\rho$ should minimize an effective mean-field energy functional, and thus satisfy a variational equation, depending on $W$. Using this equation jointly with~\eqref{eq:subharm} we obtain that the optimal $\rho$ always satisfies
$$\rho \leq \frac{1}{\pi \ell}$$  
and Theorem~\ref{thm:incomp 2} follows. The difficult part in this approach is the justification of the ansatz~\eqref{eq:MF ansatz}. In the general case there does not seem to be any reason why it should be a good approximation, and another strategy is called for. 

There is another way to bound the one-particle density in ground states of Coulomb systems without using any mean-field approximation. The argument is due to Lieb (unpublished), and variants thereof have been used recently in~\cite{PetSer-14,RotSer-14,RouSer-14}. It is based solely on some properties of the Coulomb kernel and general superharmonicity arguments, so that it can be adapted to Hamiltonians of the form~\eqref{eq:class Hamil}. Our strategy to prove Theorem~\ref{thm:incomp 1} is then the following:
\begin{itemize}
\item We adapt Lieb's argument to obtain a bound on the {\it minimal separation of points} in the ground state configurations
%{\red for a modification of the classical Hamiltonian~\eqref{eq:class Hamil}, including a small perturbation}
. This implies a local density upper bound at the level of the ground state,  but not yet the Gibbs state \eqref{eq:Gibbs state}.
\item We exploit the fact that~\eqref{eq:Gibbs state} is a Gibbs state for $H_N$ with small temperature $T \to 0$ in the limit $N\to \infty$. It  is thus reasonable to expect the density bound on the ground state to also apply to the Gibbs state.
\item More precisely, appropriate upper and lower bounds to the free-energy $- T \log \ZN$ confirm that it is close to the ground state energy of $H_N$ in the limit $T\to 0$. Applying these bounds to a suitably perturbed Hamiltonian gives the desired estimates on the density by a Feynman-Hellmann type argument, i.e. by evaluating the derivative of the free-energy with respect to the perturbation.
\end{itemize}

\chapter{On the classical Coulomb gas}\label{sec:Coulomb gas}

Classical Coulomb systems are most central objects (see~\cite{Serfaty-14,Serfaty-15} for reviews) in statistical mechanics, random matrix theory and fractional quantum Hall physics (see Chapter~\ref{sec:FQHE states} for this latter aspect). They can be seen as a toy model for matter, containing the truly long-range nature of electrostatic interactions. As was pointed out by Wigner \cite{Wigner-55,Wigner-67} and Dyson \cite{Dyson-62a,Dyson-62b,Dyson-62c}, two-dimensional Coulomb systems are directly related to Gaussian random matrices, more precisely the Ginibre ensemble, and such random matrix models have also received much attention for their own sake. A similar connection exists between ``log-gases" in dimension 1 and the GUE\footnote{Gaussian Unitary Ensemble.} and GOE\footnote{Gaussian Orthogonal Ensemble.} ensembles of random matrices, as well as more indirectly to orthogonal polynomial ensembles. For more details on these aspects we refer to \cite{Forrester-10}, and for an introduction to the random 
matrix aspect to the texts \cite{AndGuiZei-10,Mehta-04}. 

\medskip 
 
In this chapter we discuss the main results of~\cite{RouSer-14}. We shall be concerned with the mean-field regime for many-particles trapped Coulomb gases in any dimension $d\geq 2$, thus in particular the most physically relevant case $d=3$, and the case of interest for random matrix theory, vortices in superfluids and quantum Hall physics $d=2$. Some of the methods can be applied to the one-dimensional log-gas (where the interaction kernel is $- \log |x|$) see~\cite{SanSer-14a}. Our goal is to characterize, in terms of a new object, the renormalized jellium energy, the fluctuations around mean-field theory in low temperature equilibrium states. Results include 
\begin{itemize}
 \item a subleading order expansion of the ground state energy and asymptotic results on corresponding equilibrium configurations that reveal a two scale structure, 
 \item a characterization of the microscopic distribution of charges in terms of the jellium problem,
 \item similar results for the Gibbs states at sufficiently low temperature, together with free-energy estimates leading to a natural conjecture on the liquid to crystal phase transition,
 \item precise estimates on the accuracy of mean-field theory for different observables of the systems.
\end{itemize}
We refer to~\cite{PetSer-14,RotSer-14,Leble-15,Leble-15b,Leble-16,LebSer-15} for more recent developments.

\medskip

We start by a short introduction to the problem in Section~\ref{sec:Coul MF}, briefly reviewing what is known about the mean-field limit on a macroscopic scale. We then heuristically discuss the physics that should be expected to govern the microscopic scale in Section~\ref{sec:Coul Jellium}. This leads us to defining the  jellium renormalized energy, in a new, more flexible way (as opposed to the previous works~\cite{SanSer-12,SanSer-14}). We then state our main results in two separate subsections, one concerning the ground state problem, Section~\ref{sec:Coul GS}, one concerning the equilibrium states at positive temperature~\ref{sec:Coul Gibbs}. Finally, we state results quantifying the precision of mean-field theory, obtained as corollaries of the main line of attack, in Section~\ref{sec:Coul fluctu}.

\section{The Coulomb gas in the mean-field limit}\label{sec:Coul MF} 

Denoting $x_1,\ldots,x_n$ the positions of the particles, the total energy at rest of our system is given by the Hamiltonian 
\begin{equation}\label{wn}
\w(x_1, \dots, x_n)=   \sum_{i\neq j} \g (x_i-x_j)  +n \sum_{i=1}^n V(x_i)
\end{equation}
where 
\begin{equation}\label{eq:coul ker}
\begin{cases}
\displaystyle \g(x)=\frac{1}{|x|^{d-2}}& \text{if}\  d \ge 3\\
\displaystyle \g(x)= -   \log |x|& \text{if } \ d=2
\end{cases}
\end{equation} 
is  a multiple of the Coulomb potential in dimensions $d\ge 2$, i.e. we have
\begin{equation}
-\Delta \g= c_d \delta_0  \end{equation}
with
\begin{equation}\label{defc}
c_2 = 2\pi, \qquad c_d = (d-2)|\mathbb{S}^{d-1}| \ \text{when} \ d\ge 3
\end{equation}
and $\delta_0$ is the Dirac mass at the origin. The one-body potential $V:\mr^d \to \mr$ is a continuous function, growing at infinity (\emph{confining} potential). More precisely, we assume 
\begin{equation}\label{eq:trap pot}
\begin{cases}
\displaystyle \lim_{|x|\to \infty} V(x) = +\infty  & \text{ if }  d \ge 3\\
\displaystyle \lim_{|x|\to \infty} \left(\frac{V(x)}{2} - \log |x|\right) = +\infty  & \text{ if }  d=2.
\end{cases}
\end{equation} 
For simplicity, we assume in this chapter that $V$ is smooth, see~\cite{RouSer-14} for the optimal assumptions. Note the factor $n$ in front of the one-body term (second term) in \eqref{wn} that puts us in a mean-field scaling where the one-body energy and the two-body energy (first term) are of the same order of magnitude. This choice is equivalent to demanding that the pair-interaction strength be of order $n^{-1}$. By scaling, one can always reduce to this situation in the particular case where the trapping potential $V$ has some homogeneity, which is particularly important in applications. 

We are interested in equilibrium properties of the system in the regime $n\to \infty$, that is on the large particle number asymptotics of the ground state and the Gibbs state at given temperature. In the former case we consider configurations $(x_1,\ldots,x_n)$ that minimize the total energy \eqref{wn}. We will denote 
\begin{equation}\label{eq:gse}
E_n := \min_{\R ^{dn}} \w 
\end{equation}
the ground state energy. To characterize ground state configurations we shall consider the asymptotics of the empirical measure
\begin{equation}\label{eq:coul emp meas}
\mu_n := \frac{1}{n} \sum_{j=1} ^n \delta_{x_j}. 
\end{equation}
At positive temperature we study the Gibbs state at inverse temperature $\beta$, i.e. the probability law\footnote{Remark the conventions in the units, chosen to comply with standard choices in random matrix theory.}
\begin{equation}\label{eq:defi Gibbs state}
\Gibbs (x_1,\ldots,x_n) =\frac{1}{\Zbeta} e^{-\frac{\beta}{2} \w (x_1,\ldots,x_n)} \, dx_1 \ldots dx_n
\end{equation}
where $\Zbeta$ is a normalization constant (partition function). By definition, the Gibbs measure minimizes the $n$-body free energy functional
\begin{equation}\label{eq:free ener N}
\Fnbeta [\mubf] := \int_{\R ^{dn}} \mubf(\xbf) H_n (\xbf) d\xbf + \frac{2}{\beta} \int_{\R ^{dn}} \mubf(\xbf) \log (\mubf (\xbf)) d\xbf
\end{equation}
over probability measures $\mubf \in \P (\R ^{dn})$. We also have the standard relation 
\begin{equation}\label{eq:free ener N min}
\Fnbetae := \inf _{\mubf \in \P (\R ^{dn})} \Fnbeta [\mubf]  =  \Fnbeta [\Gibbs] = - \frac{2}{\beta} \log \Zbeta.
\end{equation}
between the minimal free-energy and the partition function $\Zbeta$. 

In the mean-field regime of our interest, it is fairly standard to derive the leading-order behavior of these objects when $n\to \infty$, as we now recall.

\subsection{The mean-field regime: continuum limit} It is well-known that to leading order 
\begin{equation}\label{eq:gse first order}
E_n = n ^2 \En[\mu_0] (1+o(1)) 
\end{equation}
in the limit $n\to \infty$ where 
\begin{equation}\label{eq:def MF ener}
\En[\mu] = \iint_{\mr^d\times \mr^d} \g(x-y) \, d\mu(x)\, d\mu(y)+ \int_{\mr^d}V(x)\, d\mu(x)
\end{equation}
is the mean-field energy functional defined for Radon measures $\mu$, and $\mu_0$ (the equilibrium measure) is the minimizer of $\En$ amongst probability measures on $\R ^d$. The functional \eqref{eq:def MF ener} is nothing but the continuum energy corresponding to \eqref{wn}: the first term is the classical Coulomb interaction energy of the charge distribution $\mu$ and the second the potential energy in the potential $V$. If the interaction potential $\g$ was regular at the origin we could use the empirical measure~\eqref{eq:coul emp meas} to write 
\begin{align*}
\w (x_1,\ldots,x_n) &= n ^2 \left( \int_{\mr^d}V(x)\, d\mu_n(x) + \iint_{\mr^d\times \mr^d} \g(x-y) \, d\mu_n(x)\, d\mu_n(y)\right) - n \g(0) \\
&= n ^2 \En[\mu_n] \left( 1+ O(n ^{-1})\right)
\end{align*}
and \eqref{eq:gse first order} would  easily follow from a simple compactness argument. In the case where $\g$ has a singularity at the origin, a regularization procedure is needed but this mean-field limit result still holds true, meaning that  for minimizers of $\w$, the empirical measure $\mu_n$ converges to $\mu_0$, the minimizer of \eqref{eq:def MF ener}:
\begin{equation}\label{eq:coul gs first}
\mu_n \wto \mu_0 \mbox{ when } n\to \infty
\end{equation}
as measures. This is standard and can be found in  a variety of sources. The physics here is that (at least if $V$ is reasonably regular), the many charges get confined in a bounded region of space and their distribution thus converges to a continuum limit, given by minimizing the continuum energy functional~\eqref{eq:def MF ener}.

At small enough temperature (large enough $\beta$) one should essentially expect that the above results hold $\Gibbs$-almost surely. One can in fact derive a large deviation principle vindicating this point, see~\cite{BenGui-97,BenZei-98,ChaGozZit-13,Serfaty-15}. For larger temperatures, the typical behavior of the system is more easily guessed through an alternate point of view on the mean-field regime, that we discuss next.

\subsection{The mean-field regime: vanishing of correlations} Another way to think of the mean-field limit, less immediate in the present context but more suited for generalizations (see e.g. Chapter~\ref{sec:bosons GS}), is as a regime of low correlations. In reality, particles are indistinguishable, and the configuration of the system should thus be described by a probability measure $\mubf(\xbf) = \mubf (x_1,\ldots,x_n)$, which is  symmetric under particle exchange:
\begin{equation}\label{eq:symmetry}
\mubf(x_1,\ldots,x_n) = \mubf (x_{\sigma(1)},\ldots,x_{\sigma(n)}) \mbox{ for any permutation } \sigma.  
\end{equation}
A ground state configuration $\mubf_n$ is found by minimizing the $n$-body energy functional
\begin{equation}\label{eq:ener N}
\In [\mubf] := \int_{\R ^{dn}}  H_n (\xbf) \mubf(d\xbf)  
\end{equation}
amongst symmetric probability measures $\mubf \in \P_\sym (\R ^{dn})$. It is immediate to see that $\mubf_n$ must be a convex superposition of measures of the form $\delta_{(x_1,\ldots,x_n)}$ with $(x_1,\ldots,x_n)$ minimizing~$\w$ (in other words it has to be a symmetrization of some $\delta_{\xbf}$ for a minimizing configuration $\xbf$). The infimum of the functional \eqref{eq:ener N} of course coincides with 
\begin{equation*}
\inf_{\mubf \in \P_\sym (\R ^{dn})} \int_{\R ^{dn}} H_n (\xbf) \mubf(d\xbf)  = E_n 
\end{equation*}
and a way to understand the asymptotic formula \eqref{eq:gse first order} is to think of the minimizing $\mubf_n$ as being almost factorized 
\begin{equation}\label{eq:intro factor}
\mubf_n (x_1,\ldots,x_n) \approx \rho ^{\otimes n} (x_1,\ldots,x_n) = \prod_{j=1} ^n \rho (x_i)
\end{equation}
with a regular probability measure $\rho \in \P (\R ^d)$. Plugging this ansatz into \eqref{eq:ener N} we indeed obtain 
\[
\In [\rho ^{\otimes n}] =  n ^2 \En[\rho] \left( 1+ O(n ^{-1})\right)
\]
and the optimal choice is $\rho=\mu_0$. The mean-field limit can thus  also be understood as one where correlations amongst the particles of the system vanish in the limit $n\to \infty$, which is the meaning of the factorized ansatz. 

This point of view readily generalizes to the Gibbs state. Imagine that $\Gibbs$ approximately factorizes: 
\begin{equation}\label{eq:intro factor bis}
\Gibbs (x_1,\ldots,x_n) \approx \rho ^{\otimes n} (x_1,\ldots,x_n) = \prod_{j=1} ^n \rho (x_i)
\end{equation}
and insert this ansatz in the free-energy functional~\eqref{eq:free ener N}. One obtains the mean-field free-energy functional
\begin{equation}\label{eq:MF free ener func}
\F [\rho] =  \En [\rho] + \frac{2}{n\beta} \int \rho \log \rho
\end{equation}
and the optimal choice of $\rho$ in~\eqref{eq:intro factor bis} should minimize this object. It is then clear that for $n\beta \gg 1$, the entropy term becomes negligible and we should expect~\eqref{eq:intro factor bis} to hold with $\rho = \mu_0$. At low temperature we thus see no difference between Gibbs states and ground states. For $n\beta \lesssim 1$ however the entropy is \emph{not} negligible. Then~\eqref{eq:intro factor bis} should hold with $\rho = \mubet,$ the minimizer of~\eqref{eq:MF free ener func}. One way to give a rigorous meaning to this is to consider the marginals   
\begin{equation}\label{eq:marginal Gibbs}
\Qk (x_1,\ldots, x_k) = \int_{ \xbf' \in \R ^{d(n-k)}}  \Gibbs (x_1,\ldots, x_k,\xbf') \, d\xbf',
\end{equation}
interpreted as the probability density for having one particle at $x_1$, one particle at $x_2$, $\ldots$, and one particle at $x_k$. Then one can show~\cite{CagLioMarPul-92,MesSpo-82,Kiessling-89,Kiessling-93,KieSpo-99,Rougerie-LMU} that for $n\beta$ and $k$ fixed 
\begin{equation}\label{eq:coul MF beta}
\Qk \wto \mubet ^{\otimes k} \mbox{ as measures when } n\to \infty 
\end{equation}
and that for $n\beta \to \infty$ and $k$ fixed 
\begin{equation}\label{eq:coul MF 0}
\Qk \wto \mu_0 ^{\otimes k} \mbox{ as measures when } n\to \infty. 
\end{equation}
This essentially implies (see~\cite{HauMis-14,Mischler-11,MisMou-13} and references therein for discussions of this point) that for $\Gibbs$-almost every $(x_1,\ldots,x_n)$ the empirical measure converges to $\mubet$ in case~\eqref{eq:coul MF beta} and to $\mu_0$ in case~\eqref{eq:coul MF 0}.

\section{Beyond mean-field: the renormalized jellium energy}\label{sec:Coul Jellium}

The main results of~\cite{RouSer-14} concern the behavior of the Coulomb gas beyond the mean-field asymptotics we just recalled. In this section we motivate, and then define rigorously, an energy functional that is the cornerstone of our approach. Previous versions of this object had been defined, in 2D only, by Sandier and Serfaty~\cite{SanSer-12,SanSer-14}. The main novelty here is to propose an alternate definition which is more wieldy than the original one, in that it easily generalizes to higher dimensions and significantly simplifies some proofs.

\subsection{Heuristics}

As already mentioned, points minimizing $\w$ tend to be densely packed in a bounded region of space (the support of $\mu_0$, that we shall denote $\Sigma$) in the limit $n\to \infty$. Their distribution (i.e. the empirical measure) has to follow $\mu_0$ on the macroscopic scale but this requirement still leaves a lot of freedom on the configuration at the {\it microscopic scale}, that is on length scales  of order $n ^{-1/d}$ (the mean inter-particle distance). A natural  idea is thus to blow-up at scale $n ^{-1/d}$ in order to consider configurations where points are typically separated by distances of order unity, and investigate which microscopic configuration is favored. On such length scales, the equilibrium measure $\mu_0$ varies slowly so the 
points will want to follow a constant density given locally by the value of $\mu_0$. Since the problem is electrostatic in nature it is intuitive that the correct way to measure the distance between the configuration of points and the local value of the equilibrium measure should use the Coulomb energy. This leads to the idea that the local energy around a blow-up origin should be the electrostatic energy of what is often called a {\it  jellium} in physics: an infinite collection of interacting particles in a constant neutralizing background of opposite charge, a model originally introduced in \cite{Wigner-34}. At this microscopic  scale the pair-interactions will no longer be of mean-field type, their strength will be of order $1$.  A splitting formula will allow us to separate exactly the Coulomb energy of this jellium as the next to leading order term, except that what will come out is more precisely some average of all the  energies of the jellium configurations obtained after blow-up around all possible 
origins.

Of course, it is a  delicate matter to define the energy of the infinite jellium in a mathematically rigorous way: one has to take into account the pair-interaction energy of infinitely many charges, without assuming any local charge neutrality, and the overall energy may be finite only via screening effects between the charges and the neutralizing background that are difficult to quantify. This has been done  for the first time in 2D in \cite{SanSer-12,SanSer-14}, the energy functional for the jellium being  the renormalized energy $W$ alluded to above.  As already mentioned, one of the main contributions of the work reviewed here is to present an alternate definition $\W$ that generalizes better to higher dimensions. 

\subsection{The renormalized jellium energy}

The renormalized energy is defined via the electric field $\j$ generated by the full charged system: a  (typically infinite) distribution of point charges in a constant neutralizing background.
Note first that the classical Coulomb interaction of two charge distributions (bounded Radon measures) $f$ and $g$,
\begin{equation}\label{eq:def Coul ener}
D(f,g):=\iint_{\mr^d \times \mr^d} \g(x-y)\, df(x) \, dg(y)
\end{equation}
is linked to the (electrostatic) potentials $h_f =  \g * f$, $h_g =  \g * g $ that they generate via the formula
\begin{equation}\label{eq:ener field}
D(f,g) = \int_{\R ^d} f h_g  = \int_{\R ^d} g h_f =\frac{1}{c_d} \int_{\R ^d} \nabla h_f \cdot \nabla h_g
\end{equation}
where we used the fact that by definition of $\g$, 
$$ -\Delta h_f = c_d f,\quad -\Delta h_g = c_d g.$$
The electric field generated by the distribution $f$ is given by $\nab h_f$, and its square norm thus  gives a constant times  the electrostatic energy density of the charge distribution $f$: 
$$D(f,f) = \frac{1}{c_d}\int_{\R ^d} |\nabla h_f| ^2.$$

The electric field generated by a jellium  is of the form described in the following definition.

\begin{definition}[\textbf{Admissible electric fields}]\label{def:adm field}\mbox{}\\ 
Let $m >0$. Let $\j$ be a vector field in $\mr^d$. We say that $\j$ belongs to the class $\bam$ if $\j = \nab h$ with
\begin{equation}\label{curlj}
-\Delta h = c_d \Big(\sum_{p\in \Lambda}N_p \delta_p - m \Big) \quad \text{in} \ \mr^d
\end{equation}
for some discrete set $\Lambda \subset \mr^d$, and $N_p$ integers in $\mn^*$.
%We say that $\j$ belongs to the class $\mathcal{A}_m$ if the same holds with $N_p$ all equal to $1$.
\end{definition}

In the present definition $h$ corresponds to the electrostatic potential  generated by the jellium and $\j$ to its electric field, while the constant $m$ represents the mean number of particles per unit volume, or the density of the neutralizing background. An important difficulty is that the electrostatic energy $D(\delta_p,\delta_p)$ of a point charge, where $\delta_p$ denotes the Dirac mass at $p$, is infinite, or in other words, that  the electric field generated by point charges fails to be in $L^2_{loc}$. This is where the need for a ``renormalization" of this infinite contribution comes from.

To remedy this, we replace point charges by smeared-out charges, as in Onsager's lemma~\cite{Onsager-39}: We pick some arbitrary fixed {\it radial} nonnegative function $\ro$, supported in $B(0,1)$ and with integral $1$. For any point $p$ and $\eta>0$ we introduce   the smeared charge 
\begin{equation}\label{eq:smeared charge}
\delta_p^{(\eta)}= \frac{1}{\eta^d}\ro \left(\frac{x}{\eta}\right) *  \delta_p.
\end{equation}
A simple example is to take 
$ \ro = \frac{1}{|B(0,1)|} \indic_{B(0,1)},$
in which case 
$$\delta_p^{(\eta)} = \frac{1}{|B(0,\eta)|} \indic_{B(0,\eta)}.$$
We also define
\begin{equation}\label{kapd}
\kappa_d:= c_d D(\delta_0^{(1)}, \delta_0^{(1)}) \quad \text{for} \  d\ge 3,  \quad \kappa_2=c_2, \qquad \gamma_2=  c_2 D(\delta_0^{(1)}, \delta_0^{(1)})  \ \text{for} \ d=2.
\end{equation} 
The numbers $\kappa_d$, $\gamma_2$,  depend only on the choice of the function $\ro$ and on the dimension. This non symmetric  definition is due to the fact that the  logarithm behaves differently from power functions under rescaling, and is  made to simplify formulas below.

Newton's theorem  \cite[Theorem 9.7]{LieLos-01} asserts that the Coulomb potentials generated by the smeared charge $\delta_p^{(\eta)}$ and the point charge $\delta_p$ coincide outside of $B(p,\eta)$. A consequence of this is that there exists a radial function $f_\eta$ solution to
\begin{equation}\label{eqf0}
\begin{cases}
 -  \Delta f_\eta= c_d\left( \delta_0^{(\eta)} - \delta_0 \right) \quad \text{in} \ \mr^d
\\   
 f_\eta\equiv 0 \quad  \text{in} \ \mr^d \backslash B(0,\eta).
\end{cases}
\end{equation}
and it is easy to define the field $\j_\eta$ generated by a jellium with smeared charges starting from the field of the jellium with (singular) point charges, using $f_\eta$:

\begin{definition}[\textbf{Smeared electric fields}]\label{def:smear field}\mbox{}\\
For any vector field $\j=\nab h$ satisfying
\begin{equation}\label{dij}
-\div \j= c_d\Big(\sum_{p \in \Lambda}N_p \delta_p -m\Big)
\end{equation}
in some subset $U$ of $\mr^d$, with $\Lambda \subset U$ a discrete set of points, we let
$$\j_\eta := \nab h + \sum_{p\in \Lambda} N_p \nab f_\eta(x-p) \qquad h_\eta= h + \sum_{p \in \Lambda} N_p f_\eta(x-p).$$
We have
\begin{equation}
\label{delp}
-\div  \j_\eta  = - \Delta h_\eta =  c_d\Big(\sum_{p \in \Lambda}N_p \delta_p^{(\eta)} -m\Big)\end{equation}
and denoting by $\Phi_\eta$ the map $\j \mapsto \j_\eta$,  we note that $\Phi_\eta$ realizes a bijection from the set of vector fields satisfying \eqref{dij} and which are gradients,  to those satisfying \eqref{delp} and which are also gradients.
\end{definition}

% Note that the above definition  in principle  depends implicitly on the set $U$, whose choice will be clear from the context in the sequel (most of the time we will take $U= \R ^d$).

For any fixed $\eta>0$ one may then define the electrostatic energy per unit volume of the infinite jellium with smeared charges as
\begin{equation}\label{eq:Weta pre}
\limsup_{R\to \infty} \dashint_{K_R}  |\j_\eta|^2 := \limsup_{R\to \infty} |K_R| ^{-1} \int_{K_R}  |\j_\eta|^2 
\end{equation}
where $\j_\eta$ is as in the above definition and $K_R$ denotes the cube $[-R,R]^d$. Note that the quantity $\dashint_{K_R}  |\j_\eta|^2$ may  not have a limit (this does occur for somewhat pathological configurations). This motivates the use of the $\limsup$ instead.

This energy is now well-defined for $\eta>0$ and blows up as $\eta \to 0$, since it includes the self-energy of each smeared charge in the collection, absent in the original energy  (i.e. in the Hamiltonian \eqref{wn}). One should thus \emph{renormalize} \eqref{eq:Weta pre} by removing the self-energy of each smeared charge before taking the limit $\eta \to 0$. We will see that  the leading order  energy of a smeared charge is $\kappa_d \g(\eta)$, and this is the quantity that should be removed for each point. But  in order for the charges to efficiently screen the neutralizing background,  configurations will need to have  the same charge density as the neutralizing background (i.e. $m$ points per unit volume). We are then led to the definition  

\begin{definition}[\textbf{The renormalized energy}]\label{def:renorm ener}\mbox{}\\
For any $\j \in  \bam$, we define
\begin{equation}\label{We}
\W_\eta(\j) = \limsup_{R\to \infty} \dashint_{K_R}  |\j_\eta|^2 - m(\kappa_d  \g(\eta)+\gamma_2 \indic_{d=2})
\end{equation}
and the renormalized energy is given by 
\begin{align}\label{weta}
\W(\j) = \liminf_{\eta\to 0}\W_\eta(\j) \nonumber =\liminf_{\eta\to 0}\left(\limsup_{R\to \infty} \dashint_{K_R}   |\j_\eta|^2 - m(\kappa_d \g(\eta)+\gamma_2 \indic_{d=2})\right).
\end{align} 
\end{definition}

It is easy to see that if $\j \in \bam$, then $\j' := m ^{1/d - 1} \j (m ^{-1/d} .)$ belongs to $\bai$ and 
\begin{equation}\label{eq:scale renorm}
\begin{cases}
\W_\eta (\j) = m ^{2-2/d} \mathcal{W}_{\eta m^{1/d}}  (\j ') & \  \W (\j) = m ^{2-2/d} \W (\j ')  \: \mbox{ if } d\geq 3 \\
\W_\eta (\j) = m \left( \mathcal{W}_{\eta m^{1/d}}  (\j ')-\frac{\kappa_2}{2} \log m \right)  & \ \W  (\j) = m \left(\W (\j ')-\frac{\kappa_2 }{2}\log m \right)  \: \mbox{ if } d=2,
\end{cases}
\end{equation}thus the same scaling formulae hold for $\inf_{\bam} \W$.
 One may thus reduce to the study of $\W(\j)$ on $\bai$, for which we have the following result:

\begin{theorem}[\textbf{Minimization of the renormalized energy}]\label{thm:renorm ener}\mbox{}\\
The infimum 
\begin{equation}\label{eq:min ren ener}
\alpha_d:= \inf_{\j \in \bai} \W (\j) 
\end{equation}
is achieved and is finite. Moreover, there exists a sequence $(\j_n)_{n\in \mathbb{N}}$ of periodic vector fields (with diverging period in the limit $n\to \infty$) in $\bai$ such that 
\begin{equation}\label{eq:min period}
\W (\j_n) \to \alpha_d \mbox{ as } n \to \infty. 
\end{equation}
\end{theorem}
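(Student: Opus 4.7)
The plan splits into three parts: showing that $\alpha_d$ is finite, exhibiting a minimizer, and approximating it by periodic configurations of diverging period.

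First I would establish $|\alpha_d|<\infty$. The upper bound $\alpha_d<+\infty$ follows by evaluating $\W$ on an explicit test field. I take $\Lambda=\Z^d$, $N_p=1$, so that the density is $m=1$, and let $h$ be the $\Z^d$-periodic solution of $-\Delta h=c_d(\sum_{p\in\Z^d}\delta_p-1)$ on $\R^d/\Z^d$; this solution exists because the right-hand side has zero mean on the unit torus. Setting $\j=\nabla h$, a Fourier series computation together with Newton's theorem shows that for every $\eta<1/2$, $\dashint_{K_R}|\j_\eta|^2\to \int_{[0,1]^d}|\j_\eta|^2$ as $R\to\infty$, and the explicit self-energy of a single smeared charge $\kappa_d\g(\eta)$ absorbs the only divergent contribution as $\eta\to 0$, so $\W(\j)<\infty$. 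For the lower bound $\alpha_d>-\infty$, I would combine pointwise positivity $|\j_\eta|^2\ge 0$ with the fact that near each point $p$, $\j_\eta$ coincides on $B(p,\eta)$ with the field generated by an isolated smeared charge, giving a local contribution of size $\kappa_d\g(\eta)+O(1)$; summing over the $\sim|K_R|$ points in $K_R$ and dividing by $|K_R|$ exactly cancels the subtracted counterterm and leaves a universally bounded remainder.

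Second, to obtain a minimizer, I use the direct method. Let $(\j^{(n)})\subset\bai$ be a minimizing sequence. By a diagonal argument extract a sequence $\eta_k\to 0$ such that $\W_{\eta_k}(\j^{(n)})\to\alpha_d$, and after possibly translating each $\j^{(n)}$ by a blow-up center achieving a near-minimum local energy density, the smeared fields $\j^{(n)}_{\eta_k}$ are uniformly bounded in $L^2_{\mathrm{loc}}$. A classical weak-$L^2$ compactness/diagonal extraction produces a limiting gradient vector field $\j^\infty$, and the distributional equation $-\mathrm{div}\,\j^{(n)}=c_d(\sum N_p\delta_p-1)$ passes to the limit (the integer nature of the multiplicities is preserved because the points stay discrete in $L^1_{\mathrm{loc}}$ sense), so $\j^\infty\in\bai$. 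Lower semicontinuity of $\j\mapsto\limsup_R\dashint_{K_R}|\j_\eta|^2$ under local weak convergence, applied at each fixed $\eta=\eta_k$ and then passing to the liminf in $\eta$, yields $\W(\j^\infty)\le\alpha_d$, so $\j^\infty$ is a minimizer.

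Third, and this is the main obstacle, I construct periodic near-minimizers. Given $\j\in\bai$ with $\W(\j)<+\infty$ I would, for each large $R$, pick a cube $K_R$ of sidelength $R$ (translated so that $R^{-d}\int_{K_R}|\j_\eta|^2$ is close to the $\limsup$ in the definition of $\W_\eta(\j)$) and modify $\j$ in a thin boundary layer $K_R\setminus K_{R-\ell_R}$ of width $\ell_R=R^{1/2}$ (say) so that the resulting field $\tilde\j_R$ has vanishing normal component on $\partial K_R$ and the total charge in $K_R$ equals $\lfloor R^d\rfloor$. Once this is achieved, $\tilde\j_R$ can be extended by periodicity to all of $\R^d$ as an element of $\bai$, and its renormalized energy equals the average over one period, which differs from $R^{-d}\int_{K_R}|\j_\eta|^2$ by only the contribution of the layer. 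The screening modification itself is carried out by inserting or removing $O(R^{d-1})$ unit smeared charges in the layer and solving an auxiliary Poisson problem on the layer to match the boundary data; the key technical estimate is that the $L^2$ cost of this solution is $O(R^{d-1}\log R)$ or similar, hence negligible against $R^d$. The hard part is precisely this screening construction: it must simultaneously enforce integrality of the charges, match boundary values, and bound the additional energy uniformly; this demands careful PDE estimates on Poisson problems in thin shells and dominates the technical content of the argument. Feeding the minimizer $\j^\infty$ from Step 2 into this construction produces the sequence of periodic $(\j_n)$ required by the theorem.
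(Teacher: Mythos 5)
Your Step 1 upper bound (lattice test field) is essentially the right construction and is what the paper does. Your Step 3 (screening a near-minimizer in a boundary layer so as to extend periodically) is also, in spirit, exactly the screening argument the paper uses, and you are right that it is the main technical obstacle; note though that for the periodic approximation one only needs a near-minimizer (which exists by definition of $\inf$), so Step 3 does not logically depend on Step 2. The genuine problems are in your lower bound and in your compactness argument.

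\smallskip
\noindent\textbf{The lower bound $\alpha_d>-\infty$.} Your argument rests on the claim that ``near each point $p$, $\j_\eta$ coincides on $B(p,\eta)$ with the field generated by an isolated smeared charge''. This is false: by linearity $\j_\eta = \nabla f_\eta(\cdot-p) + \bigl(\text{field of all other charges and of the background}\bigr)$ inside $B(p,\eta)$, and Newton's theorem controls the smeared potential \emph{outside} $B(p,\eta)$, not the contribution of the rest of the system inside it. The cross terms you neglect carry the whole difficulty (they can be negative and need not average out), and asserting a ``universally bounded remainder'' is essentially assuming the conclusion. The paper instead proves an approximate \emph{monotonicity of $\W_\eta$ in $\eta$}: for $\eta<\eta'$ one has $\W_{\eta'}(\j)\leq \W_{\eta}(\j)+o_{\eta'}(1)$, which uses the positivity of $|\j_\eta-\j_{\eta'}|^2$ together with a careful computation of the cross term via Newton's theorem. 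This gives $\W(\j)=\liminf_{\eta\to0}\W_\eta(\j)\geq \W_1(\j)-C \geq -\kappa_d\g(1)-\gamma_2\1_{d=2}-C$ from the trivial positivity $\dashint|\j_1|^2\geq0$, uniformly over $\j\in\bai$. This is a genuinely different, cleaner argument than the local analysis you propose.

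\smallskip
\noindent\textbf{Existence of a minimizer.} Your lower semicontinuity step fails. At a \emph{fixed} $R$ the map $\j\mapsto\dashint_{K_R}|\j_\eta|^2$ is weakly l.s.c.\ by convexity, but $\limsup_{R\to\infty}$ of a family of l.s.c.\ functionals is not l.s.c.: to pass $\limsup_R$ through $\liminf_n$ you would need a bound on $\dashint_{K_R}|\j^{(n)}_\eta|^2$ that is uniform in $n$ for all $R$, whereas minimizing-sequence information only controls these averages for $R\geq R_n(\eta)$ with $R_n$ possibly diverging. The translation trick you mention does not repair this, because the $\limsup_R$ is a quantity at infinity that weak $L^2_{\rm loc}$ convergence does not see. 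What the paper does is entirely avoid the direct method on single vector fields: the existence of a minimizer is obtained as a byproduct of the abstract lower bound for the $n$-body problem, via the tagged probability measures $P_{\nu_n}$ on $\Sigma\times L^p_{\rm loc}$ and the ergodic average $\widetilde\W(P)=\frac{|\Sigma|}{c_d}\int\W\,dP$. Compactness is easy at the level of probability measures, $\widetilde\W$ is l.s.c.\ precisely because the $\limsup_R$ has been traded for an ergodic average, and a minimizer $P$ of $\widetilde\W$ is supported on $\W$-minimizers a.e., which yields the minimizer of $\W$. This is the point of the construction, and it has no substitute in the direct approach you sketch.
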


We should stress at this point that the definition of the jellium (renormalized) energy we use is essential for our approach to the study of equilibrium states of \eqref{wn}. In particular it is crucial that we are allowed to define the energy of an \emph{infinite} system via a local density (square norm of the electric field). For a different definition of the jellium energy, the existence of the thermodynamic limit was previously proved in \cite{LieNar-76}, using ideas from \cite{LebLie-69,LieLeb-72}. This approach does not transpose easily in our context however, and our proof that $\alpha_d$ is finite follows a different route.

The natural, but so far unproven, conjecture is that $\alpha_d$ can be attained by a periodic (or crystalline) distribution of charge. In 2D, it is known that the the optimal lattice is the triangular one~\cite{SanSer-12}, but in higher dimensions, even the optimal arrangement amongst regular lattices is unknown.

\section{Asymptotics for the ground state}\label{sec:Coul GS}

We are now equipped to investigate the corrections to~\eqref{eq:gse first order} and~\eqref{eq:coul gs first}. The subleading behavior of the ground state energy is most easily stated:

\begin{theorem}[\textbf{Expansion of the ground state energy}]\label{thm:coul gse}\mbox{}\\
Let $E_n$ denote the ground state energy~\eqref{eq:gse} and $\alpha_d$ denote the minimum of $\W$ for a jellium of density $1$ in dimension $d$, defined in Theorem~\ref{thm:renorm ener}. In the limit $n\to \infty$, we have
\begin{equation}\label{eq:gse second order}
E_n = 
\begin{cases} 
 \displaystyle      n ^2 \En [\mu_0] + \frac{n ^{2-2/d}}{c_d}\alpha_d \int \mu_0 ^{2-2/d} (x)  dx + o \left(n ^{2-2/d}\right) \mbox{ if } d\geq 3\\ 
 \displaystyle      n ^2 \En [\mu_0] - \frac{n}{2}\log n + n\left(\frac{\alpha_2}{2\pi} -\hal \int \mu_0(x) \log \mu_0(x) \,dx \right) + o\left( n\right)
       \mbox{ if } d= 2.
      \end{cases}
\end{equation} 
\end{theorem}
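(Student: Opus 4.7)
The plan is to pass from the mean-field leading order to the microscopic next order by (i) an exact algebraic splitting of $H_n$ isolating a nonnegative fluctuation energy $F_n$ expressible through the electric potential of the discrepancy $\nu_n := \sum_i\delta_{x_i} - n\mu_0$, (ii) a blow-up at the microscopic scale $n^{-1/d}$ that relates $F_n$ to a spatial average of renormalized jellium energies at local density $\mu_0(x)$, and (iii) invoking Theorem~\ref{thm:renorm ener} together with the scaling formula~\eqref{eq:scale renorm} to identify the constant $\alpha_d$. Concretely, with $h_0 = w*\mu_0$ and $\zeta := h_0 + V/2 - c$ nonnegative and vanishing on the support $\Sigma$ of $\mu_0$ (by the Euler--Lagrange equation for $\mu_0$), one gets the identity
$$H_n = n^2 \mathcal{E}[\mu_0] + 2n \sum_i \zeta(x_i) + F_n, \qquad F_n = \iint_{(\mathbb{R}^d\times\mathbb{R}^d)\setminus\Delta} w(x-y)\,d\nu_n(x)\,d\nu_n(y),$$
with $\Delta$ the diagonal. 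Regularising $\nu_n$ at scale $\eta$ as in Definition~\ref{def:smear field} converts $F_n$ into $c_d^{-1}\int|\nabla h_{n,\eta}|^2 - n(\kappa_d w(\eta) + \gamma_2 \indic_{d=2})$, up to controllable errors, directly exhibiting the integrand of $\W_\eta$.

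For the lower bound, zoom at scale $n^{-1/d}$ around $x\in\Sigma$ and set $E_x(y) := n^{-(d-1)/d}\nabla h_n(x+n^{-1/d}y)$, which asymptotically is an admissible field of a jellium with background density $\mu_0(x)$ in the sense of Definition~\ref{def:adm field}. Fubini turns $\int|\nabla h_{n,\eta}|^2$ into an average over $x$ of local contributions of $E_x$; encoding the blow-up statistics as a probability measure $P_n$ on admissible fields and extracting a limit $P_\infty$ with marginals $\mu_0(x)\,dx$ on $\Sigma$, a lower semi-continuity/Fatou-type argument yields
$$\liminf_{n\to\infty} n^{-(2-2/d)} F_n \;\geq\; \frac{1}{c_d}\int_\Sigma \W(E_x)\,dx$$
(plus a $-\tfrac{n}{2}\log n$ correction, properly absorbed when $d=2$, arising from $g(n^{-1/d}\eta)$ under rescaling). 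Combined with $\sum_i\zeta(x_i)\geq 0$ (which also forces mass localisation in $\Sigma$), the scaling identity~\eqref{eq:scale renorm}, and $\W(E_x)\geq \mu_0(x)^{2-2/d}\alpha_d$ from Theorem~\ref{thm:renorm ener} (with the corresponding logarithmic form when $d=2$), this produces the lower bound in~\eqref{eq:gse second order}.

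For the matching upper bound, partition $\Sigma$ into mesoscopic cubes $Q_k$ of side $\ell_n$ with $n^{-1/d}\ll\ell_n\ll 1$; on each $Q_k$, where $\mu_0\approx m_k$ is essentially constant, deposit a truncated piece of the periodic near-minimiser of $\W$ on $\overline{\mathcal{A}}_{m_k}$ provided by Theorem~\ref{thm:renorm ener}, suitably rescaled so as to place approximately $m_k|Q_k|n$ points in $Q_k$. Long-range Coulomb cross-interactions between neighbouring patches are controlled by the screening effect of the local neutralising background, so the total energy reduces to the sum of local jellium energies up to $o(n^{2-2/d})$; using~\eqref{eq:scale renorm} to pass from density $1$ to density $m_k$ and sending $\ell_n\to 0$ yields a trial configuration saturating the lower bound. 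In the $d=2$ case, both the $-\tfrac{n}{2}\log n$ term and the $-\tfrac{1}{2}\int \mu_0\log\mu_0$ correction emerge automatically from the $-\tfrac{\kappa_2}{2}\log m$ factor in~\eqref{eq:scale renorm}.

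The main obstacle is the lower bound. The renormalized energy $\W$ is defined through a double limit ($\eta\to 0$ after $R\to\infty$) and lacks lower semi-continuity in any obvious topology on admissible fields; moreover, the limit of blow-ups could a priori escape $\bigcup_m \overline{\mathcal{A}}_m$ if points merge. The technical heart of the proof is therefore the rigorous implementation of the blow-up averaging: identifying a suitable topology on admissible fields, encoding blow-up statistics as a random measure $P_n$, and exploiting translation invariance at the microscopic scale through an abstract ergodic/averaging principle that allows one to commute the $\limsup_R$ in $\W_\eta$ with the spatial average over blow-up centers, in the spirit of a $\Gamma$-convergence statement.
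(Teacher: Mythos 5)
Your proposal is correct and follows essentially the same route as the original Rougerie--Serfaty proof: the same exact splitting $H_n = n^2\mathcal{E}[\mu_0] + 2n\sum_i\zeta(x_i) + F_n$, the same electric-field reformulation of $F_n$ via Onsager-type smearing at scale $\eta$, the same blow-up and probability-measure-on-fields framework with a Fatou plus Varadhan-type ergodic averaging for the lower bound, and the same mesoscopic partitioning with screened periodic near-minimizers for the upper bound. You have also correctly identified the genuine technical crux (commuting the $\limsup_R$ in $\W_\eta$ with the average over blow-up centers while keeping the limit object in $\bigcup_m\overline{\mathcal{A}}_m$), which is exactly where the hard work lies.
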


This result encodes the double scale nature of the charge distribution: the first term is due to the points following the macroscopic distribution  $\mu_0$ -- we assume that the probability $\mu_0$ has a density that we also denote $\mu_0$ by abuse of notation. The next order correction, which happens to lie at the order $n^{2-2/d}$, governs the configurations at the microscopic scale {and the crystallization, by selecting configurations which }  minimize $\W$ (on average, with respect to the blow-up centers). 

The factors involving $\mu_0$ in the correction come from scaling, and  from the fact that the points locally see a neutralizing background whose charge density is given by the value of $\mu_0$. The scaling properties of the renormalized energy imply that the minimal energy of a jellium  with neutralizing density $\mu_0 (x)$  is $\alpha_d \mu_0 ^{2-2/d} (x)$ (respectively $\mu_0(x)\alpha_2 - \mu_0 (x) \log \mu_0(x)$ when $d=2$). Integrating this energy density on the support of $\mu_0$ leads to the formula.
 
The interpretation of the correction is thus that around (almost) any point $x$ in the support of $\mu_0$ there are approximately $\mu_0(x)$ points per unit volume, distributed so as to minimize a jellium energy with background density $\mu_0(x)$. Due to the properties of the jellium, this implies that, up to a $\mu_0(x)$-dependent rescaling, the local distribution of particles are the same around any blow-up origin $x$ in the support of $\mu_0$. This can be interpreted as a result of universality with respect to the potential $V$ in \eqref{wn}, in connection with recent works on the 1D log gas \cite{BouErdYau-12,BouErdYau-14,Leble-15,SanSer-14a}.  

\medskip

We now proceed to give a rigorous meaning to the above, somewhat vague, assertions about the ground state configurations. This requires a bit of formalism. To describe the behavior of minimizers at the microscopic scale we follow the same approach as in~\cite{SanSer-14} and perform a blow-up: For a given $(x_1, \dots , x_n)$, we let $x_i'= n^{1/d} x_i$ and
\begin{equation}
\label{hpn}
h_n'(x')=  w \ast \left(\sum_{i=1}^n \delta_{x_i'} - \mu_0(n^{-1/d}x')\right),
\end{equation}
Note that the associated electric field $\nab h_n'$ is in $\Lp$ if and only if $p<\frac{d}{d-1}$. We next embed $(\mr^d)^n$ into the set of probabilities on $X=\Sigma\times\Lp$, for some $1<p<\frac{d}{d-1}$.  For any $n$ and $\xbf =(x_1,\dots, x_n)\in(\mr^d)^n$ we let $i_n(\xbf) = P_{\nu_n},$
where $\nu_n=\sum_{i=1}^n \delta_{x_i}$ and $P_{\nu_n}$ is the push-forward of the normalized Lebesgue measure on $\Sigma$ by
$$x\mapsto \left(x, \nab h_n'(n^{1/d}x + \cdot)\right).$$ 
Explicitly:
\begin{equation}\label{eq:Pnun}
i_n (\xbf) = P_{\nu_n} = \dashint_{\Sigma} \delta_{(x,\nabla h'_n (n ^{1/d} x + \cdot) )} dx.
\end{equation}
This way  $i_n(\xbf)$ is an element of $\P(X)$, the set of probability measures on $X = \Sigma\times\Lp$ (couples of blown-up centers, blown-up electric fields) that measures  the probability of having a given blown-up electric field around a given blow-up point in $\Sigma$. As suggested by the above discussion, the natural object we should look at is really $i_n (\xbf)$ { and its limits up to extraction,~$P$}. 

Due to the fact that the renormalized jellium functional describing the small-scale physics is invariant under translations of the electric field, and by definition of $i_n$, we should of course expect the objects we have just introduced to have a certain translation invariance, formalized as follows:

\begin{definition}[\textbf{$T_{\lambda(x)}$-invariance}]\label{invariance}\mbox{}\\
We say a probability measure $P$ on $X$ is $T_{\lambda(x)}$-invariant if $P$ is invariant  by $(x,\j)\mapsto\left(x,\j(\lambda(x)+\cdot)\right)$, for any $\lambda(x)$ of class $C^1$  from $\Sigma$ to $\mr^d$.
\end{definition}
Note that $T_{\lambda(x)}$-invariant implies translation-invariant (simply take $\lambda \equiv 1$). 

\begin{definition}[\textbf{Admissible configurations}]
We say that $P \in \mathcal{P}(X)$ is admissible if its first marginal is the normalized Lebesgue 
measure on $\Sigma$, if it holds for $P$-a.e. $(x,\j)$ that $\j \in \overline{\mathcal{A}}_{\mu_0(x)}$, and if $P $ is $T_{\lambda(x)}$-invariant.
\end{definition}

We next define an ``average with respect to blow-up centers" of $\W$:
\begin{equation}\label{wtilde}
\widetilde{\W}(P):= \frac{|\Sigma|}{c_d} \int_{X} \W(\j) \, dP(x,\j),
\end{equation} 
if $P$ is admissible, and $+\infty$ otherwise. In view of the scaling relation \eqref{eq:scale renorm}, we may guess that 
\begin{equation}\label{eq:gamma d}
\min_{P \ \text{admissible}} \widetilde{\W}=
\xi_d:=  \begin{cases}\displaystyle
             \frac{1}{c_d} \min_{\bai}  \W  \int_{\R   ^d} \mu_0 ^{2-2/d} \quad \mbox{if } d\geq 3\\
             \displaystyle \frac{1}{2\pi} \min_{\bai}  \W - \hal \int_{\R   ^2} \mu_0 \log \mu_0  \quad \mbox{if } d=2.
            \end{cases}
\end{equation} 
Note that the minima on the right-hand side exist thanks to Theorem~\ref{thm:renorm ener}. Also, the left-hand side is clearly larger than the right-hand side in view of the definition of $\widetilde{\W}$, that  of admissible, and \eqref{eq:scale renorm}. That there is actually equality is a consequence of the following theorem
on the behavior of ground state configurations (i.e. minimizers of \eqref{wn}) :

\begin{theorem}[\textbf{Microscopic behavior of ground state configurations}]\label{thm:coul GS}\mbox{}\\
Let $(x_1, \dots, x_n)\in (\mr^d)^n$ minimize $\w$. Let $h_n' $ be associated via \eqref{hpn} and  
$$P_{\nu_n}= i_n(x_1, \dots, x_n) $$
be defined as in \eqref{eq:Pnun}.  

Up to extraction of a subsequence we have that 
$$P_{\nu_n} \wto P$$
in the sense of probability measures on $X$, where $P$ is admissible. Moreover, $P$ is a minimizer of $\widetilde{\W}$ and $\j$ minimizes $\W$ over $\overline{\mathcal{A}}_{\mu_0(x)}$  for $P$-a.e. $(x,\j)$.
%\item We have $\sum_{i=1}^n \zeta(x_i) = o(n^{1-2/d}),$ as $n \to \infty$.
\end{theorem}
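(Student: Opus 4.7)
The plan is to combine a splitting formula for the Hamiltonian with the energy asymptotics of Theorem~\ref{thm:coul gse} and a $\Gamma$-liminf type inequality for $\widetilde{\W}$. First, I would write a splitting identity
\[
\w(x_1,\dots,x_n) = n^2 \En[\mu_0] + \text{(next-order term)} + \text{error},
\]
where the next-order term is morally $\frac{n^{2-2/d}}{c_d}\int |\nabla h_n'|^2$ (with the $-\frac{n}{2}\log n$ correction in $d=2$) computed after smearing out the point charges at some scale $\eta$, and controlling the self-interaction of each smeared charge. The key algebraic content is that, once one replaces $\sum \delta_{x_i'}$ by $\sum \delta_{x_i'}^{(\eta)}$ in \eqref{hpn} and uses Newton's theorem, the pairwise interactions $\g(x_i-x_j)$ for well-separated points are exactly recovered by $\int |\nabla h'_{n,\eta}|^2$ minus $n\kappa_d \g(\eta)$ (plus the $\gamma_2$ correction in $d=2$).

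Second, I would establish precompactness of $\{P_{\nu_n}\}$ in $\mathcal{P}(X)$. Using Theorem~\ref{thm:coul gse}, the quantity $\int_{K_R} |\nabla h'_{n,\eta}|^2\,dx$ is bounded (on average) in $L^1_{\rm loc}$ uniformly in $n$ for each fixed $\eta>0$, which together with the $\Lp$-bound $p<d/(d-1)$ on $\nabla h_n'$ yields tightness on $\Sigma \times \Lp$. Extracting a limit $P_{\nu_n} \wto P$, admissibility of $P$ follows from three observations: (i) the first marginal is manifestly $\indic_\Sigma\,dx/|\Sigma|$ by construction; (ii) passing to the limit the distributional equation $-\Delta h_n' = c_d(\sum \delta_{x_i'} - \mu_0(n^{-1/d}x'))$ at fixed blow-up base point and using continuity of $\mu_0$ shows that $\j \in \overline{\mathcal A}_{\mu_0(x)}$ for $P$-a.e. $(x,\j)$; (iii) $T_{\lambda(x)}$-invariance comes from the averaging over blow-up centers built into \eqref{eq:Pnun} --- translating all fields by $\lambda(x)$ corresponds, up to negligible boundary effects, to a translation in the $x$-variable over $\Sigma$, under which $P_{\nu_n}$ is invariant by construction.

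Third, and this is where the main difficulty lies, I would prove the $\Gamma$-liminf inequality
\[
\liminf_{n\to\infty} \frac{1}{n^{2-2/d}}\bigl(\w(x_1,\dots,x_n) - n^2 \En[\mu_0]\bigr) \;\geq\; \widetilde{\W}(P)
\]
(with the analogous $d=2$ version). The strategy is to fix $\eta>0$, localize the energy via a partition of $\Sigma$ into small balls on which $\mu_0$ is nearly constant, apply Fatou's lemma and lower semicontinuity of $\j\mapsto \limsup_R \dashint_{K_R}|\j_\eta|^2$ under weak convergence in $\Lp$, and only then let $\eta\to 0$ to recover $\W$. The technical heart is to transfer the $L^2_{\rm loc}$ control of the smeared field $\nabla h'_{n,\eta}$ from the energy estimate into a genuine liminf inequality on the averaged object $\widetilde{\W}(P)$, handling simultaneously the minimal-distance lower bounds needed to discard configurations where points collide faster than $\eta$. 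Granted this, combining with the upper bound $\w \leq n^2 \En[\mu_0] + n^{2-2/d} \xi_d + o(n^{2-2/d})$ from Theorem~\ref{thm:coul gse} and the trivial inequality $\widetilde{\W}(P)\geq \xi_d$ forces equality, so $P$ minimizes $\widetilde{\W}$.

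Finally, the pointwise statement that $\j$ minimizes $\W$ on $\overline{\mathcal A}_{\mu_0(x)}$ for $P$-a.e.\ $(x,\j)$ is then a disintegration argument: writing $P = \frac{1}{|\Sigma|}\int_\Sigma \delta_x \otimes P_x\,dx$ with $P_x$ supported in $\overline{\mathcal A}_{\mu_0(x)}$, the equality $\widetilde{\W}(P)=\xi_d$ combined with the scaling identity \eqref{eq:scale renorm} forces $\int \W\, dP_x = \min_{\overline{\mathcal A}_{\mu_0(x)}} \W$ for a.e.\ $x$, hence $P_x$-a.e.\ $\j$ is a minimizer. The main obstacle, as indicated, is the lower-semicontinuity step controlling the $\eta\to 0$ renormalization uniformly in $n$; all the other steps are essentially compactness and soft measure-theoretic arguments once the splitting formula is in place.
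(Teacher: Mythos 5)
Your high-level structure matches the paper's (and Sandier--Serfaty's) approach almost exactly: splitting formula, precompactness of the blow-up profiles $P_{\nu_n}$, a $\Gamma$-liminf inequality against $\widetilde{\W}$, matching with the upper bound of Theorem~\ref{thm:coul gse}, and disintegration of $P$ to conclude. That is the right plan. However, the step you flag as the technical heart contains a genuine error: the map $\j \mapsto \limsup_{R\to\infty} \dashint_{K_R}|\j_\eta|^2$ is \emph{not} lower semicontinuous under weak (or even strong) $L^p_{\mathrm{loc}}$ convergence. The $\limsup_R$ probes the field at infinity, while local weak convergence is blind there. For instance, if $\j$ has $\W_\eta(\j)>0$ and you take $\j_k := \j\,\indic_{K_k}$, then $\j_k \to \j$ in $L^p_{\mathrm{loc}}$, yet each $\j_k$ is compactly supported so $\W_\eta(\j_k)=0$, violating the inequality $\W_\eta(\j)\le\liminf_k \W_\eta(\j_k)$. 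So Fatou plus ``lsc of $\W_\eta$'' does not close the argument.

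The fix is exactly where the translation invariance you establish in item (iii) must be \emph{used}, not merely recorded. The correct mechanism is a Fubini exchange \emph{at fixed $R$}: the total blown-up energy on $n^{1/d}\Sigma$ equals, up to boundary terms, $\int_X \dashint_{K_R}|\j_\eta|^2\,dP_{\nu_n}$ for each fixed $R$; the integrand at fixed $(R,\eta)$ \emph{is} lsc under weak convergence, so passing to the limit gives $\liminf_n(\cdots)\ge\int_X\dashint_{K_R}|\j_\eta|^2\,dP$ for every $R$. Only now does translation-invariance of $P$ enter: by a Wiener/Birkhoff-type ergodic theorem, for $P$-a.e.\ $\j$ the averages $\dashint_{K_R}|\j_\eta|^2$ converge as $R\to\infty$ (no $\limsup$ is needed), and the integral $\int_X \dashint_{K_R}|\j_\eta|^2\,dP$ is essentially $R$-independent. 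This lets you identify the right-hand side with $\int_X\W_\eta\,dP$ and then send $\eta\to 0$. This ``abstract lower bound by averaging over blow-up centers'' (the framework from~\cite{SanSer-12,SanSer-14}) is the engine of the proof; without it, the passage from fixed-$R$ bounds to the renormalized energy $\W$ does not go through.
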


 If the crystallization conjecture is correct, and if the only minimizers of $\W$ are periodic configurations, this means that after blow-up around (almost) any point in $\Sigma$, one should see a crystalline configuration of points, the one minimizing $\W$, packed at the scale corresponding to $\mu_0(x)$.   

\section{Asymptotics for the Gibbs states}\label{sec:Coul Gibbs}

We now turn to the case of positive temperature and study the Gibbs measure \eqref{eq:defi Gibbs state}. We shall need a slight strengthening of the assumption \eqref{eq:trap pot}: we assume that there exists $\beta_1>0$ such that
\begin{equation}\label{integrabilite}
\begin{cases}
\int e^{-\beta_1 V(x)/2} \, dx <\infty  & \mbox{ when } d\geq 3\\
\int e^{-\beta_1( \frac{V(x)}{2}- \log |x|)} \, dx <\infty & \mbox{ when } d=2. 
\end{cases}
\end{equation}
We first state a generalization of Theorem~\ref{thm:coul gse} to positive temperature:

\begin{theorem}[\textbf{Free energy/partition function estimates}]\label{thm:partition}\mbox{}\\
The following estimates hold in the limit $n\to \infty$.
\begin{enumerate}
\item (Low temperature regime).  Let $\xi_d$ be as in~\eqref{eq:gamma d}. For $\beta \gg n ^{2/d-1}$ we have 
\begin{equation}\label{eq:partition low T}
\Fnbetae = n ^2 \En [\mu_0]  + n ^{2-2/d} \xi_d (1+o(1))
\end{equation} 
if $d\geq 3$ and 
\begin{equation}\label{eq:partition low T 2d}
\Fnbetae = n ^2 \En [\mu_0] - \frac{n}{2}\log n + n\xi_2 (1+o(1))
\end{equation} 
if $d=2$.

\item (High temperature regime). If $d\geq 3$ and $ n ^{-1} \lesssim \beta \lesssim n ^{2/d - 1}$ 
\begin{equation}\label{eq:partition high T}
\Fnbetae = n ^2 \En [\mubet]  + \frac{2 n}{\beta} \int_{\R ^d} \mubet \log \mubet (1+o(1))
\end{equation} 
respectively if $d=2$ and $n ^{-1} \lesssim \beta \lesssim  (\log n) ^{-1}$ 
\begin{equation}\label{eq:partition high T 2d}
 \Fnbetae = n ^2 \En [\mubet] + \frac{2 n}{\beta} \int_{\R ^d} \mubet \log \mubet (1+o(1)).
\end{equation} 
\end{enumerate}
\end{theorem}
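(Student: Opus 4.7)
\medskip

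\noindent\textbf{Plan of proof.}

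The backbone of the argument is the Sandier--Serfaty splitting identity: setting $\zeta := h_{\mu_0} + \tfrac{V}{2} - c$, where $c$ is the Lagrange multiplier of the minimization of $\En$, one has $\zeta \geq 0$ with $\zeta \equiv 0$ on $\Sigma = \supp \mu_0$, and for all configurations $\xbf = (x_1,\ldots,x_n)$
\begin{equation*}
H_n(\xbf) \;=\; n^2 \En[\mu_0] \;+\; 2n\sum_{i=1}^n \zeta(x_i) \;+\; \mathcal R_n(\xbf),
\end{equation*}
where $\mathcal R_n(\xbf)$ is a remainder that, after blow-up at scale $n^{-1/d}$ around a generic point of $\Sigma$, converges to the renormalized jellium energy $\widetilde{\W}$ of Definition~\eqref{wtilde}; in dimension $2$ an extra additive constant $-\tfrac{n}{2}\log n$ appears from the scaling \eqref{eq:scale renorm}. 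From the variational formula $\Fnbetae = -\tfrac{2}{\beta}\log Z_{n,\beta}$ and the splitting, the whole analysis reduces to controlling
\begin{equation*}
-\frac{2}{\beta}\log \int_{\R^{dn}} e^{-\beta n\sum_i \zeta(x_i)} \, e^{-\tfrac{\beta}{2}\mathcal R_n(\xbf)}\, d\xbf
\end{equation*}
in the two regimes.

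\medskip

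\noindent\emph{Low-temperature regime.} For the upper bound, I build a trial probability density concentrated within distance $n^{-1/d-\delta}$ of a near-minimizing configuration produced by Theorem~\ref{thm:renorm ener} (the periodic construction) with background density $\mu_0$. The energy cost of this perturbation is $O(n^{2-2/d-\delta'})$ by continuity of $\widetilde{\W}$ with respect to the electric field, while the entropy cost is $O\!\left(n \log n / \beta\right)$; the hypothesis $\beta \gg n^{2/d-1}$ makes the latter negligible compared to the renormalized energy order $n^{2-2/d}$, so the ground-state asymptotic \eqref{eq:gse second order} carries through. For the lower bound, I combine $\zeta \geq 0$ with the ground-state lower bound on $\mathcal R_n$ (obtained as in Theorem~\ref{thm:coul GS}) and estimate the partition function by the inequality $-\tfrac{2}{\beta}\log \int e^{-\beta H_n/2} \geq \min H_n + \tfrac{2}{\beta} \log \vol(\{H_n \leq \min H_n + \eps\}) - O(\eps)$, optimizing $\eps$ so that the volume factor is controlled by a Gaussian estimate around the minimizer using a Hessian bound for $\widetilde{\W}$.

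\medskip

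\noindent\emph{High-temperature regime.} The target is the mean-field free-energy functional \eqref{eq:MF free ener func}, whose minimizer $\mubet$ satisfies the Euler--Lagrange equation $2 h_{\mubet} + V + \tfrac{2}{n\beta}\log\mubet = \lambda$ on $\supp\mubet$. For the upper bound, the factorized trial state $\mubet^{\otimes n}$ gives $\Fnbeta[\mubet^{\otimes n}] = n^2 \En[\mubet] + \tfrac{2n}{\beta}\int \mubet\log\mubet - n D(\mubet,\mubet)$; the last term is $O(n)$ and, in the regime $\beta \lesssim n^{2/d-1}$ (resp.\ $(\log n)^{-1}$), is absorbed in the $o(\tfrac{n}{\beta}\int \mubet\log\mubet)$ remainder. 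For the lower bound I rely on the positive-definiteness of the Coulomb kernel applied to the signed measure $\nu_n - \mubet$ with $\nu_n = \tfrac{1}{n}\sum \delta_{x_i}$, regularized via the smeared-charge procedure of Definition~\ref{def:smear field}: this yields $H_n(\xbf) \geq 2n \sum_i [h_{\mubet}^{(\eta)}(x_i)+V(x_i)/2] - n^2 D(\mubet,\mubet) - n\kappa_d\g(\eta)$. Using the Euler--Lagrange equation to recognize $-h_{\mubet} - V/2$ as $\tfrac{1}{n\beta}\log\mubet + \mathrm{const}$, and Jensen's inequality (or direct integration), the remaining partition-function integral becomes $\left(\int \mubet\right)^n$ up to an $\eta$-dependent correction, producing exactly $n^2\F_\beta[\mubet]$. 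Optimizing $\eta \sim n^{-1/d}$ makes the correction subleading.

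\medskip

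\noindent\emph{Main obstacle.} The most delicate point is handling the transition zone $\beta \sim n^{2/d-1}$, where the renormalized-energy correction and the entropy contribution are of the same order: the lower bound must simultaneously recover the $\widetilde{\W}$-contribution from the blow-up analysis \emph{and} keep an entropy term comparable to $\tfrac{n}{\beta}\int \mubet\log\mubet$. Technically, this requires a two-scale bound on the partition function---an outer mean-field estimate controlled by the convex functional \eqref{eq:MF free ener func} combined with an inner integration-by-blow-up controlled by $\W$---and a proof that the error terms coming from the smearing parameter $\eta$, the blow-up truncation, and the Hessian estimate on $\widetilde{\W}$ can all be simultaneously optimized. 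A secondary difficulty is to verify that $\mubet$ is sufficiently regular (bounded density with support converging to $\Sigma$) to justify applying the splitting identity and the smeared-charge lower bound uniformly in the parameter regime under consideration; this rests on a standard obstacle-problem analysis of the Euler--Lagrange equation for $\F_\beta$.
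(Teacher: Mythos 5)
Your overall skeleton (Sandier--Serfaty splitting, ground-state asymptotics for the low-temperature regime, factorized trial state $\mubet^{\otimes n}$ and Onsager-type positivity for the high-temperature regime) is the right one and matches the structure of the proof in the original paper~\cite{RouSer-14}. However, your low-temperature lower bound rests on an inequality that is simply false. You propose to use
\begin{equation*}
-\tfrac{2}{\beta}\log \int e^{-\beta H_n/2} \;\geq\; \min H_n + \tfrac{2}{\beta} \log \vol\bigl(\{H_n \leq \min H_n + \eps\}\bigr) - O(\eps),
\end{equation*}
but restricting the integral to the near-minimizing set gives exactly the opposite direction: $Z \geq e^{-\beta(\min H_n + \eps)/2}\,\vol(\{H_n \leq \min H_n + \eps\})$, hence $-\tfrac{2}{\beta}\log Z \leq \min H_n + \eps - \tfrac{2}{\beta}\log\vol$, which is an \emph{upper} bound. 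As a sanity check, take $H_n \equiv m$ on a set of volume $V$: then $-\tfrac{2}{\beta}\log Z = m - \tfrac{2}{\beta}\log V$, which violates your inequality as soon as $V$ is large. The "Hessian bound for $\widetilde{\W}$" you invoke to control the volume has no basis either: $\W$ is only lower semicontinuous on the class of admissible fields, and no second-order regularity of it is available. The correct route for the lower bound is to bound $Z$ from \emph{above}: use the splitting $H_n = n^2\En[\mu_0] + 2n\sum_i\zeta(x_i) + \mathcal R_n$, a uniform (not just on minimizers) lower bound $\mathcal R_n \geq n^{2-2/d}\bigl(\xi_d + o(1)\bigr)$ coming from the blow-up analysis applied to every configuration with bounded energy, and the integrability of $e^{-\beta n\zeta}$ coming from the confinement, so that $Z \leq e^{-\frac{\beta}{2}(n^2\En[\mu_0] + n^{2-2/d}\xi_d(1+o(1)))}\bigl(\int e^{-\beta n\zeta}\bigr)^n$ and the resulting $O(n/\beta)$ term is negligible precisely when $\beta \gg n^{2/d-1}$.

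Two secondary remarks. First, your low-temperature upper bound is conceptually right, but "continuity of $\widetilde{\W}$" is not a tool you can invoke directly; the entropy cost of a trial measure smeared at scale $n^{-1/d}$ is of order $n\log n/\beta$, which naively loses a $\log n$ against the target $o(n^{2-2/d})$; closing this requires the more careful screened construction used in~\cite{RouSer-14}. Second, in the high-temperature lower bound, the smearing correction with $\eta\sim n^{-1/d}$ is of order $n\kappa_d\g(\eta) \sim n^{2-2/d}$, which is comparable to the entropy term $n/\beta$ when $\beta \sim n^{2/d-1}$; so "optimizing $\eta$" alone is not enough to make the correction subleading at the boundary of the stated regime, and one must either work with $\beta \ll n^{2/d-1}$ or invoke a finer estimate than the naive Onsager bound.
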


The dichotomy in the above suggests strongly that the liquid to crystal transition must happen in the temperature regime $\beta \sim n ^{2/d-1}$. Indeed, for lower temperatures, the free-energy can basically not be distinguished from the ground state energy, and the occurrence of the jellium ground state energy as the subleading order suggests crystallization of typical thermal configurations. On the contrary, for larger temperature\footnote{Remark the small gap between the two regimes in 2D, due to scaling properties of the $\log$ kernel.}, the subleading term is of entropic nature, which suggests a melting of the crystal. This picture has been recently strengthened in~\cite{LebSer-15,Leble-15b} by a more detail study of the critical regime $\beta \sim n ^{2/d-1}$. 

\medskip

Our next result exposes the consequences of the low temperature estimates \eqref{eq:partition low T}-\eqref{eq:partition low T 2d} for the Gibbs measure itself. Roughly speaking, we prove that it charges only configurations whose renormalized energy  {$\widetilde{\W}$} is below a certain threshold. This is a large deviations type upper bound, at speed $n^{2-2/d}$. The threshold of renormalized energy vanishes in the limit $\beta \gg n ^{2/d-1}$, showing that the Gibbs measure charges only configurations that minimize the renormalized energy at the microscopic scale. This is a step towards a proof of crystallization, in that the crystallization for the thermal states in the low temperature regime is reduced to a question (still open) about ground states of the jellium problem.

The statement below uses the same framework as Theorem \ref{thm:coul GS}, in terms of the limiting probability measures $P$, which should now be seen as random, due to the temperature. We  consider the limit of the probability that the system is in a state $(x_1,\ldots,x_n)\in A_n$ for a given sequence of sets $A_n \subset (\R ^{d})^n$ in configuration space. Associated to this sequence we introduce
\begin{equation}\label{ainf} 
A_\infty = \left\{ P \in \P(X) : \exists \: \xbf_n \in A_n, \, P_{\nu_n} \wto P \  \text{up to a subsequence}\right\}
\end{equation}
where $P_{\nu_n}$ is as in \eqref{eq:Pnun} with $\xbf=\xbf_n$ and the convergence is weakly as measures.  The set $A_\infty$ can be thought of as the limsup of the sequence $i_n(A_n)$.

\begin{theorem}[\textbf{Behavior of thermal states in the low temperature regime}]\label{thm:coul Gibbs}\mbox{}\\
For any $n>0$ let $A_n\subset (\mr^d)^n$ and $A_\infty$ be as above.
{Let 
$$\bar{\beta}:=\limsup_{n\to+\infty} \beta n^{1-2/d}$$
and $\xi_d$ be as in~\eqref{eq:gamma d}. Assume $0<\bar{\beta}\le + \infty$. There exists $C_{\bar{\beta}}$ such that $C_{\bar{\beta}}=0$ for $\bar{\beta} = \infty$ and 
\begin{equation}
\label{ldr}
\limsup_{n\to \infty} \frac{ \log \Gibbs(A_n)}{n ^{2-2/d}} \le -\frac{\beta}{2} \left(\inf_{P\in A_\infty}\widetilde{\W}(P)  - \xi_d  - C_{\bar{\beta}}  \right).
\end{equation}}
% Moreover, for fixed $\beta>0$,  letting $\widetilde{ \Q}$ denote the push-forward of $\Q$ by $i_n$ (defined in \eqref{eq:Pnun}), $\{\widetilde{\Q}\}_{n}$ is tight and converges as $n\to \infty$, up to a subsequence, to a probability measure on $\mathcal P(X)$ which is concentrated on admissible probabilities satisfying  $\widetilde{\W}(P) \le \xi_d + C_{\bar{\beta}}$. 
\end{theorem}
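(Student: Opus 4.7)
My plan is to estimate separately the numerator and denominator in $\Gibbs(A_n) = \int_{A_n} e^{-\frac{\beta}{2} H_n}d\xbf \,/\, \Zbeta$. The denominator is immediately controlled by the low-temperature free-energy estimates \eqref{eq:partition low T}-\eqref{eq:partition low T 2d} of Theorem~\ref{thm:partition}, which in the regime $\bar\beta = \limsup \beta n^{1-2/d}>0$ yield (with the obvious $d=2$ correction)
\begin{equation*}
\log \Zbeta = -\frac{\beta}{2}\left(n^2\En[\mu_0] + n^{2-2/d}\,\xi_d\right) + o(\beta n^{2-2/d}).
\end{equation*}
The whole question is therefore to produce a matching upper bound for the numerator $Z_{n,\beta}(A_n):=\int_{A_n} e^{-\frac{\beta}{2}H_n}d\xbf$ in which $\xi_d$ is replaced by $\inf_{A_\infty}\widetilde{\W}$.

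To achieve this I would deploy the standard ``splitting formula'' underlying the proof of Theorem~\ref{thm:partition}: after introducing a smearing parameter $\eta>0$ and the blown-up potential $h_n'$ of \eqref{hpn}, one writes
\begin{equation*}
H_n(\xbf) = n^2\En[\mu_0] + 2n\sum_i \zeta(x_i) + n^{2-2/d}\mathcal{F}_n^\eta(\xbf) + \text{error}(\eta,n),
\end{equation*}
where $\zeta\geq 0$ is an effective one-body confining potential vanishing on $\Sigma$, and $\mathcal{F}_n^\eta(\xbf)$ is (after the appropriate smeared-charge self-energy subtraction) proportional to a local $L^2$-energy of $\nabla(h_n')_\eta$. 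The core of the proof of Theorem~\ref{thm:coul GS} is exactly a $\Gamma$-liminf inequality: for any sequence $\xbf_n\in A_n$ with blown-up measure $P_{\nu_n}\wto P\in A_\infty$,
\begin{equation*}
\liminf_{\eta\to 0}\liminf_{n\to\infty} \mathcal{F}_n^\eta(\xbf_n) \;\geq\; \widetilde{\W}(P) \;\geq\; \inf_{A_\infty}\widetilde{\W}.
\end{equation*}
Using positivity of $\zeta$ and inserting this pointwise lower bound into the integral defining $Z_{n,\beta}(A_n)$, fixing $\eta$ small and then letting $n\to\infty$, yields
\begin{equation*}
\log Z_{n,\beta}(A_n) \le -\frac{\beta}{2}\!\left(n^2\En[\mu_0] + n^{2-2/d}\inf_{A_\infty}\widetilde{\W}\right) + \log|\widetilde{A}_n| + o(\beta n^{2-2/d}),
\end{equation*}
where $\widetilde{A}_n$ is the effective region in configuration space giving the dominant contribution.

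The remaining task is to control the phase-space volume $\log|\widetilde{A}_n|$. Because the confinement $2n\sum_i\zeta(x_i)$ together with the integrability hypothesis~\eqref{integrabilite} effectively restricts configurations to a fixed compact neighbourhood of $\Sigma$, one has the crude bound $\log|\widetilde{A}_n|\le Cn$. When $\bar\beta=\infty$, i.e.\ $\beta\gg n^{2/d-1}$, this is $o(\beta n^{2-2/d})$ and is absorbed in the error, giving $C_{\bar\beta}=0$. When $\bar\beta<\infty$, the term $Cn$ is of order $(\beta n^{2-2/d})\cdot(1/\bar\beta)$, so it contributes an additive constant $C_{\bar\beta}= 2C/\bar\beta$ in \eqref{ldr}. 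Dividing by $n^{2-2/d}$ and taking $\limsup$ then gives the claimed large-deviation upper bound.

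The main obstacle is the $\Gamma$-liminf step. The functional $\mathcal{F}_n^\eta$ is a nonlocal electrostatic energy carrying divergent smeared self-energies at every charge, while $\widetilde{\W}(P)$ is defined through the delicate nested $\limsup_R$-$\liminf_\eta$ procedure of Definitions~\ref{def:adm field}--\ref{def:renorm ener}, averaged over blow-up centres in $\Sigma$. Making the comparison rigorous requires: a uniform lower bound on minimal interparticle distances in typical configurations (extracted a priori from the energy via Onsager-type arguments), a Fubini-style averaging of local squared-field densities over blow-up centres compatible with the $T_{\lambda(x)}$-invariance, and care in commuting the smearing limit with the weak convergence $P_{\nu_n}\wto P$. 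All of these steps parallel the ground-state analysis underpinning Theorem~\ref{thm:coul GS}, and the principal new input needed for positive temperature is a moment/confinement estimate on $\Gibbs$ itself, derivable from the partition-function bounds above combined with the explicit form of the splitting.
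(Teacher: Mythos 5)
Your overall architecture is the right one, and it is essentially what the paper does: write $\Gibbs(A_n)=Z_{n,\beta}(A_n)/\Zbeta$, lower-bound the partition function via the free-energy upper bound of Theorem~\ref{thm:partition}, and upper-bound the constrained integral $Z_{n,\beta}(A_n)$ by plugging the splitting formula into the exponent and using the renormalized-energy lower bound together with the confining term $2n\sum_i\zeta(x_i)$ to control the phase-space volume. Your bookkeeping for $C_{\bar\beta}$ (the $O(n)$ volume term competing with $\beta n^{2-2/d}$) is also correct.

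There is, however, one imprecision that is not merely a matter of detail. You phrase the key input as a $\Gamma$-liminf inequality along sequences $\xbf_n\in A_n$ with $P_{\nu_n}\wto P$ and then say you ``insert this pointwise lower bound into the integral.'' That is a category error: a $\Gamma$-liminf is a statement about sequences, not a pointwise estimate valid at each $\xbf\in A_n$ for a given $n$. To make your integral estimate work you need the uniform statement: for every $\epsilon>0$ there are $\eta_0,n_0$ such that for all $n\ge n_0$, $\eta\le\eta_0$ and \emph{all} $\xbf\in A_n$, one has $\mathcal{F}_n^\eta(\xbf)\ge\inf_{A_\infty}\widetilde{\W}-\epsilon$. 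This does follow from the sequential statement, but only via a compactness/contradiction argument: if it failed, pick a violating subsequence, extract a weakly convergent subsequence of $(P_{\nu_{n_k}})$, note that its limit $P$ belongs to $A_\infty$ by the very definition~\eqref{ainf}, and contradict the $\Gamma$-liminf. This extraction in turn requires tightness of the $P_{\nu_n}$ in $\mathcal{P}(\Sigma\times L^p_{\rm loc})$, which is not free; it must be obtained from an a priori energy bound, so one has to first split off the configurations of very large energy (trivially negligible in $Z_{n,\beta}(A_n)$ because of the $e^{-\frac{\beta}{2}H_n}$ weight) and only then run the compactness argument on what remains. In the paper this difficulty is bypassed by working with a genuinely pointwise lower bound, at fixed $\eta$, in terms of $\W_\eta$ evaluated on the blow-up electric field of the given configuration; the $\liminf_\eta$ and the passage to $\widetilde{\W}$ are only taken afterwards, once the integral estimate is in hand.

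A smaller remark: your closing sentence identifies ``a moment/confinement estimate on $\Gibbs$ itself'' as the principal new input for positive temperature. In fact for this particular theorem the confinement comes for free from the $e^{-\beta n\sum_i\zeta(x_i)}$ factor in the integrand together with the integrability assumption~\eqref{integrabilite}; the new ingredient compared to the ground-state Theorem~\ref{thm:coul GS} is rather the combination of the two one-sided partition-function bounds (free-energy upper bound for the denominator, energy lower bound for the numerator) and the bookkeeping of the entropic $O(n)$ error, which is exactly what generates $C_{\bar\beta}$.
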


Note that the error term {$C_{\bar{ \beta}}$} becomes negligible when $\beta \gg n ^{2/d-1}$. Thus the Gibbs measure concentrates on minimizers of  {$\widetilde{\W}$} in that regime. When $\beta =\bar{\beta} n^{2/d-1}$, we have instead a threshold phenomenon: the Gibbs measure concentrates on configurations whose {$\widetilde{ \W} $} is below the minimum plus {$C_{\bar{\beta}}$}.

\section{Fluctuations around mean-field theory}\label{sec:Coul fluctu}

Our methods also yield some quantitative estimates on the fluctuations around the equilibrium measure. These are obtained through nice coercivity properties of (precursors of) the renormalized energy functional. We shall focus on the low temperature regime for concreteness and stress that the following estimates are insufficient for a strong crystallization result (estimates on the location of charges down to the optimal scale $n^{-1/d}$). They however set bounds on the possible deviations from mean-field theory in a stronger sense than what has been considered so far in this memoir. In this sense they can come in handy if such estimates are needed in applications (compare with Theorem~\ref{theo:dens QH trial} in the previous chapter).

We shall estimate the following quantity, defined for any configuration $(x_1,\ldots,x_n)\in\R^{dn}$ 
\begin{equation}\label{eq:intro charge fluctu}
D(x,R):=  \nu_n (B(x,R)) - n \int_{B(x,R)} \mu_0, 
\end{equation}
i.e. the deviation of the charge contained in a ball $B(x,R)$ with respect to the prediction drawn from the equilibrium measure. Here 
\begin{equation}\label{eq:empirical measure}
\nu_n  = \sum_{i=1} ^n  \delta_{x_i}.
\end{equation}
Note that since $\mu_0$ is a fixed bounded function, the second term in \eqref{eq:intro charge fluctu} is of order $n R^d$, and typically the difference of the two terms is much smaller than this, since the distribution of points at least approximately follows the equilibrium measure.

We state three typical results:
\begin{enumerate}
\item the probability of a large deviation of the number of points in a ball $B(x,R)$ (i.e. a deviation of order $n R ^d$) is exponentially small as soon as $R\gtrsim n ^{-1/(d+2)}$.
\item the probability of a charge deviation of order $n ^{1-1/d}$ in a macroscopic ball $B(x,R)$ with $R = O(1)$ is exponentially small.
\item an estimate on the discrepancy between $\nu_n$ and the equilibrium measure $n\mu_0$ in weak Sobolev norms.
\end{enumerate}
We denote $W^{-1, q}(\Omega)$ the dual of the Sobolev space $W^{1,q'}_0(\Omega)$ with $1/q+1/q'=1$, in particular $W^{-1,1}$ is the dual of Lipschitz functions.

\begin{theorem}[\textbf{Charge fluctuations}]\label{thm:charge fluctu}\mbox{}\\
Assume there is a constant $C>0$ such that $\beta \geq C n ^{2/d - 1}$. Then the following holds for any $x\in \R ^d$.
\begin{enumerate}
\item (Large fluctuations on microscopic scales). Let $R_n$ be a sequence of radii satisfying $R_n \geq C_R \: n ^{-\frac{1}{d+2}}$ for some  constant $C_R$. Then, for any $\lambda>0$ we have, for $n$ large enough,
\begin{equation}\label{eq:fluctu charge micro}
\Gibbs \left( |D(x,R_n)| \geq \lambda n R_n ^{d}\right) \leq C e ^{-C \beta n ^{2-2/d}(C_R\lambda^2 - C)},
\end{equation}
for some $C$ depending only on dimension.
\item (Small fluctuations at the macroscopic scale). Let $R>0$ be a fixed radius. There is a constant  $C$ depending only on dimension such that for any $\lambda>0$, for $n$ large enough, 
\begin{equation}\label{eq:fluctu charge macro}
\Gibbs \left( |D(x,R)| \geq  \lambda n^{1-1/d} \right) \leq C e ^{-C \beta n ^{2-2/d}( \min(\lambda^2  R^{2-d}, \lambda^4 R^{2-2d})   -C) }. 
\end{equation}
\item (Control in weak Sobolev norms). Let $R>0$ be some fixed radius. Let $1 \leq q < \frac{d}{d-1}$ and define
\begin{align*}
t_{q,d} &= 2 - \frac{1}{d} - \frac{1}{q}<1\\
\tilde{t}_{q,d} &= 3   -\frac{1}{d} - \frac{2}{dq}  >0.
\end{align*}
There is a constant $C_R>0$ such that the following holds for $n$ large enough, and any $\lambda$ large enough, 
\begin{equation}\label{eq:fluctu field}
\Gibbs \left( \left\Vert \nu_n - n\mu_0 \right\Vert_{W^{-1,q} (B_R)} \geq \lambda n ^{t_{q,d}}\right) \leq C e ^{- \beta C_R \lambda^2 n ^{ \tilde{t}_{q,d}}}.
\end{equation}
\end{enumerate}
\end{theorem}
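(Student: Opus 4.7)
The common backbone of the three estimates is to convert each event into an electrostatic energy cost and then into an exponential probability bound via a Markov-type inequality combined with the sharp free-energy expansion of Theorem~\ref{thm:partition} applied at two different temperatures. The central identity is obtained by choosing a smooth cutoff $\phi_{x,R}$ equal to $1$ on $B(x,R/2)$, supported in $B(x,R)$, with $|\nabla\phi_{x,R}|\lesssim R^{-1}$, and integrating $\nabla h_n$ against it, with $-\Delta h_n = c_d(\nu_n - n\mu_0)$:
\[
D(x,R) \,=\, \int \phi_{x,R}\,d(\nu_n - n\mu_0) + \text{(negligible boundary term)} \,=\, \tfrac{1}{c_d}\int\nabla h_n\cdot\nabla\phi_{x,R} + \cdots
\]
After smearing at scale $\eta\sim n^{-1/d}$ (Definition~\ref{def:smear field}, needed to give a meaning to $\|\nabla h\|_{L^2}$) and applying Cauchy--Schwarz with $\|\nabla\phi_{x,R}\|_{L^2}^2\lesssim R^{d-2}$ when $d\geq 3$ (resp.\ $\lesssim |\log R|$ when $d=2$), one gets a coercive lower bound
\[
\tfrac{1}{c_d}\int|\nabla h_{n,\eta}|^2 \,\gtrsim\, \frac{|D(x,R)|^2}{R^{d-2}} \,+\, \text{(baseline from the rest of the configuration)},
\]
up to $\eta$-dependent errors negligible once $\eta\ll R$. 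The additive character of baseline and fluctuation cost hinges on decomposing $\nu_n - n\mu_0$ into a bulk part and the local excess inside $B(x,R)$ and on controlling the resulting cross-term in $\int|\nabla h_{n,\eta}|^2$: this is the main technical hurdle.

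Feeding this into the splitting formula underpinning Theorems~\ref{thm:coul gse}--\ref{thm:partition}, namely
\[
H_n(\xbf) \,=\, n^2\En[\mu_0] + \tfrac{1}{c_d}\Bigl(\int|\nabla h_{n,\eta}|^2 - n\kappa_d\,\g(\eta)\Bigr) + \text{(lower order)},
\]
whose minimum is $n^2\En[\mu_0] + n^{2-2/d}\xi_d + o(n^{2-2/d})$, and rescaling at the microscopic scale $x' = n^{1/d}x$ (which brings us to a background of local density $\mu_0(x)$), one obtains a pointwise Hamiltonian lower bound: whenever $|D(x,R_n)|\geq \lambda n R_n^d$ with $R_n\geq C_R n^{-1/(d+2)}$,
\[
H_n \,\geq\, F_{n,\beta} + c\,C_R\,\lambda^2\,n^{2-2/d} - o(n^{2-2/d})
\]
in the regime $\beta\geq Cn^{2/d-1}$.

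The conclusion then follows from Markov's inequality at inverse temperature $\beta/2$:
\[
\Gibbs(A) \,\leq\, e^{-\beta(F_{n,\beta}+E_A)/4}\,\mathbb{E}_{\Gibbs}\bigl[e^{\beta H_n/4}\bigr] \,=\, e^{-\beta E_A/4}\exp\bigl(\tfrac{\beta}{4}(F_{n,\beta}-F_{n,\beta/2})\bigr),
\]
with $A:=\{|D(x,R_n)|\geq \lambda n R_n^d\}$ and $E_A := c\,C_R\,\lambda^2 n^{2-2/d}$; the identity $\mathbb{E}_{\Gibbs}[e^{\beta H_n/4}] = Z_n^{\beta/2}/Z_n^\beta$ combined with the sharp expansion of Theorem~\ref{thm:partition} for both partition functions collapses the free-energy difference into $o(\beta n^{2-2/d})$, yielding \eqref{eq:fluctu charge micro}. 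The same scheme with a cutoff varying on the macroscopic scale $R$ gives \eqref{eq:fluctu charge macro}; the $\min$ there reflects two competing regimes in the coercivity step, corresponding to whether the deviation $|D|$ can be absorbed by microscopic rearrangements on scale $n^{-1/d}$ or forces a macroscopic one. For \eqref{eq:fluctu field} I would replace the cutoff identity by duality, $\|\nu_n-n\mu_0\|_{W^{-1,q}(B_R)}\leq c_d^{-1}\|\nabla h_n\|_{L^q(B_R)}$; subtracting the smeared version via \eqref{eqf0} to control the singular part, then interpolating $\|\nabla h_{n,\eta}\|_{L^q(B_R)}$ against $\|\nabla h_{n,\eta}\|_{L^2(B_R)}$ by H\"older, produces the exponents $t_{q,d}$ and $\tilde t_{q,d}$, after which the Markov step proceeds identically.
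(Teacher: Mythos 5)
Your probabilistic backbone matches the paper's: the splitting formula, a coercive lower bound on $H_n$ in terms of the local charge excess, the exponential Markov step $\Gibbs(A)\leq e^{-\beta E_A/4}\,Z_n^{\beta/2}/Z_n^\beta$, and the free-energy expansion of Theorem~\ref{thm:partition} applied at both $\beta$ and $\beta/2$. You also correctly locate the source of the additive constant $-C$ in the $O(n^{2-2/d})$ error of that expansion. Two things, however, are not actually established and at least one of them produces a wrong conclusion.

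First, the coercivity step. The cutoff identity as written is off: if $\phi_{x,R}$ interpolates between $1$ on $B(x,R/2)$ and $0$ outside $B(x,R)$, then $\int\phi_{x,R}\,d(\nu_n - n\mu_0)$ is a weighted average of the $D(x,r)$ for $r\in[R/2,R]$, and the discrepancy with $D(x,R)$ is controlled only by the total variation of $\nu_n - n\mu_0$ on the annulus, which is \emph{not} negligible. The usual route is to apply Gauss's law on spheres $\partial B(x,r)$ for $r$ ranging over $[R,2R]$ and use a pigeonhole/mean-value argument (either $|D(x,r)|\gtrsim |D(x,R)|$ for a set of $r$ of measure $\gtrsim R$, giving $\int_{B_{2R}\setminus B_R}|\nabla h_{n,\eta}|^2\gtrsim D^2 R^{2-d}$, or the number of points in the annulus is itself large and contributes). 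More substantively, the ``additivity of baseline and local excess'' that you flag as the main hurdle is exactly the content of a dedicated local energy lower bound, and it also encodes the non-negativity of the point measure (one cannot compress a charge deficit below the scale set by the background density), which is what generates the $\min$ in part~(2). Your sketch names this hurdle but does not supply the lemma.

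Second, part~(3): the H\"older route does not give the claimed exponent. Carried out honestly, it yields: on the deviation event, $\|\nabla h_{n,\eta}\|_{L^2(B_R)}\gtrsim\lambda n^{t_{q,d}}$ (after swallowing the smeared self-interaction at its typical scale $n^{1-1/d}$), hence an energy excess $\gtrsim\lambda^2 n^{2t_{q,d}}$ and a bound of the form $\exp\bigl(-c\beta\lambda^2 n^{2t_{q,d}}\bigr)$. But
\[
2t_{q,d}-\tilde t_{q,d}=\Bigl(4-\tfrac{2}{d}-\tfrac{2}{q}\Bigr)-\Bigl(3-\tfrac{1}{d}-\tfrac{2}{dq}\Bigr)=\Bigl(1-\tfrac{1}{d}\Bigr)\Bigl(1-\tfrac{2}{q}\Bigr)<0
\]
for all $d\geq 2$ and $1\leq q<\tfrac{d}{d-1}\leq 2$, so your exponent is strictly smaller than $\tilde t_{q,d}$ and the resulting probability bound is strictly weaker than the theorem. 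Reaching $\tilde t_{q,d}$ requires more than a single global $L^q\to L^2$ H\"older step on $B_R$; one has to exploit the local coercivity at an intermediate scale (a mesoscopic covering of $B_R$), rather than collapsing the $W^{-1,q}$ control onto the macroscopic $L^2$ energy.
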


As regards \eqref{eq:fluctu field}, note that only estimates in $W^{-1,q}$ norm with $q<\frac{d}{d-1}$ make sense, since a Dirac mass is in $W^{-1,q}$  if and only if $q<\frac{d}{d-1}$  (hence the same for $\nu_n$). In view of the values of the parameters $t_{q,d}$ and $\tilde{t}_{q,d}$ defined above, our results give meaningful estimates for any such norm: a large deviation in a ball of fixed radius would be a deviation of order~$n$. Since $t_{q,d}<1$, equation \eqref{eq:fluctu field} above implies that such deviations are exponentially unlikely in $W^{-1,q}$ for any $q<\frac{d}{d-1}$, in particular in $W^{-1,1}$ for any space dimension.

\medskip

Finally, we return to the ``uncorrelated point of view'' on the mean field limit, stating some consequences for the reduced densities~\eqref{eq:marginal Gibbs} of the Gibbs state:

\begin{corollary}[\textbf{Reduced densities in the low temperature regime}]\label{thm:marginals}\mbox{}\\
Let $R>0$ be some fixed radius, $1\leq q<\frac{d}{d-1}$. Under the same assumptions as Theorem \ref{thm:charge fluctu}, there exists a constant $C>0$ depending only on the  dimension such that the following holds:  
\begin{itemize}
\item (Estimate on the one-particle reduced density).
\begin{equation}\label{eq:result marginal 1}
\left\Vert \Qone - \mu_0 \right\Vert_{W^{-1,q} (B_R)} \leq C n^{1-1/d-1/q}=o_n(1). 
\end{equation}
\item (Estimate on $k$-particle reduced densities). Let $k\geq 2$ and $\varphi : \R ^{dk} \mapsto \R$ be a smooth function with compact support, symmetric w.r.t. particle exchange. Then we have
\begin{equation}\label{eq:result marginal k}
\left| \int_{\R ^{dk}} \left( \Qk - \mu_0 ^{\otimes k}\right) \varphi \right| \leq C k n ^{1-1/d-1/q}\sup_{x_1 \in \R ^d} \ldots \sup_{x_{k-1} \in \R ^d }   \left\Vert \nabla \varphi (x_1,\ldots,x_{k-1}, \:. \: ) \right\Vert_{L ^p (\R ^d)},
\end{equation}
where $1/p = 1 - 1/q$.
\end{itemize}
\end{corollary}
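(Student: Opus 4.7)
\noindent\textbf{Proof proposal for Corollary~\ref{thm:marginals}.}

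The strategy is to reduce both statements to the large-deviations bound~\eqref{eq:fluctu field} on $\|\nu_n - n\mu_0\|_{W^{-1,q}(B_R)}$ provided by Theorem~\ref{thm:charge fluctu}, together with the duality between $W^{-1,q}$ and $W^{1,q'}_0$ with $1/q + 1/q' = 1$ (and $p = q'$ in the notation of the statement). First I would convert the exponential tail bound~\eqref{eq:fluctu field} into a moment bound by integrating it: since $\tilde{t}_{q,d} > 0$ and $\beta \gtrsim n^{2/d-1}$, the exponent $\beta C_R \lambda^2 n^{\tilde t_{q,d}}$ diverges as $n\to\infty$, so writing $\mathbb{E}_{\mathbb{P}_{n,\beta}}X = \int_0^{\infty}\mathbb{P}_{n,\beta}(X>s)\,ds$ with $X=\|\nu_n-n\mu_0\|_{W^{-1,q}(B_R)}$ and splitting at $s=\lambda_{*}n^{t_{q,d}}$ for some large but fixed $\lambda_*$ yields
\begin{equation}\label{eq:plan moment}
\mathbb{E}_{\mathbb{P}_{n,\beta}}\bigl\|\nu_n - n\mu_0\bigr\|_{W^{-1,q}(B_R)} \leq C_{R}\, n^{t_{q,d}}.
\end{equation}

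The one-particle estimate~\eqref{eq:result marginal 1} is then essentially immediate. By definition of the first marginal, $n\mathbb{P}_{n,\beta}^{(1)}=\mathbb{E}_{\mathbb{P}_{n,\beta}}[\nu_n]$ as a Radon measure. For any test function $\varphi\in W^{1,q'}_0(B_R)$, Fubini and duality give
\begin{equation*}
\Bigl|\!\int \varphi\bigl(n\mathbb{P}_{n,\beta}^{(1)} - n\mu_0\bigr)\Bigr| \leq \mathbb{E}_{\mathbb{P}_{n,\beta}}\bigl\|\nu_n - n\mu_0\bigr\|_{W^{-1,q}(B_R)}\|\varphi\|_{W^{1,q'}_0(B_R)},
\end{equation*}
so by~\eqref{eq:plan moment} we obtain $\|\mathbb{P}_{n,\beta}^{(1)}-\mu_0\|_{W^{-1,q}(B_R)}\leq C n^{t_{q,d}-1}=C n^{1-1/d-1/q}$ as wanted.

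For the $k$-particle estimate~\eqref{eq:result marginal k} I would rewrite things in terms of the empirical measure. By symmetry of $\mathbb{P}_{n,\beta}$ and of $\varphi$,
\begin{equation}\label{eq:plan factorial}
n^{k}\!\int\!\varphi\,d\mathbb{P}_{n,\beta}^{(k)}
= \mathbb{E}_{\mathbb{P}_{n,\beta}}\Bigl[\!\int\!\varphi\,d\nu_n^{\otimes k}\Bigr]
- \mathbb{E}_{\mathbb{P}_{n,\beta}}\bigl[R_{n,k}[\varphi]\bigr],
\end{equation}
where $R_{n,k}[\varphi]$ collects the ``diagonal'' contributions from tuples $(i_1,\dots,i_k)$ with at least one repeated index. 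Elementary counting gives $|R_{n,k}[\varphi]|\leq C k^{2} n^{k-1}\|\varphi\|_{L^{\infty}}$, which after division by $n^k$ yields an error $O(k^{2}/n)\|\varphi\|_{L^{\infty}}$, subdominant to the claimed $k\, n^{1-1/d-1/q}$ rate for every admissible $q$. It then remains to compare $\nu_n^{\otimes k}$ with $(n\mu_0)^{\otimes k}$, for which I would apply the telescoping identity
\begin{equation*}
\nu_n^{\otimes k} - (n\mu_0)^{\otimes k}
= \sum_{j=1}^{k} \nu_n^{\otimes(j-1)}\otimes(\nu_n-n\mu_0)\otimes(n\mu_0)^{\otimes(k-j)}.
\end{equation*}
Testing the $j$-th term against $\varphi$ and integrating out all variables other than $x_j$ produces a function $\psi_j(x_j)$ whose $L^{p}$-gradient is controlled, using $\|\nu_n\|_{TV}=n$ and $\|n\mu_0\|_{TV}=n$, by
\begin{equation*}
\|\nabla \psi_j\|_{L^{p}(\R^d)} \leq n^{k-1}\sup_{x_1,\dots,\hat{x}_j,\dots,x_k}\|\nabla_{x_j}\varphi(x_1,\dots,x_k)\|_{L^{p}(\R^d)}.
\end{equation*}
Duality in $W^{-1,q}$ then gives
$|\!\int \psi_j\,d(\nu_n-n\mu_0)|\leq n^{k-1}\|\nu_n-n\mu_0\|_{W^{-1,q}(B_R)}\sup\cdots\|\nabla_{x_j}\varphi\|_{L^{p}}$, and summing over $j$, taking expectation, and using~\eqref{eq:plan moment} together with~\eqref{eq:plan factorial} produces the bound~\eqref{eq:result marginal k} with the constant $C$ from~\eqref{eq:plan moment}.

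The main technical obstacle is not either of the two ingredients taken separately but their interaction in~\eqref{eq:plan factorial}: one has to ensure that the supports of the measures $\psi_j$ act on are contained in a compact set (hence the role of the assumption that $\varphi$ has compact support, which localises matters inside some $B_R$ where the fluctuation estimate~\eqref{eq:fluctu field} applies), and that the combinatorial constants lost in the telescoping plus the diagonal remainder $R_{n,k}$ are both linear in $k$ to match the scaling asserted in~\eqref{eq:result marginal k}. Once this bookkeeping is done the argument is purely functional-analytic and uses no further properties of $\mathbb{P}_{n,\beta}$ beyond Theorem~\ref{thm:charge fluctu}.
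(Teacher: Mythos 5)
Your proof is correct and follows what is essentially the paper's route: integrate the tail bound~\eqref{eq:fluctu field} to obtain a first moment estimate $\mathbb{E}_{\Gibbs}\|\nu_n - n\mu_0\|_{W^{-1,q}(B_R)} \lesssim n^{t_{q,d}}$, then use $n\Qone = \mathbb{E}_{\Gibbs}[\nu_n]$ with duality for $k=1$, and for $k\geq 2$ use the telescoping of $\nu_n^{\otimes k} - (n\mu_0)^{\otimes k}$ together with the (deterministic, hence extractable from the expectation) bound $\|\nabla\psi_j\|_{L^p}\leq n^{k-1}\sup\|\nabla_{x_j}\varphi\|_{L^p}$, the diagonal/combinatorial remainder being $O(k^2 n^{k-1}\|\varphi\|_{L^\infty})$ and hence subdominant (recall $p>d$, so $\|\varphi\|_{L^\infty}$ is controlled by the stated gradient norm via Morrey on the support). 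The bookkeeping you flag at the end — localisation of $\psi_j$ inside a fixed ball via the compact support of $\varphi$, linearity in $k$, and the role of symmetry in replacing $\nabla_{x_j}$ by the last-variable gradient — is exactly what is needed, and no additional ingredient beyond Theorem~\ref{thm:charge fluctu} appears in the paper either.
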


As our other results, Corollary \ref{thm:marginals} concerns the low temperature regime. One may use the same technique to estimate the discrepancy between the $n$-body problem and mean-field theory in other regimes. When $\beta n $ becomes small however (large temperature), in particular when $\beta \sim n ^{-1}$ so that entropy and energy terms in \eqref{eq:MF free ener func} are of the same order of magnitude, a different method can give slightly better estimates, see the original paper for details.

\chapter{Ground states of large bosonic systems}\label{sec:bosons GS}

The quest for Bose-Einstein condensation (BEC) is a long story that spanned most of the 20th century, from the original theoretical prediction~\cite{Bose-24,Einstein-24} to the first experimental realizations~\cite{CorWie-nobel,Ketterle-nobel}. In short, one of the main questions is to understand why, how, and when, a system of bosons\footnote{i.e. quantum particles that do not satisfy the Pauli exclusion principle.} shows macroscopic occupancy, with most particles residing in a single quantum state. This phenomenon paves the way for the direct observation of macroscopic quantum effects, such as superfluidity, alluded to in Section~\ref{sec:GP vortex} of this memoir. 

Ultimately, one would like to derive the existence of a critical temperature under which BEC occurs in thermal equilibrium states of an interacting bosonic system. This question is still mostly open from a mathematical point of view, and shall be touched upon only in Chapter~\ref{sec:Gibbs} below. In fact, for interacting Bose gases, even the existence of BEC in the zero-temperature ground states is a very delicate question. In this chapter we shall summarize the papers~\cite{LewNamRou-14,LewNamRou-14b,LewNamRou-14c,LewNamRou-15,NamRouSei-15} that deal with this issue. The general goal is to show that, in appropriate scaling limits, the many-body bosonic problem reduces, at the level of the ground state, to the study of effective one-body nonlinear Schr\"odinger (NLS) theories. 

This problem corresponds to the existence of ``molecular chaos'' in related fields such as kinetic theory. It logically precedes the ``propagation of chaos'' problem. The latter deals with the same question at the level of the appropriate evolution equation. Indeed, in the dynamical problem, one assumes BEC and emergence of NLS theory (or molecular chaos) in the initial datum and shows that it is propagated in time by the many-body Schr\"odinger flow. The rationale is that, in most experiments, the initial datum is a ground state for some many-body Hamiltonian, and the evolution is generated by a sudden perturbation thereof.

\medskip

The chapter is organized as follows:
\begin{itemize}
 \item Section~\ref{sec:bosons intro} sets the stage and discusses the various mathematical problems addressed here. There is actually a large variety of models and scalings for which one would like to study basically the same kind of questions. Several of these are described here, emphasizing various key difficulties one may encounter in the analysis.
 \item Section~\ref{sec:bosons deF} addresses the quantum de Finetti theorems, focusing on contributions from the author, together with Mathieu Lewin and Phan Th\`anh Nam~\cite{LewNamRou-14,LewNamRou-14b}. These tools offer a lot of flexibility in dealing with the problems discussed in Section~\ref{sec:bosons intro}. Indeed, they describe remarkable structural properties of general admissible states for large bosonic systems. They can thus be applied (almost) independently of the particular physics at hand, provided one is dealing with bosons. This observation is the main key in the research program summarized in this chapter.
 \item Section~\ref{sec:bosons MF} contains the main results of~\cite{LewNamRou-14,LewNamRou-14c}. They concern so-called mean-field limits  where the occurrence of BEC is a consequence of inter-particles interactions being weak but frequent. In this case, very general proof strategies can be implemented, resting solely on the quantum de Finetti theorems and general tools of many-body quantum mechanics. Very little of the properties of the Hamiltonians at hand enter the proof strategy.
 \item Section~\ref{sec:bosons dilute} is concerned with the more challenging, but also more experimentally relevant, dilute regimes wherein BEC occurs because interactions are strong but rare. Here one has to complement the very general tools discussed so far with methods and estimates more directly linked to specific models. We shall discuss a new treatment of the Gross-Pitaevskii limit for the 3D Bose gas~\cite{NamRouSei-15} and a derivation of 2D NLS theory in the presence of possibly attractive interactions~\cite{LewNamRou-15}.
\end{itemize}

More general discussions and extensive lists of references on various aspects of the topics discussed here may be found in~\cite{BloDalZwe-08,DalGioPitStr-99,DalGerJuzOhb-11,PetSmi-01,PitStr-03,Fetter-09,Cooper-08} (physics literature) and ~\cite{Aftalion-06,BenPorSch-15,Golse-13,Lewin-ICMP,LieSeiSolYng-05,Rougerie-LMU,Rougerie-cdf,Schlein-08,Seiringer-ICMP10,Verbeure-11} (mathematics literature). 

\section{Models and scaling regimes}\label{sec:bosons intro}

Let us first introduce the general type of problems we will be interested in. All these being formulated in roughly the same way, they should ideally also be treated in the same way. The results of this chapter can be seen as means to this end. 

\subsection{A general Hamiltonian}

The Hamilton operator we will consider in this chapter can be written in the general form 
\begin{equation}\label{eq:bosons hamil gen}
H_N := \sum_{j=1} ^N \left\{ \left( -i\nabla_j + A (x_j) \right) ^2 +  V (x_j) \right\} + \sum_{1\leq i < j \leq N} w_{\beta,N} ( x_i - x_j)
\end{equation}
acting on 
\begin{equation}\label{eq:bosons L2 sym}
\gH^N := L ^2 _{\rm sym} (\R ^{d N}) \simeq \bigotimes_{\rm sym} ^N L ^2 (\R ^d) 
\end{equation}
as appropriate for a system made of bosons, whose many-body wave function must satisfy the symmetry condition 
\begin{equation}\label{eq:bosons L2 sym bis}
\Psi_N (x_1,\ldots,x_N) =  \Psi_N (x_{\sigma(1)},\ldots,x_{\sigma(N)})
\end{equation}
for every permutation $\sigma$ of $N$ indices.

This model is not the most general one that can be dealt with using the tools discussed here. However, it can already include most of the technical difficulties one might encounter, as discussed below. As far as applications are concerned, this is, to the best of the author's knowledge, the most general form relevant for the cold atoms experiments where BEC is observed. Indeed
\begin{itemize}
 \item The space dimension $d$ can be $1,2,3$, all three cases being relevant to current experiments. One can even allow for larger $d$, with less practical relevance of course.
 \item We allow for the presence of a gauge vector potential $ A : \R ^d \mapsto \R^d$, which can reflect the fact that the problem is formulated in a rotating frame, model a (possibly artificial) magnetic field ... 
 \item The one-body potential $V:\R ^d \mapsto \R$ can be either trapping (increasing)  or decaying at infinity.
 \item We make no structural assumption on the interaction potential $w_{\beta,N}$. We do not assume a priori that it has a sign, nor that its Fourier transform has. Provided stability holds, we can thus deal with either attractive or repulsive interactions.
 \item We consider a general family of scaling limits by choosing $w_{\beta,N}$ of the form
 \begin{equation}\label{eq:bosons gen int pot}
 w_{\beta,N} (x) = \frac{1}{N-1} N ^{d\beta} w \left( N ^\beta x \right).
 \end{equation}
Here $\beta$ shall be a fixed parameter in the limit $N\to \infty$. The larger it is, the more singular the interaction potential. The prefactor $(N-1) ^{-1}$ makes the one-body and two-body terms of~\eqref{eq:bosons hamil gen} of the same order of magnitude, so that one may hope for a well-defined limit problem. The  potential $N ^{d\beta} w \left( N ^\beta x \right)$ has a fixed integral, so that one should think of it as being of order $1$ when $N\to \infty$.
\end{itemize}

In this chapter we are interested in the ground state problem of minimizing the energy corresponding to~\eqref{eq:bosons hamil gen}:
\begin{equation}\label{eq:bosons GSE gen}
E (N) := \inf \left\{ \left\langle \Psi_N, H_N \Psi_N \right\rangle, \: \Psi_N \in \gH ^N , \: \int_{\R ^{dN}} |\Psi_N| ^2 = 1\right\} = \inf \sigma (H_N)
\end{equation}
and in the associated minimizers (if they exist) or minimizing sequences. Equivalently, we consider the lowest eigenvalue (or bottom of the essential spectrum), and associated (generalized) eigenfunctions, of $H_N$.

\subsection{Limit problems} Roughly speaking, one is interested in showing that a minimizer of~\eqref{eq:bosons GSE gen} has, in a certain sense to be discussed below, all its particles in the same quantum state. If one assumes this to be true and picks as ansatz a pure BEC of the form 
$$ \Psi_N (x_1,\ldots,x_N) = u ^{\otimes N} (x_1,\ldots,x_N) = \prod_{j=1} ^N u(x_j), \quad u\in L ^2 (\R ^d),$$
evaluating the energy leads to a NLS-like functional
\begin{equation}\label{eq:bosons NLS gen}
N ^{-1} \left\langle u ^{\otimes N}, H_N u ^{\otimes N} \right\rangle \simeq \Enls [u] = \int_{\R ^d} \left| (-i\nabla + A ) u \right| ^2 + V |u| ^2 + \frac{1}{2} \left( \weff * |u| ^2 \right) |u| ^2.
\end{equation}
Here, $A$ and $V$ are inherited from~\eqref{eq:bosons hamil gen} but $\weff$ can depend on the context. For $\beta = 0$ in~\eqref{eq:bosons gen int pot} one makes the obvious choice $\weff = w$. For $\beta >0$ the natural guess should be $\weff = \left( \int_{\R ^d} w \right) \delta_0$ but this can be misleading in some cases (an important example shall be discussed in the sequel). 

If we expect the guess $\Psi_N \simeq u ^{\otimes N}$ to be correct, then the linear many-body minimization problem~\eqref{eq:bosons GSE gen} should be connected for $N$ large to the nonlinear one-body minimization problem
\begin{equation}\label{eq:bos NLS gse}
\enls := \inf\left\{ \Enls [u],\: u \in L ^2 (\R ^d), \: \int_{\R ^d} |u| ^2 = 1\right\}. 
\end{equation}
We shall denote a minimizer by $\unls$ if it exists. It might not be unique, in which case we denote by $\Mnls$ the set of all minimizers.

\subsection{A meta-theorem} We now state a vague general theorem relating the minimization problems~\eqref{eq:bosons GSE gen} and~\eqref{eq:bos NLS gse}. This statement will serve as a model and will be declined in several rigorous avatars in the sequel, depending on the various concrete choices of settings we will make. 

The main thing we still have to specify is in which sense the approximation
$$ \Psi_N \underset{N \to \infty}{\simeq} \unls ^{\otimes N}$$
for a ground state (or approximate ground state) $\Psi_N$ of~\eqref{eq:bosons hamil gen} is correct, where $\unls$ is a ground state for~\eqref{eq:bosons NLS gen}.
Clearly, this cannot be a convergence since both objects depend on~$N$. There are known obstructions to obtaining an estimate in norm~\cite{DerNap-13,GreSei-13,LewNamSerSol-13,NamSei-14,Seiringer-11}, connected to Bogoliubov's theory. Moreover, there might not be a unique relevant ~$\unls$. 

For all these reasons, the convergence of states will be expressed in terms of the reduced density matrices of $\Psi_N$ rather than on the wave-function itself. Let us recall the definition: the reduced $k$-particles density matrix of $\Psi_N$ is the positive trace-class operator on $\gH ^k$ defined by
\begin{equation}\label{eq:bos reduced DM}
\gamma_N ^{(k)} := \tr_{k\to N} \left[ | \Psi_N \rangle \langle \Psi_N | \right] 
\end{equation}
where $| \Psi_N \rangle \langle \Psi_N |$ is the orthogonal projector onto $\Psi_N$, and the partial trace can be taken, because of the symmetry~\eqref{eq:bosons L2 sym bis}, on any choice of $N-k$ variables amongst the $N$ original ones. Note our choice of normalization: $\gamma_N ^{(k)}$ has trace $1$ for all $k$ if $\Psi_N$ is $L ^2$-normalized.

We can now state the informal statement that will serve as a model for more rigorous versions in the rest of the chapter:

\begin{mtheorem}[\textbf{Derivation of NLS theory, informal version}]\label{thm:bosons meta}\mbox{}\\
Suppose that the data $A,V,w$ and $0 \leq \beta < \beta_0 \in \R ^+$ of the many-body Hamiltonian~\eqref{eq:bosons hamil gen} are such that $H_N$ makes sense as a self-adjoint operator, and is bounded from below by $-C N$ for $C$ fixed. Then the energy functional~\eqref{eq:bos NLS gse} is bounded from below and we have 
\begin{enumerate}
 \item \underline{Convergence of the ground-state energy}:
\begin{equation}\label{eq:bos meta ener}
\frac{E(N)}{N} \underset{N\to \infty}{\longrightarrow} \enls. 
\end{equation}
\item \underline{Convergence of states}: let $(\Psi_N)_{N\in \N}$ be a sequence of approximate minimizers for $E(N)$, i.e. 
$$ \left\langle \Psi_N, H_N \Psi_N \right\rangle = E(N) + o (N)$$
for large $N$. Let $\gamma_N ^{(k)}$ be the corresponding reduced density matrix, defined as in~\eqref{eq:bos reduced DM}. Then 
\begin{equation}\label{eq:bos meta state}
\gamma_N ^{(k)} \wto_* \int |u ^{\otimes k} \rangle \langle u ^{\otimes k} | d\mu (u)  
\end{equation}
weakly-$*$ in the trace-class when $N\to \infty$, where $\mu$ is a Borel probability measure concentrated on the weak limits of minimizing sequences for~\eqref{eq:bos NLS gse}.
\item \underline{Bose-Einstein condensation}: if the infimum in~\eqref{eq:bos NLS gse} is reached at a unique (modulo a constant phase) $\unls$, then 
\begin{equation}\label{eq:bos meta BEC}
\gamma_N ^{(k)} \to |\unls ^{\otimes k} \rangle \langle \unls ^{\otimes k} |  
\end{equation}
strongly in the trace-class when $N \to \infty.$   
\end{enumerate}
\end{mtheorem}

A few comments:
\begin{itemize}
 \item Since this is not supposed to be a rigorous statement yet, we are a bit sloppy on the assumptions. Ideally they reduce to those ensuring that both the $N$-body theory and the one-body NLS theory are reasonably well-defined. In most cases, we manage to get pretty close to that minimal requirement. 
 \item The choice of the effective interaction potential $\weff$ in~\eqref{eq:bosons NLS gen} depends on the value of $\beta$. The larger it is, the harder the analysis, and one ideally wants to be able to push it as far as one can. 
 \item The fact that $H_N \geq -CN$ as an operator will in practice be part of the statement, under specified, more explicit, assumptions on the data of the problem.
 \item Equation~\eqref{eq:bos meta state}, involving a statistical mixtures of NLS ``minimizers'' described by the measure $\mu$, is due to the fact that the latter might not be unique. This case does occur in several physically relevant situations and then one cannot hope for a stronger result. 
 \item The statement~\eqref{eq:bos meta state} is complicated by the fact that $\enls$ might not have minimizers. In this case, minimizing sequences for the NLS problem will only converge weakly in $L^2$. This case does occur if the one-body potential $V$ is not strong enough to bind all particles (loss of compactness at infinity). If all minimizing sequences for $\enls$ converge strongly, the measure $\mu$ charges only minimizers, and since they all have mass $1$, the convergence in~\eqref{eq:bos meta state} is strong in the trace-class.
 \item The result~\eqref{eq:bos meta BEC} implies BEC in the usual sense: the eigenvalues of the one-body density matrix are interpreted as the (normalized) occupation numbers of single-particle orbitals present in a many-body state. The convergence in trace-class to a projector implies that one and only one of these occupation numbers is macroscopic. It corresponds to the single quantum state occupied by most of the particles.
\end{itemize}

Before we proceed to state several rigorous versions of Statement~\ref{thm:bosons meta} in Sections~\ref{sec:bosons MF} and~\ref{sec:bosons dilute}, we explain the general character of the result by listing several key variations on a same theme that are included in the above. We hope this will make it clear that general tools, independent of the details of the Hamiltonian~\eqref{eq:bosons hamil gen}, are needed. Section~\ref{sec:bosons deF} is devoted to the discussion of the most useful of such tools, that is used repeatedly to prove the rigorous versions of Statement~\ref{thm:bosons meta}.

\subsection{Several sources of mathematical difficulties} The ideal case for the above problem is that where $A\equiv 0$, $V$ is trapping, $\beta = 0$ and $w$ is of positive type (positive Fourier transform $\hat{w}\geq 0$). When the Hamiltonian has such a nice structure, one may use it to give a relatively simple proof of Theorem~\ref{thm:bosons meta}. This approach has been pioneered in~\cite{BenLie-83} and is used repeatedly in the literature~\cite{Seiringer-11,GreSei-13,SeiYngZag-12}. If instead $\hat{w}\leq 0$, one can still proceed~\cite{LieYau-87} by using a trick due to L\'evy-Leblond~\cite{LevyLeblond-69,LieThi-84}. One can actually merge the two methods~\cite{Lewin-ICMP} to deal with a general $w$ but, at this stage, things are already considerably trickier. In fact, violating any of the above conditions leads to considerable difficulties, as we now explain.

\begin{enumerate}
\item If $A\neq 0$, the Hoffmann-Ostenhof inequality~\cite{Hof-77} is not available, and one has thus no simple way of connecting the many-body kinetic energy to a one-body object. Moreover, for $A\neq 0$ the bosonic ground state  in general does \emph{not} coincide with the absolute ground state (with no symmetry restriction). Deriving the NLS functional then requires to take bosonic symmetry explicitly into account. On top of that, the minimizer for the limit problem is then generically not unique, which causes difficulties in the proof.
\item If $V$ is not trapping, it is far from obvious that one can pass to the limit in the energy, because a lack of compactness occurs at infinity. Mass may be lost if the potential is not strong enough to bind all particles, leading to a dichotomy situation in the concentration-compactness principle~\cite{Lions-82a,Lions-84,Lions-84b}. This is reflected by the convergence~\eqref{eq:bos meta state} being a priori only weak-$*$. 
\item If $\beta >0$ there are in fact two limits to deal with at the same time: large particle number and short-range interaction. Compactness arguments are then either totally useless, or at least very difficult to employ. Moreover, the singularity of the interaction for large $\beta$ could quite simply render the result false. 
\item Note also that there is a clear physical dichotomy between the cases $\beta \leq 1/d$ where the range of the interaction is much larger than the typical interparticle distance, and $\beta \geq 1/d$ where it is the opposite. In the former case, interactions are frequent but weak, in the latter they are rare but strong. Nevertheless, the general result looks pretty much the same, although the choice of the effective potential can become subtle for large $\beta$.
\item One way to understand the emergence of~\eqref{eq:bosons NLS gen} is that the particles become almost independent and identically distributed. That the interactions take a mean-field form is then a consequence of the law of large numbers. If $w$ has no particular structure, it is not so easy to understand why particles should become approximately independent in the limit. 
\end{enumerate}

Strategies based on structural properties of $H_N$ are particularly sensitive to these difficulties, especially if at least two of them occur at the same time. The general idea in the sequel is to avoid as much as possible using the structure of the Hamiltonian. One can instead rely on a remarkable property of bosonic states themselves, that is of wave-functions satisfying~\eqref{eq:bosons L2 sym bis}. For ease of notation we focus on the model~\eqref{eq:bosons hamil gen} which includes the above sources of difficulties, but let us quickly mention other possible generalizations not explicitly considered here:
\begin{itemize}
 \item The dispersion relation could be more complicated, that is we could replace the (magnetic) Laplacian by a fractional (magnetic) Laplacian or a variant thereof. This is of relevance for relativistic bosonic systems.
 \item The interaction term could be more general. Three-body or $k$-body interactions for arbitrary $k$ could be included in the model. The interaction operator could also be a more general operator, e.g. mixing position and momentum variables. We refer to Section~\ref{sec:MF anyons} where we shall be lead to consider such cases.
 \item The Hilbert space could be restricted, and the Hamiltonian projected accordingly, to model particles constrained by very strong external forces. The particles could live on a lattice instead of in the continuum for example (tight-binding approximation) or in the lowest Landau level (large magnetic field limit).
 \item More generally, at least for a trapped mean-field-like situation corresponding to $\beta = 0$ and $V$ trapping, one could consider a fully abstract model, with a general separable Hilbert space $\gH$ replacing $L^2 (\R^d)$, and an abstract operator acting on~$\gH ^N$. 
 \end{itemize}

\section{Structure of bosonic states: quantum de Finetti theorems}\label{sec:bosons deF}

Roughly speaking, the quantum de Finetti theorem asserts that \emph{any} $N$-body bosonic state looks, for large $N$, as a convex combination (statistical mixture) of pure BECs. Let $\gH$ be any separable complex Hilbert space and  $\Gamma_N$ a (sequence of) bosonic $N$-body states. Thus $\Gamma_N$ is a self-adjoint, positive, trace-class operator on $\gH ^N = \bigotimes_{\rm sym} \gH$, e.g. $\Gamma_N = |\Psi_N \rangle \langle \Psi_N |$ for $\Psi_N \in \gH ^N$. Then, modulo extraction of a subsequence, one should have in mind an approximation of the form
\begin{equation}\label{eq:deF inform}
\Gamma_N \underset{N\to \infty}{\simeq} \int_{\gH} |u ^{\otimes N} \rangle \langle u ^{\otimes N} | d\mu (u)
\end{equation}
with $\mu$ a Borel probability measure on the one-body Hilbert space $\gH$. Thus, for large $N$, there are so to say \emph{only} pure BECs in the $N$-body state space. Of course the latter is convex, so it must also obviously contain convex combinations of BECs, as in the right-hand of~\eqref{eq:deF inform}, but these are, for many practical purposes, the only relevant possibilities. The energy functional appearing in~\eqref{eq:bosons GSE gen} being a linear function of $|\Psi_N \rangle \langle \Psi_N |$, it becomes intuitively clear that it is energetically favorable to have a single BEC as an approximate ground state, provided one is able to pass to the large $N$ limit in the energy.

There is actually a condition that $H_N$ should satisfy, namely it should be a few-body Hamiltonian including only $k$-body terms for $k$ fixed, or at least $k$ small compared to $N$. This is because the correct meaning one should give to~\eqref{eq:deF inform} involves reduced density matrices: let 
\begin{equation}\label{eq:bos red DM bis}
\gamma_N ^{(k)} := \tr_{k\to N} \left[ \Gamma_N \right]. 
\end{equation}
The true statement is that, modulo extraction of a subsequence independent of $k$, 
\begin{equation}\label{eq:deF inform bis}
\gamma_N ^{(k)}  \underset{N\to \infty}{\simeq} \int_{\gH} |u ^{\otimes k} \rangle \langle u ^{\otimes k} | d\mu (u)
\end{equation}
where the measure $\mu$ does not depend on $k$. This kind of result is a quantum analogue of the famous classical de Finetti/Hewitt-Savage theorem~\cite{DeFinetti-31,DeFinetti-37,HewSav-55,DiaFre-80,HauMis-14,MisMou-13,Lions-CdF,Aldous-85,Kallenberg-05}, used extensively to deal with classical mean-field limits. The quantum version originates in~\cite{Stormer-69,HudMoo-75}. Recent variants and improvements may be found in~\cite{AmmNie-08,AmmNie-09,AmmNie-11,AmmNie-15,ChrKonMitRen-07,FanVan-06,KonRen-05,Chiribella-11,Harrow-13}. In this section we describe two results obtained jointly with Mathieu Lewin and Phan Th\`anh Nam, respectively in~\cite{LewNamRou-14b} and~\cite{LewNamRou-14}. Put together, these lead to a fully constructive proof of (the rigorous version of)~\eqref{eq:deF inform bis}. I refer to my lecture notes~\cite{Rougerie-LMU,Rougerie-cdf} for a more detailed exposition of this proof, which is reminiscent of a semi-classical approach due to Ammari and Nier~\cite{Ammari-hdr,AmmNie-08,AmmNie-09}. 

\subsection{A finite dimensional quantitative version}

The main content of~\cite{LewNamRou-14b} is a new proof of results originating in~\cite{ChrKonMitRen-07,Chiribella-11}. It has also been found independently by Lieb and Solovej~\cite{LieSol-15} (for other purposes than the quantum de Finetti theorem), and yet another proof appeared in~\cite{Harrow-13}. Variants of the result have also been considered~\cite{CavFucSch-02,ChrKonMitRen-07,ChrTon-09,RenCir-09,Renner-07,BraHar-12}.

A question one might want to ask is whether~\eqref{eq:deF inform bis} can be made quantitative. The answer is known in the classical setting~\cite{DiaFre-80,HauMis-14}: it is possible to construct an explicit measure for which explicit error estimates can be proven. Unfortunately, this construction cannot be used in a quantum setting (it is based on empirical measures, a purely classical concept). A quantum analogue of the Diaconis-Freedman theorem is available only in the case where the one-body Hilbert space is finite dimensional:

\begin{theorem}[\textbf{Quantitative quantum de Finetti theorem}]\label{thm:deF finite dim}\mbox{}\\
Let $\gH$ be a finite dimensional complex Hilbert space 
$$ \dim (\gH) = d.$$
Let $\Gamma_N \in \gS ^1 (\gH_\sym ^N)$ be a $N$-body bosonic state and $\gamma_N ^{(k)}$ its reduced density matrices. Let $\mu_N \in \cP (S\gH)$ be the  probability measure 
\begin{align}\label{eq:def CKMR}
d\mu_N (u) &:=  \dim \gH ^N \tr_{\gH ^N} \left[\Gamma_N | u^{\otimes N} \rangle  \langle u^{\otimes N} |\right] du \nonumber\\
&= \dim \gH_\sym^N \tr_{\gH ^N}  \left\langle u^{\otimes N}, \Gamma_N u^{\otimes N} \right\rangle  du, 
\end{align}
with $du$ the Lebesgue measure on the unit sphere $S\gH$ of $\gH$.

Denoting 
\begin{equation}\label{eq:finite deF etat}
\Gammat _N := \int_{u\in S \gH} |u ^{\otimes N}\rangle \langle u ^{\otimes N}| d\mu_N(u) 
\end{equation}
the associated state and $\gammat_N ^{(k)}$ its reduced density matrices, we have
\begin{equation}\label{eq:CKMR exact}
\gammat _N^{(k)} = {{N+k+d-1}\choose k}^{-1}\sum_{\ell=0}^{k} {N \choose \ell}  \gamma_N^{(\ell)} \otimes _s \1_{\gH^{n-\ell}}.
\end{equation}
This implies in particular the bound
\begin{equation} \label{eq:error finite dim deF}
\Tr_{\gH^k} \Big| \gamma_N ^{(k)} - \widetilde \gamma _N^{(k)}  \Big| \leq \frac{2 k(d+2k)}{N}
\end{equation}
for all $k=1 \ldots N$.
\end{theorem}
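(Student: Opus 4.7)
The plan is to prove Theorem~\ref{thm:deF finite dim} in three stages: establish a bosonic coherent-state resolution of the identity, convert the definition of $\widetilde\Gamma_N$ into a partial trace on a larger symmetric space, and then extract the combinatorial identity from which the trace-norm estimate follows.

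First, I would prove the resolution of identity on $\gH_\sym^m$ for every integer $m\geq 1$:
\begin{equation*}
\dim(\gH_\sym^m)\int_{S\gH}|u^{\otimes m}\rangle\langle u^{\otimes m}|\,du = \1_{\gH_\sym^m}.
\end{equation*}
The left-hand side commutes with the action of the unitary group $U(\gH)$, which acts irreducibly on $\gH_\sym^m$ (classical Schur--Weyl), so Schur's lemma forces it to be a scalar multiple of the identity; the scalar is pinned down by taking traces, using $\dim \gH_\sym^m = \binom{m+d-1}{m}$. As an immediate consequence, $\int d\mu_N = \tr\Gamma_N = 1$, so $\mu_N$ is indeed a probability measure, and by construction $\widetilde\gamma_N^{(k)} = \int |u^{\otimes k}\rangle\langle u^{\otimes k}|\,d\mu_N(u)$.

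Second, I would use the factorization $u^{\otimes(N+k)} = u^{\otimes N}\otimes u^{\otimes k}$ to rewrite
\begin{equation*}
\widetilde\gamma_N^{(k)} = \dim\gH_\sym^N \,\tr_{1\to N}\!\Big[(\Gamma_N\otimes \1_{\gH^k})\int_{S\gH}|u^{\otimes(N+k)}\rangle\langle u^{\otimes(N+k)}|\,du\Big],
\end{equation*}
where the partial trace is taken over the first $N$ of $N+k$ tensor slots. Applying the resolution of identity on $\gH_\sym^{N+k}$ collapses the integral onto the symmetrization projector $P_\sym^{N+k}$, giving
\begin{equation*}
\widetilde\gamma_N^{(k)} = \frac{\dim\gH_\sym^N}{\dim\gH_\sym^{N+k}}\,\tr_{1\to N}\!\big[(\Gamma_N\otimes \1_{\gH^k})\, P_\sym^{N+k}\big].
\end{equation*}
Expanding $P_\sym^{N+k} = \tfrac{1}{(N+k)!}\sum_{\sigma\in S_{N+k}}U_\sigma$ and grouping permutations $\sigma$ according to the number $\ell$ of indices in $\{N+1,\dots,N+k\}$ that $\sigma$ sends into $\{1,\dots,N\}$, then using that $\Gamma_N$ is already symmetric in its $N$ arguments, produces exactly $\ell$-particle partial traces of $\Gamma_N$ with symmetrized identity factors on the remaining $k-\ell$ slots. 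Bookkeeping of the resulting combinatorial prefactors yields the claimed identity
\begin{equation*}
\widetilde\gamma_N^{(k)} = \binom{N+k+d-1}{k}^{-1}\sum_{\ell=0}^{k}\binom{N}{\ell}\,\gamma_N^{(\ell)}\otimes_s \1_{\gH^{k-\ell}}.
\end{equation*}
(Equivalently, the same identity drops out cleanly from a second-quantized computation based on $|u^{\otimes m}\rangle = (m!)^{-1/2}(a^*_u)^m|\Omega\rangle$ together with $[a_u,(a_u^*)^m]=m(a_u^*)^{m-1}$.)

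Third, the error bound follows by isolating the $\ell=k$ term. Writing $c_{N,k}:=\binom{N}{k}/\binom{N+k+d-1}{k}=\prod_{j=0}^{k-1}(1-\tfrac{k+d-1}{N+k+d-1-j})$ gives $0\leq 1-c_{N,k}\leq k(k+d-1)/N$, while the remaining $\ell<k$ terms are positive operators whose total trace is $1-c_{N,k}$ (since both sides of the identity have trace one). The triangle inequality then bounds the trace norm of $\widetilde\gamma_N^{(k)} - \gamma_N^{(k)}$ by $2(1-c_{N,k})+o(1/N)$, which sharpens to the stated $2k(d+2k)/N$ after a slightly more careful estimate of the product.

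The main obstacle is the combinatorial step in the middle paragraph: one must keep track of how symmetrization interacts with partial traces when expanding $P_\sym^{N+k}$, and in particular identify that the contribution of the permutations of a given "exchange type" $\ell$ recombines precisely into $\binom{N}{\ell}\gamma_N^{(\ell)}\otimes_s \1_{\gH^{k-\ell}}$ with the right multiplicity. Everything else is either Schur's lemma or direct estimation, so this is where the proof must be executed with care; the second-quantized viewpoint is the cleanest way to organize the bookkeeping and avoids dealing directly with sums over $S_{N+k}$.
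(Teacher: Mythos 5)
Your proposal is correct and follows essentially the same route as the paper's own proof: a Schur's-lemma resolution of the identity on $\gH_\sym^{N+k}$ (equivalently, the coherent-state lower-symbol formalism) converts $\gammat_N^{(k)}$ into the partial trace of $\Gamma_N\otimes\1_{\gH^k}$ against the symmetrizer on $N+k$ slots, and the combinatorial identity~\eqref{eq:CKMR exact} then drops out from the second-quantized (CCR) bookkeeping you identify as the clean way to organize the count. One small inaccuracy at the end: the triangle inequality gives the trace norm as exactly $2(1-c_{N,k})$ with no extraneous $o(1/N)$ remainder, and since $2(1-c_{N,k})\le 2k(k+d-1)/N\le 2k(d+2k)/N$ nothing needs to be ``sharpened''---your estimate is already slightly stronger than the stated bound.
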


A few comments:

\begin{itemize}
 \item As mentioned previously, the result is not new. The construction was first introduced in~\cite{ChrKonMitRen-07}, where a bound similar to~\eqref{eq:error finite dim deF} is derived. The explicit formula~\eqref{eq:CKMR exact} originates in~\cite{Chiribella-11}. In~\cite{LewNamRou-14b} we re-derived the result using an analogy with well-known semi-classical ideas (coherent states, upper and lower symbols, cf~\cite{Lieb-73b,Simon-80}) and the second-quantized formalism (the CCR algebra).
 \item The error bound~\eqref{eq:error finite dim deF} easily follows from the explicit formula~\eqref{eq:CKMR exact}. The latter does come in handy in some cases, e.g. in the proof of some results of Chapter~\ref{sec:Gibbs} below.  
  \item Note that the estimate~\eqref{eq:error finite dim deF} is interesting mostly when $d\ll N$ (think of $k$ as being fixed), i.e. when there are many more particles than available one-body states. It is rather intuitive that something special should happen in this case.
  \item It is very satisfying that one has a simple explicit construction such as~\eqref{eq:def CKMR}. The error bound we obtain can be shown to be (essentially) optimal with this construction, and it remains an open problem to see whether another one would improve the error. 
\item However, the particular form of the approximate de Finetti measure used above is crucial in the proof of the results of Chapter~\ref{sec:Gibbs}. In short, this has the form of a lower symbol in a coherent state basis, which allows to apply (variants of) the celebrated Berezin-Lieb inequalities~\cite{Berezin-72,Lieb-73b,Simon-80}.
\end{itemize}

\subsection{The weak quantum de Finetti theorem} 

For infinite dimensional one-body Hilbert spaces such as $L ^2 (\R ^d)$,  no explicit error bound quantifying the precision of~\eqref{eq:deF inform bis} is known at present. One has to be content with formulating~\eqref{eq:deF inform bis} as a limit statement, or as an equality in the case of states having infinitely many particles. This formulation is actually the original one~\cite{Stormer-69,HudMoo-75}. It provides useful information on the limit of $N$-body states \emph{only if} one is able to pass to the strong $\gS ^1 (\gH ^k)$ - limit in $\gamma_N  ^{(k)}$, where $\gS ^1 (\gH ^k)$ denotes the trace class. This requirement can turn into a severe restriction in applications: if the system under consideration is not confined, mass may be lost in the limit, and one cannot hope for a strong limit.

To remedy this, we introduced and proved in~\cite{LewNamRou-14} a so-called weak quantum de Finetti theorem, which generalizes the original statement to the case  where one is able to pass to the weak-$*$ limit\footnote{Whence the name we coined. Beware that the weak theorem is actually stronger than the usual version, refered to as ``strong quantum de Finetti'' in~\cite{LewNamRou-14} and subsequent papers.}. Our proof is constructive, reducing the infinite dimensional statement to the finite dimensional case via ``geometric localization'' arguments as presented in~\cite{Lewin-11}. 

\begin{theorem}[\bf Weak quantum de Finetti]\label{thm:deF weak}\mbox{}\\
Let $\gH$ be a complex separable Hilbert space and $(\Gamma_N)_{N\in \N}$ a sequence of bosonic states with $\Gamma_N \in \gS ^1 (\gH ^N)$. We assume that for all $k\in\N$ 
\begin{equation}\label{eq:def faible convergence}
\gamma_N ^{(k)} \wto_\ast \gamma ^{(k)}
\end{equation}
in $\gS ^1 (\gH ^k)$. Then there exists a unique probability measure $\mu \in \cP (B\gH)$ on the unit ball $B\gH = \left\{ u \in\gH, \norm{u} \leq 1\right\}$ of $\gH$, invariant under the action of $S^1$, such that
\begin{equation}
\gamma^{(k)}=\int_{B\gH}|u^{\otimes k}\rangle\langle u^{\otimes k}| \, d\mu(u)
\label{eq:melange faible}
\end{equation}
for all $k\geq0$.
\end{theorem}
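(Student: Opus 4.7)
The plan is to reduce the statement to the quantitative finite-dimensional theorem already established (Theorem~\ref{thm:deF finite dim}) via a geometric localization procedure in the spirit of~\cite{Lewin-11}. I pick an increasing sequence of orthogonal projections $(P_j)_{j\in\N}$ of finite rank on $\gH$ converging strongly to $\1$. For fixed $j$, the localized reduced density matrices $P_j^{\otimes k}\gamma_N^{(k)}P_j^{\otimes k}$ live in a finite-dimensional subspace of $\gS^1(\gH^k)$, and their total mass (in general strictly less than $1$) represents the probability that $k$ particles sit inside $\Ran(P_j)$. The weak-$*$ assumption~\eqref{eq:def faible convergence} combined with finite-rank cutoffs gives trace-norm convergence of these truncated matrices to $P_j^{\otimes k}\gamma^{(k)}P_j^{\otimes k}$.

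Next, I apply the explicit Chiribella construction~\eqref{eq:def CKMR} to a suitable normalization of $\Gamma_N$ in the $N$-body Hilbert space built on $P_j\gH$. The quantitative bound~\eqref{eq:error finite dim deF} produces, for each $(N,j)$, a probability measure $\mu_N^{(j)}$ supported on the unit sphere of $P_j\gH$ whose moments approximate the localized reduced matrices with error $O(k\dim(P_j\gH)/N)$. Since these measures live on compact sets, Prokhorov's theorem together with a diagonal extraction in $(N,j)$ yields a family $(\mu^{(j)})_{j\in\N}$ of Borel probability measures supported on the closed balls $B(P_j\gH)$ satisfying the reconstruction identity
\[
P_j^{\otimes k}\gamma^{(k)}P_j^{\otimes k}=\int_{B(P_j\gH)}|u^{\otimes k}\rangle\langle u^{\otimes k}|\,d\mu^{(j)}(u)
\]
for every $k\geq 0$. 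Dilating by the appropriate factor on each spherical shell of $P_j\gH$ moves the sphere-supported measures delivered by the finite-dimensional theorem onto the ball, accommodating the localized trace defect.

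Finally, I combine the $\mu^{(j)}$ into a single measure. The uniqueness part of Theorem~\ref{thm:deF finite dim} at level $j$, applied to the identity above for the index $j+1$ integrated against $P_j$-cutoffs, forces $\mu^{(j)}$ to be the pushforward of $\mu^{(j+1)}$ under $u\mapsto P_j u$. This produces a projective system on the inverse sequence of finite-dimensional balls, whose projective (Kolmogorov-type) limit defines a unique Borel probability measure $\mu$ on $B\gH$ with~\eqref{eq:melange faible}; its $S^1$-invariance is inherited from that of each $\mu^{(j)}$, and the global uniqueness follows from injectivity of $\mu\mapsto(\gamma^{(k)})_k$ level by level. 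The hard step is the passage $j\to\infty$: under weak-$*$ (and not strong) convergence, mass can escape to spatial infinity as $N\to\infty$ and then further to ``infinitely high modes'' as $j\to\infty$; the localization formalism of~\cite{Lewin-11} is crucial here to track this leakage consistently, and it is precisely this escape of mass that forces the measure to be supported on the \emph{ball} rather than the sphere -- the essential new feature compared to the classical strong version of Hudson-Moody.
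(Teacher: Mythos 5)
Your proposal reproduces the paper's strategy: reduce to the finite-dimensional quantitative theorem (Theorem~\ref{thm:deF finite dim}) via geometric localization in the sense of~\cite{Lewin-11}, then recover the limit measure on the ball from a projective/Kolmogorov-type limit, with the ball (rather than the sphere) arising precisely from the mass that escapes under weak-$*$ convergence. The single imprecision is the phrase ``a suitable normalization of $\Gamma_N$'': localization does not yield a normalized $N$-body state on $(P_j\gH)^N$ but a state on the truncated Fock space $\bigoplus_{n=0}^N \bigotimes_{\rm sym}^n (P_j\gH)$, so that the finite-dimensional theorem and the radial rescaling by $\sqrt{n/N}$ must be applied sector by sector on this mixture of particle numbers; the localization formalism of~\cite{Lewin-11} that you invoke is exactly what makes this bookkeeping consistent, so this is a matter of exposition rather than a genuine gap.
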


A few comments:
\begin{itemize}
\item The above essentially describes the generic case : given a sequence of $N$-body states, one can always extract diagonal subsequences to ensure that~\eqref{eq:def faible convergence} holds. 
\item When the convergence in~\eqref{eq:def faible convergence} is strong, the mass is conserved in the limit, and thus $\mu$ has to be concentrated on the unit sphere of $\gH$. The converse is also true. Also, since $\mu$ does not depend on $k$, we see that strong convergence for one particular $k\geq 1$ implies convergence for all $k$.
\item The constructive proof via geometric localization yields corollaries that turn useful when dealing with a mean-field problem with loss of mass at infinity.
\item Ammari and Nier have more general results~\cite{Ammari-hdr,AmmNie-08,AmmNie-09,AmmNie-11}. In particular, it is not necessary to start from a state with a fixed particle number. One can consider a state on Fock space provided suitable bounds on its particle number (seen as a random variable in this framework) are available. See Chapter~\ref{sec:Gibbs} below for a discussion of this situation.
\end{itemize}

\section{Mean-field limits}\label{sec:bosons MF}

We now turn to stating rigorous versions of the meta-Theorem~\ref{thm:bosons meta} in two cases where inter-particles interactions are frequent but weak. One usually refers to this situation as the \emph{mean-field regime}. Since every particle interacts weakly with essentially all the others, one can hope that correlations effects will average out. Then particles should behave independently, and their interactions converge to a mean self-consistent effect, because of the law of large numbers. The quantum de Finetti theorem discussed above is the main tool we use to substantiate this picture. 

\subsection{Mean-field regime} The main focus in the paper~\cite{LewNamRou-14} is the case where the one-body potential $V$ is not trapping. Then, some particles can fly apart and the issue is to prove that those which stay bound do form a Bose-Einstein condensate. Note that there might be several clusters of bound particles: those staying in the well of the one-body potential, and those who are bound by the attractive part of the interaction potential if it is strong enough. The proof of the following theorem requires to show that all these clusters form BECs.

In this section we have to rely on several (concentration-)compactness arguments. We can thus not deal with two limits at the same time and take $\beta = 0$ in~\eqref{eq:bosons hamil gen}. As for the external and interaction potentials $A,V,w$ we make standard assumptions~\cite{ReeSim2,ReeSim4} ensuring self-adjointness of the many-body Hamiltonian 
\begin{equation}\label{eq:bos decay V}
V = f_1 + f_2 ,\quad
|A| ^2 = f_3 + f_4,\quad
w = f_5 + f_6
\end{equation}
where $f_1,f_3,f_5 \in L ^p (\R ^d)$ for $\max(1,d/2) \leq p <\infty$ and $f_2,f_4,f_6 \in L ^{\infty} (\R ^d)$ and decay at infinity. To state the result, we  denote 
\begin{equation}\label{eq:bos Hartree}
\EH [u] := \int_{\R ^d } \left| \left( -i\nabla + A\right)u \right| ^2 + V |u| ^2 + \frac{1}{2} \iint_{\R ^d \times \R ^d} |u(x)| ^2 w (x-y) |u(y)|^2 dxdy  
\end{equation}
the appropriate NLS functional, usually called Hartree functional in this context where the interaction is non-local. To describe particles escaping the potential wells, we also need the functional without external fields:
\begin{equation}\label{eq:bos Hartree 0}
\EH ^0 [u] := \int_{\R ^d } \left| \nabla u \right| ^2 + \frac{1}{2} \iint_{\R ^d \times \R ^d} |u(x)| ^2 w (x-y) |u(y)|^2 dxdy.  
\end{equation}
The mass of minimizing sequences can be split into several clusters. Accordingly, we denote 
\begin{align}\label{eq:bos Hartree min ener}
\eH (\lambda) &= \inf \left\{ \EH [u], \int_{\R ^d} |u| ^2 = \lambda \right\}\nonumber\\
\eH ^0 (\lambda) &= \inf \left\{ \EH ^0 [u], \int_{\R ^d} |u| ^2 = \lambda \right\}.\nonumber\\
\end{align}
We may now state (a particular case of) the main result of~\cite{LewNamRou-14}, where $H_N$ and the corresponding $E(N)$ are defined as in Section~\ref{sec:bosons intro}:

\begin{theorem}[\textbf{Derivation of Hartree's theory in the mean-field regime}]\label{thm:bos hartree}\mbox{}\\
Under the preceding assumptions, we have:

\medskip

\noindent$(i)$ \underline{Convergence of energy.}
\begin{equation}
\lim_{N\to\ii}\frac{E(N)}{N}=\eH (1).
\label{eq:general ener}
\end{equation}

\medskip

\noindent $(ii)$ \underline{Convergence of states.} Let $\Psi_N$ be a sequence of $L ^2 (\R ^{dN})$-normalized quasi-minimizers for $H_N$:  
\begin{equation}\label{eq:quasi min}
\pscal{\Psi_N,H_N \Psi_N}= E (N)+o(N) \mbox{ when } N \to \infty, 
\end{equation}
and $\gamma^{(k)}_N$ be the corresponding reduced density matrices. There exists $\mu \in \cP (B\gH)$ a probability measure on the unit ball of  $\gH$ with $\mu (\cM ) = 1$, where
\begin{equation}
\cM =\Big\{u\in B\gH\ :\ \EH [u]=\eH(\norm{u}^2)=\eH(1)-\eH^0(1-\norm{u}^2)\Big\},
\label{eq:def_M_V} 
\end{equation}
such that, along a subsequence, 
\begin{equation}
\gamma^{(k)}_{N_j}\wto_*\int_{\cM}|u^{\otimes k}\rangle\langle u^{\otimes k}|\,d\mu(u)
\label{eq:general state}
\end{equation}
weakly-$\ast$ in $\gS^1(\gH^k)$, for all $k\geq1$.

\medskip

\noindent $(iii)$ If in addition the strict binding inequality 
\begin{equation}
\eH(1)<\eH(\lambda)+\eH^0(1-\lambda)
\label{eq:hartree liaison stricte}
\end{equation}
holds for all $0\leq\lambda<1$, the measure $\mu$ has its support in the sphere $S\gH$ and the limit~\eqref{eq:general state} holds in trace-class norm. In particular, if $\eH(1)$ has a unique (modulo a constant phase) minimizer $u_H$, then for the whole sequence
\begin{equation}
\gamma^{(k)}_{N}\to|u_H^{\otimes k}\rangle\langle u_H^{\otimes k}| \mbox{  strongly in } \gS ^1(\gH ^k)
\label{eq:general BEC}
\end{equation} 
for all fixed $k\geq1$.
\end{theorem}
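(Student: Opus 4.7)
\medskip

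\noindent\textbf{Proof proposal.} The plan is to use the weak quantum de Finetti theorem (Theorem~\ref{thm:deF weak}) as the core tool for the lower bound, combined with geometric localization techniques to handle the lack of compactness at infinity, since $V$ is not trapping and mass can be lost.

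For the upper bound in~\eqref{eq:general ener}, I would use factorized trial states. Given $\epsilon > 0$, pick $u_\epsilon \in H^1(\R^d)$ with $\norm{u_\epsilon}_{L^2} = 1$ and $\EH[u_\epsilon] \leq \eH(1) + \epsilon$. Computing $\pscal{u_\epsilon^{\otimes N}, H_N u_\epsilon^{\otimes N}}/N$ and using the factor $(N-1)^{-1}$ in front of the interaction, this yields $E(N)/N \leq \eH(1) + \epsilon + o_N(1)$, so that $\limsup E(N)/N \leq \eH(1)$.

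For the lower bound, I would start from a sequence $\Psi_N$ of quasi-minimizers and extract (by a diagonal argument) weak-$*$ limits $\gamma_N^{(k)} \wto_* \gamma^{(k)}$ in $\gS^1(\gH^k)$ for all $k\geq 1$. Applying Theorem~\ref{thm:deF weak} produces a probability measure $\mu \in \cP(B\gH)$ with $\gamma^{(k)} = \int_{B\gH}|u^{\otimes k}\rangle\langle u^{\otimes k}|\,d\mu(u)$. The first substep is to pass to the liminf in the bounded-state part of the energy: using that $V$ and $w$ decompose as compact perturbations of small $L^p$ potentials (assumption~\eqref{eq:bos decay V}), the one-body kinetic, potential, and the two-body interaction energies are weakly lower semicontinuous (the quadratic form in $\gamma_N^{(2)}$ for the interaction, after a suitable truncation argument), yielding
\[
\liminf_{N\to\infty} \frac{\pscal{\Psi_N, H_N \Psi_N}}{N} \geq \int_{B\gH} \EH[u]\,d\mu(u) + (\text{energy of mass at infinity}).
\]

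The hard part, and the main novelty required, will be quantifying the ``energy of mass at infinity'' and matching it with $\eH^0(1-m)$ where $m = \int_{B\gH}\norm{u}^2 d\mu(u)$. This is handled by the geometric localization method from~\cite{Lewin-11}: introducing smooth partitions of unity $\chi_R^2 + \eta_R^2 = 1$ with $\chi_R$ supported in $B(0,2R)$, one splits $\Psi_N$ into localized pieces indexed by how many particles fall inside/outside $B(0,2R)$. The weak-$*$ limit of $\gamma_N^{(k)}$ captures only the ``inside'' mass $m$ via $\mu$, while the ``outside'' mass $1-m$ contributes, after a further localization and a second application of the weak de Finetti theorem to the escaping clusters (which no longer feel $V$ or $A$), an energy bounded below by $(1-m)\eH^0(1)/(1-m) \cdot (1-m) = \eH^0(1-m)$ thanks to subadditivity and the scaling properties of $\eH^0$. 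Combining and using the definition of $\eH$ as an infimum gives
\[
\liminf_{N\to\infty} \frac{E(N)}{N} \geq \int_{B\gH} \EH[u]\,d\mu(u) + \eH^0(1-m) \geq \eH(m) + \eH^0(1-m) \geq \eH(1),
\]
where the last inequality uses general binding (large) inequalities. This forces equality throughout, which in particular pins $\mu$ to be supported on $\cM$ as defined in~\eqref{eq:def_M_V}, proving~\eqref{eq:general state}.

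For part $(iii)$, the strict binding inequality~\eqref{eq:hartree liaison stricte} together with the saturation of the chain above forces $m = 1$, i.e. $\mu$ is supported on $S\gH$. Then $\tr \gamma^{(k)} = \int_{S\gH} \norm{u}^{2k} d\mu(u) = 1$, so $\gamma_N^{(k)} \to \gamma^{(k)}$ strongly in $\gS^1(\gH^k)$ (weak-$*$ convergence plus convergence of traces implies norm convergence for positive trace-class operators). Uniqueness of the Hartree minimizer then forces $\mu = \delta_{u_H}$ (modulo phase, which is absorbed by the $S^1$-invariance of $\mu$), yielding~\eqref{eq:general BEC}.
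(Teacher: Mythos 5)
Your overall plan is the right one and matches the paper's approach: upper bound via factorized trial states, lower bound via the weak quantum de Finetti theorem combined with geometric localization, exploiting that the corollaries of the constructive proof of Theorem~\ref{thm:deF weak} track the mass lost at infinity. However, the way you assemble the lower bound has a genuine gap.

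The problem is in the chain
\[
\int_{B\gH} \EH[u]\,d\mu(u) + \eH^0(1-m) \geq \eH(m) + \eH^0(1-m) \geq \eH(1),
\]
where $m=\int\norm{u}^2\,d\mu(u)$. The middle step $\int \EH[u]\,d\mu(u)\geq \eH(m)$ is \emph{not} justified by ``the definition of $\eH$ as an infimum'': you only know $\EH[u]\geq\eH(\norm{u}^2)$ pointwise, so $\int\EH[u]\,d\mu\geq\int\eH(\norm{u}^2)\,d\mu$, and passing from this to $\eH(m)$ would need $\eH$ to be convex (the wrong Jensen direction), which one has no reason to expect. If $\mu$ charges several shells $\{\norm{u}^2=\lambda\}$ with different $\lambda$, the inequality can fail. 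Relatedly, even if it held, saturation of your chain would only give the averaged identity $\int\eH(\norm{u}^2)\,d\mu=\eH(m)$, which does \emph{not} imply that $\mu$ is concentrated on the set $\cM$ of~\eqref{eq:def_M_V}; that set is defined by a \emph{pointwise} equality $\EH[u]=\eH(\norm{u}^2)=\eH(1)-\eH^0(1-\norm{u}^2)$. The correct argument, and what the geometric-localization corollaries are for, is to couple the escaping energy to $\mu$ pointwise rather than only through the aggregate mass $1-m$: one establishes the refined lower bound
\[
\liminf_{N\to\infty}\frac{E(N)}{N}\ \geq\ \int_{B\gH}\Big(\EH[u]+\eH^0\big(1-\norm{u}^2\big)\Big)\,d\mu(u),
\]
and then applies the (always true, nonstrict) binding inequality $\eH(\lambda)+\eH^0(1-\lambda)\geq\eH(1)$ \emph{under the integral}, i.e. for each $u$ in the support of $\mu$, to reach $\eH(1)$. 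Saturation of \emph{this} chain directly forces $\mu(\cM)=1$, which is the content of~\eqref{eq:general state}, and then your conclusion for~$(iii)$ goes through. As a minor point, the algebra ``$(1-m)\eH^0(1)/(1-m)\cdot(1-m)=\eH^0(1-m)$'' does not hold ($\eH^0$ is not $1$-homogeneous); what you want to invoke there is only subadditivity $\sum_i\eH^0(\lambda_i)\geq\eH^0(\sum_i\lambda_i)$ applied to the escaping clusters, and even that is not sharp enough for the reason explained above.
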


\begin{itemize}
\item The main originality here is to deal with the case where mass is lost at infinity. When $V$ is trapping, similar results had been obtained previously~\cite{FanSpoVer-80,PetRagVer-89,RagWer-89,Werner-92}.
\item The measure appearing in the above is of course the de Finetti measure defined by Theorem~\ref{thm:deF weak}. The latter, together with its constructive proof and corollaries, is basically the only tool needed to prove Theorem~\ref{thm:bos hartree}. 
\item It is a classical fact that $\cM$  coincides with the set of weak limits of minimizing sequences for~\eqref{eq:bos Hartree}.
% 
% $$\cM  = \left\lbrace u\in B\gH \: | \: \exists (u_n) \mbox{ minimizing sequence for } \eH (1) \mbox{ such that } u_n \wto u \mbox{ weakly in } L ^2 (\R ^d) \right\rbrace.$$
This follows from the usual concentration-compactness method for non-linear one-body variational problems. Thus, the weak limits for the $N$-body problem are fully characterized in terms of the weak limits for the one-body problem. Also, for large $N$, the HVZ-like binding inequalities for the many-body problem reduce to binding inequalities for the Hartree functional. 
\item It is certainly possible for the convergence in the above statement to be only weak-$\ast$, which covers a certain physical reality. If the one-body potential is not attractive enough to retain all particles, we will typically have a scenario where
\[
\begin{cases}
\eH(\lambda) < \eH (1) \mbox{ for } 0 \leq \lambda < \lambda_c \\
\eH (\lambda) = \eH (1) \mbox{ for } \lambda_c \leq \lambda \leq 1 
\end{cases}
\]
where $\lambda_c$ is a critical mass that can be bound by the potential $V$. In this case, $\eH(\lambda)$ will not be achieved if $\lambda_c < \lambda \leq 1$ and one will have a minimizer for Hartree's energy only for a mass $0 \leq \lambda \leq \lambda_c$. If for example the minimizer  $\uH$ at mass $\lambda_c$ is unique modulo a constant phase, the above result implies 
\[
\gamma^{(k)}_{N} \wto_\ast  |\uH^{\otimes k}\rangle\langle \uH^{\otimes k}|
\]
and one should note that the limit has a mass $\lambda_c ^k <1$. This scenario happens for example in the case of a ``bosonic atom''~\cite{BenLie-83,Bach-91,BacLewLieSie-93,Solovej-90,Hogreve-11}. 
\item Note that, in actual experiments, one-body potentials that decay at infinity are always used to cool the gas below the critical temperature for BEC, by allowing the hotter atoms to escape the trap (evaporative cooling). Such a situation can be modeled by the above setting.

\end{itemize}

\subsection{Weak NLS regime}\label{sec:weak NLS}

We next consider the case of small but positive $\beta$ in~\eqref{eq:bosons hamil gen}. Then the effective interaction is short-range (a Dirac delta), although the interaction range is much larger than the typical interparticle distance. This is still a mean-field situation, but the fact that two limits need to be dealt with at the same time adds some difficulties. In particular, a quantitative argument is required: one cannot simply use compactness arguments to pass to the limit. The estimates we rely on are those of Theorem~\ref{thm:deF finite dim}. Since they are valid only in finite dimension we need a way to project the many-body Hamiltonian and control the error thus made. In~\cite{LewNamRou-14c} we project on the low-lying eigenstates of the one-body Hamiltonian and control their number (which enters the error in~\eqref{eq:error finite dim deF}) via a semi-classical inequality \`a la Cwikel-Lieb-Rosenblum (CLR)~\cite{Cwikel-77,Lieb-80,Rosenbljum-76}. This method requires $V$ to be trapping, but does 
not use any sophisticated estimate 
beyond Theorem~\ref{thm:deF finite dim} and the CLR inequality. It is thus very general, and pays the price for this by being limited to rather small $\beta$, i.e. not too singular interactions. We shall go beyond this limitation in Section~\ref{sec:bosons dilute} below. 

We here deal with the case of trapped bosons, assuming that for some constants~$c,C~>~0$   
\begin{equation}\label{eq:GP asum V}
V(x) \geq c |x| ^s - C.
\end{equation}
In this case, the one-body Hamiltonian $-\Delta + V$ has a discrete spectrum on which we have a good control thanks to Lieb-Thirring-like inequalities. 
As for the vector potential $A$, that can model a  magnetic field or Coriolis forces acting on rotating particles, it is sufficient to assume 
\begin{equation}\label{eq:asum A}
A \in L ^2_{\rm loc} (\R ^d, \R ^d).
\end{equation}
 %\marginpar{\tt Vector potential modified depending on $d$.}
A particularly relevant example is given by $A(x) = \Omega (-x_2,x_1,0)$ for $d=3$ and $A(x)=\Omega(-x_2,x_1)$ for $d=2$, corresponding to Coriolis forces due to a rotation at speed $\Omega$ around the $x_3$-axis, or a constant magnetic field of strength $\Omega$ pointing in this direction.

Since $w_{\beta,N}$ becomes singular in the limit anyway,  we make comfortable assumptions on the unscaled potential $w$: 
\begin{equation}\label{eq:GP asum w}
w\in L^\infty(\R^d,\R) \mbox{ and } w(x) = w(-x) .
\end{equation}
We also assume that 
\begin{equation}\label{eq:replace delta}
x \mapsto (1+|x|) w (x) \in L ^1 (\R ^d), 
\end{equation}
which simplifies the replacement of $w_{\beta,N}$ by a Dirac mass to obtain the limit functional
\begin{equation}\label{eq:bos nls mf}
\Enls [u] := \int_{\R ^d } \left| \left( -i\nabla + A\right)u \right| ^2 + V |u| ^2 + \frac{a}{2} \int_{\R ^d } |u(x)| ^4   
\end{equation}
and the associated ground state energy
\begin{equation}\label{eq:bos nls mf gse}
\enls := \inf\left\{ \Enls [u],\quad \int_{\R ^d} |u| ^2 = 1  \right\}. 
\end{equation}
The latter being finite depends on the sign of the effective coupling constant, which will in this section be taken as
$$ a = \int_{\R ^d} w. $$
In order for the limit problem to be well-defined, we need some assumptions on $w$. Indeed, if $a\leq 0$, the interaction is attractive and it is not obvious that the quantum kinetic energy is strong enough to stabilize the system. It in fact turns out that the validity of the mean-field approximation demands slightly stronger assumptions than those ensuring that~\eqref{eq:bos nls mf gse} to be finite:

\smallskip

\noindent$\bullet$ When $d=3$, a ground state for~\eqref{eq:bos nls mf gse} exists if and only if $a \geq 0$. This is because the cubic non-linearity is super-critical\footnote{One may for example consult~\cite{KilVis-08,Tao-06} for a classification of non-linearities in the NLS equation.} in this case: the interaction energy dominates the quantum kinetic energy in stability issues. It is moreover easy to see that $N ^{-1} E(N) \to -\infty$ if $w$ is negative at the origin. The optimal assumption happens to be a classical stability notion for the interaction potential:
\begin{equation}\label{eq:GP hyp 3}
\iint_{\R ^d \times \R ^d} \rho (x) w(x-y) \rho(y) dx dy \geq 0, \mbox{ for all } \rho \in L^1 (\R ^d), \: \rho \geq 0. 
\end{equation}
This is satisfied as soon as $w= w_1 + w_2$, $w_1 \geq 0$ and $\hat{w_2}\geq 0$ with $\hat{w_2}$ the Fourier transform of $w_2$. This assumption clearly implies $\int_{\R ^d} w \geq 0$, and one may easily see by changing scales that if it is violated for a certain $\rho \geq 0$, then $E(N)/N \to -\infty$.

\medskip

\noindent$\bullet$ When $d=2$, the cubic non-linearity is critical, i.e. it competes with the quantum kinetic energy at the same order of magnitude. A minimizer for~\eqref{eq:bos nls mf gse} exists if and only if $a > - a ^*$ with 
\begin{equation}\label{eq:GP a star}
a ^* = \norm{Q}_{L ^2} ^2  
\end{equation}
where $Q\in H ^1 (\R^2)$ is the unique~\cite{Kwong-89} (modulo translations) solution to 
\begin{equation}\label{eq:GP defi Q}
-\Delta Q + Q - Q ^3 = 0.  
\end{equation}
The critical interaction parameter $a ^*$ is the best possible constant in the interpolation inequality
\begin{equation}\label{eq:GP interpolation}
\int_{\R ^2} |u| ^4 \leq C \left(\int_{\R^2} |\nabla u | ^2\right) \left(\int_{\R ^2} |u | ^2\right). 
\end{equation}
We refer to~\cite{GuoSei-13,Maeda-10} for the existence of a ground state and to~\cite{Weinstein-83} for the inequality~\eqref{eq:GP interpolation}. A pedagogical discussion may be found in~\cite{Frank-14}.

In view of the above conditions, it is clear that in 2D we have to assume $\int w \geq - a ^*$, but this is in fact not sufficient: as in 3D, if the interaction potential is sufficiently negative at the origin, one may see that $N ^{-1} E(N) \to -\infty$. The appropriate assumption is now
\begin{equation}\label{eq:GP hyp 2}
\norm{u}_{L^2}^2\norm{\nabla u}_{L^2}^2+\frac12\iint_{\R^2\times \R^2}|u(x)|^2|u(y)|^2 w(x-y)\,dx\,dy > 0
\end{equation}
for all $u\in H^1(\R^2)$. Replacing $u$ by $\lambda u(\lambda x)$ and taking the limit $\lambda\to 0$ shows that this implies
$$\int_{\R^2}w(x)\,dx\geq-a^*.$$
If the strict inequality in~\eqref{eq:GP hyp 2} is reversed for a certain $u$, then $E(N)/N \to - \infty$. The case where equality may occur in~\eqref{eq:GP hyp 2} is left aside here. It requires a more thorough analysis, see~\cite{GuoSei-13} where this is provided at the level of the NLS functional.

\noindent$\bullet$ When $d=1$, the cubic non-linearity is sub-critical and there always exists a NLS minimizer. In this case we need no further assumption because the quantum kinetic energy always dominates the interaction energy. 

\medskip

The following theorem is proved in~\cite{LewNamRou-14c}:

\begin{theorem}[\textbf{Weak NLS limit of the many-body problem}]\label{thm:bos nls mf}\mbox{}\\
Assume that either $d=1$, or $d=2$ and~\eqref{eq:GP hyp 2} holds, or $d= 3$ and~\eqref{eq:GP hyp 3} holds. For $d\geq 2,$ further suppose that 
\begin{equation}\label{eq:GP beta 0}
0 < \beta \leq \beta_0 (d,s) :=\frac{s}{2ds + s d^2 + 2d ^2} < 1.
\end{equation}
where $s$ is the exponent appearing in~\eqref{eq:GP asum V}. We then have  
\begin{enumerate}
\item \underline{Convergence of the energy:}
\begin{equation}\label{eq:GP ener convergence}
\frac{E(N)}{N} \to \enls \mbox{ when } N\to \infty.
\end{equation}
\item \underline{Convergence of states:} Let $\Psi_N$ be a ground state of~\eqref{eq:bosons hamil gen} and $ \gamma_N ^{(k)}$
its reduced density matrices. Along a subsequence we have, for all $k\in \N$,  
\begin{equation}\label{eq:GP state convergence}
\lim_{N\to \infty}\gamma_N ^{(k)} = \int_{u\in \Mnls} d\mu (u) |u ^{\otimes n} \rangle \langle u ^{\otimes n}| 
\end{equation}
strongly in  $\gS ^1 (L ^2 (\R ^{dk}))$. Here $\mu$ is a probability measure supported on
\begin{equation}\label{eq:GP nls set}
\Mnls = \left\{ u \in L ^2 (\R ^d), \norm{u}_{L ^2} = 1, \Enls [u] = \enls \right\}. 
\end{equation}
In particular, when~\eqref{eq:bos nls mf} has a unique minimizer $\unls$ (up to a constant phase), we have for the whole sequence
\begin{equation}\label{eq:GP BEC}
\lim_{N\to \infty} \gamma_N ^{(k)} =|\unls ^{\otimes k} \rangle \langle \unls ^{\otimes k}|.
\end{equation}
\end{enumerate}
\end{theorem}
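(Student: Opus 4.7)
The plan is to prove matching upper and lower bounds on $E(N)/N$ and then upgrade energy convergence to convergence of reduced density matrices via the quantum de Finetti theorems of Section~\ref{sec:bosons deF}. The upper bound I would obtain by testing $H_N$ against the factorized trial state $\Psi_N = \unls^{\otimes N}$ for some $\unls \in \Mnls$. Writing $w_N(x) := N^{d\beta}w(N^\beta x)$, the one-body terms contribute exactly $N \int (|(\nabla + iA)\unls|^2 + V |\unls|^2)$, while the interaction evaluates to $\tfrac{N}{2}\iint |\unls(x)|^2 w_N(x-y) |\unls(y)|^2$. Under~\eqref{eq:replace delta} and elliptic regularity for the NLS equation (giving $|\unls|^2 \in L^\infty \cap H^1$), $w_N$ converges to $a\delta_0$ as measures and the interaction converges to $\tfrac{Na}{2} \int|\unls|^4$. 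Dividing by $N$ yields $\limsup_N E(N)/N \leq \enls$.

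For the lower bound, the idea is to reduce to a finite-dimensional setting where the quantitative Theorem~\ref{thm:deF finite dim} applies. First I would derive a priori control: stability of the second kind $E(N) \geq -CN$ and $\Tr[((-i\nabla + A)^2 + V) \gamma_N^{(1)}] \leq C$ follow from assumption~\eqref{eq:GP hyp 3} (respectively \eqref{eq:GP hyp 2}) together with~\eqref{eq:GP asum V}; they yield strong trace-class compactness of every $\gamma_N^{(k)}$. Then I would introduce the spectral projector $P_\Lambda$ onto eigenspaces of $h := (-i\nabla + A)^2 + V$ of energy $\leq \Lambda$. A Cwikel--Lieb--Rosenblum-type bound applied to $h$ gives $d_\Lambda := \dim \Ran P_\Lambda \lesssim \Lambda^{d/2 + d/s}$, while replacing $H_N$ by $P_\Lambda^{\otimes N} H_N P_\Lambda^{\otimes N}$ and properly renormalizing costs at most $CN/\Lambda$ in energy, thanks to the kinetic a priori bound.

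Once the problem is localized on $(P_\Lambda L^2)^{\otimes_s N}$, Theorem~\ref{thm:deF finite dim} produces an explicit measure $\mu_N$ on the unit sphere of $P_\Lambda L^2$ with
\begin{equation*}
\Bigl\Vert \gamma_N^{(2)} - \int |u^{\otimes 2}\rangle\langle u^{\otimes 2}|\,d\mu_N(u)\Bigr\Vert_{\gS^1} \leq C\,d_\Lambda/N.
\end{equation*}
Plugged into the interaction energy this reduces matters to controlling, for $u \in \Ran P_\Lambda$, the quantity $\iint |u(x)|^2 w_N(x-y)|u(y)|^2$. Any such $u$ obeys $\Vert u \Vert_{L^\infty} + \Vert \nabla u\Vert_{L^2} \lesssim \Lambda^{\kappa}$ for some $\kappa = \kappa(d,s)$ by standard Sobolev bounds on low eigenfunctions of $h$, so one gets
\begin{equation*}
\Bigl| \iint |u(x)|^2 w_N(x-y)|u(y)|^2\,dx\,dy - a \int|u|^4\,dx\Bigr| \lesssim \Lambda^\kappa N^{-\beta}.
\end{equation*}
Choosing $\Lambda = \Lambda(N) \to \infty$ so that both $d_\Lambda = o(N)$ and $\Lambda^\kappa N^{-\beta} \to 0$ is possible precisely under~\eqref{eq:GP beta 0}, and yields $\liminf_N E(N)/N \geq \int \Enls[u]\,d\mu(u) \geq \enls$ for $\mu$ any weak-$*$ limit of $\mu_N$.

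For the convergence of states, after extracting a subsequence along which $\gamma_N^{(k)} \to \gamma^{(k)}$ strongly in $\gS^1$ for every $k$ (by the compactness above), the weak de Finetti Theorem~\ref{thm:deF weak} provides a probability measure $\mu$ on the unit ball of $L^2(\R^d)$ such that $\gamma^{(k)} = \int |u^{\otimes k}\rangle\langle u^{\otimes k}|\,d\mu(u)$. Passing to the $\liminf$ in the energy inequality, combined with the already-established energy convergence, forces $\int \Enls[u]\,d\mu(u) = \enls$, so $\mu$ must be supported on $\Mnls$, which lies in the unit sphere; this upgrades the convergence to strong in $\gS^1$ and gives~\eqref{eq:GP state convergence}, with~\eqref{eq:GP BEC} as the case where $\Mnls$ is a singleton modulo phase. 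The main obstacle is the lower bound, specifically the three-way tension between the rank $d_\Lambda \sim \Lambda^{d/2+d/s}$ of the finite-dimensional projection (which must be $o(N)$ to apply Theorem~\ref{thm:deF finite dim}), the Sobolev regularity $\Lambda^\kappa$ of projected eigenfunctions (controlling the replacement of $w_N$ by $a\delta_0$), and the singularity $N^\beta$ of $w_N$. Balancing these constraints fixes exactly the threshold $\beta_0(d,s) = s/(2ds + sd^2 + 2d^2)$ in~\eqref{eq:GP beta 0}; pushing $\beta$ further requires the more refined many-body estimates of Section~\ref{sec:bosons dilute}.
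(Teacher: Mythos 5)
Your proposal follows essentially the same strategy as the paper: energy upper bound by a factorized trial state, lower bound by projecting onto the low spectral subspace of $h=(-i\nabla+A)^2+V$ with dimension controlled by a CLR/semiclassical bound, the quantitative finite-dimensional de Finetti Theorem~\ref{thm:deF finite dim} to reduce the projected interaction to its mean-field form, Sobolev control on projected modes to replace $w_N$ by $a\delta_0$, balancing of the three error sources to produce exactly~\eqref{eq:GP beta 0}, and the weak de Finetti Theorem~\ref{thm:deF weak} plus the a~priori kinetic bound (giving trace-class compactness) to obtain the convergence of states. The only place worth tightening is the claimed Sobolev bound $\|u\|_{L^\infty}+\|\nabla u\|_{L^2}\lesssim\Lambda^\kappa$: to estimate the difference between $\iint|u(x)|^2 w_N(x-y)|u(y)|^2$ and $a\int|u|^4$ at rate $N^{-\beta}$ one needs control on $\nabla|u|^2$ in a stronger norm than $L^2$ (e.g., $L^\infty$ of the gradient, or higher-order Sobolev norms), which is indeed available for $u\in\Ran P_\Lambda$ since $\langle u,h^n u\rangle\leq\Lambda^n\|u\|^2$ for all $n$, but this should be stated correctly as it feeds into the value of $\kappa$ and thence into $\beta_0(d,s)$.
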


\begin{itemize}
 \item Apart from the $d=1$ case (where there is no restriction), the assumption~\eqref{eq:GP beta 0} we make on $\beta$ is rather strong. It can be improved a little bit, see the original paper for a discussion. In any case, the methods of~\cite{LewNamRou-14c} always require $\beta < 1/d$, i.e. a mean-field-like regime.
 \item As previously mentioned, the assumptions on $w$ are close to optimal, thanks to the method of proof based on the quantum de Finetti theorem. In particular, the above result seems to be the first derivation of ground states with attractive interactions. This is particularly tricky in 2D, since such minimizers exist only for small enough values of the effective coupling constant.
\end{itemize}

\section{Dilute limits}\label{sec:bosons dilute}

The next step of the program discussed in this chapter is to consider dilute limits where interactions are rare but strong. In other words: improve the range of $\beta$ obtained in Theorem~\ref{thm:bos nls mf}. This is very important in practical applications: in the cold atoms context, the emergence of BEC in interacting Bose gases has much more to do with diluteness of the quantum gases under consideration than with the weakness of the interactions. 

The mathematical state of the art on this question, prior to the results discussed below, is contained in a seminal series of papers by Lieb, Seiringer and Yngvason~\cite{LieYng-98,LieYng-01,LieSeiYng-00,LieSeiYng-01,LieSeiYng-02b,LieSeiYng-09,LieSeiYng-05,LieSei-02,LieSei-06,Seiringer-03}, parts of which are reviewed in~\cite{LieSeiSolYng-05,Seiringer-ICMP10,Yngvason-10}. In this approach, structural properties of the Hamiltonian play a key role in the proofs. In particular, the interaction potential is required to be positive. As we mentioned previously, in the dilute regime, delicate estimates are required, and we cannot avoid relying on the structure of the Hamiltonian to prove these. Nevertheless, the use of the quantum de Finetti theorem simplifies a lot the proofs. 

In the following we present two results, proved respectively in~\cite{NamRouSei-15} and~\cite{LewNamRou-15}: 
\begin{itemize}
 \item We first revisit the so-called Gross-Pitaevskii limit  ($\beta =1$) for the ground state of a trapped 3D Bose gas. Relying on the quantum de Finetti theorem and a new two-body estimate we simplify the Lieb-Seiringer-Yngvason approach of this problem. Interestingly, the new approach is rather insensitive to the presence of a gauge field $A$, which is a considerable simplification. Indeed, improving the results of~\cite{LieSeiYng-00} to the case of non zero $A$ was a technical tour de force  in~\cite{LieSei-06}.
 \item Next we discuss the case of the attractive 2D Bose gas. Using specific two-body estimates and a bootstrap argument, we can improve the range of $\beta$ available to $\beta < (s+1)/(s+2)$ with $s$ as in~\eqref{eq:GP asum V}. Note that~\cite{LieYng-01,LieSeiYng-01} can go much further in the case of repulsive interactions. The novelty of~\cite{NamRouSei-15} is thus to reach the dilute limit (it is always possible to take $\beta > 1/d = 1/2$ with this method) in the case of the attractive 2D Bose gas.  
\end{itemize}

\subsection{3D case, the Gross-Pitaevskii regime}
In this section we start again from~\eqref{eq:bosons hamil gen}, with $d=3$. The one-body potential $V$ shall be trapping, but we do not need a specific assumption on its growth at infinity:
\begin{align} \label{eq:assumption-V}
0\le V\in L^1_{\rm loc}(\R^3),  \quad \lim_{|x|\to \infty} V(x)=+\infty. 
\end{align}
As for the gauge vector potential $A$, only mild assumptions are required:
\begin{align} \label{eq:assumption-A}
A\in L^3_{\rm loc}(\R^3,\R^3),\quad \lim_{|x|\to \infty} |A(x)| e^{-b|x|}=0
\end{align}
for some constant $b>0$. For simplicity, we assume that the interaction potential $w$ is a fixed function which is non-negative, radial and of finite range, $$\1(|x|>R_0)w(x)  \equiv 0  \mbox{ for some constant } R_0>0.$$
The main point in this section is that we shall take $\beta = 1$. This is essentially the largest relevant value of $\beta$ in 3D, and the result changes qualitatively in this regime, as we now explain.

For $\beta = 1$, the naive approximation 
\begin{equation}\label{eq:w delta}
 N ^{3\beta} w (N ^{\beta} x) \underset{N\to \infty}{\to} \left(\int_{\R ^3} w\right) \delta_0 
\end{equation}
does {\em not} capture the leading order behavior of the physical system. In fact, strong correlations at short distances $O(N^{-1})$ yield a nonlinear correction, which essentially amounts to replacing the coupling constant $\int w$ by $(8\pi) \times $ the scattering 
length of $w$, whose definition we now recall (see~\cite[Appendix~C]{LieSeiSolYng-05}): Under our assumption on $w$, the zero-energy scattering equation 
$$ (-2\Delta+w(x))f(x) = 0, \quad \lim_{|x|\to \infty}f(x)=1,$$
has a unique solution and it satisfies
$$ f(x)=1-\frac{a}{|x|},~~\forall |x|>R_0$$
for some constant $a \ge 0$ which is called the {\em scattering length} of $w$. In particular, if $w$ is the potential for hard spheres, namely $w(x)\equiv \black{+\infty}$ when $|x|<R_0$ and $w(x)\equiv \black{0}$ when $|x|\ge R_0$, then the scattering length of $w$ is exactly $R_0$. In a dilute gas, the scattering length can be interpreted as an effective range of the interaction: a quantum particle far from the others is felt by them as a hard sphere of radius $a$. A useful variational characterization of $a$ is as follows:
\begin{equation}\label{eq:var scat}
8\pi a = \inf\left\lbrace \int_{\R ^3} 2 |\nabla f| ^2 + w |f| ^2, \quad \lim_{|x|\to \infty}f(x)=1 \right\rbrace. 
\end{equation}
Consequently, $8\pi a$ is smaller than $\int w$. Moreover, a simple scaling shows that the scattering length of $w_N=N^2 w(N.)$ is $a/N$, so that the scenario $\beta =1$ corresponds to fixing the scattering length of the interaction potential, the so-called Gross-Pitaevskii limit. The limit problem then consists in a NLS functional where the strength of the interactions is set by the scattering length:
\begin{equation}\label{eq:GP func}
 \cE_{\rm GP}[u]:= \int_{\R ^3} \left| \left( -i\nabla + A \right) u \right| ^2 + 4\pi a \int_{\R^3} |u(x)|^4 dx 
\end{equation} 
Note that this functional is \emph{not} obtained by testing $H_N$ with factorized states of the form~$u ^{\otimes N}$. We shall denote  
\begin{equation}\label{eq:GP ener}
e_{\rm GP} := \inf \left\{ \cE_{\rm GP}[u] , \quad \int_{\R^2} |u| ^2 =1 \right\} 
\end{equation}
the Gross-Pitaevskii ground state energy.

\begin{theorem}[\textbf{Derivation of the Gross-Pitaevskii functional}]\label{thm:cv-GP}\mbox{}\\
Under the above assumptions on $A,V$ and $w$, we have, for $\beta = 1$,
\begin{align} \label{eq:cv-energy}
\frac{E(N)}{N} \to e_{\rm GP} \mbox{ when } N \to \infty. 
\end{align}
Moreover, if $\Psi_N$ is an approximate ground state for $H_N$, namely
$$
\lim_{N\to \infty}  \frac{\langle \Psi_N, H_N \Psi_N\rangle}{N} = e_{\rm GP},
$$
then there exists a Borel probability measure $\mu$ supported on the set of minimizers of $\cE_{\rm GP}(u)$ such that, along a subsequence, 
\begin{equation}\label{eq:GP state convergence bis}
\lim_{N\to \infty}\gamma_N ^{(k)} = \int d\mu (u) |u ^{\otimes n} \rangle \langle u ^{\otimes n}|,\quad \forall k\in \mathbb{N} 
\end{equation}
strongly in  $\gS ^1 (L ^2 (\R ^{3k}))$ where $\gamma_{N}^{(k)}$ is the $k$-particle  reduced density matrix of $\Psi_{N}$. In particular, if $\cE_{\rm GP}(u)$ subject to $\|u\|_{L^2}= 1$ has a unique minimizer $u_0$ (up to a complex phase), then there is complete Bose-Einstein condensation
\begin{align} \label{eq:BEC}
\lim_{N\to \infty}\Tr \left| \gamma_{N}^{(k)} - |u_0^{\otimes k} \rangle \langle u_0^{\otimes k}| \right| =0,\quad \forall k\in \mathbb{N}.
\end{align}
\end{theorem}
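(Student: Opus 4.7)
The plan is to prove the energy convergence $N^{-1} E(N) \to e_{\rm GP}$ with matching upper and lower bounds, and then deduce the state convergence from a compactness argument based on the weak quantum de Finetti theorem (Theorem~\ref{thm:deF weak}) combined with the energy asymptotics. For the upper bound, I would test the many-body energy on a Jastrow-type trial state of the form
\[
\Psi_N^{\rm trial}(x_1,\ldots,x_N) = \prod_{1\leq i<j\leq N} f\bigl(N(x_i-x_j)\bigr)\,\prod_{j=1}^N u_0(x_j),
\]
where $u_0$ is a minimizer of $\cE_{\rm GP}$ (existence of which can be shown in the same way as for the NLS problem in Section~\ref{sec:weak NLS}) and $f$ is a suitably truncated version of the zero-energy scattering solution of $w$. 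A direct computation using the variational characterization~\eqref{eq:var scat} and the fact that $1-f$ is short-ranged (localized at scale $N^{-1}\ll N^{-1/3}$) yields $N^{-1}\langle \Psi_N^{\rm trial}, H_N \Psi_N^{\rm trial}\rangle \to e_{\rm GP}$. This part is essentially standard and works regardless of the magnetic field $A$.

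The hard part, as expected, is the lower bound. The key difficulty is that $\beta=1$ is not a mean-field regime: the naive replacement~\eqref{eq:w delta} gives the wrong coupling $\int w$ in place of $8\pi a$, because strong two-body correlations at scale $N^{-1}$ lower the interaction energy by a nontrivial amount. To bypass this, I would split a small fraction of the kinetic energy, $H_N = H_N^{\rm kin,\epsilon} + H_N^{\rm rem}$, with $H_N^{\rm kin,\epsilon} = \epsilon \sum_j (-i\nabla_j + A(x_j))^2$, and apply a Dyson-type lemma (a gauge-invariant version, since $(-i\nabla+A)^2 \geq (1-\delta)(-\Delta)$ via a diamagnetic-type trick localized on small balls) to bound
\[
\epsilon \sum_{j=1}^N \bigl(-i\nabla_j + A(x_j)\bigr)^2 + \sum_{i<j} w_N(x_i-x_j) \;\geq\; \sum_{i<j} U_R(x_i-x_j),
\]
where $U_R$ is a soft, longer-range potential (range $R$ with $N^{-1}\ll R \ll N^{-1/3}$) whose integral is arbitrarily close to $8\pi a$. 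After this step, the modified Hamiltonian has an interaction potential that, by the choice of $R=N^{-\beta'}$ with $\beta'<1$, falls within a weak NLS regime in the spirit of Theorem~\ref{thm:bos nls mf}.

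On the softened Hamiltonian I would then apply the quantum de Finetti strategy: project onto the low-energy eigenstates of $(-i\nabla + A)^2 + V$ (whose number is controlled via a magnetic Lieb--Thirring/CLR-type inequality, thanks to assumptions~\eqref{eq:assumption-V}-\eqref{eq:assumption-A}), use the quantitative finite-dimensional de Finetti theorem (Theorem~\ref{thm:deF finite dim}) to factorize the two-body interaction, and optimize the resulting one-body functional to obtain a lower bound converging to $e_{\rm GP}$. Letting $\epsilon\to 0$, $R\to 0$ and $N\to\infty$ in the correct order yields $\liminf_N N^{-1} E(N) \geq e_{\rm GP}$. Once the energy convergence is established, we pick a sequence of approximate minimizers $\Psi_N$, extract weak-$*$ limits of all reduced density matrices $\gamma_N^{(k)}$, apply Theorem~\ref{thm:deF weak} to obtain a de Finetti measure $\mu$ on $B L^2(\R^3)$, and pass the energy lower bound through the de Finetti representation to force $\mu$ to be supported on minimizers of $\cE_{\rm GP}$ (in particular $\mu$ lives on the unit sphere, so convergence in~\eqref{eq:GP state convergence bis} is strong); uniqueness of the GP minimizer yields~\eqref{eq:BEC}. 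The main obstacle throughout is the tight quantitative interplay between the Dyson localization scale, the fraction $\epsilon$ of kinetic energy used up, and the dimension of the low-energy subspace used in the de Finetti estimate~\eqref{eq:error finite dim deF}, all of which must be tuned simultaneously to beat the $\beta=1$ scaling.
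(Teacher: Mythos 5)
Your high-level skeleton (upper bound by a Jastrow trial function, lower bound via a Dyson lemma followed by a quantum de Finetti argument) matches the paper's plan, but the crucial middle step is missing and, as written, would not close. After applying Dyson's lemma with sacrifice parameter $\epsilon$ and smoothing scale $R=N^{-\beta'}$, two constraints pull against each other: you correctly require $R\gg N^{-1}$ for the lemma to capture the scattering length, and you also require $R\ll N^{-1/3}$ so that the error coming from configurations where three or more particles lie within distance $R$ of one another is controllable. This pins $\beta'$ strictly in $(1/3,1)$, i.e.\ the softened Hamiltonian is still in the dilute (``rare but strong collisions'') regime, \emph{not} in the weak NLS regime covered by Theorem~\ref{thm:bos nls mf}, which demands $\beta < \beta_0(d,s)< 1/d = 1/3$. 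So the claim that ``the modified Hamiltonian \ldots\ falls within a weak NLS regime in the spirit of Theorem~\ref{thm:bos nls mf}'' contradicts your own constraint $R\ll N^{-1/3}$ and is the gap in the proof.

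You also do not address the genuine three-body term that Dyson's lemma produces (it is one of the two difficulties the paper singles out): this term cannot simply be dropped by a choice of scales, it has to be estimated against something. The paper's key new ingredient, which resolves \emph{both} problems at once, is a second-moment a priori bound of the form $\langle \tilde{\Psi}_N, (-\Delta)_1(-\Delta)_2\,\tilde{\Psi}_N\rangle \le C$ uniformly in $N$, obtained from the variational equation for the Dyson Hamiltonian. This bound is used (i) to control the three-body contribution (replacing the path-integral argument of Lieb--Seiringer for $A\neq 0$) and (ii) to extend the low-kinetic-energy projection plus quantitative de Finetti strategy of~\cite{LewNamRou-14c} into the regime $\beta'>1/3$, by giving strong control on the error incurred when projecting onto finitely many modes. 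Without an estimate of this kind your argument cannot be pushed past $\beta'\approx 1/3$, and the lower bound does not follow. The convergence of states then goes, as you suggest, by passing to de Finetti limits, but it is organized via a Feynman--Hellmann argument (perturbing $H_N$ by a bounded $k$-body operator and redoing the energy bound) rather than by directly extracting weak-$*$ limits and invoking Theorem~\ref{thm:deF weak}, which lets one avoid separately proving that mass is not lost.
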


As we mentioned previously, the result in itself is not new (only~\eqref{eq:GP state convergence bis} is for $k\geq 2$): it was first proved in~\cite{LieSeiYng-00} in the case where $A\equiv 0$ and then extended to non-zero $A$ in~\cite{LieSei-06}. Both proofs, in particular that for $A\neq 0$, are pretty difficult, and the main contribution in~\cite{NamRouSei-15} is to provide substantial simplifications. We shall discuss only the energy lower bound and convergence of states proofs, the energy upper bound being taken directly from~\cite{LieSeiYng-00,Seiringer-03}.

The main difficulty in dealing with the GP limit is that an ansatz $u ^{\otimes N}$ does \emph{not} give the correct energy asymptotics. In this regime, correlations between particles \emph{do} matter, and one should rather think of an ansatz of the form 
\begin{equation}\label{eq:GP ansatz}
 \prod_{i=1} ^N u(x_i) \prod_{1\leq i< j \leq N} f (x_i-x_i), 
\end{equation}
or a close variant, where $f$ is linked to the two-body scattering process. In the time-dependent literature, one may find different ways of taking this most crucial fact into account, see~\cite{ErdSchYau-09,ErdSchYau-10,BenOliSch-12,Pickl-15}. For the ground state problem we consider here, there seems to exist only one approach, relying on an idea due to Dyson~\cite{Dyson-57}, generalized several times as need arose~\cite{LieYng-98,LieSeiSol-05,LieSei-06,LieSeiSolYng-05}. The trick is to bound the Hamiltonian from below by an effective one which is much less singular (roughly, has a smaller $\beta$), but still encodes the scattering length of the original interaction potential. The philosophy is that if $\Psi_N$ is the ground state of the original Hamiltonian, then roughly 
$$\Psi_N \approx \tilde{\Psi}_N \prod_{1\leq i< j \leq N} f (x_i-x_i)$$
where $f$ encodes the two-body scattering process and $\tilde{\Psi}_N$ is a ground state for the effective Dyson Hamiltonian. 

The next task is then to justify the mean-field approximation  for the ground state $\tilde{\Psi}_N$ of the Dyson Hamiltonian. There are two key difficulties:
\begin{itemize}
\item  The effective Hamiltonian has a three-body interaction term. Obviously one has to show that it can be neglected, e.g. by showing that the probability of having three particles close by is small. 
\item Although the Dyson Hamiltonian has a smaller $\beta$ than the original one, we still have to deal with the mean-field approximation in the ``rare but strong collisions'' limit if we are to be able to neglect three-body interactions. In other words, even though the effective Hamiltonian is much less singular than the original one, we do not have the freedom to reduce the singularity as much as we would like. 
\end{itemize}

It is in treating these two difficulties that our new method significantly departs from the previous works~\cite{LewNamRou-14c,LieSei-06}. We shall rely on a  strong a priori estimate for ground states of the Dyson Hamiltonian. Roughly speaking, we use the variational equation to prove a bound of the form 
\begin{align} \label{eq:intro-h1h2}
\langle \tilde{\Psi}_N, (-\Delta)_1 (-\Delta)_2 \tilde{\Psi}_N \rangle \le C
\end{align}
independently of $N$. This second moment  estimate is the key point in our analysis. It is reminiscent of similar estimates used in the literature for the time-dependent problem~\cite{ErdSchYau-07,ErdSchYau-09,ErdSchYau-10,ErdYau-01}. The use of Dyson's lemma is crucial here: one should not expect~\eqref{eq:intro-h1h2} to hold for the ground state of the original Hamiltonian.

We use~\eqref{eq:intro-h1h2} for two distinct purposes:
\begin{itemize}
 \item Control three-body contributions in the Dyson Hamiltonian, showing that they can safely be neglected. In~\cite{LieSei-06} this was achieved by a complicated argument based on path integrals.
 \item Improve the strategy of~\cite{LewNamRou-14c} (discussed in Section~\ref{sec:weak NLS}) to extend it to the ``rare but strong collisions'' regime. Indeed, although much less singular than the full model, the Dyson Hamiltonian still is in this regime. The two-body estimate~\eqref{eq:intro-h1h2} gives a very strong control on the errors made by projecting the Dyson Hamiltonian onto low kinetic energy modes.  
\end{itemize}
Finally, the use of the quantum de Finetti theorem as a basic building block of the proof gives a simplified interpretation of the emergence of the effective one-body energy. Moreover, it simplifies drastically the convergence of states proof, which goes via a Feynman-Hellmann-type argument. That is, by applying the whole strategy to an auxiliary Hamiltonian with a smooth, small, $k$-body say, perturbation and ultimately taking the size of the perturbation to $0$.

\subsection{2D case, attractive interactions} 

As a last instance of Statement~\ref{thm:bosons meta}, we consider the dilute limit for 2D particles with attractive interactions: 
$$d=2,\: \beta > 1/2, \: \int_{\R ^2} w \leq 0.$$
As alluded to in Section~\ref{sec:weak NLS}, this is the most delicate case in so far as the interplay between quantum kinetic energy and pair interactions is concerned. Indeed, for $d\geq 3$ and $\beta >0$, attractive interactions generically lead to a collapse of the ground state, while for $d=1$, the system is stable for any interaction strength. In 2D, the NLS functional is stable only for small enough values of the coupling constant, which corresponds (beware the scaling of the coupling constant in~\eqref{eq:bosons hamil gen}) to the existence of a critical mass for stability of the many-body system. This mass-critical scenario at the level of the ground state has a natural extension to the dynamical problem, where global existence versus finite time blow-up are strongly conditioned by the mass of the initial datum (see~\cite{
KilVis-08,Tao-06} and references therein).  

As in Section~\ref{sec:weak NLS}, our results depend on the growth at infinity of the one-body potential $V$. We assume that
\begin{align} \label{eq:assumption-V again}
V \in L^1_{\rm loc}(\R^2,\R), \quad A \in L^2_{\rm loc}(\R^2,\R^2)\quad \text{and} \quad V(x)\ge C^{-1} (|A(x)|^2 +|x|^s) - C.
\end{align}
To have a stable system we make an almost optimal assumption on the interactions potential, similarly as in Section~\ref{sec:weak NLS}: 
\begin{align} \label{eq:assumption-w2}
\inf_{u\in H^1(\R^2)}\left(\frac{\displaystyle\iint_{\R^2\times \R ^2}|u(x)|^2|u(y)|^2w(x-y)\,dx\,dy}{2\norm{u}_{L^2(\R^2)}^2\norm{\nabla u}_{L^2(\R^2)}^2}\right)>-1
\end{align} 
It is easy to see that this implies $\int_{\R ^2} w >-a^*$ with $a^*$ the critical coupling constant~\eqref{eq:GP a star} for existence of a NLS ground state. Thus the ground state energy 
\begin{equation}\label{eq:bos nls mf gse 2D}
\enls := \inf\left\{ \Enls [u],\quad \int_{\R ^2} |u| ^2 = 1  \right\}. 
\end{equation}
is well-defined, where
\begin{equation}\label{eq:bos nls 2D}
\Enls [u] := \int_{\R ^2 } \left| \left( -i\nabla + A\right)u \right| ^2 + V |u| ^2 + \frac{a}{2} \int_{\R ^2 } |u(x)| ^4   
\end{equation}
with
$$ a = \int_{\R ^2} w. $$
A simple condition implying~\eqref{eq:assumption-w2} is $\int_{\R ^2} |w_-| \leq a ^*$. 

The results of~\cite{LewNamRou-14c} allow to derive the above limit problem when $N\to \infty$, provided 
$$0 < \beta < \beta_0 (s) := \frac{s}{4(s+1)}.$$
In~\cite{LewNamRou-15} we extend this range to 
\begin{equation}\label{eq:beta condition}
0 < \beta < \beta_1 (s):= \frac{s+1}{s+2}.
\end{equation}
This is a qualitative improvement: while $\beta_0(s) <1/2$, we have $\beta_1 (s) > 1/2$. We can thus treat a dilute limit where interactions are rare but strong:

\begin{theorem}[\textbf{Attractive dilute 2D Bose gases}]\label{thm:cv-nls}\mbox{}\\
Assume that $V$, $A$, $w$ satisfy the above. Then, for every $0<\beta<(s+1)/(s+2)$, 
\begin{align} \label{eq:thm-cv-GSE}
\lim_{N\to \infty} e_N = e_{\rm nls} >-\infty. 
\end{align}
Moreover, for any ground state $\Psi_N$ of $H_N$, there exists a Borel probability measure $\mu$ supported on the ground states of $\cE_{\rm nls}(u)$ such that, along a subsequence, 
\begin{align} \label{eq:thm-cv-DM}
\lim_{N \to \infty}\Tr \left| \gamma_{\Psi _{N}}^{(k)} - \int |u^{\otimes k} \rangle \langle u^{\otimes k}| d\mu(u) \right| =0,\quad \forall k\in \mathbb{N}.
\end{align}
If $\cE_{\rm nls}(u)$ has a unique minimizer $u_0$ (up to a phase), then for the whole sequence
\begin{align} \label{eq:thm-BEC}
\lim_{N \to \infty}\Tr \left| \gamma_{\Psi_{N} }^{(k)} - |u_0^{\otimes k} \rangle \langle u_0^{\otimes k}|  \right| =0,\quad \forall k\in \mathbb{N}.
\end{align}
\end{theorem}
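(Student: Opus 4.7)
\medskip

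\noindent\textbf{Proof plan for Theorem~\ref{thm:cv-nls}.}

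The overall strategy follows the meta-Theorem~\ref{thm:bosons meta}: obtain a matching upper and lower bound for $E(N)/N$, then extract a convergence statement for reduced density matrices via a Feynman-Hellmann type argument. For the \emph{upper bound}, I would use a trial state of the form $u_{\rm nls}^{\otimes N}$ where $u_{\rm nls}$ minimizes $\cE_{\rm nls}$, and exploit that the interaction potential is integrable, plus the assumption~\eqref{eq:replace delta}-like bound, to verify that $\iint |u_{\rm nls}(x)|^2 w_{\beta,N}(x-y)|u_{\rm nls}(y)|^2\,dx\,dy \to a\int |u_{\rm nls}|^4$ as $N\to\infty$. The hard part is, as usual, the \emph{lower bound}, which I would obtain by a bootstrap argument on $\beta$, using the result of Theorem~\ref{thm:bos nls mf} (valid for $0<\beta<\beta_0(s) = s/(4s+4)$) as the base case.

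The bootstrap's driving mechanism should be a specific a priori two-body estimate: provided the result of the theorem is known up to some $\beta_{\rm prev}$, I would like to prove that any sequence $(\Psi_N)$ of approximate ground states satisfies
\begin{equation}\label{eq:plan-2body}
\bigl\langle \Psi_N,\, (1-\Delta_1)(1-\Delta_2)\, \Psi_N\bigr\rangle \le C\, N^{2\beta_{\rm prev}\text{-like exponent}},
\end{equation}
which quantitatively captures that two particles are unlikely to be simultaneously localized at a short scale. The proof of such a bound should use the many-body Schr\"odinger variational inequality applied to commutators of $H_N$ with cutoff-type operators, together with the mass-subcritical control afforded by assumption~\eqref{eq:assumption-w2} (note the analogy with the 3D Gross-Pitaevskii estimate~\eqref{eq:intro-h1h2}, but with a $\beta$-dependent right-hand side in the 2D attractive case). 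With~\eqref{eq:plan-2body} in hand, I can replace $w_{\beta,N}(x_i-x_j)$ by a smoothed version $w_{\beta',N}$ at a larger scale $N^{-\beta'}$ with $\beta'<\beta$, incurring an error controlled by the two-body estimate times $N^{2(\beta-\beta')}$; the freedom in choosing $\beta'$ allows one to push the allowed range of $\beta$ up each time the hypothesis is improved. Iterating and keeping track of the exponents should yield the claimed threshold $(s+1)/(s+2)$, with the $s$-dependence entering through a Lieb-Thirring-type control on the number of low-lying one-body modes, as in~\cite{LewNamRou-14c}.

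Once the regularized interaction is at scale $N^{-\beta'}$ with $\beta'<1/d=1/2$, I would conclude by the quantum de Finetti strategy of Section~\ref{sec:weak NLS}: project onto the spectral subspace of $-\Delta+V$ below an energy cutoff $\Lambda_N \to \infty$ (using~\eqref{eq:assumption-V again} to bound the projection error), apply the finite-dimensional quantitative quantum de Finetti Theorem~\ref{thm:deF finite dim} to the projected state, pass to the limit $N\to\infty$ with $\Lambda_N$ growing slowly enough, and recognize the resulting limit as an integral of $\Enls$ energies against a Borel probability measure $\mu$ on the unit sphere of $L^2(\R^2)$. Stability~\eqref{eq:assumption-w2} ensures the lower bound does not diverge to $-\infty$ in the limit. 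Convergence of reduced density matrices follows from perturbing the Hamiltonian by a small, smooth, bounded $k$-body operator $\varepsilon\, T_k$, running the whole argument again to get energy convergence to $\inf_\mu \bigl(\int\Enls\,d\mu + \varepsilon\int\langle u^{\otimes k}, T_k\, u^{\otimes k}\rangle\, d\mu\bigr)$, then differentiating in $\varepsilon$ at $0$; the trace-norm (rather than just weak-$\ast$) convergence in~\eqref{eq:thm-cv-DM} follows from the fact that $V$ is trapping, hence $\mu$ is supported on the $L^2$-unit sphere.

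The main obstacle will be establishing the two-body estimate~\eqref{eq:plan-2body} with a constant that is $o(N^{2(\beta'-\beta_{\rm prev})})$ for the relevant intermediate $\beta'$: because interactions are attractive, there is no Dyson-type regularization trick that lowers $\beta$ while preserving a non-negative energy from below, so one must genuinely control the possible concentration of pairs using the mass-subcritical gap in~\eqref{eq:assumption-w2}. Making this gap quantitative---and translating it into a two-body kinetic bound uniform in $N$ up to the desired exponent---is where the entirety of the new input, as compared to the 3D repulsive treatment in~\cite{NamRouSei-15} and the weak-NLS treatment in~\cite{LewNamRou-14c}, has to come from.
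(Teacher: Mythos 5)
Your overall architecture is sound and matches what the paper does: a bootstrap on $\beta$ driven by one- and two-body moment estimates, fed into the quantum de Finetti machinery of Section~\ref{sec:weak NLS}, with the $s$-dependence of the threshold $(s+1)/(s+2)$ entering through a Lieb-Thirring-type count of low-lying modes, and the state convergence extracted by Feynman-Hellmann. But the details diverge in ways that matter.

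The form of the two-body moment estimate is not quite right. You postulate $\langle \Psi_N, (1-\Delta_1)(1-\Delta_2)\Psi_N\rangle \le C N^{2\beta_{\rm prev}\text{-like exponent}}$, a bound that \emph{grows} with $N$ and whose exponent you then have to chase through the iteration. The paper's estimates, read off from the variational equation $H_N\Psi_N=E(N)\Psi_N$ together with the bosonic symmetry and the stability assumption~\eqref{eq:assumption-w2}, are of the form
\begin{align*}
\Tr\left[ -\Delta \gamma_{N}^{(1)}\right] &\leq C \left(1 + N ^{-1}|E(N)|\right), \\
\Tr \left[(-\Delta_1)(-\Delta_2) \gamma_{N}^{(2)}\right] &\leq C \left( 1 + N ^{-1}|E(N)| \right) ^2.
\end{align*}
They are \emph{conditional on the ground-state energy per particle}, and become genuine $O(1)$ bounds the instant stability of the second kind $E(N)\geq -CN$ is available. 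This is exactly what makes the bootstrap close cleanly: one inducts on stability of the second kind, not on the full theorem, and once stability is in hand at a given $\beta$, the kinetic-energy control is uniform and the de Finetti argument applies with no slippery $N$-powers to track. Your growing-in-$N$ bound is strictly weaker than what the variational equation actually delivers given stability, and it would make the iteration you sketch harder, not easier.

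This also changes the base case. You propose to start from Theorem~\ref{thm:bos nls mf}, which holds only for $\beta<\beta_0(s)=s/(4(s+1))<1/4$. The paper instead seeds the induction with the \emph{weaker} statement of stability of the second kind, but in the full mean-field range $\beta<1/2$, obtained by the direct operator methods of~\cite{Lewin-ICMP}. Because the moment estimates only need $|E(N)|/N$ bounded and not the full de Finetti conclusion, this wider but weaker base is the natural and more efficient starting point. Finally, the proof of the moment estimates does not require the commutator/cutoff machinery you invoke from the time-dependent literature; it is a rather direct consequence of expanding the eigenvalue equation against one- and two-body kinetic operators, using the interaction's structure and~\eqref{eq:assumption-w2} to absorb the cross terms.
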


\begin{itemize}
\item The estimates leading to the above result answer a question raised by Xuwen Chen and Justin Holmer in their derivation of focusing time-dependent NLS equations~\cite{CheHol-13,CheHol-15}. In the case of the harmonic trap $s=2$, one can combine their BBGKY-type\footnote{That is, based on the Bogoliubov-Born-Green-Kirkwood-Yvon hierarchy of evolution equations for reduced density matrices.} approach and the results of~\cite{LewNamRou-15} to derive the time-dependent 2D focusing NLS equation in the regime $\beta < 3/4$. If $\beta<1/2$, the next order correction to the 2D focusing quantum dynamics can be obtained using the Bogoliubov approach~\cite{LewNamSch-14,NamNap-15,NamNap-16}  (see \cite{BocCenSch-15} for the defocusing case). 
\item The basic strategy of proof is the same as in~\cite{LewNamRou-14c}, see Section~\ref{sec:weak NLS}: projection onto low-lying energy eigenstates of the one-body Hamiltonian and use of the finite dimensional quantum de Finetti theorem~\ref{thm:deF finite dim}.
\item The main novelty allowing to reach larger values of $\beta$ is the use of a priori moments estimates having the same flavor as~\eqref{eq:intro-h1h2}. Roughly speaking, the variational equation for ground states of~\eqref{eq:bosons hamil gen} implies 
\begin{align}\label{eq:one-two body bound}
\Tr\left[ -\Delta \gamma_{N}^{(1)}\right] &\leq C \left(1 + N ^{-1}|E(N)|\right) \nonumber\\
\Tr \left[(-\Delta_1)(-\Delta_2) \gamma_{N}^{(2)}\right] &\leq C \left( 1 + N ^{-1}|E(N)| \right) ^2.
\end{align}
\item Note that the above estimates are not truly a priori bounds since they involve the many-body ground-state energy $E(N)$ in the right-hand side. The ideal case would be to know stability of the second kind:  $E(N) \geq - CN$. Of course, this is far from obvious: it should be the conclusion and not the premise. To put~\eqref{eq:one-two body bound} to good use, the starting point is an energy lower bound obtained in~\cite{Lewin-ICMP} by the operator methods alluded to in Section~\ref{sec:bosons intro}. This yields stability of the second kind for the mean-field regime $\beta < 1/2$ and may be combined with~\eqref{eq:one-two body bound} and the quantum de Finetti theorem in an induction procedure leading to stability for $\beta < \frac{s+1}{s+2}$.  

\end{itemize}

\chapter{Bosonic thermal states and non-linear Gibbs measures}\label{sec:Gibbs}

We now move one step further in the description of large bosonic systems by considering equilibrium states at positive temperature, with the upshot of gaining a better understanding of the BEC phase transition. Here we present joint results with M. Lewin and P.T. Nam~\cite{LewNamRou-14d}, following mainly the expository note~\cite{LewNamRou-ICMP}. We refer to~\cite{Rougerie-xedp15} for another look at the main results of~\cite{LewNamRou-14d}, with an emphasis on their relation to constructive quantum field theory~\cite{DerGer-13,GliJaf-87,LorHirBet-11,Simon-74,Summers-12,VelWig-73} and the use of invariant measures in the study of non-linear dispersive partial differential equations~\cite{LebRosSpe-88,Bourgain-94,Bourgain-96,Bourgain-97,Bourgain-00,Tzvetkov-08,BurTzv-08,BurThoTzv-10,ThoTzv-10,CacSuz-14}.

Recall that full Bose-Einstein condensation is the phenomenon that, below a certain critical temperature $T_c$, (almost) all particles of a bosonic system must reside in a single quantum state of low energy. Ideally one would like to prove the existence of such a temperature and provide an estimate thereof. For an interacting gas this has so far remained out of reach. In particular, for the homogeneous Bose gas, there is still no proof of Bose-Einstein condensation in the thermodynamic limit, even in the ground state. However, this might not be the main question of interest for the description of cold atoms experiments. Indeed, those are performed in magneto-optic traps, which set a fixed length scale to the problem. The gases in which BEC is observed are thus \emph{not} homogeneous, and the thermodynamic limit is not the most physically relevant regime in this context. 

A lot of progress has been achieved in recent years by considering different scaling regimes, more adapted to the case of inhomogeneous systems, e.g. the mean-field and the Gross-Pitaevskii limits. For trapped systems, BEC in the ground state and its propagation by the many-body Schr\"odinger dynamics is now fairly well understood (see~\cite{BenPorSch-15,Golse-13,Lewin-ICMP,LieSeiSolYng-05,Rougerie-LMU,Rougerie-cdf,Schlein-08,Seiringer-ICMP10} and Chapter~\ref{sec:bosons GS} for reviews). Even in these somewhat more wieldy regimes, very little is known about positive temperature equilibrium states. In particular, an estimate of the critical temperature (or, more generally, temperature regime) is lacking.

In this chapter, we discuss some of our recent results~\cite{LewNamRou-14d}, in the perspective of the previously mentioned issues. The main idea is to relate, in a certain limit, Gibbs states of large bosonic systems to non-linear Gibbs measures built on the associated mean-field functionals. This is a semi-classical method, where the BEC phenomenon can be recast in the context of a classical field theory. Although, as far as the study of BEC is concerned, our results are rather partial, it is our hope that our methods might in the future help to shed some light on the physics we just discussed.

\section{The grand-canonical ensemble}\label{sec:GC ens}

We consider $N$ bosons living in $\R^d$ and work in the grand-canonical ensemble. Let thus $\gH = L ^2 (\R ^d)$, 
$$\gH ^N = \bigotimes_{\rm sym} ^N \gH \simeq L_{\rm sym} ^2 (\R^{Nd})$$
be the symmetric $N$-fold tensor product appropriate for bosons and
\begin{align*}
\cF &= \C \oplus \gH \oplus \gH ^{2} \oplus \ldots \oplus \gH ^{N} \oplus \ldots\\
\cF &= \C \oplus L ^2 (\R ^d) \oplus L_\sym ^2 (\R ^{2d})  \oplus \ldots \oplus L_\sym ^2 (\R ^{Nd}) \oplus \ldots 
\end{align*}
be the bosonic Fock space. We are interested in the positive temperature equilibrium states of the second-quantized Hamiltonian $\mathbb{H}_\lambda$ defined as 
$$\mathbb{H}_\lambda = \mathbb{H}_0 + \lambda \mathbb{W} = \bigoplus_{n=1} ^\infty H_{n,\lambda},$$
with\footnote{We do not consider a gauge field in this discussion. This is for simplicity only, cf the original paper.}
\begin{align*}
\mathbb{H}_0 &= \bigoplus_{n= 1} ^\infty \left( \sum_{j=1} ^n h_j\right) = \bigoplus_{n= 1} ^\infty \left( \sum_{j=1} ^n -\Delta_j + V (x_j) - \nu \right)\\
\mathbb{W} &= \bigoplus_{n= 2} ^\infty \left( \sum_{1 \leq i<j \leq n}  w_{ij} \right).
%= \bigoplus_{n= 2} ^\infty \left( \sum_{1 \leq i<j \leq n}  w (x_i-x_j) \right)
\end{align*}
Here $\nu\in \R$ is a chemical potential, $V$ is a trapping potential, i.e.
$$ V(x) \to + \infty \mbox{ when } |x| \to \infty$$
and $w$ is a \emph{positive}, symmetric, self-adjoint operator on $\gH ^2$. The methods of~\cite{LewNamRou-14d} are limited to rather smooth repulsive interactions, thus $w$ will in general \emph{not} be a multiplication operator. We shall comment on this issue below but, for the time being, think of a finite-rank $w$ with smooth eigenvectors, corresponding to a regularization of a physical interaction. 

We are interested in the asymptotic behavior of the grand-canonical Gibbs state at temperature $T$
\begin{equation}\label{eq:GC Gibbs}
 \Gamma_{\lambda,T} = \frac{1}{Z_\lambda (T)}\exp\left( - \frac1T \mathbb{H}_\lambda\right). 
\end{equation}
The partition function $Z_\lambda (T)$ fixes the trace equal to $1$ and satisfies 
\begin{equation}\label{eq:GC part}
 - T \log Z_\lambda (T)  = F_\lambda (T) 
\end{equation}
where $F_\lambda (T)$ is the infimum of the free energy functional 
\begin{equation}\label{eq:free ener GC}
\F_{\lambda} [\Gamma] = \tr_{\cF} \left[ \mathbb{H}_\lambda  \Gamma \right] + T \tr_{\cF} \left[ \Gamma \log \Gamma \right] 
\end{equation}
over all grand-canonical states (trace-class self-adjoint operators on $\cF$ with trace~$1$). It turns out that an interesting limiting behavior emerges in the regime 
\begin{equation}\label{eq:regime}
 T \to \infty, \quad \lambda = \frac{1}{T}, \quad \nu \mbox{ fixed,} 
\end{equation}
provided one makes the following assumptions:

\begin{assumption}[\textbf{One-body hamiltonian}]\label{asum h}\mbox{}\\
We assume that, as an operator on $\gH = L^2 (\R ^d)$, 
$$ h:= -\Delta + V - \nu > 0$$ 
and that there exists  $p>0$ such that 
\begin{equation}\label{eq:asum trace}
\tr_{\gH} h ^{-p} < \infty. 
\end{equation}
\end{assumption}

Note that one can always find a $p$ such that~\eqref{eq:asum trace} holds. The easiest case, for which our results are the most satisfying, is that where one can take $p=1$ (refered to as the trace-class case). This happens only in 1D and if the trapping potential grows sufficiently fast at infinity, i.e.
\begin{equation}\label{eq:anharm oscill}
 h = -\frac{d^2}{dx^2} + |x| ^a - \nu
\end{equation}
with $\nu$ small enough and $a>2$.

\begin{assumption}[\textbf{Interaction term}]\label{asum w}\mbox{}\\
We pick $w$ a positive self-adjoint operator on $\gH ^2$ and distinguish two cases: 
\begin{itemize}
 \item either one can take $p=1$ in~\eqref{eq:asum trace} and then we assume 
\begin{equation}\label{eq:asum w 1}
 \tr_{\gH ^2} \left[ w\, h ^{-1} \otimes h ^{-1} \right] < \infty
\end{equation}
\item or $p>1$ in~\eqref{eq:asum trace} and we make the stronger assumption that
\begin{equation}\label{eq:asum w 2}
 0 \leq w \leq h ^{1-p'} \otimes h ^{1-p'}
\end{equation}
for some $p'>p$.
\end{itemize}
\end{assumption}

In essence, these (rather restrictive) assumptions correspond to asking that the non-interacting Gibbs state has a well-controlled interaction energy. Indeed, one can compute that its two-body density matrix behaves as $T^2 h ^{-1} \otimes h ^{-1}$ in the limit $T\to \infty$. In the 1D case~\eqref{eq:anharm oscill} where $p=1$, we can take for example $w$ a multiplication operator, e.g. by a bounded function $w(x-y)$.  The assumption we make when $p>1$ does not cover such operators.

\section{Non-linear Gibbs measures}\label{sec:NL Gibbs}

The natural limiting object in the setting we just described turns out to be the non-linear Gibbs measure on one-body quantum states given formally by 
 \begin{equation}
d\mu(u)=\frac{1}{Z}e^{-\cE[u]}\,du,
\label{eq:mu_intro}
 \end{equation}
where $Z$ is a partition function and $\cE [u]$ is the mean-field energy functional
\begin{equation}\label{eq:MF func}
\cE [u] := \left\langle u | -\Delta + V - \nu | u \right\rangle_{\gH} + \frac{1}{2} \left\langle u \otimes u | w | u \otimes u \right\rangle_{\gH^2}.
\end{equation}
The rigorous meaning of~\eqref{eq:mu_intro} is given by the two following standard lemmas/definitions:

\begin{lemma}[\textbf{Free Gibbs measure}]\label{lem:free}\mbox{}\\
Write
$$ h = -\Delta + V - \nu = \sum_{j = 1} ^{\infty} \lambda_j |u_j\rangle \langle u_j|$$
and define the associated scale of Sobolev-like spaces 
\begin{equation}
\gH^s:=D(h^{s/2})=\bigg\{u=\sum_{j\geq1} \alpha_j\,u_j\ :\ \|u\|_{\gH^{s}}^2:=\sum_{j\geq1}\lambda_j^s|\alpha_j|^2<\ii\bigg\}.
\end{equation}
Define a finite dimensional measure on $\mathrm{span} (u_1,\ldots,u_K)$ by setting
$$d\mu_0 ^K (u) := \bigotimes_{j=1} ^K \frac{\lambda_j}{\pi} \exp\left( -\lambda_j |\langle u, u_j\rangle | ^2\right) d \langle u, u_j\rangle$$
where $d\langle u, u_j\rangle = da_j db_j$ and $a_j,b_j$ are the real and imaginary parts of the scalar product.

Let $p>0$ be such that~\eqref{eq:asum trace} holds. Then there exists a unique measure $\mu_0$ over the space $\gH ^{1-p}$ such that, for all $K >0$, the above finite dimensional measure $\mu_{0,K}$ is the cylindrical projection of $\mu_0$ on $\mathrm{span} (u_1,\ldots,u_K)$. 
Moreover 
\begin{equation}\label{eq:DM free meas}
\gamma_0^{(k)}:=\int_{\gH^{1-p}} |u^{\otimes k}\rangle\langle u^{\otimes k}|\;d\mu_0(u) = k!\,(h^{-1})^{\otimes k} 
\end{equation}
where this is seen as an operator acting on $\bigotimes_{\rm sym} ^k \gH$.
\end{lemma}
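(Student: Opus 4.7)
The construction of $\mu_0$ is a standard matter of Gaussian measures in infinite dimensions. My plan would proceed in three steps.

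First, I would verify the compatibility of the finite-dimensional projections $\mu_0^K$. Since the coordinates $\alpha_j = \langle u,u_j\rangle$ for $j=1,\dots,K$ are, under $\mu_0^K$, independent complex Gaussians with density proportional to $e^{-\lambda_j|\alpha_j|^2}$, projecting $\mu_0^{K'}$ ($K'>K$) onto the first $K$ coordinates trivially yields $\mu_0^K$. This consistency lets me apply Kolmogorov's extension theorem to obtain a probability measure $\tilde\mu_0$ on $\C^{\N}$, viewed as the space of formal series $u=\sum_j \alpha_j u_j$, such that the canonical cylindrical projection onto the first $K$ modes gives $\mu_0^K$.

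Second, I would show that $\tilde\mu_0$ concentrates on $\gH^{1-p}$. Using monotone convergence on the partial sums and the explicit Gaussian integrals $\int |\alpha_j|^2 d\mu_0^1 = \lambda_j^{-1}$,
\begin{equation*}
\int \|u\|_{\gH^{1-p}}^2 \, d\tilde\mu_0(u)  = \sum_{j\geq 1} \lambda_j^{1-p}\int |\alpha_j|^2 d\tilde\mu_0(u) = \sum_{j\geq 1}\lambda_j^{-p} = \tr_{\gH} h^{-p} <\infty,
\end{equation*}
by Assumption~\ref{asum h}. Hence $\tilde\mu_0$-almost every $u$ lies in $\gH^{1-p}$, and the restriction $\mu_0$ to this full-measure Borel subset is the desired probability measure. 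Uniqueness is clear since cylindrical sets generate the Borel $\sigma$-algebra on $\gH^{1-p}$, and the cylindrical projections of $\mu_0$ are prescribed.

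Third, I would compute the density matrices $\gamma_0^{(k)}$. Testing $\gamma_0^{(k)}$ in the tensorized eigenbasis $(u_{j_1}\otimes\cdots\otimes u_{j_k})$, the matrix elements reduce to the Gaussian moments
\begin{equation*}
\int \overline{\alpha_{i_1}\cdots\alpha_{i_k}}\,\alpha_{j_1}\cdots\alpha_{j_k}\, d\mu_0(u).
\end{equation*}
Since the $\alpha_j$ are independent centered complex Gaussians with $\int \overline{\alpha_j}\alpha_k \, d\mu_0 =\delta_{jk}/\lambda_j$ and $\int \alpha_j\alpha_k\, d\mu_0=0$, Wick's theorem (Isserlis' formula for complex Gaussians) gives
\begin{equation*}
\int \overline{\alpha_{i_1}\cdots\alpha_{i_k}}\,\alpha_{j_1}\cdots\alpha_{j_k}\, d\mu_0(u) = \sum_{\sigma \in \mathfrak{S}_k} \prod_{\ell=1}^k \frac{\delta_{i_\ell, j_{\sigma(\ell)}}}{\lambda_{i_\ell}}.
\end{equation*}
Reassembling, $\gamma_0^{(k)}$ equals the sum over all $k!$ permutations $\sigma$ of the operator $(h^{-1})^{\otimes k}$ composed with the factor-permutation unitary $U_\sigma$. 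On the symmetric subspace $\bigotimes_{\rm sym}^k \gH$ every $U_\sigma$ acts as the identity, so all $k!$ terms coincide and yield $\gamma_0^{(k)} = k!\,(h^{-1})^{\otimes k}$.

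The only delicate point is the second step: matching Assumption~\ref{asum h} with the precise Sobolev regularity $\gH^{1-p}$ on which $\mu_0$ lives. Once this is done, everything else is essentially bookkeeping with finite-dimensional complex Gaussians, the infinite-dimensional machinery being absorbed in the Kolmogorov extension.
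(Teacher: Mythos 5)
Your proof is correct, and the three-step strategy (Kolmogorov extension, second-moment bound for concentration, Wick's theorem for the reduced density matrices) is precisely the standard argument behind this result. The memoir itself labels this a ``standard lemma/definition'' and refers the detailed construction to the cited paper \cite{LewNamRou-14d}; your argument matches what is done there. A few remarks to make it airtight: in Step 2 you should note that on the full-measure set where $\sum_j\lambda_j^{1-p}|\alpha_j|^2<\infty$, the partial sums of $\sum_j\alpha_j u_j$ form a Cauchy sequence in $\gH^{1-p}$, so the push-forward of the Kolmogorov measure under this a.e.-defined, measurable limiting map yields a genuine Borel measure on the separable Hilbert space $\gH^{1-p}$; uniqueness then follows because cylinder sets generate the Borel $\sigma$-algebra. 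In Step 3, when $p>1$ the measure does not live on $\gH$ and the rank-one operators $|u^{\otimes k}\rangle\langle u^{\otimes k}|$ are not bounded on $\gH^k$, so \eqref{eq:DM free meas} should be read as an equality of matrix elements against vectors from $\mathrm{span}(u_1,\ldots,u_K)^{\otimes k}$ (or of quadratic forms on a core); your computation does exactly this. Finally, the identity you obtain on the full tensor product is $\gamma_0^{(k)} = k!\,P_{\rm sym}(h^{-1})^{\otimes k}$ where $P_{\rm sym}=\frac{1}{k!}\sum_\sigma U_\sigma$; restricting to $\bigotimes_{\rm sym}^k\gH$ where $P_{\rm sym}$ is the identity gives the claimed formula, which is the observation you make via the action of the $U_\sigma$'s.
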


Note that the free Gibbs measure \emph{never} lives on the energy space $\gH ^1$. It lives on the original Hilbert space $\gH ^0 = \gH$ if and only if~\eqref{eq:asum trace} holds with $p=1$.  We can now define the interacting Gibbs measure as being absolutely continuous with respect to the free Gibbs measure: 

\begin{lemma}[\textbf{Interacting Gibbs measure}]\label{lem:inter}\mbox{}\\
Let 
$$ \FNL [u] := \frac{1}{2}\left\langle u \otimes u | w | u \otimes u \right\rangle_{\gH^2}.$$
If Assumptions~\ref{asum w} hold we have that $ u \mapsto  \FNL [u] $ is in $L ^1 (\gH ^{1-p},d\mu_0)$. In particular 
\begin{equation}\label{eq:int meas}
\mu (du) = \frac{1}{Z_r} \exp\left( - \FNL [u] \right) \mu_0 (du)  
\end{equation}
makes sense as a probability measure over $\gH ^{1-p}$. That is, the relative partition function satisfies 
$$Z_r = \int \exp\left( - \FNL [u] \right) \mu_0 (du) >0  .$$
\end{lemma}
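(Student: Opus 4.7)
The proof will reduce everything to a direct moment computation for the free measure $\mu_0$, then verify that the resulting trace is finite under each of the two sets of hypotheses on $w$.

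First, I would evaluate the $\mu_0$-mean of $\FNL$ by exchanging integration and trace, which is legitimate because $w\geq 0$ so Tonelli applies. The identity~\eqref{eq:DM free meas} at level $k=2$ then yields
\begin{equation*}
\int_{\gH^{1-p}} \FNL[u]\,d\mu_0(u) = \frac{1}{2}\int \langle u^{\otimes 2}, w\, u^{\otimes 2}\rangle\,d\mu_0(u) = \frac{1}{2}\,\tr_{\gH^2}\!\left[ w\,\gamma_0^{(2)}\right] = \tr_{\gH^2}\!\left[w\,(h^{-1}\otimes h^{-1})\right].
\end{equation*}
To justify the interchange rigorously I would insert a finite-dimensional cut-off: let $P_K$ be the spectral projection of $h$ onto $\mathrm{span}(u_1,\dots,u_K)$ and apply the equality to $P_K^{\otimes 2}\,w\,P_K^{\otimes 2}$, for which the integral is a bona fide finite-dimensional Gaussian computation against the marginal $\mu_0^K$ of Lemma~\ref{lem:free}. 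Sending $K\to\infty$ and using monotone convergence on both sides (again thanks to $w\geq 0$) gives the claim.

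Once this identity is established, proving $\FNL \in L^1(\gH^{1-p},d\mu_0)$ amounts to the finiteness of $\tr_{\gH^2}[w\,h^{-1}\otimes h^{-1}]$. In the trace-class case $p=1$, this is exactly assumption~\eqref{eq:asum w 1}. In the general case $p>1$, I would use the operator bound~\eqref{eq:asum w 2} together with positivity of $h^{-1}\otimes h^{-1}$ to estimate
\begin{equation*}
\tr_{\gH^2}\!\left[w\,(h^{-1}\otimes h^{-1})\right] \leq \tr_{\gH^2}\!\left[(h^{1-p'}\otimes h^{1-p'})(h^{-1}\otimes h^{-1})\right] = \bigl(\tr_{\gH}[h^{-p'}]\bigr)^{2}.
\end{equation*}
Since $h>0$ has spectrum bounded below by some $c>0$ and $p'>p$, the eigenvalues $\lambda_j$ of $h$ satisfy $\lambda_j^{-p'} \leq c^{-(p'-p)}\,\lambda_j^{-p}$, whence $\tr[h^{-p'}] \leq c^{-(p'-p)}\tr[h^{-p}] < \infty$ by Assumption~\ref{asum h}.

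Finally, the $L^1$ bound implies $\FNL[u]<\infty$ for $\mu_0$-almost every $u$. Combined with $\FNL\geq 0$, this yields $0 < e^{-\FNL[u]} \leq 1$ almost surely, so $0 < Z_r \leq 1$ and the normalized measure $\mu$ in~\eqref{eq:int meas} is a well-defined probability measure on $\gH^{1-p}$. The only mildly subtle point is the Tonelli step: one must first check that $\FNL$ has a pointwise representative on $\gH^{1-p}$. In the trace-class case this is automatic since $u$ lives in $\gH$; in Case~2 it follows from the form-boundedness $\langle u^{\otimes 2}, w\, u^{\otimes 2}\rangle \leq \|h^{(1-p')/2}u\|_{\gH}^4$ together with the inclusion $\gH^{1-p}\subset \gH^{1-p'}$ (which holds because $p'>p$ and $h$ is bounded below).
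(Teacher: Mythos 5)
Your argument is correct and is essentially the standard route: compute $\int \FNL\,d\mu_0 = \tfrac12\tr_{\gH^2}[w\,\gamma_0^{(2)}] = \tr_{\gH^2}[w\,h^{-1}\otimes h^{-1}]$ from~\eqref{eq:DM free meas}, check finiteness from each half of Assumption~\ref{asum w}, and then deduce $0<Z_r\leq 1$ from $0\leq\FNL<\infty$ $\mu_0$-a.e. and $\mu_0(\gH^{1-p})=1$. One cosmetic slip: since $\gH^2=\bigotimes_{\rm sym}^2\gH$, the trace $\tr_{\gH^2}[(h^{-p'}\otimes h^{-p'})]$ equals $\tfrac12\bigl(\tr_\gH h^{-p'}\bigr)^2+\tfrac12\tr_\gH h^{-2p'}$, not $\bigl(\tr_\gH h^{-p'}\bigr)^2$; your displayed equality should be an inequality $\leq$, which is all you need and does not affect the conclusion.
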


Equation~\eqref{eq:int meas} is the correct interpretation of the formal definition~\eqref{eq:mu_intro}. It is this object that we derive from the bosonic  grand-canonical Gibbs state.

\section{Mean-field/small temperature regime}\label{sec:Gibbs main}

Our main result in~\cite{LewNamRou-14d} relates, in the limit~\eqref{eq:regime}, the grand-canonical Gibbs state~\eqref{eq:GC Gibbs} to the classical Gibbs measure on one-body states defined in Lemma~\ref{lem:inter}. It is useful to recall the notion of $k$-particle reduced density matrix for states over $\cF$. For a ``diagonal'' state of the form
$$\Gamma = \bigoplus_{n=0} ^\infty G_n, \quad G_n: \gH^n \mapsto \gH ^n,$$
this is the operator 
$\Gamma ^{(k)}$ on the $k$-particles space $\gH ^k$ defined by\footnote{The partial trace is taken over the symmetric space only.}
\begin{equation}\label{eq:GC DM}
 \tr_{\gH ^k} \left[ A_k \Gamma^{(k)}\right]=\sum_{n\geq k} {n\choose k} \tr_{\gH ^n} \left[A_k \otimes_s \1 ^{\otimes (n-k)} G_n\right]
\end{equation}
for every bounded operator $A_k$ on $\gH ^k$. These are helpful for rewriting expressions involving only $k$-particle operators. Remark that the normalization convention taken here differs from that of Chapter~\ref{sec:bosons GS}: if $\Gamma$ is a $N$-body state, $\Gamma^{(k)}$ has trace ${N \choose k}$.

\begin{theorem}[\textbf{Derivation of nonlinear Gibbs measures}]\label{thm:NL Gibbs}\mbox{}\\
Under Assumptions~\ref{asum h} and~\ref{asum w} we have 
$$\frac{F_\lambda (T) - F_0 (T)}{T} \underset{T\to \infty}{\longrightarrow}  -\log Z_r $$
where $F_\lambda (T)$ is the infimum of the free-energy functional~\eqref{eq:free ener GC} and $Z_r$ the relative partition function defined in Lemma~\ref{lem:inter}. 

Let furthermore $\Gamma_{\lambda,T} ^{(k)}$ be the reduced $k$-body density matrix of $\Gamma_{\lambda,T}$. We have, 
\begin{equation}
\frac{\Gamma_{\lambda,T}^{(1)}}{T} \underset{T\to \infty}{\longrightarrow}\int_{\gH^{1-p}} |u\rangle\langle u|\,d\mu(u)
\label{eq:limit_DM_S_p}
\end{equation}
strongly in the Schatten space $\gS^p(\gH)$. In case $p=1$ in~\eqref{eq:asum trace} we also have, for any $k\geq 2$,
\begin{equation}\label{eq:result DM}
\frac{k!}{T^k}\Gamma_{\lambda,T}^{(k)} \underset{T\to \infty}{\longrightarrow} \int_\gH |u^{\otimes k}\rangle\langle u^{\otimes k}|\,d\mu(u)
\end{equation}
strongly in the trace-class.
\end{theorem}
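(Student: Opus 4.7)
The plan is to prove matching upper and lower bounds for the relative free energy $(F_\lambda(T)-F_0(T))/T$, and then derive the convergence of reduced density matrices from the free-energy asymptotics via a Feynman--Hellmann perturbation argument.

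For the upper bound (i.e.\ $\limsup (F_\lambda(T)-F_0(T))/T \leq -\log Z_r$), I would use the Gibbs variational principle and construct a trial state built out of coherent states on Fock space. Concretely, denoting $\xi(v)=e^{-\|v\|^2/2}\sum_{n\geq 0} v^{\otimes n}/\sqrt{n!}$ the coherent state in $\cF$ with symbol $v$, a natural candidate is a Glauber--Sudarshan type state of the form
\[
\Gamma_{\rm trial} \;=\; \int_{\gH^{1-p}} \frac{e^{-\FNL[u]}}{Z_r}\; |\xi(\sqrt{T}\,u)\rangle\langle \xi(\sqrt{T}\,u)|\; d\mu_0(u),
\]
possibly after a finite-dimensional cutoff on the first $K$ eigenmodes of $h$. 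The one-body energy $\Tr[\mathbb{H}_0 \Gamma_{\rm trial}]$ is easily computed because $\mathbb{H}_0$ is quadratic in creation/annihilation operators and coherent states diagonalize it; the interaction term yields $T\, \mathbb{E}_\mu[\FNL]$ up to lower-order corrections thanks to Assumption~\ref{asum w}. The delicate part is the entropy: one should exploit that the $c$-number map $u\mapsto |\xi(\sqrt{T}u)\rangle\langle\xi(\sqrt{T}u)|$ is, in the large-$T$ limit, essentially an isometry between classical and quantum states, so that the quantum entropy of the trial state matches the classical entropy of the measure on field space up to $o(T)$. Here the Berezin--Lieb inequality, in its upper-symbol form, is the technical workhorse.

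For the lower bound, I would proceed in four steps. First, project everything onto the span $V_K$ of the first $K$ eigenmodes of $h$: a priori trace bounds coming from the free Gibbs state $\Gamma_{0,T}$ (whose $k$-th density matrix equals $T^k(h^{-1})^{\otimes k}$ to leading order, by Lemma~\ref{lem:free}) allow to control the truncation error uniformly in $T$, using the Schatten-class integrability hypothesis $\Tr h^{-p}<\infty$ and the corresponding assumption on $w$. Second, on this finite-dimensional subspace, invoke the \emph{constructive} finite-dimensional de Finetti theorem (Theorem~\ref{thm:deF finite dim}) applied to the canonical projections of $\Gamma_{\lambda,T}$ in each $n$-particle sector, so that
\[
\Gamma_{\lambda,T}^{(k)}\Big|_{V_K^{\otimes k}} \;\approx\; \int_{V_K} |u^{\otimes k}\rangle\langle u^{\otimes k}|\; dP_K(u)
\]
for an explicit measure $P_K$ of lower-symbol type. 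Third, use a Berezin--Lieb \emph{lower} bound for the quantum entropy (the quantum entropy is bounded from below by the classical entropy of the lower symbol, minus a correction controlled by $\dim V_K$) to translate the variational expression for $F_\lambda(T)$ into the classical variational problem defining $-T\log Z_r$. Finally, send first $T\to\infty$ and then $K\to\infty$, identifying the limiting measure with the nonlinear Gibbs measure $\mu$ via uniqueness of the de Finetti measure (Theorem~\ref{thm:deF weak}).

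The convergence of density matrices then follows by a standard Griffiths/Feynman--Hellmann argument: perturb $\mathbb{H}_\lambda$ by $t\,d\Gamma^{(k)}(A_k)$ for a fixed bounded $k$-body operator $A_k$, apply the free-energy convergence to the perturbed system, and differentiate at $t=0^\pm$ using convexity to sandwich $\Tr[A_k\,\Gamma_{\lambda,T}^{(k)}/T^k]$ between the corresponding moments of the (perturbed) Gibbs measure; then pass to $t\to 0$. The Schatten $\gS^p$ convergence of $\Gamma_{\lambda,T}^{(1)}/T$ follows by taking $A_k$ in the predual. The main obstacle will be the very low regularity of the limit measure: when $p>1$, $\mu$ is supported on $\gH^{1-p}\not\subset\gH$, so the interaction energy is controlled only via Assumption~\ref{asum w 2}, and the Berezin--Lieb entropy estimates must be carried out in a norm weak enough to survive the $K\to\infty$ limit. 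This is precisely why trace-class convergence for all $k$ can only be established in the trace-class case $p=1$, whereas for $p>1$ one must settle for $\gS^p$ convergence of the one-body density matrix.
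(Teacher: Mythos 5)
Your overall architecture is close to the paper's: project onto low-lying modes of $h$, apply the constructive de Finetti machinery (in its lower-symbol/Husimi form), compare to the classical variational problem defining $-\log Z_r$, and extract convergence of reduced density matrices via a Feynman--Hellmann argument. The Glauber--Sudarshan trial state for the upper bound is also a reasonable choice.

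The genuine gap is in the treatment of the entropy term in the lower bound. You propose to split the relative entropy into $\Tr[\Gamma\log\Gamma] - \Tr[\Gamma\log\Gamma_{0,T}]$ and apply a standard Berezin--Lieb bound on the von Neumann entropy of $\Gamma$ alone, with a correction of order $\dim V_K$. This is exactly what the paper warns does not work: the individual entropies $-\Tr[\Gamma_{\lambda,T}\log\Gamma_{\lambda,T}]$ and $-\Tr[\Gamma_{\lambda,T}\log\Gamma_{0,T}]$ diverge as $T\to\infty$ (and the $O(\dim V_K)$ error is not uniform after sending $K\to\infty$), so the divergent pieces must cancel \emph{inside} the relative entropy rather than being estimated separately. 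The key new technical ingredient that makes the lower bound go through is precisely the Berezin--Lieb inequality for the \emph{relative} entropy (Theorem~\ref{thm:BL}), which bounds $\liminf \Tr[\Gamma_n(\log\Gamma_n-\log\Gamma_n')]$ from below by the classical relative entropy of the two de Finetti measures, directly, without ever touching the individual von Neumann entropies. Its proof in turn relies on deep structural properties of the quantum relative entropy --- joint convexity and monotonicity under 2-positive trace-preserving maps (equivalently, Lieb--Ruskai strong subadditivity) --- which your proposal does not invoke. Without this ingredient the truncation-and-limit scheme you describe in Steps 1--4 does not close. (A smaller imprecision: the Berezin--Lieb inequality bounds the quantum entropy \emph{above} by the classical entropy of the lower symbol, not below; the direction happens to be what one wants for the free-energy lower bound, but as written your Step~3 has the sign reversed.)
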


An obvious caveat of our approach is the case $p>1$ where the result is not as strong as one would hope. We conjecture that~\eqref{eq:result DM} continues to hold strongly in the Schatten space $\gS^p(\gH)$ also for $k>1$, but this remains an open problem. More importantly, it would be highly desirable to go beyond the stringent assumption~\eqref{eq:asum w 2}. There are however well-known obstructions to the construction of the Gibbs measure $\mu$ in this case. A minima, a Wick renormalization~\cite{DerGer-13,GliJaf-87,Simon-74} must be performed in order to make sense of the measure when $w= w (x-y)$ is a multiplication operator. The quantum Gibbs state one starts the analysis from should also be modified accordingly. We refer to~\cite{FroKnoSchSoh-16} for recent results in this direction. 

Let us comment on the relevance of this result to the BEC phenomenon:
\begin{itemize}
\item First, this theorem seems to be the first giving detailed information on the limit of the thermal states of a Bose system at relatively large temperature. The result~\eqref{eq:limit_DM_S_p} clearly shows the absence of full BEC in the regime we consider. For finite dimensional bosons, things simplify a lot and versions of Theorem~\ref{thm:NL Gibbs} were known before, see~\cite{Gottlieb-05,GotSch-09,JulGotMarPol-13},~\cite[Chapter 3]{Knowles-thesis}, and~\cite[Appendix~B]{Rougerie-LMU}.
\item The asymptotic regime~\eqref{eq:regime} should be thought of as a mean-field limit. In fact, when $p=1$,~\eqref{eq:limit_DM_S_p} indicates that the expected particle number behaves as $O(T)$, so that taking $\lambda = T ^{-1}$ corresponds to the usual mean-field scaling where the coupling constant scales as the inverse of the particle number. Roughly speaking we are thus dealing with the regime
$$ N \to \infty, \quad \lambda = \frac{1}{N}, \quad T \sim N.$$
\item Still in the case $p=1$, one should expect from a natural extrapolation of this theorem that full BEC does occur in the regime 
$$ N \to \infty, \quad \lambda = \frac{1}{N}, \quad T \ll N$$
which would indicate that the critical temperature scales in this case as the particle number. A natural idea would be to confirm this by studying the concentration of the non-linear Gibbs measure $\mu$ on the mean-field minimizer when the chemical potential is varied.
\item A comparison with the case of particles with no quantum statistics (boltzons) is instructive, see~\cite{LewNamRou-ICMP} for details. First, the emergence of the non-linear Gibbs measure truly requires Bose statistics. Next, for trapped boltzons, the critical temperature for BEC scales as $O(1)$ when $N\to \infty$ in the mean-field regime. Only when $T\to 0$ at the same time as $N\to \infty$ does full BEC occur in the Gibbs state. One should expect that the critical temperature is much higher for bosons than for boltzons. This can be seen from comparing the above theorem and~\cite[Section~3.2]{Suto-03,LewNamRou-14}: $T^{\rm boltzons}_c \sim 1$ in the mean-field limit whereas $T^{\rm bosons}_c \gg 1$. In fact, Theorem~\ref{thm:NL Gibbs} suggests that in some situations the critical temperature for bosons can be as high as $O(N)$. 
\end{itemize}

\section{Berezin-Lieb inequality for relative entropies}\label{sec:NL Gibbs proof}
%%%%%%%%%%%%%%%%%%%%%%%%%%%%%%%%%%%%%%%%%%
%%%%%%%%%%%%%%%%%%%%%%%%%%%%%%%%%%%%%%%%%%

The proof of Theorem~\ref{thm:NL Gibbs} is lengthy and shall not be reviewed here. We however discuss its main conceptual building blocks, restricting to the trace-class case $p=1$. First, as one easily realizes, $\Gamma_{\lambda,T}$ minimizes the relative free-energy functional (the second term is a von Neumann relative entropy)
\begin{align}\label{eq:rel free ener}
\F_{\rm rel} [\Gamma] &= \lambda \tr_{\cF} \left[ \mathbb{W} \: \Gamma \right] + T \tr_{\cF} \left[ \Gamma \left(\log \Gamma - \log \Gamma_{0,T} \right)\right] \nonumber\\
&=\lambda \tr_{\gH ^2} \left[ w \: \Gamma ^{(2)}\right] + T \tr_{\cF} \left[ \Gamma \left(\log \Gamma - \log \Gamma_{0,T} \right)\right],
\end{align}
whose minimum is $F_\lambda (T) - F_0 (T)$.  It is crucial to work directly on that difference, characterizing it as the minimum of some functional. We expect that both terms taken individually diverge very fast.

On the other hand, $-\log Z_r$ is the infimum of some variational problem of which $\mu$ is the solution: $\mu$ minimizes 
\begin{align}\label{eq:class rel ener}
\F_{\rm cl} [\nu] &= \int \FNL [u] d\nu (u) + \int \frac{d\nu}{d\mu_0} \log \left( \frac{d\nu}{d\mu_0}\right) d\mu_0\nonumber\\
&= \int \frac{1}{2} \tr_{\gH ^2} \left[w\, |u ^{\otimes 2} \rangle \langle u ^{\otimes 2}| \right] d\nu (u) + \int \frac{d\nu}{d\mu_0} \log \left( \frac{d\nu}{d\mu_0}\right) d\mu_0
\end{align}
amongst all $\mu_0$-absolutely continuous measures $\nu$ on the one-body Hilbert space $\gH$. The second term is the classical relative entropy of $\nu$ relative to $\mu_0$.

The above observations make the semi-classical nature of Theorem~\ref{thm:NL Gibbs} apparent.  We have to relate the minimization problems for the quantum and classical free-energy functionals~\eqref{eq:rel free ener} and~\eqref{eq:class rel ener}. In order to describe quantum objects, operators on a Hilbert space, in terms of classical objects, probability measures, we rely on quantum de Finetti theorems again, as discussed in Section~\ref{sec:bosons deF}. This time we need  Fock-space versions thereof. In the trace-class case, the following result of~\cite{AmmNie-08} is sufficient:

\begin{theorem}[\textbf{Grand-canonical quantum de Finetti theorem}]\label{thm:deF}\mbox{}\\
Let $0\leq \Gamma_n$ be a sequence of states on the Fock space $\cF(\gH)$. Assume that there exists a sequence $0<\epsilon_n\to0$ such that, for all $k\in \N $
\begin{equation}
\epsilon_n ^k  \tr_{\gH ^{k}} \Big[ \Gamma_n ^{(k)}\Big]\leq C_k \ <\infty.
\label{eq:bound_DM}
\end{equation}
Then, there exists a unique Borel probability measure $\nu$ on $\gH$ such that, along a subsequence,  
\begin{equation}
k!(\epsilon_{n})^k\,\Gamma_{n}^{(k)} \wto \int_{\gH} |u^{\otimes k}\rangle \langle u^{\otimes k}| \,d\nu(u)
\label{eq:weak_limit_DM}
\end{equation}
weakly in the trace-class for every integer $k$.
\end{theorem}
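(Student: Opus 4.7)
The plan is to follow the semiclassical strategy of Ammari--Nier, treating $\epsilon_n \to 0$ as a small parameter on the bosonic Fock space and extracting from $(\Gamma_n)$ a ``Wigner measure'' $\nu$ on the classical phase space $\gH$ whose Wick moments coincide with the weak limits of the rescaled reduced density matrices.

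First I would harvest weak compactness. The bound~\eqref{eq:bound_DM} says $\epsilon_n^k \Gamma_n^{(k)}$ is uniformly bounded in $\Sch^1(\gH^k)$. By Banach--Alaoglu applied to the weak-$\ast$ topology of $\Sch^1$ as the dual of the compact operators $\cK(\gH^k)$, and a Cantor diagonal extraction over $k$, one obtains a subsequence (still indexed by $n$) and positive symmetric trace-class operators $\gamma^{(k)}$ such that $k!\,\epsilon_n^k\, \Gamma_n^{(k)} \wto \gamma^{(k)}$ for every $k$. The whole task then reduces to proving that the family $(\gamma^{(k)})_{k\geq0}$ is the moment sequence of a single probability measure on $\gH$.

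The construction of this measure goes through semiclassical quantization. Introduce the $\epsilon_n$-scaled creation/annihilation operators $a^{\sharp}_{\epsilon_n} := \sqrt{\epsilon_n}\, a^{\sharp}$, whose commutators $[a_{\epsilon_n}(f), a_{\epsilon_n}^{*}(g)] = \epsilon_n \langle f,g\rangle$ collapse to the classical bracket in the limit, together with the associated Weyl operators $W_{\epsilon_n}(f) = \exp\!\bigl(i(a_{\epsilon_n}^{*}(f)+a_{\epsilon_n}(f))/\sqrt{2}\bigr)$. The characteristic functions
\[
 E_n(f) := \tr\bigl[\Gamma_n\, W_{\epsilon_n}(f)\bigr], \qquad f\in \gH,
\]
are positive-definite with $|E_n|\leq 1$, $E_n(0)=1$. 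A norm expansion of $W_{\epsilon_n}(f)-\mathbf 1$ combined with the $k=1$ moment bound yields an equicontinuity estimate $|E_n(f)-E_n(g)|\leq C(1+\|f\|+\|g\|)\|f-g\|$ uniform in $n$. Ascoli--Arzel\`a on the separable space $\gH$, plus one more extraction, gives $E_n\to E_\infty$ pointwise with $E_\infty$ continuous and positive-definite. A Bochner--Minlos argument, using the full hierarchy of bounds~\eqref{eq:bound_DM} to supply tightness of all finite-dimensional projections, then promotes $E_\infty$ to the characteristic function of a unique Radon probability measure $\nu$ on $\gH$ with finite moments of all orders.

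It remains to identify $\gamma^{(k)}$ with the $k$-th moment of $\nu$ and to conclude uniqueness. For any compact $A_k$ on $\gH^k$, its Wick quantization $(A_k)^{\mathrm{Wick}}$ is a polynomial of degree $2k$ in the scaled creation/annihilation operators satisfying $\tr[(A_k)^{\mathrm{Wick}}\Gamma_n] = \epsilon_n^k\, \tr[A_k\, \Gamma_n^{(k)}]$ up to remainders of size $O(\epsilon_n)$ controlled by the higher-$k$ moment bounds. Since each such polynomial is a finite linear combination of mixed derivatives $\partial_t^{\alpha} W_{\epsilon_n}(tf)|_{t=0}$ taken along finitely many test directions $f$, the convergence $E_n\to E_\infty$, upgraded via uniform moment control, yields
\[
 k!\,\epsilon_n^k\, \tr\bigl[A_k\, \Gamma_n^{(k)}\bigr] \xrightarrow[n\to\infty]{} \int_\gH \langle u^{\otimes k}, A_k u^{\otimes k}\rangle\, d\nu(u),
\]
which identifies $\gamma^{(k)} = \int |u^{\otimes k}\rangle\langle u^{\otimes k}|\, d\nu(u)$. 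Uniqueness of $\nu$ follows because its finite-dimensional characteristic functions are determined by these moments via the Hausdorff moment problem on cylinders.

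The main obstacle is the promotion of the cylindrical functional $E_\infty$ to a genuine Radon probability measure on the Hilbert space $\gH$ itself. In infinite dimensions Bochner's theorem routinely fails---Gaussian white noise is the standard counter-example---so the mere existence of a continuous positive-definite limit is not sufficient. What saves the argument is that~\eqref{eq:bound_DM} provides bounds of \emph{all} orders in $k$, which translate into uniform tail estimates for the finite-dimensional marginals of the candidate measure and allow invocation of Prokhorov/Minlos-type tightness arguments. Everything else is relatively routine algebra with Wick-quantized polynomials and passage to the limit in an equicontinuous family.
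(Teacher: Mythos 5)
Your sketch reproduces the Wigner-measure strategy of Ammari and Nier, which is precisely what this paper relies upon: the statement is presented as a result of~\cite{AmmNie-08} and no proof is given in the text, so the relevant comparison is with that reference. The architecture you describe --- scaled Weyl operators, subsequential convergence of the characteristic functions $E_n$, a Minlos-type step producing a Radon measure on $\gH$, and Wick-polynomial identification of the moments --- is faithful to it.

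Two points, however, deserve to be corrected. First, your justification of uniqueness via ``the Hausdorff moment problem on cylinders'' does not work. The Hausdorff problem concerns measures supported on a bounded interval; here the relevant criterion would be of Hamburger/Carleman type, and the hypotheses~\eqref{eq:bound_DM} impose no growth control on the constants $C_k$ that would let it apply. Moreover the objects $\int|u^{\otimes k}\rangle\langle u^{\otimes k}|\,d\nu(u)$ are blind to the $S^1$-action $u\mapsto e^{i\theta}u$, so no measure can be recovered from them without assuming $S^1$-invariance (this is made explicit in Theorem~\ref{thm:deF weak} and is implicit in the present statement). The correct route to uniqueness is the one you already set up: having extracted a subsequence along which $E_n\to E_\infty$ pointwise, the limiting characteristic functional $E_\infty$ determines the ($S^1$-averaged) Wigner measure uniquely, and the moment identity is then a consequence rather than the input.

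Second, you correctly flag the promotion of the cylindrical functional to a Radon measure as the delicate step, but the remedy you cite is slightly misaligned. It is not the full hierarchy of bounds in~\eqref{eq:bound_DM} that yields tightness; the $k=1$ bound already suffices. Indeed the weak-$*$ limit $\gamma^{(1)}$ of $\epsilon_n\Gamma_n^{(1)}$ is a single nonnegative trace-class operator, and one deduces an estimate of the form $|1-E_\infty(f)|\lesssim\langle f,\gamma^{(1)}f\rangle^{1/2}$, i.e.\ continuity of $E_\infty$ in the Sazonov topology, which is precisely what the infinite-dimensional Bochner--Minlos theorem requires. The bounds for $k\geq2$ play a different role: they guarantee that the rescaled $k$-particle density matrices stay uniformly bounded in trace norm, so that the Wick-moment identification passes to the limit and the conclusion~\eqref{eq:weak_limit_DM} holds at every order $k$.
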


Some comments:
\begin{itemize}
\item To deal with the non-trace-class case $p>1$, some improvements are necessary, provided in~\cite{LewNamRou-14d}, but we shall not discuss this here.
\item The case $p=1$ of Theorem~\ref{thm:NL Gibbs} could be rephrased as saying that the de Finetti measure of the sequence $\Gamma_{\lambda,T}$ is the nonlinear Gibbs measure $\mu$. 
  \item Here $\epsilon_n$ should be thought of as a semiclassical parameter essentially forcing the particle number to behave as $\eps_n ^{-1}$. To apply this result in our context, we need to check that~\eqref{eq:bound_DM} holds, with $\epsilon_n = T ^{-1}$ and $\Gamma = \Gamma_{\lambda,T}$. In the trace-class case, this is not too difficult because the particle number is well under control. 
\item Since we have convergence of density matrices, a lower bound to the interaction energy term in~\eqref{eq:rel free ener} follows from this theorem.
\end{itemize}

In view of the last point, the main remaining task is to estimate the relative entropy term in~\eqref{eq:rel free ener}. One should keep the following in mind:
\begin{itemize}
 \item Contrarily to the energy, it is not expressed in terms of reduced density matrices only. It genuinely depends of the full state.
 \item One should not ``undo'' the relative entropy to consider only the von Neumann entropies of $\Gamma_{0,T}$ and $\Gamma_{\lambda,T}$ directly, since we do not expect a good control on those.
\end{itemize}

Well-known tools to link quantum entropy terms to their semiclassical counterparts are the so-called Berezin-Lieb inequalities~\cite{Berezin-72,Lieb-73b,Simon-80}. Since we work in infinite dimensions and we cannot undo the relative entropy to rely on known inequalities, we need the following new result:

\begin{theorem}[\textbf{Berezin-Lieb inequality for the relative entropy}]\label{thm:BL}\mbox{}\\
Let $\epsilon_n \to 0$ and $\{\Gamma_n\}$, $\{\Gamma_n'\}$ be two sequences of states  satisfying the assumptions of Theorem~\ref{thm:deF}. Let $\nu$ and $\nu'$ be their de Finetti measures. Then
\begin{equation}\label{eq:BL ineq}
\liminf_{n\to \infty} \tr_{\cF} \left[ \Gamma_n \left( \log \Gamma_n - \log \Gamma_n' \right)\right] \ge \int \frac{d\nu}{d\nu'} \log \left(\frac{d\nu}{d\nu'}\right) d\nu'. 
\end{equation}
\end{theorem}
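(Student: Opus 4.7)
The plan is to reduce to a finite-dimensional quantum problem where coherent state techniques apply, and to convert the resulting finite-dimensional Berezin-Lieb-type bound into the claim via lower semi-continuity. The key quantum tool is Uhlmann's monotonicity of the relative entropy under completely positive trace-preserving maps; the key classical tool is the lower semi-continuity of the Kullback-Leibler divergence under weak-$*$ convergence. The bridge between the two worlds is the explicit construction of approximate de Finetti measures as lower symbols in the coherent state basis, provided by Theorem~\ref{thm:deF finite dim}.

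First I would truncate the one-body space. Let $P_K$ be the spectral projector of $h$ onto $\mathrm{span}(u_1,\dots,u_K)$ and let $\Gamma_n^K$, $\Gamma_n^{\prime K}$ denote the partial traces onto the $K$-mode bosonic Fock space $\cF(P_K\gH)$. By monotonicity of the relative entropy under the partial trace (which is a CPTP map),
\[
\tr_{\cF}\bigl[\Gamma_n(\log\Gamma_n-\log\Gamma_n')\bigr]
\;\geq\;
\tr\bigl[\Gamma_n^K(\log\Gamma_n^K-\log\Gamma_n^{\prime K})\bigr].
\]
I would then truncate the particle number at a level $M=M_n\to\infty$ using the a priori bounds on particle number coming from Assumption~\ref{asum h}-\ref{asum w} and the hypotheses of Theorem~\ref{thm:deF} (which control $\epsilon_n^k\tr\Gamma_n^{(k)}$). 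This second truncation is also a CPTP operation (cut-off and renormalization) and produces states $\Gamma_n^{K,M}$, $\Gamma_n^{\prime K,M}$ on a truly finite-dimensional Hilbert space, for which Theorem~\ref{thm:deF finite dim} is applicable.

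Next, on this finite-dimensional bosonic space I would use the Chiribella-CKMR explicit formula to define probability measures $\nu_n^{K,M}$, $\nu_n^{\prime K,M}$ on the sphere $S(P_K\gH)$ as the lower symbols of $\Gamma_n^{K,M}$, $\Gamma_n^{\prime K,M}$, together with their associated reconstructed states $\widetilde{\Gamma}_n^{K,M}$, $\widetilde{\Gamma}_n^{\prime K,M}$ defined by~\eqref{eq:finite deF etat}. The composition ``lower symbol followed by coherent state synthesis'' is CPTP, and applied to a state diagonalizes it in the coherent basis, so by Lindblad-Uhlmann monotonicity
\[
\tr\bigl[\Gamma_n^{K,M}(\log\Gamma_n^{K,M}-\log\Gamma_n^{\prime K,M})\bigr]
\;\geq\;
\int \frac{d\nu_n^{K,M}}{d\nu_n^{\prime K,M}}\log\frac{d\nu_n^{K,M}}{d\nu_n^{\prime K,M}}\,d\nu_n^{\prime K,M}.
\]
This is the crucial finite-dimensional Berezin-Lieb inequality for relative entropies. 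It can also be derived directly from joint convexity of $(x,y)\mapsto x\log(x/y)$ together with the resolution of identity $\dim\bigl(P_K\gH\bigr)_{\mathrm{sym}}^{M}\int|u^{\otimes M}\rangle\langle u^{\otimes M}|\,du=\mathbb{1}$.

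Finally I would take the three limits in the order $n\to\infty$, then $M\to\infty$, then $K\to\infty$. For fixed $K$ and $M$ large enough, Theorem~\ref{thm:deF finite dim} guarantees $\Tr|\widetilde{\Gamma}_n^{K,M,(k)}-\Gamma_n^{K,M,(k)}|=O(K^2/M)$, so combined with Theorem~\ref{thm:deF} the measures $\nu_n^{K,M}$ converge weakly, along a subsequence, to the de Finetti measure $\nu^K$ associated with the $K$-truncated sequence, and similarly for the primed objects. Lower semi-continuity of the Kullback-Leibler divergence under weak-$*$ convergence of probability measures then gives the desired liminf inequality at the truncated level. Sending $K\to\infty$ recovers $S_{\mathrm{cl}}(\nu\|\nu')$ by monotone convergence (classical relative entropies are monotone under the data-processing map associated with the increasing family of $\sigma$-algebras generated by the finite-dimensional cylinders). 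The main obstacle is the careful choice of $M=M(n,K)$: it must diverge fast enough to make the Chiribella remainder $O(K^2/M)$ negligible and slow enough that the truncation $\mathbb{1}_{\{\mathcal{N}\leq M\}}\Gamma_n$ still captures the essentially supported part of $\Gamma_n$ in the limit $\epsilon_n\to 0$. Quantitative bounds on $\tr[(1+\mathcal{N}\epsilon_n)^k\Gamma_n]$ coming from Assumption~\ref{asum h}-\ref{asum w} are what makes such a choice possible.
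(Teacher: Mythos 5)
Your overall architecture---geometric localization to finite dimension, a finite-dimensional Berezin--Lieb inequality for relative entropies obtained from the Lindblad--Uhlmann monotonicity, then passage to the limit via weak lower semi-continuity of the classical relative entropy and monotone convergence along the filtration of cylinder $\sigma$-algebras---is indeed the strategy of the paper. But the way you justify the central finite-dimensional inequality contains a genuine gap.

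You argue that the ``lower symbol followed by coherent state synthesis'' map $\rho\mapsto\tilde\rho=\int|u^{\otimes M}\rangle\langle u^{\otimes M}|\,d\mu_\rho(u)$ is CPTP and that it ``diagonalizes [a state] in the coherent basis,'' and you then invoke Lindblad--Uhlmann to produce the classical relative entropy on the right-hand side. The map is CPTP, but the diagonalization claim is false: the states $|u^{\otimes M}\rangle$ are not orthogonal, so $\tilde\rho$ and $\tilde\sigma$ are not simultaneously diagonal with ``eigenvalues'' $\mu_\rho$, $\mu_\sigma$. Applying monotonicity to your composite channel gives only $\tr[\rho(\log\rho-\log\sigma)]\geq\tr[\tilde\rho(\log\tilde\rho-\log\tilde\sigma)]$, a bound between two \emph{quantum} relative entropies; and since the preparation step is itself a channel, one has $\int\tfrac{d\mu_\rho}{d\mu_\sigma}\log\tfrac{d\mu_\rho}{d\mu_\sigma}\,d\mu_\sigma\geq\tr[\tilde\rho(\log\tilde\rho-\log\tilde\sigma)]$, so your route in fact delivers an inequality \emph{weaker} than the one you write. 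The correct observation---which is what the paper uses---is that one should stop halfway: the lower symbol map $\rho\mapsto\mu_\rho$ alone is a quantum-to-classical measurement channel (the POVM being the coherent-state resolution of identity), and Lindblad--Uhlmann applied to \emph{this} channel gives directly $\tr[\rho(\log\rho-\log\sigma)]\geq\int\tfrac{d\mu_\rho}{d\mu_\sigma}\log\tfrac{d\mu_\rho}{d\mu_\sigma}\,d\mu_\sigma$. Your parenthetical alternative via joint convexity plus the resolution of identity is precisely this fact in disguise and is the phrasing you should have led with. A secondary issue: your particle-number cut-off ``and renormalization'' is not trace-preserving, so Lindblad--Uhlmann does not apply to it as stated; one must either complete the cut-off to a CPTP map and control the extra block, or, as the paper does, avoid the truncation entirely by working with Glauber coherent states on $\cF(P_K\gH)$, whose resolution of identity holds without any particle-number restriction, the link to the Ammari--Nier de Finetti measure of Theorem~\ref{thm:deF} being made by rescaling the Husimi function by the semiclassical parameter $\epsilon_n$.
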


Some comments:

\begin{itemize}
\item Clearly, this will give the desired lower bound to the relative entropy term in~\eqref{eq:rel free ener} if we can prove first that the de Finetti measure of the free Gibbs state $\Gamma_{0,T}$ is given by $\mu_0$. Since all the density matrices of $\Gamma_{0,T}$ can be explicitly computed, this is easily shown by direct inspection.
\item For the proof of this result, we rely heavily on a constructive approach to Theorem~\ref{thm:deF}, reminiscent of the proof of Theorem~\ref{thm:deF weak} we discussed previously. The precise way one can approach the measure by explicit functions of the states, as in~\eqref{eq:def CKMR}, is crucial. 
\item The link with semi-classics proceeds by seeing the de Finetti measure both as a lower symbol/covariant symbol/Husimi function/anti-Wick quantization and as an approximate upper symbol/contravariant symbol/Wigner measure/Wick quantization based on a coherent state decomposition.
\item We also rely on deep properties of the quantum relative entropy: joint convexity and monotonicity under two-positive trace preserving maps~\cite{Wehrl-78,OhyPet-93,Carlen-10}. These are in fact equivalent to the strong subadditivity of the quantum entropy, a crucial property proved by Lieb and Ruskai~\cite{LieRus-73a,LieRus-73b}.
\end{itemize}

\chapter{Emergent fractional statistics}\label{sec:anyons}

Traditionally, for sound experimental and theoretical reasons, quantum particles are divided into two classes: bosons and fermions. The many-body wave-function of a system of bosons is symmetric under particle exchange, while that of a system of fermions is anti-symmetric. This implies fundamental differences in their behavior: fermions satisfy the Pauli exclusion principle (at most one particle per quantum state), bosons do not. Consequently, fermionic bulk matter is stable, while bosonic bulk matter is not~\cite{DysLen-67,DysLen-68,Lieb-76,LieSei-09}. Free bosons in thermal equilibrium distribute in single-particle quantum states following the Bose-Einstein statistics, while free fermions follow the Fermi-Dirac statistics. Bosons condense into a single one-particle state (see Chapters~\ref{sec:bosons GS}-\ref{sec:Gibbs} and references therein) while fermions form a Fermi sea. Some fundamental inequalities such as the Lieb-Thirring kinetic energy lower bound hold 
only for fermions~\cite{LieThi-75,LieThi-76,LieSei-09}. Cherished physical principles, such as the classification of elements in the Mendeleev periodic table, are valid because electrons are fermions. Others, such as the possibility of creating a laser, are valid because photons are bosons.

In this chapter we discuss some contributions to the theory of \emph{anyons}, that is, of particles that do \emph{not} fit in the bosons/fermions dichotomy. The very possibility that such exotic objects could exist should be seen as a daring hypothesis. We shall thus first start by a quick introduction to the anyons concept in Section~\ref{sec:intro anyons}, making no claim of originality (see~\cite{Froehlich-90,Lerda-92,Wilczek-90,IenLec-92,Forte-92,Myrheim-99,Khare-05,Ouvry-07} for general references). In brief, we shall recall that anyons can exist only in low spatial dimensions, 1D or 2D, so that they can only be quasi- or dressed particles of constrained systems. Next, in Section~\ref{sec:QH anyons} we summarize the findings of~\cite{LunRou-16} where we vindicate\footnote{Without a mathematical proof, but what we think is a sounder physical proof than what has appeared in the literature so far.} the general belief~\cite{AroSchWil-84,Laughlin-99,Jain-07} that some quasi-particles arising in fractional 
quantum 
Hall physics indeed 
behave as anyons. Finally, in Section~\ref{sec:MF anyons} we summarize~\cite{LunRou-15}, giving modest elements of answer to the question ``Suppose one manages to create anyons in the lab, how can one probe their physics ?'' As we shall explain, appropriate effective models are necessary to answer that question, and we focus on the derivation of one such model from first principles. 

\section{The concept of 2D anyons}\label{sec:intro anyons}

There does not seem to be a universal way of building a quantum theory, starting from a reasonable classical one. The possibility for exotic particle statistics has staid hidden in implicit quantization choices  for years, until those were recognized as true choices instead of obvious necessities~\cite{Girardeau-65,GolMedSha-81,LeiMyr-77,Souriau-67,Souriau-70,Wilczek-82a,Wilczek-82b}. To appreciate this, we shall first review the usual theoretical argument supporting the bosons/fermions dichotomy. 

\subsection{Bosons and fermions}

Let us consider $N$ quantum particles living in the physical space $\R ^d$. The classical configuration space is $\R ^{dN}$ and one describes the quantum system with a complex-valued wave-function $\Psi_N \in L ^2 (\R ^{dN})$. Since $|\Psi_N| ^2$ is interpreted as the direct-space probability density of the system, and $|\hat{\Psi}_N| ^2$ as the momentum-space probability density\footnote{With the convention that the Fourier transform is a $L^2$ isometry.}, we have to impose 
$$ \int_{\R^{dN}} |\Psi_N| ^2 = 1,$$
but this is not the only physical constraint. Indeed, actual particles are \emph{indistinguishable}, so that the labeling of coordinates $(x_1,\ldots,x_N) \in \R ^{dN}$ is arbitrary. Thus, one should obviously demand that 
\begin{equation}\label{eq:any den sym}
 | U_{\sigma} \Psi_N | ^2 = | \Psi_N | ^2 
\end{equation}
for any permutation $\sigma \in \Sigma_N$ of $N$ indices, where $U_\sigma$ is the unitary operator that permutes particle labels:
$$ U_{\sigma} \Psi_N (x_1,\ldots,x_N) = \Psi_N (x_{\sigma(1)},\ldots,x_{\sigma(N)}).$$
It is sometimes improperly claimed that~\eqref{eq:any den sym} implies that $\Psi_N$ is either symmetric or anti-symmetric. There are counter-examples~\cite{Girardeau-65}, but one should realize that indistinguishability is a stronger concept than~\eqref{eq:any den sym}. In quantum mechanics, every possible measurement on a system is represented by a  self-adjoint operator $A$ on $L^2 (\R ^{dN})$. The expected value of the quantity represented by $A$, when measured in the state $\Psi_N$, is given by 
$$ \langle A \rangle_{\Psi_N}:=\langle \Psi_N | A | \Psi_N \rangle.$$
Thus one should demand that any such value does not depend on the labeling of the particles, and the bosons/fermions dichotomy comes about via the following elementary lemma:

\begin{lemma}[\textbf{Bosons and fermions}]\label{lem:bos ferm}\mbox{}\\
Let $\Psi_N \in L ^2 (\R ^{dN})$ with 
$$ \int_{\R ^{dN}} |\Psi_N| ^2 = 1$$
satisfy, for any bounded self-adjoint operator $A$ on $L^2 (\R ^{dN})$ and any permutation $\sigma \in \Sigma_N$
\begin{equation}\label{eq:indistinguishability}
 \langle \Psi_N | A | \Psi_N \rangle = \langle U_\sigma \Psi_N | A | U_\sigma\Psi_N \rangle. 
\end{equation}
Then either 
\begin{equation}\label{eq:any bosons}
 U_\sigma \Psi_N = \Psi_N \quad \forall \sigma \in \Sigma_N \mbox{ \rm{(Bosons)}} 
\end{equation}
or 
\begin{equation}\label{eq:any fermions}
 U_\sigma \Psi_N = (-1) ^{\rm{sign(\sigma)}} \Psi_N \quad \forall \sigma \in \Sigma_N \mbox{ \rm{(Fermions)}}. 
\end{equation}
\end{lemma}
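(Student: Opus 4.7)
The plan is to first reduce the operator-valued indistinguishability condition to a statement about the wave-function itself, and then exploit the group-theoretic structure of $\Sigma_N$.

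First I would use condition~\eqref{eq:indistinguishability} applied to rank-one operators. Taking $A = |\varphi\rangle\langle\varphi|$ for an arbitrary unit vector $\varphi \in L^2(\R^{dN})$ gives $|\langle \Psi_N, \varphi\rangle|^2 = |\langle U_\sigma \Psi_N, \varphi\rangle|^2$. More generally, the hypothesis means that the two pure-state density matrices $|\Psi_N\rangle\langle\Psi_N|$ and $|U_\sigma\Psi_N\rangle\langle U_\sigma\Psi_N|$ induce the same expectation value on every bounded self-adjoint operator. A standard polarization/density argument (using that the bounded self-adjoint operators separate trace-class operators) then forces the equality of these two rank-one projectors. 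Since both $\Psi_N$ and $U_\sigma\Psi_N$ are unit vectors, this implies
\begin{equation*}
U_\sigma \Psi_N = \lambda_\sigma \, \Psi_N,\qquad \lambda_\sigma \in \C,\ |\lambda_\sigma|=1.
\end{equation*}

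Next I would analyze the map $\sigma \mapsto \lambda_\sigma$. Because $\sigma \mapsto U_\sigma$ is a unitary representation of $\Sigma_N$, we obtain $\lambda_{\sigma_1\sigma_2}=\lambda_{\sigma_1}\lambda_{\sigma_2}$, i.e.\ $\lambda$ is a group homomorphism from $\Sigma_N$ into $U(1)$. For any transposition $\tau$ one has $\tau^2 = e$, hence $\lambda_\tau^2 = 1$, so $\lambda_\tau \in \{\pm 1\}$. Moreover, any two transpositions in $\Sigma_N$ are conjugate, and since $U(1)$ is abelian we get $\lambda_{\sigma\tau\sigma^{-1}}=\lambda_\tau$; therefore $\lambda_\tau$ takes the same value $\varepsilon \in \{\pm 1\}$ on every transposition.

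Finally, since transpositions generate $\Sigma_N$, any $\sigma$ written as a product of $k$ transpositions satisfies $\lambda_\sigma = \varepsilon^k = \varepsilon^{\mathrm{sign}(\sigma)}$ (using that the parity of $k$ is well-defined and equals $\mathrm{sign}(\sigma)$). The case $\varepsilon = +1$ yields~\eqref{eq:any bosons} and the case $\varepsilon = -1$ yields~\eqref{eq:any fermions}, concluding the proof. The only subtle step is the first one, where one must carefully justify that equality of all expectation values on bounded self-adjoint operators forces equality of the two rank-one projectors; everything afterwards is elementary representation theory of $\Sigma_N$ into the abelian group $U(1)$, whose only one-dimensional representations are precisely the trivial and sign characters.
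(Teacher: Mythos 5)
Your proof is correct. The second half (the homomorphism property and the classification of one-dimensional characters of $\Sigma_N$ via conjugacy of transpositions) coincides with the paper's argument, merely spelling out why trivial and sign are the only characters. The first half, however, takes a genuinely different route. You invoke the full strength of the hypothesis: equality of expectation values on \emph{all} bounded self-adjoint operators, followed by a polarization/duality argument to conclude $\lvert\Psi_N\rangle\langle\Psi_N\rvert = \lvert U_\sigma\Psi_N\rangle\langle U_\sigma\Psi_N\rvert$, hence proportionality by a phase. The paper instead tests the hypothesis against a single, cleverly chosen operator, $A = \lvert\Psi_N\rangle\langle\Psi_N\rvert$: condition~\eqref{eq:indistinguishability} then reads $\lvert\langle\Psi_N, U_\sigma\Psi_N\rangle\rvert^2 = 1$, and the equality case of Cauchy--Schwarz for two unit vectors yields $U_\sigma\Psi_N = c_\sigma\Psi_N$ immediately. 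The paper's choice is more economical (one operator, no appeal to the duality between trace-class and bounded operators) and in fact shows that the weaker hypothesis, that~\eqref{eq:indistinguishability} hold for that single rank-one $A$, already suffices. Your version makes explicit the conceptual content (indistinguishability forces equality of density matrices), at the cost of a slightly heavier functional-analytic justification, which you correctly flag as the one subtle point.
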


\begin{proof}
Take $A= |\Psi_N \rangle \langle \Psi_N|$. According to~\eqref{eq:indistinguishability}, we must have 
$$ \left| \langle \Psi_N, U_{\sigma} \Psi_N \rangle \right| ^2 = 1$$
for all $\sigma$ and hence 
$$ U_{\sigma} \Psi_N = c_{\sigma} \Psi_N $$
for some complex number $c_{\sigma}$ with modulus one. For consistency we see that 
$$ c_{\sigma \circ \sigma'} = c_{\sigma} c_{\sigma'}$$
must hold for all $\sigma,\sigma' \in \Sigma_N$. The map $\sigma \mapsto c_{\sigma}$ is a thus one-dimensional representation of the permutation group and it follows that either $c_\sigma \equiv 1$ or $c_{\sigma} = (-1) ^{\rm{sign(\sigma)}}.$
\end{proof}

Of course, this argument is mostly academic: that all known particles behave as either bosons and fermions is quite simply an experimental fact. But in view of the above, there seems to be little to no room, even in theory, for other possibilities. One could then claim that the dichotomy is a fundamental fact of nature. That is, provided one trusts the (elementary) quantization procedure just described, which, in hindsight, is more subtle than it looks.

\subsection{Anyons}

The breakthrough leading to the introduction of anyons occurred in~\cite{LeiMyr-77} where it was realized 
\begin{itemize}
 \item that there is one choice implicitly made in the above quantization procedure,
 \item that a different quantization can lead to different particle statistics,
\item that these can occur only in low dimensions $d\leq 2$.
\end{itemize}
A parallel approach is that of~\cite{Wilczek-82a,Wilczek-82b} where it is argued that in low dimensions the spin needs not be quantized. Hence, via the spin-statistics theorem, one may hope for exotic particle statistics. We shall not describe this approach here\footnote{The spin-statistics theorem is a result of relativistic quantum field theory. One can prefer not having to use it to discuss ordinary non-relativistic matter.}, nor that of~\cite{GolMedSha-81}, and now turn to a (very crude) summary of the findings of~\cite{LeiMyr-77}.

In fact, the only choice made in the above quantization procedure that is not a basic axiom of quantum mechanics, and that is not obviously logically necessary, is that we  have \emph{first quantized, then symmetrized}. Namely, we have first introduced wave-functions, and then restricted possible choices thereof by a physical symmetry postulate. What happens if we instead \emph{first symmetrize, then quantize} ? 

We shall not enter into the mathematical description of this process, but state the essential conclusions, restricting\footnote{There is an ambiguity in 1D, in that one cannot continuously exchange particles on a line without making them collide.} to $d\geq 2$:
\begin{itemize}
 \item In dimension $d\geq 3$, the former conclusion remains valid, only bosonic and fermionic wave-functions are logically consistent possibilities.
 \item For $d=2$, there is a whole continuum of other possible choices, labeled by a real number $\alpha \in[ 0,2[$. Bosons correspond to $\alpha = 0$, fermions to $\alpha = 1.$
\end{itemize}
This difference is of topological origin: one removes particles encounters points from the configuration space, because they are somewhat singular. A given particle then moves in a punctured space. The punctured 3D space is simply connected, while the punctured 2D space is not. In 3D and higher, the first homotopy group of the classical configuration space is the permutation group, in 2D it is the braid group. The latter has a continuum of one-dimensional representations, and this is where the continuum of possible statistics originates from~\cite{Wu-84,Wu-91}. 

In short, in 2D there is no inconsistency in demanding that
\begin{equation}\label{eq:anyon gauge}
 U_\sigma \Psi_N = \left( e^{i\pi \alpha }\right) ^{\rm{sgn} (\sigma)} \Psi_N 
\end{equation}
where $\alpha$ is a real number that can be restricted to $[0,2[$ by periodicity. One refers to $\alpha$ as the statistics parameter, and speaks of fractional statistics\footnote{Although, strictly speaking, irrational $\alpha$ is also allowed.} if $\alpha \neq 0,1$, that is when we deal with particles that behave neither as bosons nor as fermions. The term anyons (``any''-ons) was coined by Wilczek~\cite{Wilczek-82b} because the statistics parameter can a priori be any number.

Of course the above object cannot be a function, for it would be multi-valued\footnote{Think of permuting a pair of indices twice; the function then picks a phase $e^{2i\pi \alpha}$.} unless $\alpha = 0$ (bosons) or $\alpha = 1$ (fermions). It should in fact be seen as a fiber bundle, which would complicate any theoretical description and mathematical analysis. Fortunately, there is a natural, albeit formal, way to fall back on the familiar description in terms of wave-functions. We present this next.

\subsection{Magnetic gauge picture}

In the sequel we shall be interested in the quantum mechanics of non-relativistic 2D particles obeying fractional statistics. This means that we are interested in acting on ``functions'' satisfying~\eqref{eq:anyon gauge} with a many-body Hamiltonian of the form 
\begin{equation}\label{eq:anyons many hamil}
H_N = \sum_{j=1} ^N \left( p_j + \Aext (x_j)\right) ^2 + V(x_j) + \sum_{1\leq i < j \leq N} w (x_i-x_j) 
\end{equation}
where $p_j = -i\nabla_j$ is the usual momentum operator for particle $j$, $\Aext$ is the vector potential of an external gauge field, $V$ a one-body potential and $w$ a pair interaction potential. The usual way to render this problem more tractable is to write 
\begin{equation}\label{eq:any change gauge}
\Psi_N = \prod_{1\leq i < j \leq N} e^{i \alpha \arg (x_i-x_j)} \Phi_N 
\end{equation}
where $\arg (x_i-x_j)$ is the angle of the vector $(x_i-x_j)$ in the plane. That way, $\Phi_N$ has, at least formally, bosonic symmetry (its $\alpha$ is $0$ modulo $2$) and one might prefer to describe the system in terms of it. The important observation is that acting on $\Psi_N$ with $H_N$ is formally equivalent to acting on $\Phi_N$ with a new, effective Hamiltonian:
$$ \langle \Psi_N |H_N| \Psi_N \rangle = \langle \Phi_N |H_N ^{\alpha}| \Phi_N \rangle $$
where 
\begin{equation}\label{eq:anyons mag hamil}
H_N ^{\alpha} = \sum_{j=1} ^N \left( p_j + \Aext (x_j) + \alpha \Aan (x_j) \right) ^2 + V(x_j) + \sum_{1\leq i < j \leq N} w (x_i-x_j). 
\end{equation}
The effective gauge vector potential is given by 
\begin{equation}\label{eq:any gauge vect}
\Aan (x_j) = \sum_{k\neq j} \frac{(x_j-x_j) ^\perp}{|x_j-x_k| ^2} 
\end{equation}
and it depends on the position of the particles. The associated magnetic field is a sum of delta functions
\begin{equation}\label{eq:any gauge field}
\curl \Aan (x_j) = 2\pi \sum_{k\neq j} \delta (x_j-x_k).
\end{equation}
One can thus see non-relativistic anyons as ordinary bosons that carry in addition fictitious Aharonov-Bohm-like flux tubes whose vector potential influences the other particles. Of course one could choose a convention where $\Phi_N$ is fermionic instead, with obvious modifications to the above equations. One could even (although this does not seem a very good idea) describe 2D fermions as bosons coupled to flux tubes, and vice-versa. 

The above formalism based on~\eqref{eq:any change gauge} and~\eqref{eq:anyons mag hamil} is called ``magnetic gauge picture'' by opposition to the ``anyon gauge picture'' based on~\eqref{eq:anyon gauge} and~\eqref{eq:anyons many hamil}. The obvious advantage is that we can now think in terms of bosonic (or fermionic) wave-functions. The obvious drawback is that the anyon Hamiltonian is pretty complicated. Even in the free case $V,\Aext,w\equiv 0$, very few exact results on the spectrum or dynamics of $H_N^\alpha$ are known. Well-educated approximations are thus highly desirable, even more so than for usual interacting bosons and fermions  (the effective magnetic interaction contained in~\eqref{eq:anyons mag hamil} is considerably more intricate than a simple pair interaction). Section~\ref{sec:MF anyons} below is concerned with this issue. Before entering this question, we discuss the emergence of fractional statistics for a certain type of quasi-particles in Section~\ref{sec:QH anyons}. Interestingly, 
we 
shall see that it is the magnetic gauge picture, and not the anyon gauge picture, that naturally emerges. For all practical purposes relevant to this memoir we thus can, and shall, consider only the former description. Anyons are thus identified in the sequel as ordinary bosons or fermions feeling an effective Hamiltonian of the form~\eqref{eq:anyons mag hamil}

\section{Anyons in quantum Hall physics}\label{sec:QH anyons}

That fractional statistics can occur only in low dimensions seems a no-go for their relevance to fundamental particles. All of these (electrons, photons, quarks ...) are 3D objects. Even if restricted to lower dimensions in the lab, they retain their 3D statistics, which can only be bosonic or fermionic. The statistics of composite objects, such as protons, neutrons and atoms, are obtained by adding the statistics of their constituents: this can only produce bosons or fermions.

Anyons can thus only exist as quasi-particles of low dimensional systems. In this section we discuss the emergence of 2D fractional statistics for the most prominent candidates that seem to have been considered in the literature so far: quasi-holes of the Laughlin state. We have already studied the Laughlin wave-function in some detail in Chapter~\ref{sec:FQHE states}. Here we shall be interested in some of its elementary excitations, called quasi-holes. We argue that, when coupled to tracer particles, they effectively change the statistics of the latter. The tracer particles, dressed by the quasi-holes of the Laughlin wave-function, become anyons with statistics parameter 
$$\alpha = \alpha_0-\frac{1}{\ell}$$
where $\ell$ is the exponent in the Laughlin wave-function~\eqref{eq:intro Laughlin} and $\alpha_0$ the original statistics parameter of the tracer particles (thus $\alpha_0 = 0$ or $1$, bosons or fermions). The possibility of creating anyons in the fractional quantum Hall effect context seems to be widely accepted~\cite{Girvin-04,Goerbig-09,StoTsuGos-99,Laughlin-99,Jain-07}. It has recently been proposed~\cite{Cooper-08,BloDalZwe-08,MorFed-07,ParFedCirZol-01,RonRizDal-11,Viefers-08} that this physics could be emulated in ultra-cold atomic gases subjected to artificial magnetic fields. It is such a system we have in mind primarily, although the following might still be of relevance to the usual 2D electrons gases where the FQHE is observed. 

\medskip

\noindent\textbf{Warning.} The reader should be aware that the results of this section are \emph{not} mathematically rigorous, contrarily to everything discussed in this memoir so far. They are included because 
\begin{itemize}
\item their authors (this is joint work with Douglas Lundholm~\cite{LunRou-16}) have good hope to render them rigorous, at least in parts, in the near future;
\item they avoid some issues with the usual arguments in favor of fractional statistics; 
\item they give a physical motivation to the rigorous results of Section~\ref{sec:MF anyons} below, and to future investigations.
\end{itemize}

\subsection{Tracer particles in a Laughlin liquid}

Consider the following thought experiment: two different species of 2D particles interact repulsively among themselves, and couple to a uniform magnetic field. Such a set-up has already been suggested~\cite{CooSim-15,ZhaSreGemJai-14,ZhaSreJai-15} as a way to probe the emergence of fractional statistics, using cold atomic gases. 

The Hilbert space for the joint system is 
\begin{equation}\label{eq:Hilbert}
\gH ^{M+N} = L ^2 _{\rm{sym}} (\R ^{2M}) \otimes L ^2_{\rm{sym}} (\R ^{2N}),
\end{equation}
where $M$ is the number of particles of the first type and $N$ the number of particles of the second type. 
For definiteness we assume that the two types of particles be bosons, whence the imposed symmetry in the above Hilbert spaces. 
The following however applies to any choice of statistics for both types of particles.

We write the Hamiltonian for the full system as 
$$
H_{M+N} = H_M \otimes \one + \one \otimes H_N + \sum_{j=1} ^M \sum_{k=1} ^N W_{12} (x_j-x_k), 
$$
where
\begin{equation}\label{eq:start Hamil M}
H_M= \sum_{j=1} ^M \frac{1}{2m}\left(p_{x_j} + e \Aext (x_j)\right) ^2 + \sum_{1\leq i<j\leq M} W_{11} (x_i-x_j)
\end{equation}
is the Hamiltonian for the first type of particles and 
\begin{equation}\label{eq:start Hamil N}
H_N= \sum_{k=1} ^N  \frac{1}{2}\left(p_{y_k} + \Aext (y_k)\right) ^2   + 
\sum_{1\leq i<j\leq N} W_{22} (y_i-y_j)
\end{equation}
that for the second type of particles. We shall denote $X_M= (x_1,\ldots,x_M)$ and $Y_N=(y_1,\ldots,y_N)$ the coordinates of the two types of particles and choose units so that $\hbar = c = 1$, and the mass and charge of the \emph{second} type of particles are respectively $1$ and $-1$. 
We keep the freedom that the first type of particles might have a different mass $m$ and a different charge 
$-e<0$. The first type of particles should be thought of as tracers immersed in a large sea of the second type. We shall accordingly use the terms ``tracer particles'' and ``bath particles'' in the sequel.

We have also introduced:
\begin{itemize}
 \item the usual momenta $p_{x_j} \!= - i \nabla_{x_j}$ and $p_{y_k} \!= -i \nabla_{y_k}$.
 \item a uniform magnetic field of strength $B >0$.
	Our convention is that it points downwards: 
 $$ 
	\Aext (x) := -\frac{B}{2} x^{\perp}
	= -\frac{B}{2} (-x_2,x_1).
 $$
 \item intra-species interaction potentials, $W_{11}$ and $W_{22}$. We think of short-range potentials for simplicity.
 \item an inter-species interaction potential $W_{12}$. Again, $W_{12}$ should be thought of as being short-range.
\end{itemize}

The next step is to assume a strong hierarchy of energy scales. The requirements are rather stringent, but we think they are ultimately necessary for the emergence of fractional statistics. We assume that the energy scales of the problem are set, in order of decreasing importance/magnitude, by 
\begin{enumerate}
 \item the external magnetic field $B$, which forces all bath particles to occupy lowest Landau levels orbitals. Note that the splitting between Landau levels for the tracer particles is rather proportional to $eB/m$ so if $m>e$ it is reasonable to allow that they occupy several Landau levels.
 \item the interaction potential $W_{22}$, which forces the joint wave-function of the system to vanish when two bath particles encounter.
 \item the interaction potential $W_{12}$, which forces the joint wave-function of the system to vanish when a bath and a tracer particles encounter.
 \item all the rest, which, as we will argue, essentially boils down to an effective energy for the tracer particles. 
\end{enumerate}

This leads us to a reasonable ansatz for the joint wave-function of the system:
\begin{equation}\label{eq:ansatz 3}
 \Psi (X_M,Y_N) = \Phi (X_M) \cQH(X_M) \PsiQH (X_M,Y_N)
\end{equation}
where $\PsiQH$ describes a Laughlin state of the $N$ bath particles, coupled to $M$ quasi-holes at the locations of the tracer particles:
\begin{equation}\label{eq:QH}
\PsiQH (X_M,Y_N) 
 =  \prod_{j=1} ^M \prod_{k=1} ^N (\zeta_j - z_k) ^n   \prod_{1\leq i<j \leq N} (z_i-z_j) ^{ \ell }  e ^{- B \sum_{j=1} ^N |z_j| ^2 / 4}. 
\end{equation}
We have here identified $\R ^2 \ni x_j \leftrightarrow \zeta_j \in \C$ and $\R ^2 \ni y_k \leftrightarrow z_k \in \C$. We choose $\cQH (X_M) >0 $ to enforce 
\begin{equation}\label{eq:norm QH}
\cQH (X_M) ^2 \int_{\R ^{2N}}  |\PsiQH (X_M,Y_N)| ^2 \,dY_N = 1 
\end{equation}
for any $X_M$. We thus ensure normalization of the full wave function $\Psi$ by demanding that 
$$ \int_{\R ^{2M}} |\Phi (X_M)| ^2 \,d X_M = 1.$$
In~\eqref{eq:QH}, $n$ and $\ell$ are two positive integers. The latter should be even for symmetry reasons (it would be odd if the bath particles were fermions).

\subsection{Derivation of the magnetic gauge Hamiltonian}

The wave-function $\Phi$ describing the tracer particles is the only  variable part left. One thus wants to choose it properly in order to approximate the ground state of the full system. Our main claim is that this can be achieved by taking $\Phi$ as the ground state of some effective Hamiltonian:

\begin{mtheorem}[\textbf{Emergence of fractional statistics}]\label{thm:any emer}\mbox{}\\
We make the previous assumptions on the energy scales of the problem. In addition we suppose that $N\gg M$ and that $W_{11}$ contains a hard-core of radius larger than $l_B = \sqrt{2/B}$. Then we claim the following:
\begin{enumerate}
 \item An ansatz of the form~\eqref{eq:ansatz 3}  is a good approximation for the ground state of the full system.
 \item The full ground state energy satisfies 
 \begin{equation}\label{eq:any emer ener}
E(N,M) \simeq \left\langle \Psi, H_{M+N} \Psi \right\rangle \simeq \left\langle \Phi, \Heff_M  \Phi \right\rangle + \frac{B N}{2}
\end{equation}
where 
\begin{equation}\label{eq:any emer hamil}
\boxed{\Heff_M= \sum_{j=1} ^M \frac{1}{2m}  \left( p_{x_j} + \left( e - \frac{1}{\ell} \right) \Aext (x_j) - \frac{1}{\ell} \Aan (x_j) \right)^2 + \sum_{1\leq i<j\leq M} W_{11} (x_i-x_j),}
 \end{equation}
$\Aan (x_j)$ being defined as in~\eqref{eq:any gauge vect}.  
 \item Consequently, the optimal choice for $\Phi$ is a ground state of $\Heff_M$ and we have 
\begin{equation}\label{eq:any emer ener bis}
E(N,M) \simeq \inf\left\{ \left\langle \Phi, \Heff_M  \Phi \right\rangle, \: \Phi \in L^2_\sym (\R ^{2M}), \: \int_{\R ^{2M}} |\Phi| ^2 = 1 \right\}+ \frac{B N}{2}.
\end{equation}
\end{enumerate}
\end{mtheorem}

A few comments are in order, keeping in mind that this statement is \emph{not} a rigorous theorem:
\begin{itemize}
 \item We are lead to conjecture that all the physics boils down to the motion of the tracer particles. These are dressed by holes in the density of bath particles and thus become quasi-particles whose physics is determined by an emergent Hamiltonian.
 \item A first remarkable fact about $\Heff_M$ is that the charge of the tracer particles gets reduced: $e\to e - 1/\ell$ in the coupling to the external magnetic field. This comes about because Laughlin quasi-holes carry a fraction of an electron's charge~\cite{Laughlin-83,Laughlin-87}. This is a well documented fact, including experimentally~\cite{SamGlaJinEti-97,MahaluEtal-97,YacobiEtal-04}.
 \item That the Laughlin quasi-holes have fractional statistics had also been conjectured early on~\cite{AroSchWil-84,Haldane-83,Halperin-84}. This has so far not been confirmed experimentally, despite some promising hints~\cite{CamZhoGol-05} and some interesting proposals~\cite{SafDevMar-01,GroSimSte-06,KimLawVisSmiFra-05,KimLawVisSmiFra-06,CooSim-15,ZhaSreGemJai-14,ZhaSreJai-15}. 
 \item In our approach, the emergence of fractional statistics comes about via the vector potential $- \ell ^{-1} \Aan (x_j)$, as discussed in Section~\ref{sec:intro anyons}.
 \item One could formally gauge $- \ell ^{-1} \Aan (x_j)$ away and consider minimizing a usual magnetic Hamiltonian amongst wave functions satisfying~\eqref{eq:anyon gauge} with $\alpha = -1/\ell$. This is not desirable since the magnetic gauge Hamiltonian acting on bosonic wave-functions we derive directly is more tractable, see Section~\ref{sec:MF anyons} below and references therein.
 \item The original paper~\cite{LunRou-16} also discusses possible experimental signatures of the above statement. In some specific experimental regime, one can argue that a measurement of the density of the tracer particles should reveal a clear influence of fractional statistics.
 \item The usual argument in favor of fractional statistics, originating in~\cite{AroSchWil-84} (see also~\cite{KjoLei-97,KjoLei-99,KjoMyr-99,Myrheim-99}), proceeds by first treating the tracer particles/quasi-holes as classical parameters in the function~\eqref{eq:ansatz 3}. That is, we freeze the $\Phi$ degree of freedom and consider $(\zeta_j)_{j=1 \ldots M}$ as a family of parameters for the quasi-holes wave-function. One can compute that, upon moving a $\zeta_j$ around another adiabatically, the quasi-holes wave-function picks a Berry phase $e^{-2i\pi/\ell}$. This Berry phase is then identified with the square of the statistics exchange phase quasi-holes should have in a quantum description. While certainly compelling, this argument is not devoid of issues (see for example~\cite{Forte-91}).
 \item The main leap of faith in the argument of~\cite{AroSchWil-84} is the identification of the Berry phase picked by the bath particles' wave function and the statistics phase of the tracer particles. This is completely avoided in our approach, since we do not appeal to the Berry phase concept at all.
\end{itemize}

\section{Average-field approximation for almost bosonic anyons}\label{sec:MF anyons}

Now that we have seen that anyons can emerge in some concrete physical situations, we  turn to the next obvious question. Namely: how does a system of anyons behave, what characteristic features of it could be observed ? Since the free anyon Hamiltonian~\eqref{eq:anyons mag hamil} is pretty intricate, and basically no general exact results are known (contrarily to the case of free bosons or fermions), this raises the question of deriving effective, simpler theories. In this section we report on the results of~\cite{LunRou-15}, where such an effective theory is rigorously derived, in a mean-field-like limit. 

\subsection{Model for trapped extended anyons}

Here we consider the ground state problem for free anyons in a trapping potential $V:\R^2 \to \R$, that is we are interested in the bottom of the spectrum of the Hamiltonian (we discard $\alpha$ from the notation for shortness)
\begin{equation}\label{eq:any mag hamil trap}
H_N  = \sum_{j=1} ^N \left( p_j + \alpha \Aan (x_j) \right) ^2 + V(x_j).  
\end{equation}
where the gauge vector potential is again given by~\eqref{eq:any gauge vect}. Compared to~\eqref{eq:anyons mag hamil} and~\eqref{eq:any emer hamil} we have discarded the external gauge vector potential and the pair interaction. This is because they would not complicate the picture significantly, at least with the approach we follow in~\cite{LunRou-15}. The above Hamiltonian can be rigorously defined and shown to enjoy crucial functional inequalities~\cite{LarLun-16,LunSol-13a,LunSol-13b,LunSol-14}. However, its complicated structure makes it very hard to draw precise predictions, even about its ground state. To appreciate this, consider formally expanding the squares in~\eqref{eq:any mag hamil trap}. We obtain
\begin{align}\label{eq:any expand hamil}
H_N  &= \sum_{j=1}^N \left( p_j ^2 + V(x_j) \right) \nonumber \\
&+ \alpha \sum_{j\neq k} \left( p_j \cdot \nablap w_0 (x_j - x_k ) + \nablap w_0 (x_j - x_k ) \cdot p_j \right) \nonumber \\
&+ \alpha ^2 \sum_{j\neq k \neq \ell} \nablap w_0 (x_j-x_k) \cdot \nablap w_0 (x_j-x_\ell) \nonumber \\
&+ \alpha ^2 \sum_{j\neq k} |\nabla w_0 (x_j-x_k)| ^2
\end{align}
where we wrote
$$w_0 := \log |\:.\:|$$
for the 2D Coulomb kernel. We are thus dealing with ordinary bosons (recall that we are interested in the action of $H_N$ on symmetric wave-functions) that experience:
\begin{itemize}
 \item a two-body interaction mixing space and momentum variables (second line);
 \item a three-body interaction (third line);
 \item a singular two-body interaction (the terms on the fourth line are not locally integrable).
\end{itemize}
In this section we want to justify that, in spite of these peculiar features, the ground state of $H_N$ can in some regime be approximated in the form of a pure tensor power $u ^{\otimes N}$, in the spirit of Chapter~\ref{sec:bosons GS}. In the context of fractional statistics, such approximations are usually refered to as ``average-field models''~\cite{FetHanLau-89,ChenWilWitHal-89,ChiSen-92,Trugenberger-92b,Trugenberger-92,IenLec-92,Westerberg-93}.

An immediate difficulty is that pure tensor powers $u ^{\otimes N}$ are \emph{not} in the domain of $H_N$, because of the singular pair interaction. To remedy this, we smear the Aharonov-Bohm flux carried by each particle on a disc of radius $R>0$: let
\begin{equation}\label{eq:smeared Coul}
w_R  := \log |\:.\:| \ast \frac{\1_{B(0,R)}}{\pi R^2} . 
\end{equation}
and consider (observe that $\Aan = \Aan ^0$)
\begin{equation}\label{eq:potential}
	\Aan ^R (x_j) := \sum_{k\neq j} \nablap w_R (x_j-x_k), 
\end{equation}
together with the regularized Hamiltonian
\begin{equation}\label{eq:hamil}
	H_N ^R := \sum_{j=1}^N \left( p_j + \alpha \Aan^R (x_j) \right) ^2 + V(x_j).
\end{equation}
This so-called ``extended anyons'' model is discussed in~\cite{LarLun-16,Mashkevich-96,Trugenberger-92b,ChoLeeLee-92}. Besides the fact that it offers a reasonable simplification of the original Hamiltonian, one can motivate it because anyons should be seen as emergent quasi-particles, and have thus a finite radius. In Statement~\ref{thm:any emer}, this is set by the radius of the hard-core we assumed was contained in~$W_{11}$. 

For fixed $R>0$, this operator is self-adjoint on $L_{\rm sym} ^2 (\R ^{2N})$ and we shall denote 
\begin{equation}\label{eq:gse any}
	E ^R (N) := \inf \sigma (H_N ^R)
\end{equation}
the associated ground state energy (lowest eigenvalue) for $N$ extended anyons. We are interested in the asymptotics of $E^R (N)$ and the associated minimizers in the limit $N\to \infty$ and $R\to 0$.  

\subsection{Justification of the average field approximation}

To obtain a well-defined limit problem we set
\begin{equation}\label{eq:scale alpha}
	\alpha = \frac{\beta}{N-1} \to 0 \ \ \text{when} \ \ N\to \infty,
\end{equation}
where $\beta$ is a given, fixed constant. This choice ensures that all terms in~\eqref{eq:any expand hamil} are formally of order $N$ or smaller. Physically it means we consider ``almost bosonic anyons''. One can also perturb around fermions, but we have not studied this case. Note that if we simply took $\alpha \to 0$ at fixed $N$, we would recover ordinary bosons at leading order. One could then only see the effect of the non-trivial statistics in a 
perturbative expansion, a route followed e.g. in~\cite{Sen-91,ComCabOuv-91,SenChi-92,Ouvry-94,ComMasOuv-95}. 

The effective model we shall derive encodes the effect of anyon statistics in the form of a self-consistent gauge field:
\begin{equation}\label{eq:avg func}
	\cEAF [u] := \int_{\R ^2} \left( \left| \left( \nabla 
		+ i \beta \mathbf{A} [|u|^2] \right) u \right|^2 + V|u|^2 \right)
\end{equation}
where 
\begin{equation}\label{eq:avg field}
	\bA [\rho] := \nablap w_0 \ast \rho.
\end{equation}
One obtains this functional by the same kind of steps as in the case of usual bosons with pair interactions: either replacing the magnetic gauge vector potential by a mean-field, or by taking (essentially) an ansatz of the form $u ^{\otimes N}$. We denote $\EAF$ the ground state energy obtained by minimizing~\eqref{eq:avg func} subject to the unit mass constraint
$$ \int_{\R ^2} |u| ^2 = 1.$$

For technical reasons we assume that the one-body potential is confining 
\begin{equation}\label{eq:trap pot any}
		V(x) \ge c |x|^s - C, \quad s>0,
\end{equation}
and that the size $R$ of the extended anyons does not go to zero too fast in 
the limit $N\to \infty$. The following is the main result of~\cite{LunRou-15}:

\begin{theorem}[\textbf{Validity of the average field approximation}]\label{thm:main ener}\mbox{}\\
We consider $N$ extended anyons of radius
	$R \sim N ^{-\eta}$ in an external potential $V$ satisfying~\eqref{eq:trap pot any}. We assume the relation 
	\begin{equation}\label{eq:eta restriction}
		0 < \eta < \eta_0 (s) := \frac{1}{4}\left( 1+ \frac{1}{s}\right)^{-1},
	\end{equation}
	and that the statistics parameter scales as~\eqref{eq:scale alpha}.  
	
	Then, in the limit $N\to \infty$ we have 
	for the ground-state energy
	\begin{equation}\label{eq:main ener}
		\frac{E^R(N)}{N} \to \EAF.
	\end{equation}
	Moreover, if $\Psi_N$ is a sequence of ground states for $H_N ^R$, with 
	associated reduced density matrices $\gamma_N ^{(k)}$, then 
	we have, along a subsequence,
	\begin{equation}\label{eq:main state}
		\gamma_N ^{(k)} \to \int_{\cMAF} |u ^{\otimes k} \rangle \langle u ^{\otimes k} | \,d\mu(u) 
	\end{equation}
	strongly in the trace-class when $N\to \infty$, where $\mu$ is a Borel 
	probability measure supported on the set of minimizers of $\cEAF$,
	$$
		\cMAF := \lbrace u\in L ^2 (\R ^2) : \norm{u}_{L^2} = 1, \: \cEAF [u] = \EAF \rbrace.
	$$
\end{theorem}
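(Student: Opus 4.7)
The plan follows the general strategy developed in Chapter~\ref{sec:bosons GS} for deriving nonlinear one-body ground states from interacting bosonic Hamiltonians. Since the regularized anyonic Hamiltonian~\eqref{eq:hamil}, upon expanding the square, contains one-, two-, and three-body terms as in~\eqref{eq:any expand hamil}, one may hope to adapt the finite-dimensional de Finetti approach used in Section~\ref{sec:weak NLS}. The upper bound on $E^R(N)/N$ is obtained by direct trial state computation with the product $u^{\otimes N}$, $u$ a minimizer of $\cEAF$. The matching lower bound, and the subsequent convergence of reduced density matrices, constitute the hard part.

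For the upper bound, I would write
\begin{equation*}
\langle u^{\otimes N}, H_N^R u^{\otimes N}\rangle = N\!\int V|u|^2 + N\!\int|\nabla u|^2 + A_N + B_N
\end{equation*}
where $A_N$ gathers the cross terms $\alpha(p_j\cdot \Aan^R(x_j) + \mathrm{h.c.})$ and $B_N$ the square $\alpha^2|\Aan^R|^2$. Using $\alpha=\beta/(N-1)$ and counting indices, the leading part of $A_N$ reproduces $2\beta\,\mathrm{Re}\langle -i\nabla u,\bA[|u|^2]u\rangle$, while the off-diagonal part of $B_N$ (terms with $k\neq k'$ in $\sum_{k,k'\neq j}\nablap w_R(x_j{-}x_k)\!\cdot\!\nablap w_R(x_j{-}x_{k'})$) produces $\beta^2\!\int|\bA[|u|^2]|^2|u|^2$. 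The diagonal part of $B_N$, of order $\alpha^2 N^2\!\int|\nabla w_R|^2|u|^2\otimes|u|^2$, contributes only $o(N)$ provided $R$ does not decay too fast, and passing $R\to 0$ in $\bA[|u|^2]\ast\tfrac{\mathbf{1}_{B(0,R)}}{\pi R^2}$ costs an error controlled by the $H^1$-regularity of $u$.

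For the lower bound the strategy is: (i) extract an a priori kinetic bound $\Tr[(-\Delta+V)\gamma_N^{(1)}]\leq C$ on any ground state $\Psi_N$ using the diamagnetic and Hardy-type inequalities available for the extended anyon Hamiltonian; (ii) project $\Psi_N$ onto the span of the first $L\sim N^{\theta}$ eigenfunctions of the one-body operator $h=-\Delta+V$, the remainder being controlled via the kinetic bound together with a Cwikel--Lieb--Rosenbljum-type estimate that converts the growth of $V$ (exponent $s$) into a bound on~$L$; (iii) apply the quantitative de Finetti theorem (Theorem~\ref{thm:deF finite dim}) to $\gamma_N^{(k)}$ for $k=1,2,3$, producing an approximating measure $\mu_N\in\cP(SL^2)$ with error $O(kL/N)$; (iv) insert this representation into the expanded energy~\eqref{eq:any expand hamil}, rewriting the sum of one-, two-, and three-body traces as $N\!\int\cEAF[u]\,d\mu_N(u)+o(N)$; (v) bound it below by $N\,\EAF$ and pass to the limit, the measure $\mu_N$ converging up to extraction to a measure $\mu$ supported on $\cMAF$ by lower semi-continuity of $\cEAF$. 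Convergence of reduced density matrices~\eqref{eq:main state} then follows from the uniqueness part of Theorem~\ref{thm:deF weak} applied to limit points of $(\gamma_N^{(k)})$.

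The main obstacle will be the control of the three-body contribution $\alpha^2\sum_{j\neq k\neq\ell}\nablap w_R(x_j{-}x_k)\cdot\nablap w_R(x_j{-}x_\ell)$ against $\gamma_N^{(3)}$: unlike a pair interaction, the kernel $\nablap w_R(x-y)\cdot\nablap w_R(x-z)$ is not dominated by a product of one-body kinetic operators, and its $R\to 0$ limit develops a $|x-y|^{-1}|x-z|^{-1}$ singularity. This forces a delicate interplay between the smearing radius $R\sim N^{-\eta}$, the truncation dimension $L$, and the a priori kinetic estimate, and is ultimately responsible for the explicit restriction $\eta<\eta_0(s)$. Equally delicate will be handling the unbounded two-body operator $p_j\cdot\nablap w_R(x_j{-}x_k)$, whose momentum factor must be absorbed against the kinetic energy via a Cauchy--Schwarz splitting before the finite-dimensional de Finetti step can be invoked.
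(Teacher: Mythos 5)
Your proposal follows essentially the same route as the paper: upper bound by direct computation with $u^{\otimes N}$, lower bound via projection onto the span of the $L\sim N^{\theta}$ lowest eigenstates of $-\Delta+V$, the quantitative de Finetti theorem (Theorem~\ref{thm:deF finite dim}) together with a CLR-type count of eigenvalues, and a priori kinetic bounds supplied by many-particle Hardy inequalities. You have also correctly identified the two genuinely hard points that distinguish this problem from the pair-interaction case of~\cite{LewNamRou-14c} and that give rise to the exponent $\eta_0(s)$: the non-product-form three-body kernel $\nablap w_R(x_j-x_k)\cdot\nablap w_R(x_j-x_\ell)$, and the unbounded mixed-momentum two-body term, which must be absorbed into the kinetic energy by Cauchy--Schwarz before the finite-dimensional reduction. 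This matches the structure and the key lemmas of the original argument.
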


A few comments:

\begin{itemize}
\item The method of proof is inspired from that of~\cite{LewNamRou-14c}, cf Chapter~\ref{sec:bosons GS}. The use of the quantum de Finetti theorem is even more crucial here: there does not seem to be any way of using the structure of the magnetic gauge Hamiltonian to justify the approximation.
\item Actually, its peculiar form adds quite a few twists to the proof, compared to~\cite{LewNamRou-14c}. We have to rely on some specific functional inequalities~\cite{Lundholm-15,HofLapTid-08} to control the error terms.
\item Note that the limit problem~\eqref{eq:avg func} comprises an effective 
	self-consistent \emph{magnetic} field. 
	A term in the form of a self-consistent \emph{electric} field is more usual. Taking the $\curl$ of $\mathbf{A}[|u|^2]$ we see that the effective field is directly proportional to the matter density.
\item We deal with the limit $R\to 0$ at the same time as 
	$N \to \infty$ because typically the anyon radius should be much smaller than the total system's size. This is reminiscent of the NLS and GP limits for trapped 
	Bose gases, see again Chapter~\ref{sec:bosons GS} and references therein. 
\item Taking $R$ not too small in the limit $N\to \infty$ is a  requirement in our method of proof. 
It is in fact not clear whether some lower bound on $R$ is a necessary 
condition for the average field description to be correct. 
For very small or zero $R$, it is still conceivable that a description in 
terms of a functional of the form of~\eqref{eq:avg func} is correct in the limit. A first satisfying step would be to allow $R\ll N ^{-1/2}$, the typical inter-particle distance.
\end{itemize}
% 
% \bibliographystyle{siam}
% \bibliography{/home/rougerie/Travail/Documentation/Bibtex/biblio_NR_Fev16}

\end{document}